\DeclareMathOperator{\Ad}{Ad}
\DeclareMathOperator{\tr}{tr}
\renewcommand*\env@cases[1][1.2]{%
  \let\@ifnextchar\new@ifnextchar
  \left\lbrace
  \def\arraystretch{#1}%
  \array{@{}l@{\quad}l@{}}%
}
\newcommand{\itbf}{\itshape\bfseries}
\newcommand{\g}{\mbox{\boldmath$g$}}
\newcommand{\G}{\mbox{\boldmath$G$}}
\newcommand{\sh}{\mathrm{sinh}}
\newcommand{\cth}{\mathrm{coth}}
\newcommand{\rC}{{\mathrm C}}
\newcommand{\rH}{{\mathrm H}}
\newcommand{\rL}{{\mathrm L}}
\newcommand{\ga}{{\mathfrak a}}
\newcommand{\gb}{{\mathfrak b}}
\newcommand{\gc}{{\mathfrak c}}
\newcommand{\gd}{{\mathfrak d}}
\newcommand{\mbA}{\mbox{\boldmath$A$}}
\newcommand{\mbB}{\mbox{\boldmath$B$}}
\newcommand{\mbC}{\mbox{\boldmath$C$}}
\newcommand{\cD}{{\cal D}}
\newcommand{\cK}{{\cal K}}
\newcommand{\cL}{{\cal L}}
\newcommand{\cP}{{\cal P}}
\newcommand{\cT}{{\cal T}}
\newcommand{\cX}{{\cal X}}
\newcommand{\wa}{\widetilde{a}}
\newcommand{\wb}{\widetilde{b}}
\newcommand{\wip}{\widetilde{p}}
\newcommand{\wx}{\widetilde{x}}
\newcommand{\whx}{\widehat{x}}
\newcommand{\whp}{\widehat{p}}
\newcommand{\htilde}[1]{\widehat{\widetilde{#1}}}
\newcommand{\wL}{\widetilde{L}}
\newcommand{\wT}{\widetilde{T}}
\newcommand{\wU}{\widetilde{U}}
\newcommand{\wV}{\widetilde{V}}
\newcommand{\bbR}{{\mathbb R}}
\newcommand{\bbZ}{{\mathbb Z}}
\newtheorem{theorem}{Theorem}[section]
\newtheorem{proposition}[theorem]{Proposition}
\newtheorem{corollary}[theorem]{Corollary}
\newtheorem{lemma}[theorem]{Lemma}
\newtheorem{definition}[theorem]{Definition}
\newcommand{\eto}[1]{e^{\displaystyle #1}}
\newcommand{\nm}{\!-\!}
\newcommand{\np}{\!+\!}
\newcommand{\ueto}[1]{e^{\raisebox{2pt}{$\displaystyle #1$}}}
\newbox\meibox
\def\placeunder#1#2#3#4{\setbox\meibox%
\vbox{\hbox{\hskip#4$\hphantom{#2}$}\hbox{$\hphantom{#1}$}}%
\vtop{\baselineskip=0pt\lineskiplimit=\baselineskip%
\lineskip=#3\hbox to \wd\meibox{\hfil\hskip#4$#2$\hfil}%
\hbox to \wd\meibox{\hfil$#1$\hfil}}}
\def\undertilde#1{\mathchoice{%
\placeunder{\vbox to 1.4pt{\hbox{$\displaystyle\widetilde{\,\,\,
}$}\vss}}{\displaystyle#1}{1.5pt}{1.5pt}}%
{\placeunder{\vbox to 1.4pt{\hbox{$\textstyle\widetilde{\,\,
}$}\vss}}{\textstyle#1}{1.5pt}{1.5pt}}%
{\placeunder{\vbox to 1.4pt{\hbox{$\scriptstyle\tilde{
}$}\vss}}{\scriptstyle#1}{1pt}{1pt}}%
{\placeunder{\vbox to 1.4pt{\hbox{$\scriptscriptstyle\tilde{
}$}\vss}}{\scriptscriptstyle#1}{1pt}{1pt}}%
}
\begin{document}

\title{Discrete time Toda systems}
\author{Yuri~B.~Suris}
\publishers{\vspace{0.5cm}{\small Institut f\"ur Mathematik, MA 7-1, Technische Universit\"at Berlin,\\
Stra{\ss}e des 17. Juni 136, 10623 Berlin, Germany\\
E-mail:  \url{suris@math.tu-berlin.de}}}
\maketitle
\begin{abstract}
In this paper, we discuss several concepts of the modern theory of discrete integrable systems, including:
\begin{itemize}
\item Time discretization based on the notion of B\"acklund transformation;
\item Symplectic realizations of multi-Hamiltonian structures;
\item Interrelations between discrete 1D systems and lattice 2D systems;
\item Multi-dimensional consistency as integrability of discrete systems;
\item Interrelations between integrable systems of quad-equations and integrable systems of Laplace type;
\item Pluri-Lagrangian structure as integrability of discrete variational systems. 
\end{itemize}
All these concepts are illustrated by the discrete time Toda lattices and their relativistic analogs. 
\end{abstract}

\setcounter{equation}{0}
\section{Introduction}

The one-dimensional lattice with exponential interaction of nearest neighbors, discovered by M. Toda,
\begin{equation}\label{TL New introd}
\ddot{x}_k=\eto{x_{k+1}\nm x_k}-\eto{x_k\nm x_{k-1}},
\end{equation}
and its relativistic generalization, discovered by S. Ruijsenaars,
\begin{equation}\label{RTL+ New introd}
\ddot{x}_k=
(1+\alpha\dot{x}_{k+1})(1+\alpha\dot{x}_k)\,\frac{\eto{x_{k+1}\nm x_k}} {\,\raisebox{-1mm}{$1+\alpha^2\eto{x_{k+1}\nm x_k}$}}-
(1+\alpha\dot{x}_k)(1+\alpha\dot{x}_{k-1})\,\frac{\eto{x_k\nm x_{k-1}}} {\,\raisebox{-1mm}{$1+\alpha^2\eto{x_k\nm x_{k-1}}$}},
\end{equation}
belong to the most celebrated integrable models. They enjoy a great amount of generalizations and applications in various branches of mathematics and physics. This paper reviews a variety of generalized Toda lattices and relativistic Toda lattices, along with their integrable discretizations. This gives us an opportunity to touch upon some of the most important recent developments of the theory of discrete integrable systems, including the multi-dimensional consistency and pluri-Lagrangian structure. The paper is organized as follows. 

In Section \ref{Sect TL} we quickly review the main integrability attributes of the Toda lattice in the Flaschka-Manakov variables. It is one of the basic systems amenable to the Adler-Kostant-Symes scheme, which is presented in Section \ref{sect: AKS}. A recipe for integrable discretization of the systems within the AKS scheme is formulated in Section \ref{Sect recipe}. It is applied to the Toda lattice in the Flaschka-Manakov variables in Section \ref{Sect discretization TL}. Then in Section \ref{Sect Toda linear Newtonian} these results are applied to a symplectic realization of the linear Poisson brackets for the Flaschka-Manakov variables, which leads to the most classical exponential Toda lattice \eqref{TL New introd} and its time discretization. A variety of relatives of this system, which appear through symplectic realization of different Poisson brackets for the Flaschka-Manakov variables, together with their time discretizations, are treated in Section \ref{sect Toda New}. After that, a similar work is done for the relativistic Toda lattice: in Section \ref{Sect RTL tri-Ham}, the main integrability attributes in the Flaschka-Manakov variables are reviewed, the discretization in these variables is performed in Section \ref{Sect RTL param discretization}, and various symplectic realizations are presented in Sections \ref{Sect additive exp rel Toda}, \ref{Sect Newtonian rel Toda}. An interesting phenomenon is investigated in Section \ref{sect explicit}: it turns out that explicit discretiaztions of the Toda lattice belong to the relativistic Toda hierarchy, the discrete time step playing the role of the inverse speed of light. In Section \ref{sect: discr Toda}, we address an important conceptual twist: time discretizations of 1D evolutionary equations are re-interpreted as lattice 2D systems. In the second half of the paper, we deal with recent conceptual breakthroughs in the theory of discrete integrable systems. In Section \ref{Sect: dToda}, we discuss the relation of discrete Laplace type equations to quad-equations, and the notion of multi-dimensional consistency of quad-equations as their integrability. This development allows one to derive, in an algorithmic way, zero curvature representations for discrete relativistic Toda type systems, as demonstrated in Section \ref{Sect main}. Then, we turn to the pluri-Lagrangian theory, which describes integrability features of variational systems. The 1D pluri-Lagrangian theory is formulated in Section \ref{sect: discr 1d results} and is illustrated by the discrete time exponential Toda lattice in Section \ref{sect: BT Toda}. The 2D pluri-Lagrangian theory is formulated in Section \ref{sect: discr 2d results} and is illustrated by the discrete time relativistic Toda lattice in Section \ref{sect: from pluri to rel}.

Bibliographic references are given at the end of each section; they are kept to a necessary minimum and therefore are by no means exhaustive. We hope, however, that they will enable an interested reader to get oriented in the relevant literature.

\setcounter{equation}{0}
\section[Toda lattice]{Toda lattice in Flaschka variables: \\
\quad equations of motion, Lax representation and tri-Hamiltonian structure}
\label{Sect TL}

Equations of motion of the Toda lattice (TL) in Flaschka-Manakov variables:
\begin{equation}\label{TL}
\dot{b}_k=a_k-a_{k-1},\quad \dot{a}_k=a_k(b_{k+1}-b_k),
\qquad 1\le k\le N,
\end{equation}
with one of two types of boundary conditions: open-end
($a_0=a_N=0$), or periodic (all subscripts are taken (mod $N$), so
that $a_0\equiv a_N$, $b_{N+1}\equiv b_1$). 

The Lax representation of the TL which we mainly use in this paper is:
\begin{equation}\label{TL Lax}
\dot{T}=[\,T,A_+]=-[\,T,A_-]
\end{equation}
with
\begin{equation}\label{TL T}
T(a,b,\lambda)  =\lambda^{-1}\sum_{k=1}^{N} a_k E_{k,k+1}+\sum_{k=1}^N b_k E_{kk}
+\lambda\sum_{k=1}^{N} E_{k+1,k},
\end{equation}
\begin{equation}\label{TL B}
A_+(a,b,\lambda) = \sum_{k=1}^N b_k E_{kk}+\lambda\sum_{k=1}^{N} E_{k+1,k},
\qquad
A_-(a,b,\lambda)  =  \lambda^{-1}\sum_{k=1}^{N} a_k E_{k,k+1}.
\end{equation}
Here and below $E_{jk}$ stands for the matrix whose only nonzero entry is on 
the intersection of the $j$th row and the $k$th column and is equal to 1. 
Naturally, we set in the periodic case $E_{N+1,N}=E_{1,N}$, 
$E_{N,N+1}=E_{N,1}$. In the open--end case we set $E_{N+1,N}=E_{N,N+1}=0$ 
and always put $\lambda=1$.

In the notation which will be explained in the next section, we have:
$A_+=\pi_+(T)$, $A_-=\pi_-(T)$, so that \eqref{TL Lax} takes the form
\begin{equation}\label{Tdot spec}
\dot{T}=\big[\,T,\pi_+(T)\,\big]=-\big[\,T,\pi_-(T)\,\big],
\end{equation}
which is a (prototypical) example of the systems eligible to the AKS (Adler-Kostant-Symes) scheme. Spectral invariants of the matrix $T$ are integrals of motion of TL; there are $N$ functionally independent ones.

The phase space of TL can be equipped with three local Poisson brackets which are preserved by this flow which is therefore Hamiltonian with respect to any of these Poisson structures. These brackets are compatible in the sense that their linear combinations are invariant Poisson brackets, as well. 
\paragraph{Linear bracket:}
\begin{equation}\label{TL l br}
\{b_k,a_k\}_1=-a_k, \quad \{a_k,b_{k+1}\}_1=-a_k.
\end{equation}
The Hamilton function of TL in this bracket is 
\begin{equation}\label{TL H2}
\rH_2(a,b)=\frac{1}{2}{\tr}(T^2)=\frac{1}{2}\sum_{k=1}^N b_k^2+\sum_{k=1}^{N}a_k.
\end{equation}

\paragraph{Quadratic bracket:}
\begin{equation}\label{TL q br}
\begin{array}{cclcccl}
\{b_k,a_k\}_2 & = & -b_k a_k, & \; &
\{a_k,b_{k+1}\}_2 & = & -a_k b_{k+1}, \\ 
\{b_k,b_{k+1}\}_2 & = & -a_k, & \; &
\{a_k,a_{k+1}\}_2 & = & -a_k a_{k+1}.
\end{array}
\end{equation}
The Hamilton function of TL in this bracket is 
\begin{equation}\label{TL H1}
\rH_1(a,b)={\tr}(T)=\sum_{k=1}^N b_k.
\end{equation}

\paragraph{Cubic bracket:}
\begin{equation}\label{TL c br}
\begin{array}{cclcccl}
\{b_k,a_k\}_3     & = & -a_k (b_k^2+a_k),    & \; &
\{a_k,b_{k+1}\}_3 & = & -a_k (b_{k+1}^2+a_k), \\ 
\{b_k,b_{k+1}\}_3 & = & -a_k (b_k+b_{k+1}), & \; &
\{a_k,a_{k+1}\}_3 & = & -2a_k a_{k+1}b_{k+1},   \\ 
\{b_k,a_{k+1}\}_3 & = & -a_k a_{k+1},         & \; &
\{a_k,b_{k+2}\}_3 & = & -a_k a_{k+1}.
\end{array}\end{equation}
The Hamilton function of TL in this bracket is 
\begin{equation}\label{TL H0}
\rH_0(a,b)={\tr}(\log T)=\log(\det T).
\end{equation}

All spectral invariants of the Lax matrix $T$ (including $\rH_0$, $\rH_1$, $\rH_2$) are in involution with respect to any of these three brackets.

\paragraph{Bibliographical remarks.}

An integrable lattice with an exponential interaction of nearest neighbors
was discovered by Toda \cite{To67}. Since then it became one of the most popular
and important integrable models in general. The best general reference
remains Toda's monograph \cite{To89}, where also the story of the discovery of
this system is given from the first hand. The change of variables
$(x,p)\mapsto(a,b)$ was introduced in \cite{Fl74}, \cite{Ma74}, along with 
the Lax representation. Concerning the tri-Hamiltonian structure of the Toda lattice: the linear bracket was known from the very
beginning, since it is a simple consequence of the
Flaschka-Manakov change of variables; the quadratic and the cubic ones appeared in
\cite{A79} and in \cite{Ku85}, respectively.

\setcounter{equation}{0}
\section{Adler-Kostant-Symes scheme}
\label{sect: AKS}

Let $\g$ be a Lie algebra of some associative algebra, equipped with a non-degenerate
bi-invariant scalar product, which allows us to identify $\g^*$ with $\g$. Let $\g$, as a linear space, be a
direct sum of its two subspaces $\g_{\pm}$ which are also Lie subalgebras:
\begin{equation}\label{splitting}
\g=\g_+\oplus \g_-, \quad [\g_+,\g_+]\subset \g_+, \quad [\g_-,\g_-]\subset \g_-.
\end{equation}
Let $\pi_+$, $\pi_-$ denote the projections from $\g$ to the
corresponding subspaces, so that for any $T\in \g$ we have:
\begin{equation}\label{pi+pi-}
T=\pi_+(T)+\pi_-(T)\;,\quad \pi_{\pm}(T)\in \g_{\pm}.
\end{equation}
Then the AKS scheme deals with explicit solutions and Hamiltonian structures of the flows 
\begin{equation}\label{Tdot}
\dot{T}=\big[\,T,\pi_+(f(T))\,\big]=-\big[\,T,\pi_-(f(T))\,\big],
\end{equation}
with $\Ad$-covariant functions $f:\g\to \g$.

This general setting is sufficient to ensure that equations
(\ref{Tdot}) possess several remarkable properties. First of
all, different flows of the type (\ref{Tdot}) commute. Second,
they admit an explicit solution in terms of a factorization
problem in a Lie group. Both these properties have a purely
kinematical nature and do not depend on the Hamiltonian theory
(though the latter provides a deeper insight into the situation).

Let $\G$ be a Lie group with the Lie algebra $\g$, and let $\G_+$
and $\G_-$ be its two subgroups having $\g_+$ and $\g_-$,
respectively, as Lie algebras. Then in a certain neighborhood
$V$ of the group unit $I$  following factorization is
uniquely defined, so that for any $g\in V\subset \G$ we have:
 \index{factorization problem in a Lie group}
\begin{equation}\label{Pi+Pi-}
g=\Pi_+(g)\,\Pi_-(g),\quad \Pi_{\pm}(g)\in \G_{\pm}.
\end{equation}
In what follows we suppose, for the sake of notational simplicity,
that $\G$ is a matrix group, and write the adjoint action of the
group on $\g$ as a conjugation by the corresponding matrices. Correspondingly, we shall call ${\mathrm
Ad}$--covariant functions $f:\,\g\to \g$ also ``conjugation
covariant''. This notation has an additional advantage of being
applicable also to functions $F:\,\g\to \G$. Namely, we shall
call such a function conjugation covariant, if $F({\mathrm Ad}\,
g\cdot T)=gF(T)g^{-1}$.

\begin{theorem} \label{split kinematics}
Let $f:\g\to \g$ be a conjugation covariant function. Then
the solution of the differential equation \eqref{Tdot}
with the initial condition $T(0)=T_0$ is given, at least for sufficiently small $t$, by
\begin{equation}\label{solution cont}
T(t) = \Pi_+^{-1}\Big(\eto{tf(T_0)}\Big)\cdot T_0\cdot \Pi_+\Big(\eto{tf(T_0)}\Big)
= \Pi_-\Big(\eto{tf(T_0)}\Big)\cdot T_0\cdot \Pi_-^{-1}\Big(\eto{tf(T_0)}\Big).
\end{equation}
\end{theorem}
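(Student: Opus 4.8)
The plan is to verify directly that the curve on the right-hand side of \eqref{solution cont} solves the initial value problem \eqref{Tdot} with $T(0)=T_0$, and then invoke uniqueness of solutions of ODEs. Write $g_+(t)=\Pi_+\big(\eto{tf(T_0)}\big)$ and $g_-(t)=\Pi_-\big(\eto{tf(T_0)}\big)$, which are well-defined smooth curves in $\G_+$, $\G_-$ for small $t$ by the factorization \eqref{Pi+Pi-}, with $g_\pm(0)=I$. The two candidate expressions for $T(t)$ are $g_+^{-1}T_0\,g_+$ and $g_-\,T_0\,g_-^{-1}$. A preliminary observation is that any conjugation-covariant $f$ satisfies $[T,f(T)]=0$: substituting $g=\eto{sT}$ into $f(\Ad g\cdot T)=g f(T)g^{-1}$ gives $f(T)=\eto{sT}f(T)\eto{-sT}$, and differentiating at $s=0$ yields the claim. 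In particular $f(T_0)$ commutes with $T_0$, hence $\eto{tf(T_0)}$ commutes with $T_0$, hence $g_+g_-\,T_0\,(g_+g_-)^{-1}=T_0$, which is precisely the identity $g_+^{-1}T_0\,g_+=g_-\,T_0\,g_-^{-1}$. So the two expressions coincide; call their common value $T(t)$, and note $T(0)=T_0$.

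Next I would compute $\dot T$. Differentiating $T=g_+^{-1}T_0\,g_+$ gives $\dot T=[\,T,\,g_+^{-1}\dot g_+\,]$, while differentiating $T=g_-\,T_0\,g_-^{-1}$ gives $\dot T=-[\,T,\,\dot g_-\,g_-^{-1}\,]$. Since $g_\pm(t)$ stays in the subgroup $\G_\pm$, the logarithmic derivatives satisfy $g_+^{-1}\dot g_+\in\g_+$ and $\dot g_-\,g_-^{-1}\in\g_-$. It remains to identify these with $\pi_+(f(T))$ and $\pi_-(f(T))$. For this, differentiate the factorization identity $\eto{tf(T_0)}=g_+g_-$ (using that $f(T_0)$ commutes with $\eto{tf(T_0)}$) to obtain $f(T_0)g_+g_-=\dot g_+g_-+g_+\dot g_-$, and conjugate by $g_+$: $g_+^{-1}f(T_0)g_+=g_+^{-1}\dot g_++\dot g_-\,g_-^{-1}$. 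By conjugation covariance, $g_+^{-1}f(T_0)g_+=f(g_+^{-1}T_0\,g_+)=f(T(t))$. Thus $f(T)=g_+^{-1}\dot g_++\dot g_-\,g_-^{-1}$ is a decomposition of $f(T)$ into a $\g_+$-part plus a $\g_-$-part, and by uniqueness of the splitting \eqref{splitting} we conclude $g_+^{-1}\dot g_+=\pi_+(f(T))$ and $\dot g_-\,g_-^{-1}=\pi_-(f(T))$.

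Substituting these back into the two formulas for $\dot T$ gives $\dot T=[\,T,\pi_+(f(T))\,]=-[\,T,\pi_-(f(T))\,]$, so $T(t)$ solves \eqref{Tdot} with $T(0)=T_0$. Since $f$ is smooth, the right-hand side of \eqref{Tdot} is locally Lipschitz, so the solution is unique on a small time interval and must coincide with the constructed curve, proving \eqref{solution cont}.

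I expect the only genuinely delicate points to be bookkeeping ones: keeping the left- versus right-logarithmic derivatives straight so the two Lax forms come out with the correct signs, and invoking $[T_0,f(T_0)]=0$ (hence the commutation of $T_0$ with $\eto{tf(T_0)}$) at the right places, since this is exactly what makes the $\Pi_+$ and $\Pi_-$ formulas yield the same $T(t)$. The existence and uniqueness of the factorization near $I$ is already granted by \eqref{Pi+Pi-}, and uniqueness of the Lie-algebra splitting is immediate from \eqref{splitting}, so no real analytic or algebraic obstacle remains.
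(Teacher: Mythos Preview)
Your proof is correct and is the standard verification argument for this factorization theorem. Note, however, that the paper does not actually supply a proof of this statement: Theorem~\ref{split kinematics} is quoted as a known result of the Adler--Kostant--Symes scheme, with the bibliographical remarks at the end of Section~\ref{sect: AKS} pointing to the original sources \cite{A79}, \cite{Ko79}, \cite{Sy80, Sy82}. What you have written is precisely the classical proof found in that literature: define $g_\pm(t)=\Pi_\pm\big(\eto{tf(T_0)}\big)$, use $[T_0,f(T_0)]=0$ to see the two conjugation formulas agree, differentiate the factorization to obtain $f(T)=g_+^{-1}\dot g_++\dot g_-\,g_-^{-1}$, and read off the $\pi_\pm$-components from the direct-sum splitting. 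All the bookkeeping with left/right logarithmic derivatives and signs is handled correctly, and the appeal to ODE uniqueness is the right way to close. There is nothing to add or correct.
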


\begin{definition}\label{def Backlund}
Consider the hierarchy of flows \eqref{Tdot} on $\g$.
For an arbitrary conjugation covariant function $F:\g\to \G$ define the
{\itbf B\"acklund transformation} $\mathrm{BT}_F:\g\to \g$ of this
hierarchy as
\begin{equation}\label{BT}
\wT=\mathrm{BT}_F(T)=\Pi_+^{-1}\left(F(T)\right)\cdot T\cdot\Pi_+\left(F(T)\right)
= \Pi_-\left(F(T)\right)\cdot T\cdot\Pi_-^{-1}\left(F(T)\right)\,.
\end{equation}
\end{definition}

One of the most important properties of B\"acklund
transformations, implying also other ones, is contained in the
following theorem.

\begin{theorem}\label{superposition of BTs}
For two arbitrary conjugation covariant functions $F_1,F_2:\g\to\G$,
\begin{equation}\label{BT superp}
\mathrm{BT}_{F_2}\circ\mathrm{BT}_{F_1}=\mathrm{BT}_{F_2F_1},
\end{equation}
so that the B\"acklund transformations $\mathrm{BT}_{F_1}$, $\mathrm{BT}_{F_2}$
commute.
\end{theorem}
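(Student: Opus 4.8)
The plan is to work directly from the defining formula \eqref{BT} for the B\"acklund transformation, together with the fact (to be verified) that the factorization maps $\Pi_\pm$ and the conjugation-covariant functions $F_i$ interact nicely under composition. Write $\wT=\mathrm{BT}_{F_1}(T)$ and $\whT=\mathrm{BT}_{F_2}(\wT)$. By Definition \ref{def Backlund},
\begin{equation*}
\wT=\Pi_+^{-1}\bigl(F_1(T)\bigr)\cdot T\cdot\Pi_+\bigl(F_1(T)\bigr),\qquad
\whT=\Pi_+^{-1}\bigl(F_2(\wT)\bigr)\cdot \wT\cdot\Pi_+\bigl(F_2(\wT)\bigr).
\end{equation*}
Substituting the first into the second, $\whT$ is the conjugation of $T$ by the group element $\Pi_+\bigl(F_1(T)\bigr)\,\Pi_+\bigl(F_2(\wT)\bigr)$. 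The goal is to recognize this element as $\Pi_+\bigl(F_2(T)F_1(T)\bigr)$ up to a factor in $\G_+$ that commutes through harmlessly, i.e. to show $\whT=\mathrm{BT}_{F_2F_1}(T)$.

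The key step is to exploit conjugation covariance of $F_2$: since $\wT=\Ad\,g\cdot T$ with $g=\Pi_+^{-1}\bigl(F_1(T)\bigr)$, we have $F_2(\wT)=g\,F_2(T)\,g^{-1}=\Pi_+^{-1}\bigl(F_1(T)\bigr)\,F_2(T)\,\Pi_+\bigl(F_1(T)\bigr)$. Therefore
\begin{equation*}
F_1(T)\cdot F_2(\wT)=F_1(T)\cdot\Pi_+^{-1}\bigl(F_1(T)\bigr)\cdot F_2(T)\cdot\Pi_+\bigl(F_1(T)\bigr)
=\Pi_-\bigl(F_1(T)\bigr)\cdot F_2(T)\cdot\Pi_+\bigl(F_1(T)\bigr),
\end{equation*}
using $F_1(T)=\Pi_+\bigl(F_1(T)\bigr)\Pi_-\bigl(F_1(T)\bigr)$. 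Now I would also use the ``$-$'' version of \eqref{BT} for $\whT$, conjugating by $\Pi_-$ factors, and similarly rewrite $F_2(T)F_1(T)$ by inserting the factorization of $F_1(T)$; the point is that the product $F_2(\wT)F_1(T)$ and the product $F_2(T)F_1(T)$ differ only by moving a $\G_+$-factor from one side to the other, which does not change the result of applying $\Pi_+$ and then conjugating (a $\G_+$-factor on the right of the argument of $\Pi_+$ passes into $\Pi_+$, and conjugation by it is absorbed). Equivalently, one checks that $\mathrm{BT}_F$ depends on $F(T)$ only through its effect on the factorization, so $\mathrm{BT}_{F_2F_1}(T)$ can be computed by first splitting off $\Pi_+\bigl(F_1(T)\bigr)$, conjugating, and then splitting off the rest — which is exactly the two-step process above.

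The main obstacle is bookkeeping: one must be careful about which of the two equivalent expressions in \eqref{BT} (the $\Pi_+$-version or the $\Pi_-$-version) to use at each stage, so that the $\G_+$- and $\G_-$-factors land where they can be absorbed rather than obstruct. The cleanest route is probably to prove an auxiliary identity of the form $\Pi_+\bigl(F_2(T)F_1(T)\bigr)=\Pi_+\bigl(F_2(\wT)\bigr)\cdot\Pi_+\bigl(F_1(T)\bigr)\cdot(\text{something in }\G_+)$ — or rather its inverse-conjugation analog — by applying $\Pi_\pm$ to the rearranged product $\Pi_-\bigl(F_1(T)\bigr)F_2(T)\Pi_+\bigl(F_1(T)\bigr)$ and using uniqueness of the factorization. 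Once that identity is in hand, substituting into \eqref{BT} for $\mathrm{BT}_{F_2F_1}$ and comparing with the expression for $\whT$ derived above finishes the proof. Commutativity of $\mathrm{BT}_{F_1}$ and $\mathrm{BT}_{F_2}$ is then immediate, since $F_2F_1$ and $F_1F_2$ need not be equal as functions but \eqref{BT superp} applied both ways, combined with the observation that $\mathrm{BT}_F$ only sees $F$ through conjugation-covariant data, forces the two compositions to agree — here one uses that $F_1(T)$ and $F_2(T)$ evaluated along the relevant orbits produce the same net conjugating element regardless of order. (A quick check: all flows \eqref{Tdot} arise as $F(T)=\exp(tf(T))$, for which $F_2F_1=F_1F_2$ literally, recovering the commutativity of the hierarchy flows as a special case.)
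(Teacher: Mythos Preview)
The paper states this theorem without proof, so there is no argument to compare against. Your overall strategy is right, but the execution has two genuine errors that must be fixed before the proof goes through.

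First, the algebraic step $F_1(T)\cdot\Pi_+^{-1}\bigl(F_1(T)\bigr)=\Pi_-\bigl(F_1(T)\bigr)$ is false: from $F_1(T)=\Pi_+\bigl(F_1(T)\bigr)\Pi_-\bigl(F_1(T)\bigr)$ one obtains $\Pi_+^{-1}\bigl(F_1(T)\bigr)\cdot F_1(T)=\Pi_-\bigl(F_1(T)\bigr)$, not the other order. What you should compute is $F_2(T)F_1(T)$ directly (this is, after all, the argument of $\mathrm{BT}$ in the claim). Using $F_2(T)=\Pi_+\bigl(F_1(T)\bigr)\,F_2(\wT)\,\Pi_+^{-1}\bigl(F_1(T)\bigr)$ one gets
\[
F_2(T)F_1(T)=\Pi_+\bigl(F_1(T)\bigr)\,F_2(\wT)\,\Pi_-\bigl(F_1(T)\bigr)
=\Pi_+\bigl(F_1(T)\bigr)\,\Pi_+\bigl(F_2(\wT)\bigr)\cdot\Pi_-\bigl(F_2(\wT)\bigr)\,\Pi_-\bigl(F_1(T)\bigr),
\]
and uniqueness of the factorization yields the clean identity $\Pi_+\bigl(F_2(T)F_1(T)\bigr)=\Pi_+\bigl(F_1(T)\bigr)\,\Pi_+\bigl(F_2(\wT)\bigr)$ --- note the order is opposite to your proposed auxiliary identity, and there is no extra $\G_+$-factor. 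Conjugating $T$ by this element gives exactly $\mathrm{BT}_{F_2}(\wT)$, which is \eqref{BT superp}.

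Second, your commutativity argument is a real gap. You write that $F_2F_1$ and $F_1F_2$ ``need not be equal as functions'' and then appeal to a vague orbit principle. In fact they \emph{are} equal pointwise: any conjugation-covariant $F$ satisfies $F(T)T=TF(T)$ (take $g=e^{sT}$ in the covariance relation and differentiate at $s=0$), and then applying covariance of $F_2$ with $g=F_1(T)\in\G$ gives $F_2(T)=F_1(T)F_2(T)F_1(T)^{-1}$, hence $F_1(T)F_2(T)=F_2(T)F_1(T)$. Once \eqref{BT superp} is established, this makes $\mathrm{BT}_{F_1F_2}=\mathrm{BT}_{F_2F_1}$ immediate.
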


\begin{corollary} $\mathrm a)$ Any two flows of the type \eqref{Tdot} commute.

$\mathrm b)$ An arbitrary B\"acklund
transformation commutes with an arbitrary flow of the hierarchy
\eqref{Tdot}. In other words, any B\"acklund transformation
maps solutions of \eqref{Tdot} onto solutions.

\end{corollary}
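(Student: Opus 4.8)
The plan is to observe that every time-$t$ map of a flow in the hierarchy \eqref{Tdot} is itself a B\"acklund transformation, and then to read off both statements from the superposition formula \eqref{BT superp}. Comparing the solution formula \eqref{solution cont} of Theorem \ref{split kinematics} with the definition \eqref{BT}, I see that the time-$t$ flow $\Phi^f_t:T_0\mapsto T(t)$ of the $f$-equation coincides with $\mathrm{BT}_F$ for the function $F(T)=\eto{tf(T)}$. First I would check that this $F:\g\to\G$ is genuinely conjugation covariant: since $f(\Ad g\cdot T)=gf(T)g^{-1}$, exponentiating gives $\eto{tf(\Ad g\cdot T)}=g\,\eto{tf(T)}\,g^{-1}$, so $\Phi^f_t=\mathrm{BT}_F$ is a legitimate B\"acklund transformation and Theorem \ref{superposition of BTs} applies to it.

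For part (a), let $\Phi^f_s=\mathrm{BT}_{F_s}$ and $\Phi^g_t=\mathrm{BT}_{G_t}$ with $F_s(T)=\eto{sf(T)}$ and $G_t(T)=\eto{tg(T)}$ be the time maps of two flows of the hierarchy. Applying \eqref{BT superp} in both orders gives $\Phi^f_s\circ\Phi^g_t=\mathrm{BT}_{F_sG_t}$ and $\Phi^g_t\circ\Phi^f_s=\mathrm{BT}_{G_tF_s}$, where the products of functions are taken pointwise. Hence it suffices to show $F_s(T)G_t(T)=G_t(T)F_s(T)$, i.e.\ $\eto{sf(T)}\eto{tg(T)}=\eto{tg(T)}\eto{sf(T)}$, which would follow from $[f(T),g(T)]=0$. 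For part (b) the same mechanism applies: with $\Phi^f_t=\mathrm{BT}_{F_t}$ and an arbitrary B\"acklund transformation $\mathrm{BT}_G$, formula \eqref{BT superp} gives $\mathrm{BT}_G\circ\Phi^f_t=\mathrm{BT}_{GF_t}$ and $\Phi^f_t\circ\mathrm{BT}_G=\mathrm{BT}_{F_tG}$, so commutativity reduces to $G(T)\,\eto{tf(T)}=\eto{tf(T)}\,G(T)$, i.e.\ to $G(T)$ commuting with $f(T)$.

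Establishing these pointwise commutation relations is the key step, and the one I expect to be the main obstacle. I would derive them from conjugation covariance alone. Taking $g=\eto{\tau T}$ in $f(\Ad g\cdot T)=gf(T)g^{-1}$ and using $\Ad(\eto{\tau T})\,T=T$ yields $\eto{\tau T}f(T)\eto{-\tau T}=f(T)$ for all $\tau$, whence $[T,f(T)]=0$; the identical argument gives $[T,g(T)]=0$ and shows that the group element $G(T)$ commutes with $T$. Thus the values $f(T),g(T),G(T)$ all lie in the centralizer of $T$. For regular $T$ this centralizer is commutative, so $[f(T),g(T)]=0$ and $G(T)$ commutes with $f(T)$ on the open dense set of regular elements; since these commutators depend continuously on $T$, they vanish identically. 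This gives $F_sG_t=G_tF_s$ and $GF_t=F_tG$ as functions, and \eqref{BT superp} then delivers the asserted commutativity in both (a) and (b).

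Finally, the reformulation in (b) is immediate from commutativity: if $T(t)=\Phi^f_t(T_0)$ solves \eqref{Tdot}, then $\mathrm{BT}_G\big(T(t)\big)=\mathrm{BT}_G\big(\Phi^f_t(T_0)\big)=\Phi^f_t\big(\mathrm{BT}_G(T_0)\big)$ is again an orbit of the $f$-flow, namely the one issuing from $\mathrm{BT}_G(T_0)$; hence $\mathrm{BT}_G$ carries solutions of \eqref{Tdot} onto solutions.
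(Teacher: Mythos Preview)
Your proof is correct, but it does considerably more work than the paper's argument. The paper's proof of the Corollary is a one-liner: it simply observes that the time-$t$ map of any flow \eqref{Tdot} is the B\"acklund transformation $\mathrm{BT}_F$ with $F(T)=\eto{tf(T)}$, and then invokes Theorem~\ref{superposition of BTs} directly. The point is that Theorem~\ref{superposition of BTs} already asserts that \emph{any} two B\"acklund transformations commute (this is the ``so that'' clause following \eqref{BT superp}); once you know that flows are B\"acklund transformations, both (a) and (b) are immediate.

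You, by contrast, use only the composition formula \eqref{BT superp} and then re-derive the commutativity conclusion of Theorem~\ref{superposition of BTs} from scratch, by reducing to the pointwise identity $F_1(T)F_2(T)=F_2(T)F_1(T)$ and establishing it via the centralizer argument (conjugation covariance forces $[T,f(T)]=0$, and for regular $T$ the centralizer is abelian, then extend by density). This is a legitimate and self-contained route---in effect you are supplying the proof of the commutativity part of Theorem~\ref{superposition of BTs} that the paper states but does not prove in the excerpt. The density-of-regular-elements step is clean for $\g=\mathrm{gl}(N)$; in the twisted loop algebra setting one should be a bit more careful about what ``regular'' and ``dense'' mean, but the argument goes through with standard interpretations. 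So: same core idea (flows are B\"acklund transformations), but the paper defers the commutativity to Theorem~\ref{superposition of BTs}, whereas you unpack it explicitly.
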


\noindent Indeed, according to Theorem \ref{split kinematics} any
flow governed by a differential equation (\ref{Tdot}) consists of
B\"acklund transformations $\mathrm{BT}_F$ with $F(T)=\eto{tf(T)}$.

As another important consequence of Theorem \ref{superposition of BTs} we have the following statement.

\begin{theorem} \label{discr split kinematics}
Let $F:\g\to \G$ be a conjugation covariant function. Consider
the formula \eqref{BT} for the B\"acklund transformation
${\mathrm BT}_F$ as the difference equation
\begin{equation}\label{Tn+1}
\wT= \Pi_+^{-1}\left(F(T)\right)\cdot T\cdot\Pi_+\left(F(T)\right)
= \Pi_-\left(F(T)\right)\cdot T\cdot\Pi_-^{-1}\left(F(T)\right)
\end{equation}
for $T=T(n)$, $\wT=T(n+1)$, with the initial condition $T(0)=T_0$.
Then the solution of this difference equation is given by
\begin{equation}\label{solution discr}
T(n) = \Pi_+^{-1}\left(F^n(T_0)\right)\cdot T_0\cdot\Pi_+\left(F^n(T_0)\right)
= \Pi_-\left(F^n(T_0)\right)\cdot T_0\cdot\Pi_-^{-1}\left(F^n(T_0)\right).
\end{equation}
\end{theorem}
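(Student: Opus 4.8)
The plan is to deduce the statement directly from the superposition formula of Theorem~\ref{superposition of BTs} by an elementary induction on $n$. First I would fix notation: for a conjugation covariant $F:\g\to\G$ let $F^n:\g\to\G$ denote the pointwise $n$-th power, $F^n(T)=\big(F(T)\big)^n$; this is again conjugation covariant, since conjugation by a fixed group element is a group homomorphism, so $\mathrm{BT}_{F^n}$ is well defined. Next I would observe that the difference equation \eqref{Tn+1} says precisely $T(n+1)=\mathrm{BT}_F\big(T(n)\big)$, so that $T(n)=\mathrm{BT}_F^{\,\circ n}(T_0)$, the $n$-fold composition applied to the initial datum.

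The induction is then immediate. For $n=1$ the asserted formula \eqref{solution discr} is nothing but the definition \eqref{BT} of $\mathrm{BT}_F$. Assuming $\mathrm{BT}_F^{\,\circ n}=\mathrm{BT}_{F^n}$, Theorem~\ref{superposition of BTs} applied with $F_1=F^n$ and $F_2=F$ gives
\[
\mathrm{BT}_F^{\,\circ(n+1)}=\mathrm{BT}_F\circ\mathrm{BT}_{F^n}=\mathrm{BT}_{F\cdot F^n}=\mathrm{BT}_{F^{n+1}},
\]
since the pointwise product of $F$ and $F^n$ is $F^{n+1}$. Hence $T(n)=\mathrm{BT}_{F^n}(T_0)$, and spelling out the right-hand side via Definition~\ref{def Backlund} with $F$ replaced by $F^n$ yields exactly \eqref{solution discr}, in both the $\Pi_+$- and the $\Pi_-$-form simultaneously.

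The step most likely to need care is not the induction itself but the two surrounding points: (i) checking that $F^n$ really is conjugation covariant, which is what makes Theorem~\ref{superposition of BTs} applicable at each stage; and (ii) the domain of definition, since the factorization \eqref{Pi+Pi-} exists only in a neighbourhood $V$ of the unit $I$, so that \eqref{solution discr} holds for those $n$ with $F^k(T_0)\in V$ for all $k=1,\dots,n$ --- which, exactly as in Theorem~\ref{split kinematics}, we take as tacitly assumed. As a consistency check one may note that $F^0$ is the constant map $T\mapsto I$, that $\Pi_\pm(I)=I$, and that \eqref{solution discr} then correctly returns $T(0)=T_0$.
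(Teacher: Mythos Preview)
Your proposal is correct and follows essentially the same approach as the paper: both deduce the result by induction from Theorem~\ref{superposition of BTs}, obtaining $(\mathrm{BT}_F)^{\circ n}=\mathrm{BT}_{F^n}$ and then reading off \eqref{solution discr} from Definition~\ref{def Backlund}. Your version is slightly more explicit about the conjugation covariance of $F^n$ and the domain caveat, but the core argument is identical.
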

\begin{proof} By induction from Theorem \ref{superposition of BTs},
\[
\mathrm{BT}_{F_n}\circ\ldots\circ\mathrm{BT}_{F_2}\circ\mathrm{BT}_{F_1}=
\mathrm{BT}_{F_n\ldots F_2F_1}.
\]
In particular, for $F_1=F_2=\ldots=F_n=F$ we obtain:
\[
(\mathrm{BT}_{F})^n=\mathrm{BT}_{F^n},
\]
which is the statement of the Theorem. 
\end{proof}

This Theorem gives a discrete-time counterpart of Theorem
\ref{split kinematics}. 
Comparing the formulas (\ref{solution discr}), (\ref{solution
cont}), we see that the map (\ref{Tn+1}) is interpolated by the
flow (\ref{Tdot}) with the time step $h$, if
\[
\eto{hf(T)}=F(T)\quad\Leftrightarrow\quad f(T)=h^{-1}\log(F(T)).
\]

For the flow TL, the main ingredients of the AKS construction are as follows.

\paragraph{Open-end case.}
For the {\em open-end case} we set $\g=\mathrm{gl}(N)$, the algebra of
$N\times N$ matrices with the usual matrix product, the Lie bracket
$[L,M]=LM-ML$, and the non-degenerate bi-invariant scalar product
$\langle L,M\rangle={\tr}(LM)$ which allows to identify $\g^*$ with $\g$.  We have a splitting (\ref{splitting}), where 
$\g_+$ consists of lower triangular matrices, while $\g_-$ consists of strictly upper triangular matrices.
The Lie group $\G$ corresponding to the Lie algebra $\g$ is $\mathrm{GL}(N)$,
the group of $N\times N$ nondegenerate matrices. The subgroups $\G_+$, $\G_-$
corresponding to the Lie algebras $\g_+$, $\g_-$ consist of non-degenerate
lower triangular matrices and of upper triangular matrices with unit diagonal,
respectively. The $\Pi_+\Pi_-$ factorization is well known in the linear
algebra under the name of the {\em LU factorization}.

\paragraph{Periodic case.}
In the {\em periodic case} we set $\g$ to be a certain {\em twisted 
loop algebra} over $\mathrm{gl}(N)$:
\[
\g=\left\{T(\lambda)\in \mathrm {gl}(N)[\lambda,\lambda^{-1}]:
\Omega T(\lambda)\Omega^{-1}=T(\omega\lambda)\right\},
\]
where $\Omega=\mathrm{diag}(1,\omega,\ldots,\omega^{N-1})$,
$\omega=\exp(2\pi i/N)$. The nondegenerate bi--invariant scalar product is chosen as
$\langle L(\lambda),\,M(\lambda)\rangle={\tr}(L(\lambda)M(\lambda))_0$,
the subscript 0 denoting the free term of the formal Laurent series. Again, we have a splitting (\ref{splitting}), where
\begin{align*}
\g_+ &=\left\{T(\lambda)\in \mathrm{gl}(N)[\lambda]:
\Omega T(\lambda)\Omega^{-1}=L(\omega\lambda)\right\},\\
\g_- &=\left\{T(\lambda)\in \mathrm{gl}(N)[\lambda^{-1}]:
\Omega T(\lambda)\Omega^{-1}=T(\omega\lambda)\;{\mathrm and}\; T(\infty)=0\right\}.
\end{align*}

The group $\G$ corresponding to the Lie algebra $\g$ is a {\em twisted loop
group}, 
\[
\G=\left\{g:  {\mathbb C}P^1\backslash\{0,\infty\}\to  \mathrm{GL}(N): g\; \mathrm{regular}, \;  
\Omega g(\lambda)\Omega^{-1}=g(\omega\lambda)\right\}.
\]
Its subgroups
$\G_+$ and $\G_-$ corresponding to the Lie algebras $\g_+$ and $\g_-$,
are singled out by the following conditions:
\begin{align*}
\G_+ &=\left\{g:  {\mathbb C}P^1\backslash\{\infty\}\to  \mathrm{GL}(N): g\; \mathrm{regular}, \;  
\Omega g(\lambda)\Omega^{-1}=g(\omega\lambda)\right\},\\
\G_- &=\left\{g:  {\mathbb C}P^1\backslash\{0\}\to  \mathrm{GL}(N): g\; \mathrm{regular}, \;  
\Omega g(\lambda)\Omega^{-1}=g(\omega\lambda)\;\mathrm{and}\; g(\infty)=I\right\}.
\end{align*}
We call the corresponding $\Pi_+\Pi_-$ factorization the {\em generalized
LU factorization}. It is uniquely defined in a certain neighborhood of
the unit element of $\G$. As opposed to the open-end case, finding the
generalized $LU$ factorization is a problem of the Riemann-Hilbert type
which is solved in terms of algebraic geometry rather than in terms of linear
algebra. 
\smallskip

In both cases, open-end and periodic, one has $f(T)=T$.

\paragraph{Bibliographical remarks.}  The foundational references for the AKS-scheme are \cite{A79},
\cite{Ko79}, \cite{Sy80, Sy82}.

\setcounter{equation}{0}
\section{Recipe for integrable discretization}
\label{Sect recipe}

The results of the previous Section suggest the following prescription.
\medskip

\noindent
\textbf{Recipe.} {\em For an integrable system allowing a Lax representation of
the form \eqref{Tdot},
an integrable discretization is given by the difference equation \eqref{BT}, with some conjugation covariant function
$F:\g\to \G$ such that $F(T)=I+hf(T)+o(h)$ (B\"acklund
transformation close to identity)}.
\medskip

Of course, this prescription is only practical  if the
corresponding factors $\Pi_{\pm}(F(T))$ admit more or less explicit expressions,
allowing to write down the corresponding difference equations in a
more or less closed form. The choice of $F(T)$ is a transcendent
problem. Miraculously, the simplest possible choice
$F(T)=I+hf(T)$ works perfectly well for a vast set of examples
(when it makes sense, i.e., when $I+hf(T)\in \G$), including the Toda lattice.

Let us stress once more the advantages of this approach to the
problem of integrable discretization.
\begin{itemize}
\item The discretizations obtained in this way share the Lax matrix
and the integrals of motion with their underlying continuous time
systems.
\item Suppose that the hierarchy of continuous time systems
(\ref{Tdot}) is
Hamiltonian with respect to some Poisson bracket. Then our
discretizations have the Poisson property with respect to the same
bracket. 
\item The initial value problem for our discrete time equations
can be solved in terms of the same factorization in a Lie group as
the initial value problem for the continuous time system.
\item Interpolating Hamiltonians are granted by-products of this approach.
\end{itemize}

\paragraph{Bibliographical remarcs.} This recipe for integrable discretization was clearly formulated for the first time in 
\cite{Su95, Su96}, and was put at the basis of a monographic study \cite{Su03}. However, a viewpoint according to which 
B\"acklund transformations lie at the basis of discretization was already pushed forward in \cite{Le81}. This is also well established in
discrete differential geometry, cf. \cite{Bo99}, \cite{DSM00}, \cite{BS08}.

\setcounter{equation}{0}
\section{Discretization of the Toda lattice in the Flaschka-Manakov variables}
\label{Sect discretization TL}

We now turn to the problem of finding an integrable time discretization
for the flow TL. To this purpose we apply the recipe of Section \ref{Sect recipe}
with
\[
F(T)=I+hT,
\]
i.e., we take as a discretization of the flow TL the map described by the
discrete time Lax equation
\[
\wT=\Pi_+^{-1}(I+hT)\cdot T\cdot\Pi_+(I+hT)=
\Pi_-(I+hT)\cdot T\cdot\Pi_-^{-1}(I+hT).
\]
Thus, the main problem is to determine the factors
\begin{equation}\label{dTL factors}
\mbA_+=\Pi_+(I+hT),\quad \mbA_-=\Pi_-(I+hT).
\end{equation}
\begin{lemma}\label{discr TL factor}
For the matrix $T=T(a,b,\lambda)$ of the form \eqref{TL T}, the factors \eqref{dTL factors}
are of the form
\begin{align}
\mbA_+ & =  \sum_{k=1}^N\beta_kE_{kk}+h\lambda\sum_{k=1}^{N}E_{k+1,k},
\label{dTL Pi+}\\
\mbA_- & =  I+h\lambda^{-1}\sum_{k=1}^{N}\alpha_kE_{k,k+1}.
\label{dTL Pi-}
\end{align}
Here, the coefficients $\beta_k=\beta_k(a,b)$ are defined by the relations 
\begin{equation}\label{dTL beta}
\beta_k=1+hb_k-\frac{h^2a_{k-1}}{\beta_{k-1}},
\end{equation} 
and
\begin{equation}\label{dTL alpha}
\alpha_k=\frac{a_k}{\beta_k}\;.
\end{equation}
\end{lemma}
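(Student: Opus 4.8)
The plan is to proceed by a direct ansatz. Since the general theory of Section~\ref{sect: AKS} guarantees that the (generalized) $LU$ factorization of $I+hT$ exists and is unique for sufficiently small $h$, it suffices to exhibit \emph{one} factorization $I+hT=\mbA_+\mbA_-$ with $\mbA_+\in\G_+$, $\mbA_-\in\G_-$ of the claimed bidiagonal shapes \eqref{dTL Pi+}, \eqref{dTL Pi-} with as-yet-undetermined coefficients $\beta_k$ (diagonal of $\mbA_+$) and $\alpha_k$ (superdiagonal of $\mbA_-$), and then to read the coefficients off from the matching conditions; uniqueness will then identify these matrices with $\Pi_\pm(I+hT)$.

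First I would carry out the matrix multiplication. Because $\mbA_+$ is lower bidiagonal and $\mbA_-$ is upper bidiagonal, the product $\mbA_+\mbA_-$ is tridiagonal; using the identities $E_{jj}E_{k,k+1}=\delta_{jk}E_{k,k+1}$, $E_{j+1,j}E_{kk}=\delta_{jk}E_{k+1,k}$ and $E_{j+1,j}E_{k,k+1}=\delta_{jk}E_{k+1,k+1}$ one obtains
\begin{equation*}
\mbA_+\mbA_-=\sum_{k=1}^N\big(\beta_k+h^2\alpha_{k-1}\big)E_{kk}
+h\lambda^{-1}\sum_{k=1}^N\beta_k\alpha_k\,E_{k,k+1}
+h\lambda\sum_{k=1}^N E_{k+1,k},
\end{equation*}
where in the periodic case all indices are read mod $N$ (so $E_{N+1,N}\equiv E_{1,N}$, $\alpha_0\equiv\alpha_N$, etc.) and in the open-end case the terms involving $E_{N+1,N}$, $E_{N+1,N+1}$ and the coefficient $\alpha_0$ are simply dropped. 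Comparing this, entry by entry, with $I+hT=\sum_k(1+hb_k)E_{kk}+h\lambda^{-1}\sum_k a_kE_{k,k+1}+h\lambda\sum_k E_{k+1,k}$: the subdiagonal matches identically; the superdiagonal forces $\beta_k\alpha_k=a_k$, which is \eqref{dTL alpha}; and the diagonal forces $\beta_k+h^2\alpha_{k-1}=1+hb_k$, which, upon substituting \eqref{dTL alpha}, is exactly the recursion \eqref{dTL beta}.

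It then remains to check the membership $\mbA_+\in\G_+$, $\mbA_-\in\G_-$. In the open-end case this is immediate: with $a_0=0$ the recursion \eqref{dTL beta} defines $\beta_1,\dots,\beta_N$ unambiguously, and for small $h$ each $\beta_k$ is close to $1$, hence nonzero, so $\mbA_+$ is a nondegenerate lower triangular matrix while $\mbA_-$ is upper triangular with unit diagonal. In the periodic case one checks in addition that $\mbA_+$ is polynomial in $\lambda$ (regular at $\lambda=0$) and that $\mbA_-$ is polynomial in $\lambda^{-1}$ with $\mbA_-(\infty)=I$, and that both obey the twist relation $\Omega\,\mbA_\pm(\lambda)\,\Omega^{-1}=\mbA_\pm(\omega\lambda)$: indeed $\Omega E_{k+1,k}\Omega^{-1}=\omega\,E_{k+1,k}$ compensates exactly the rescaling of the accompanying factor $h\lambda$, $\Omega E_{k,k+1}\Omega^{-1}=\omega^{-1}E_{k,k+1}$ compensates that of $h\lambda^{-1}$, and the diagonal terms are $\Omega$-invariant. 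Here one also invokes — rather than re-proves — that the cyclic system \eqref{dTL beta} for $(\beta_1,\dots,\beta_N)$ admits, for $h$ small, a solution with all $\beta_k$ near $1$; this is precisely the existence half of the generalized $LU$ factorization.

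The only genuinely delicate point is the periodic case: the index bookkeeping mod $N$, the verification of the twist-equivariance of $\mbA_\pm$, and the observation that "$\beta_k$ defined by \eqref{dTL beta}" really amounts to solving a cyclic, continued-fraction-type system whose solvability for small $h$ is inherited from the abstract factorization theorem of Section~\ref{sect: AKS} rather than established by hand. The matrix algebra itself — forming $\mbA_+\mbA_-$ and matching it with $I+hT$ — is entirely routine.
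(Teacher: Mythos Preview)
Your proof is correct and follows essentially the same approach as the paper: compute the product $\mbA_+\mbA_-$ of the bidiagonal ans\"atze, match entries with $I+hT$ to obtain the system $\beta_k+h^2\alpha_{k-1}=1+hb_k$, $\beta_k\alpha_k=a_k$, and deduce \eqref{dTL beta}, \eqref{dTL alpha}. You supply more detail than the paper does, in particular the explicit verification of the twist equivariance $\Omega\,\mbA_\pm(\lambda)\,\Omega^{-1}=\mbA_\pm(\omega\lambda)$ in the periodic case, which the paper leaves implicit.
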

\noindent
Indeed, it is easy to realize that the factors $\mbA_{\pm}$ must be of the form \eqref{dTL Pi+}, \eqref{dTL Pi-}. The factorization
$\mbA_+\mbA_-=I+hT$ is equivalent to the system
\begin{equation}\label{dTL aux}
\beta_k+h^2\alpha_{k-1}=1+hb_k,\quad \beta_k\alpha_k=a_k,
\end{equation}
which, in turn, is equivalent to \eqref{dTL beta}, \eqref{dTL alpha}. In the open-end case, due to $a_0=0$, relation (\ref{dTL beta}) is uniquely solvable, and leads to
explicit expressions in terms of finite continued fractions:
\[
\beta_1=1+hb_1;\quad
\beta_2=1+hb_2-\frac{h^2a_1}{1+hb_1};\quad\cdots\quad;
\]
\[
\beta_N=1+hb_N-\frac{h^2a_{N-1}}{1+hb_{N-1}-
\displaystyle\frac{h^2a_{N-2}}{1+hb_{N-2}-\;
\raisebox{-3mm}{$\ddots$}
\raisebox{-4.5mm}{$\;-\displaystyle\frac{h^2a_1}{1+hb_1}$}}}.
\]
In the periodic case $\beta_k$ may be expressed as analogous infinite
$N$-periodic continued fractions and are, therefore, {\em double-valued}
functions of $(a,b)$. However, in the limit $h\to 0$ one branch of $\beta_k$
can be singled out by the asymptotics 
\begin{equation}\label{dTL beta as}
\beta_k(a,b)=1+hb_k+O(h^2).
\end{equation}

\begin{theorem}\label{discrete TL}
The discrete time Lax equation
\begin{equation}\label{dTL Lax}
\wT=\mbA_+^{-1}T\mbA_+=\mbA_- T\mbA_-^{-1} \quad with \quad
\mbA_+=\Pi_+(I+hT),\;\;\mbA_-=\Pi_-(I+hT)
\end{equation}
is equivalent to the map $(a,b)\mapsto(\wa,\wb)$ described by the following
equations:
\begin{equation}\label{dTL}
\wb_k=b_k+h\left(\frac{a_k}{\beta_k}-\frac{a_{k-1}}{\beta_{k-1}}\right),
\quad    \wa_k=a_k\;\frac{\beta_{k+1}}{\beta_k},
\end{equation}
where the functions $\beta_k$ are defined by
the recurrent relation \eqref{dTL beta}.
\end{theorem}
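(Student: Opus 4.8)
The plan is to verify that the discrete Lax equation $\wT = \mbA_+^{-1} T \mbA_+$ reproduces precisely the stated map on the variables $(a,b)$, by exploiting the same structural observation that underlies Lemma \ref{discr TL factor}: conjugating the band matrix $T$ by the triangular factors preserves its shape, so $\wT$ is again of the form \eqref{TL T} with new coefficients $\wa_k, \wb_k$, and it suffices to read off those coefficients. I would work with the second representation $\wT = \mbA_- T \mbA_-^{-1}$, since $\mbA_- = I + h\lambda^{-1}\sum \alpha_k E_{k,k+1}$ with $\alpha_k = a_k/\beta_k$ has an especially simple explicit inverse modulo the nilpotent structure (in the open-end case it is a genuine finite sum; in the periodic case one keeps track of the $\lambda$-grading). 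Either choice reduces the theorem to a finite, essentially $2\times 2$-block bookkeeping computation.

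First I would record the shape claim: because $\mbA_+$ is lower bidiagonal and invertible, $\mbA_+^{-1}$ is again lower triangular, and conjugating $T$ — which has one superdiagonal, one diagonal, and one subdiagonal — by a lower-triangular matrix cannot create entries above the first superdiagonal; a symmetric argument with $\mbA_-$ shows it cannot create entries below the first subdiagonal. Hence $\wT = \lambda^{-1}\sum \wa_k E_{k,k+1} + \sum \wb_k E_{kk} + \lambda \sum E_{k+1,k}$, and in particular the coefficient of the subdiagonal stays normalized to $1$, which is the consistency check that makes the ansatz legitimate. Then I would multiply out $\wT \mbA_+ = \mbA_+ \wT$ equivalently $T \mbA_+ = \mbA_+ \wT$ — i.e. use $\mbA_+ \wT = T \mbA_+$, which avoids inverting anything — and compare the diagonal and the two off-diagonal bands entry by entry. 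The diagonal comparison gives a relation expressing $\wb_k$ through $b_k, \beta_k$ and $\alpha_{k-1}$; the subdiagonal comparison is automatically satisfied; the superdiagonal comparison gives $\wa_k$ in terms of $a_k$ and ratios of the $\beta$'s. Substituting the defining relations \eqref{dTL aux}, namely $\beta_k \alpha_k = a_k$ and $\beta_k + h^2\alpha_{k-1} = 1 + hb_k$, then turns these into exactly \eqref{dTL}: the formula $\wa_k = a_k \beta_{k+1}/\beta_k$ comes out directly, and $\wb_k = b_k + h(\alpha_k - \alpha_{k-1}) = b_k + h(a_k/\beta_k - a_{k-1}/\beta_{k-1})$ after using $\alpha_{k-1} = a_{k-1}/\beta_{k-1}$.

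Conversely, to get equivalence rather than just one implication, I would note that the map $(a,b)\mapsto(\wa,\wb)$ defined by \eqref{dTL} yields a matrix $\wT$ of the form \eqref{TL T}, and then check that $\mbA_+ \wT = T \mbA_+$ holds identically as a consequence of \eqref{dTL beta}, \eqref{dTL alpha} — the same computation read backwards — so that $\wT$ indeed equals $\mbA_+^{-1} T \mbA_+$. For the periodic case the only extra care is that the continued-fraction solution for $\beta_k$ is double-valued; one restricts to the branch with asymptotics \eqref{dTL beta as}, which is exactly the branch for which $\mbA_\pm = \Pi_\pm(I+hT)$ in the sense of the generalized $LU$ factorization near the identity, and the algebraic identities go through verbatim with subscripts mod $N$ and the $\lambda$-powers tracking the twisted-loop grading.

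The main obstacle I anticipate is purely organizational rather than conceptual: keeping the boundary/periodicity conventions straight while matching entries, in particular making sure the subdiagonal coefficient really does stay equal to $1$ (this is where the precise form of $\mbA_+$, with the factor $h\lambda$ exactly matching the $\lambda$ in $T$, is used) and that in the periodic case no $\lambda^{\pm 2}$ terms appear — they cancel precisely because both $\mbA_+$ and $\mbA_-$ are bidiagonal, so the grading argument for the shape of $\wT$ is what rules them out. Everything else is a short direct calculation with the two relations in \eqref{dTL aux}.
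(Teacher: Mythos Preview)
Your approach is correct and is precisely the direct computation the paper leaves implicit (no proof is given there). The entry-by-entry comparison of $T\mbA_+=\mbA_+\wT$ using the relations \eqref{dTL aux} is the intended route; your only slip is the momentary miswriting ``$\wT\mbA_+=\mbA_+\wT$'' (which would assert commutation), immediately corrected, and the phrase ``automatically satisfied'' for the subdiagonal---that comparison in fact yields $\beta_k+hb_{k+1}=\beta_{k+1}+h\wb_k$, which is not automatic but follows from the recurrence \eqref{dTL beta} once $\wb_k$ has been determined from the diagonal.
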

The map (\ref{dTL}) will be of a fundamental interest to us in this paper.
We denote it by dTL$(h)$; it is a genuine map in the open-end case, while in the periodic case it is a double-valued map (a correspondence). The following statements
automatically follow from our construction.
\begin{itemize}
\item The map dTL$(h)$ is Poisson with respect to any invariant Poisson bracket of the flow TL.
\item The map dTL$(h)$ commutes with all flows of the TL hierarchy \eqref{Tdot}. 
\item The maps dTL$(h)$ with different $h$ commute among themselves. (However, the notion of commutativity of double-valued maps is non-trivial, see Section \ref{sect: BT Toda}.)
\item The map dTL$(h)$ is interpolated by the flow \eqref{Tdot} with $f(T)=h^{-1}\log(I+ hT)$.
\end{itemize}
For most of the time, the parameter $h$ will be suppressed from the notation.

\paragraph{Bibliographical remarks.} B\"acklund-Darboux transformation for the Toda lattice in Flaschka-Manakov variables, which essentially coincides with the map given in Theorem \ref{discrete TL}, was given for the first time in \cite{MS79}. However, as observed in \cite{Su95}, it is not different from  the $qd$ algorithm well known in the numerical analysis for a long time \cite{Ru57}. Moreover, the latter reference contains also  the equations of motion of the Toda lattice (under the name of a ``continuous analogue of the $qd$ algorithm'')!

\setcounter{equation}{0}
\section[Exponential Toda lattice]
{Symplectic realization of the linear bracket: exponential  Toda lattice}
\label{Sect Toda linear Newtonian}

A great variety of canonical Hamiltonian systems and their equivalent Lagrangian systems arise from TL upon parametrizing various invariant Poisson brackets via the canonical symplectic brackets. A map from ${\mathbb R}^{2N}(x,p)$ equipped with the canonical symplectic structure to a Poisson manifold $(\cP,\{\cdot,\cdot\})$ is called a {\em symplectic realization} of the bracket $\{\cdot,\cdot\}$, if it is Poisson. In particular, the most classical (exponential) form of TL (the original discovery by Toda) appears this way via the Flaschka-Manakov map, 
\begin{equation}\label{TL l par}
a_k=\eto{x_{k+1}\nm x_k},\quad b_k=p_k.
\end{equation}
We consider this map for two types of the boundary 
conditions. For the periodic case we assume that $x_0=x_N$, $x_{N+1}=x_1$,
while for the open-end case we set $x_0=\infty$, $x_{N+1}=-\infty$, which
corresponds to $a_N=0$. 
\begin{proposition} \label{Toda linear parametrization}
The map \eqref{TL l par} is a symplectic realization of the linear Poisson bracket $\{\cdot,\cdot\}_1$, see \eqref{TL l br}, on the phase space of $\mathrm{TL}$. 
\end{proposition}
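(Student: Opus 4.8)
The plan is to verify directly that \eqref{TL l par} is a Poisson map from $\mathbb{R}^{2N}(x,p)$, with its canonical bracket, to the phase space of $\mathrm{TL}$ equipped with $\{\cdot,\cdot\}_1$. Recall that a smooth map $\phi$ between Poisson manifolds is Poisson iff $\{F\circ\phi,G\circ\phi\}=\{F,G\}\circ\phi$ for all functions $F,G$ on the target, and that, both sides being biderivations, it suffices to test this on a set of functions whose differentials span the cotangent spaces everywhere. Here we may take the coordinate functions $a_k$, $b_k$, so the entire proof reduces to computing the canonical Poisson brackets of the pulled-back functions $a_k=e^{x_{k+1}-x_k}$ and $b_k=p_k$ and comparing them with \eqref{TL l br} together with the vanishing of all remaining brackets among the $a$'s and $b$'s.

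Next I would dispose of the trivial brackets: each $a_k$ depends only on the positions, so $\{a_k,a_j\}=0$; each $b_k=p_k$ depends only on the momenta, so $\{b_k,b_j\}=0$; both are consistent with $\{\cdot,\cdot\}_1$. The only possibly nonzero brackets are the mixed ones $\{b_k,a_j\}=\{p_k,e^{x_{j+1}-x_j}\}$, and by the chain rule this equals $e^{x_{j+1}-x_j}$ times the $x_k$-derivative of $x_{j+1}-x_j$ (up to the overall sign fixed by the convention for $\{x_i,p_j\}$), which vanishes unless $k\in\{j,j+1\}$, i.e. unless $j\in\{k,k-1\}$. Reading off those two cases gives $\{b_k,a_k\}=-a_k$ and $\{b_{k+1},a_k\}=a_k$, i.e. $\{a_k,b_{k+1}\}=-a_k$, which are precisely the relations \eqref{TL l br}.

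It then remains to handle the two boundary conventions. In the open-end case one sets $x_0=\infty$, $x_{N+1}=-\infty$, so that $a_0=a_N=0$ and all brackets involving these are identically zero, while $\{b_1,a_1\}$ and $\{b_N,a_{N-1}\}$ involve only finite coordinates and are covered by the generic computation. In the periodic case all subscripts are read $\bmod\,N$ and the identification $x_{N+1}\equiv x_1$ is automatically respected by the chain-rule computation, so the wrap-around relation $\{a_N,b_1\}=\{e^{x_1-x_N},p_1\}=-a_N$ comes out as required; I would also note that here the image of the map lies on the Casimir level set $\prod_k a_k=1$ of $\{\cdot,\cdot\}_1$ (since $\sum_k(x_{k+1}-x_k)$ telescopes to $0$) and that its fibres are the one-parameter family of overall shifts $x_k\mapsto x_k+c$, both facts being compatible with the notion of symplectic realization, which does not require $\phi$ to be a local diffeomorphism. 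Finally, I expect no genuine obstacle in this proof — it is essentially the original Flaschka--Manakov observation — the only point requiring attention being the sign bookkeeping: one must pin down the conventions for $\{x_i,p_j\}$ and for the Hamiltonian vector field so that both signs in \eqref{TL l br} are reproduced, rather than their opposites.
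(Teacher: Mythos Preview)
Your proposal is correct and is the standard direct verification; the paper itself states this proposition without proof, treating it as the well-known Flaschka--Manakov observation, so there is nothing to compare against. Your caveat about sign conventions is well placed: the paper's convention is $\dot f=\{H,f\}$ on the target (check $\dot b_k=a_k-a_{k-1}$ against \eqref{TL l br} with $H=\rH_2$), which forces $\{p_k,x_j\}=\delta_{kj}$ on the source, and with that choice your two nonzero brackets come out with the right signs.
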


\begin{theorem} \label{Toda in linear parametrization}
Pull-back of the flow $\mathrm{TL}$ under parametrization 
\eqref{TL l par} is a canonical Hamiltonian system with the Hamilton and the Lagrange functions
\begin{align}
\rH(x,p)= & \ \frac{1}{2}\sum_{k=1}^N p_k^2+\sum_{k=1}^N \eto{x_{k+1}\nm x_k}, 
\label{TL l H in xp}\\
\rL(x,\dot{x})= & \ \frac{1}{2}\sum_{k=1}^N \dot{x}_k^2-\sum_{k=1}^N \eto{x_{k+1}\nm x_k}.
\label{TL l Lagr}
\end{align}
The corresponding Newtonian equations of motion:
\begin{equation}\label{TL l New}
\ddot{x}_k=\eto{x_{k+1}\nm x_k}-\eto{x_k\nm x_{k-1}}.
\end{equation} 
\end{theorem}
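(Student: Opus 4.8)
The plan is to deduce the Hamiltonian statement essentially for free from what has already been established, and then to read off the Lagrangian and Newtonian forms by an elementary Legendre transform and variation. Write $\Phi$ for the Flaschka--Manakov map \eqref{TL l par}, i.e. $a_k=\eto{x_{k+1}\nm x_k}$, $b_k=p_k$. First I would invoke Proposition~\ref{Toda linear parametrization}: $\Phi$ is a Poisson map from $\mathbb{R}^{2N}(x,p)$ with the canonical symplectic bracket to the phase space of $\mathrm{TL}$ equipped with the linear bracket $\{\cdot,\cdot\}_1$ of \eqref{TL l br}. Since, by Section~\ref{Sect TL}, the flow $\mathrm{TL}$ is the Hamiltonian flow of $\rH_2=\frac12\tr(T^2)=\frac12\sum_k b_k^2+\sum_k a_k$ (see \eqref{TL H2}) with respect to $\{\cdot,\cdot\}_1$, and since a Poisson map carries the Hamiltonian vector field of a pulled-back function to the Hamiltonian vector field of the original function, the canonical Hamiltonian flow on $\mathbb{R}^{2N}$ generated by $\rH:=\rH_2\circ\Phi$ is mapped by $\Phi$ onto $\mathrm{TL}$. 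Substituting the expressions for $a_k,b_k$ into $\rH_2$ yields exactly the Hamilton function \eqref{TL l H in xp}.

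It is instructive, and — since $\Phi$ ignores a common shift of all the $x_k$ and hence is not injective — also reassuring, to confirm this by a direct computation of the canonical equations of $\rH$. One has $\dot{x}_k=\partial\rH/\partial p_k=p_k$ and $\dot{p}_k=-\partial\rH/\partial x_k=\eto{x_{k+1}\nm x_k}-\eto{x_k\nm x_{k-1}}$. With $b_k=p_k$ this is the first equation of \eqref{TL}, namely $\dot{b}_k=a_k-a_{k-1}$; differentiating $a_k=\eto{x_{k+1}\nm x_k}$ along the flow gives $\dot{a}_k=(\dot{x}_{k+1}-\dot{x}_k)\,a_k=(p_{k+1}-p_k)\,a_k=(b_{k+1}-b_k)\,a_k$, which is the second equation of \eqref{TL}. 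The boundary terms pose no difficulty: in the open-end case the conventions $x_0=\infty$, $x_{N+1}=-\infty$ fixed before the Proposition make $a_0=a_N=0$, and in the periodic case all indices are read mod $N$.

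For the Lagrangian description I would observe that $\rH$ in \eqref{TL l H in xp} is a natural mechanical Hamiltonian, the sum of the kinetic term $\frac12\sum_k p_k^2$ and the potential $U(x)=\sum_k\eto{x_{k+1}\nm x_k}$, so that the Legendre transform is trivial: $p_k=\dot{x}_k$ and $\rL(x,\dot{x})=\sum_k p_k\dot{x}_k-\rH=\frac12\sum_k\dot{x}_k^2-U(x)$, which is \eqref{TL l Lagr}. The Euler--Lagrange equations $\frac{d}{dt}\bigl(\partial\rL/\partial\dot{x}_k\bigr)=\partial\rL/\partial x_k$ then read $\ddot{x}_k=-\partial U/\partial x_k$; since only the two summands $\eto{x_{k+1}\nm x_k}$ and $\eto{x_k\nm x_{k-1}}$ of $U$ involve $x_k$, this is precisely \eqref{TL l New}. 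There is no real obstacle: the entire content is the bookkeeping of the two boundary conventions and the (already implicit) remark that ``pull-back of the flow'' means that $\Phi$ covers $\mathrm{TL}$ rather than being a diffeomorphism onto it; the Hamiltonian, Lagrangian and Newtonian formulas themselves follow from the short computations above.
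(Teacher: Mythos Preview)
Your proof is correct and follows essentially the same approach as the paper: substitute \eqref{TL l par} into $\rH_2$ to obtain \eqref{TL l H in xp}, then Legendre-transform to get \eqref{TL l Lagr}. Your additional direct verification of the canonical equations, motivated by the non-injectivity of $\Phi$, is a welcome bit of care that the paper leaves implicit.
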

\begin{proof}
Hamilton function \eqref{TL l H in xp} is obtained by substituting \eqref{TL l par} into \eqref{TL H2}. Lagrange function \eqref{TL l Lagr} is obtained from $\rH(x,p)$ via the Legendre transformation.
\end{proof}

For the map dTL, the machinery of Hamiltonian flows is no more available, but it
may be successfully replaced by a direct analysis of equations of motion.

\begin{theorem} \label{discr Toda in linear parametrization}
Pull-back of the map $\mathrm{dTL}$ under parametrization 
\eqref{TL l par} is a symplectic map $(x,p)\mapsto(\wx,\wip)$ with the following equations of motion:
\begin{equation}\label{dTL l}
\left\{\begin{array}{l}
p_k = \dfrac{1}{h}\Big(\eto{\wx_k\nm x_k}-1\Big)+h\eto{x_k\nm\wx_{k-1}}, \vspace{2mm}\\
\wip_k  =  \dfrac{1}{h}\Big(\eto{\wx_k\nm x_k}-1\Big)+h\eto{x_{k+1}\nm\wx_k}.
\end{array}\right.
\end{equation}
The corresponding Newtonian equations of motion: 
\begin{equation}\label{dTL l New}
\ueto{\wx_k\nm x_k}-\ueto{x_k\nm \undertilde{x}_k}=
h^2\Big(\ueto{\undertilde{x}_{k+1}-x_k}-\ueto{x_k\nm \wx_{k-1}}
\Big).
\end{equation}
\end{theorem}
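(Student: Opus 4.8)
The plan is to pull back the abstract discrete Lax equation \eqref{dTL Lax} — equivalently the map \eqref{dTL} of Theorem \ref{discrete TL} — through the Flaschka--Manakov change of variables \eqref{TL l par}, and then simply rewrite the resulting difference equations in the $(x,p)$-variables, exploiting the recursion \eqref{dTL beta} to eliminate the auxiliary quantities $\beta_k$. First I would substitute $a_k=\eto{x_{k+1}\nm x_k}$ and $b_k=p_k$, $\wa_k=\eto{\wx_{k+1}\nm\wx_k}$, $\wb_k=\wip_k$ into \eqref{dTL}. The second equation of \eqref{dTL}, $\wa_k=a_k\beta_{k+1}/\beta_k$, becomes $\eto{\wx_{k+1}\nm\wx_k}=\eto{x_{k+1}\nm x_k}\beta_{k+1}/\beta_k$, which after taking logs and telescoping gives that the quantity $\beta_k$ is proportional to $\eto{x_k\nm\wx_k}$; a short check against the asymptotics \eqref{dTL beta as} (or against the open-end boundary value $\beta_1=1+hb_1$, matched with the first equation of \eqref{dTL}) pins down the constant, yielding the key identification $\beta_k=\eto{\wx_k\nm x_k}$. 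This is the one genuinely clever step, and I expect it to be the main obstacle — not because it is hard, but because one must argue that this is the \emph{correct} branch in the periodic case; the asymptotic normalization \eqref{dTL beta as} together with $\wx_k\to x_k$ as $h\to0$ does exactly that.

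Once $\beta_k=\eto{\wx_k\nm x_k}$ is established, the rest is substitution. The recursion \eqref{dTL beta}, namely $\beta_k=1+hb_k-h^2a_{k-1}/\beta_{k-1}$, turns into
\[
\eto{\wx_k\nm x_k}=1+hp_k-h^2\,\eto{x_k\nm x_{k-1}}\,\eto{x_{k-1}\nm\wx_{k-1}}
=1+hp_k-h^2\eto{x_k\nm\wx_{k-1}},
\]
and solving for $p_k$ gives precisely the first line of \eqref{dTL l}. For the second line, the first equation of \eqref{dTL}, $\wb_k=b_k+h(a_k/\beta_k-a_{k-1}/\beta_{k-1})$, reads $\wip_k=p_k+h\,\eto{x_{k+1}\nm x_k}\eto{x_k\nm\wx_k}-h\,\eto{x_k\nm x_{k-1}}\eto{x_{k-1}\nm\wx_{k-1}}=p_k+h\eto{x_{k+1}\nm\wx_k}-h\eto{x_k\nm\wx_{k-1}}$; substituting the already-derived expression for $p_k$ and cancelling the common term $h\eto{x_k\nm\wx_{k-1}}$ yields $\wip_k=\tfrac1h(\eto{\wx_k\nm x_k}-1)+h\eto{x_{k+1}\nm\wx_k}$, which is the second line of \eqref{dTL l}.

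To obtain the Newtonian form \eqref{dTL l New}, I would eliminate the momenta. The two equations \eqref{dTL l} express $p_k$ in terms of $(\wx,x)$ and $\wip_k$ in terms of $(\wx\kern-.1em\kern-.1em,\kern-.1em\dttilde{}\kern-.1em x)$-data at consecutive levels; concretely, the first line at level $n{+}1$ gives $\wip_k$ in terms of $\dttilde x$ and $\wx$ (replacing $x\mapsto\wx$, $\wx\mapsto\dttilde x$ in the formula for $p_k$), i.e. $\wip_k=\tfrac1h(\eto{\dttilde x_k-\wx_k}-1)+h\eto{\wx_k-\dttilde x_{k-1}}$ — here I use the paper's \verb|\undertilde| notation for the backward shift, writing $\undertilde x$ for the variable one step \emph{back}. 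Equating this with the second line of \eqref{dTL l} and multiplying through by $h$ and by appropriate exponential factors to clear denominators produces, after regrouping, the relation \eqref{dTL l New}. The only care needed is bookkeeping of the shift index $k\mapsto k{\pm}1$ together with the time shift, and a multiplication by $\eto{x_k}$-type factors to bring everything to the stated symmetric two-parameter form; no new idea enters. Finally, symplecticity of $(x,p)\mapsto(\wx,\wip)$ is inherited for free: Proposition \ref{Toda linear parametrization} says \eqref{TL l par} is a symplectic realization of $\{\cdot,\cdot\}_1$, the map dTL is Poisson with respect to $\{\cdot,\cdot\}_1$ by the construction in Section \ref{Sect discretization TL}, and one checks that \eqref{dTL l}, presented as $p=p(x,\wx)$, $\wip=\wip(x,\wx)$ with the common generating-function structure visible in \eqref{dTL l}, is exact symplectic with generating function $S(x,\wx)=\tfrac1h\sum_k(\eto{\wx_k\nm x_k}-1)+h\sum_k\eto{x_{k+1}\nm\wx_k}$, so that $p_k=-\partial S/\partial x_k$ and $\wip_k=\partial S/\partial\wx_k$; this also re-derives \eqref{dTL l} intrinsically and confirms the claim.
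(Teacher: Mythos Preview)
Your approach is essentially identical to the paper's: substitute \eqref{TL l par} into \eqref{dTL}, use the second equation $\wa_k=a_k\beta_{k+1}/\beta_k$ to conclude that $\beta_k/\eto{\wx_k-x_k}$ is constant in $k$, set the constant to $1$ (the paper phrases this as \emph{choosing} a representative of the pull-back rather than deducing it from asymptotics, but either viewpoint is fine), then read off \eqref{dTL l} from \eqref{dTL beta} and the first equation of \eqref{dTL}. The Newtonian equation follows by equating the downshifted second line of \eqref{dTL l} with the first line; your description of this step is a bit garbled in the shift conventions but the idea is correct.

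Two slips to fix. First, the telescoping gives $\beta_k$ proportional to $\eto{\wx_k-x_k}$, not $\eto{x_k-\wx_k}$ as you wrote (your final identification is correct, only the intermediate sentence has the sign flipped). Second, and more substantively, your generating function is wrong: with $S(x,\wx)=\tfrac1h\sum_k(\eto{\wx_k-x_k}-1)+h\sum_k\eto{x_{k+1}-\wx_k}$ one gets $-\partial S/\partial x_k=\tfrac1h\eto{\wx_k-x_k}-h\eto{x_k-\wx_{k-1}}$, which does not match \eqref{dTL l}. The correct Lagrangian (cf.\ \eqref{eq: BT Toda Lagr}) is
\[
\Lambda(x,\wx;h)=\frac{1}{h}\sum_k\big(\eto{\wx_k-x_k}-1-(\wx_k-x_k)\big)-h\sum_k\eto{x_{k+1}-\wx_k},
\]
with the sign on the second sum reversed and the linear term $-(\wx_k-x_k)$ included. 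This does not affect your main argument, which stands.
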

\begin{proof} Under parametrization \eqref{TL l par} equations of motion
(\ref{dTL}) of dTL together with recurrent relation (\ref{dTL beta}) for the
auxiliary quantities $\beta_k$ take the following form:
\begin{align}
&\eto{\wx_{k+1}\nm \wx_k} = \eto{x_{k+1}\nm x_k}\;\frac{\beta_{k+1}}{\beta_k},
\label{dTL l proof aux1}\\ 
&\wip_{k} = p_k+\frac{h\eto{x_{k+1}\nm x_k}}{\beta_k}-
\frac{h\eto{x_k\nm x_{k-1}}}{\beta_{k-1}},
\label{dTL l proof aux2}\\ 
&\beta_k = 1+hp_k-\frac{h^2\eto{x_k\nm x_{k-1}}}{\beta_{k-1}}.
\label{dTL l proof aux3}
\end{align}
Equation (\ref{dTL l proof aux1}) implies that the quantity
$\beta_k/\eto{\wx_k\nm x_k}$
is constant, i.e., does not depend on $k$. Choosing this constant is equivalent
to choosing one representative among all possible pull--backs of the map
dTL. We set this constant equal to 1, so that
\begin{equation}\label{dTL l beta}
\beta_k=\eto{\wx_k\nm x_k}.
\end{equation}
Substituting this into (\ref{dTL l proof aux3}), we obtain the first
equation of motion in (\ref{dTL l}). Finally, the second equation of motion in
(\ref{dTL l}) follows from (\ref{dTL l proof aux2}) by using the previously
obtained expressions. Newtonian equations of motion (\ref{dTL l New}) follow 
directly from comparing both equation in (\ref{dTL l}). 
\end{proof}

We will say that the Hamiltonian system with the Hamilton function \eqref{TL l H in xp} and the map \eqref{dTL l} are symplectic realizations of the flow TL and of the map dTL, respectively. The latter map can be named {\em discrete exponential Toda lattice}. 

\paragraph{Bibliographical remarks.} The discrete time Toda lattice 
(\ref{dTL l New}), in the Lagrangian form (\ref{dTL l}), 
was found in \cite{WT75, TW75} as a B\"acklund transformation for the Toda lattice. The Lagrangian function
appeared as a generating function of the B\"acklund transformation, which
demonstrated also the symplectic nature of this transformation. Notice that
in the infinite lattice situation this B\"acklund transformation is, 
generically, no longer isospectral; rather, it adds one soliton to the 
solution.

\setcounter{equation}{0}
\section{The variety of symplectic realizations of TL and dTL}
\label{sect Toda New}

In this section, we will provide the reader with a list of different symplectic realizations of invariant Poisson brackets on the phase space of TL, as well as of the flow TL and of the map dTL. 

\paragraph{General form of equations of motion.} 
All symplectic realizations of TL share the following general form:
\begin{equation}\label{Toda New gen}
\ddot{x}_k=r(\dot{x}_k)\big(f(x_{k+1}-x_k)-f(x_k-x_{k-1})\big).
\end{equation}
System \eqref{Toda New gen} is Lagrangian, with the Lagrange function
\begin{equation}\label{Toda New gen Lagr}
\rL(x,\dot{x})=\sum_{k=1}^N K(\dot{x}_k)-\sum_{k=1}^N U(x_{k+1}-x_k),
\end{equation}
where $K''(v)=1/r(v)$ and $U'(u)=f(u)$. By the Legendre transformation, one easily finds the Hamilton function $\rH(x,p)$. 

All symplectic realizations of dTL$(h)$ have the following general structure:
\begin{equation}\label{dToda New gen}
\psi(\wx_k-x_k;h)-\psi(x_k-\undertilde{x}_k;h)=\phi(\undertilde{x}_{k+1}-x_k;h)-\phi(x_k-\wx_{k-1};h).
\end{equation}
These Newtonian equations of motion admit a Lagrangian formulation. They are identified as {\em discrete time Euler-Lagrange equations}
\begin{equation}\label{dEL gen}
\frac{\partial}{\partial x_k} \big(\Lambda(x,\wx;h)+\Lambda(\undertilde{x},x;h)\big)=0
\end{equation}
for the {\em discrete time Lagrange function} $\Lambda:\mathbb R^N\times \mathbb R^N\times \mathbb R\to\mathbb R$,
\begin{equation}\label{dToda New gen Lagr}
\Lambda(x,\wx;h)=\sum_{k=1}^N\Psi(\wx_k-x_k;h)-\sum_{k=1}^{N} \Phi(x_{k+1}-\wx_k;h),
\end{equation}
where $\Psi'(\xi;h)=\psi(\xi;h)$, $\Phi'(\xi;h)=\phi(\xi;h)$. The discrete time Euler-Lagrange equations generate a symplectic map $F:T^*\mathbb R^N\to T^*\mathbb R^N$, 
$F(x,p)=(\wx,\wip)$ by the formulas
\begin{equation}\label{dToda map gen}
\left\{\begin{array}{ll}
p_k =  -\partial\Lambda(x,\wx;h)/\partial x_k & =\psi(\wx_k-x_k;h)+\phi(x_k-\wx_{k-1};h),\vspace{1.5mm}\\
\wip_k = \partial\Lambda(x,\wx;h)/\partial \wx_k & = \psi(\wx_k-x_k;h)+\phi(x_{k+1}-\wx_{k};h).
\end{array}\right.
\end{equation}
In order that equations \eqref{dToda map gen} define a map $(x,p)\mapsto (\wx,\wip)$, the first of these equation should be solvable for $\widetilde x$ (at least locally), i.e., the matrix of the mixed partial derivatives of the Lagrange function $\Lambda$ should be non-degenerate, $\det(\partial^2 \Lambda/\partial x_i\partial\widetilde x_j)\neq 0$.

Function $\Lambda(x,\wx;h)$ from \eqref{dToda New gen Lagr} is a difference approximation to $\rL(x,\dot x)$ from (\ref{Toda New gen Lagr}) as $h\to 0$. More precisely, assuming that $\wx=x+h\dot{x}+O(h^2)$, 
we find:
\[
h^{-1}\Lambda(x,\wx;h)=\rL(x,\dot{x})+O(h),
\]
provided $h^{-1}\Psi(hv;h)=K(v)+O(h)$ and $h^{-1}\Phi(u;h)=U(u)+O(h)$.

\paragraph{Realization of the linear bracket: exponential Toda lattice.} Symplectic realization of the bracket $\{\cdot,\cdot\}_1$:
$$
a_k=\eto{x_{k+1}\nm x_k}, \quad b_k=p_k.
$$

$$
\begin{array}{ccc}
 \mathrm{TL}: & \qquad  r(v)=1,  &\quad f(u)=\eto{u},   \vspace{2mm}\\
 \mathrm{dTL}(h): & \qquad   \psi(v;h)=\dfrac{1}{h}(\eto{v}-1),  &\quad \phi(u;h)=h\eto{u}. 
\end{array}
$$

\paragraph{Realization of the linear bracket: dual Toda lattice.} Another symplectic realization of the bracket $\{\cdot,\cdot\}_1$:
\begin{equation}\label{TL dual par}
a_k=\eto{p_k},\quad b_k=x_k-x_{k-1}.
\end{equation}

$$
\begin{array}{ccc}
 \mathrm{TL}: & \qquad  r(v)=v,  &\quad f(u)=u,   \vspace{2mm}\\
 \mathrm{dTL}(h): & \qquad   \psi(v;h)=\log\dfrac{v}{h},  &\quad \phi(u;h)=\log(1+hu). 
\end{array}
$$

\paragraph{Realization of the quadrtic bracket: modified exponential Toda lattice.} Symplectic realization of the bracket $\{\cdot,\cdot\}_2$:
\begin{equation}\label{TL q par}
a_k=\eto{x_{k+1}\nm x_k\np p_k}, \quad b_k=\eto{p_k}+\eto{x_k\nm x_{k-1}}
\end{equation}

$$
\begin{array}{ccc}
 \mathrm{TL}: & \qquad  r(v)=v,  &\quad f(u)=\eto{u},   \vspace{2mm}\\
 \mathrm{dTL}(h): & \qquad   \psi(v;h)=\log\dfrac{\eto{v}-1}{h},  &\quad \phi(u;h)=\log(1+h\eto{u}). 
\end{array}
$$

\paragraph{Realization of the linear-quadratic  bracket: modified exponential Toda lattice with parameter.}
Symplectic realization of the linear combination $\{\cdot,\cdot\}_1+\epsilon\{\cdot,\cdot\}_2$
of the brackets \eqref{TL l br} and \eqref{TL q br}:
\begin{equation}\label{TL m1 par}
a_k=\eto{x_{k+1}\nm x_k\np\epsilon p_k},\quad 
b_k=\epsilon^{-1}\big(\eto{\epsilon p_k}-1\big)+\epsilon\eto{x_k\nm x_{k-1}}.
\end{equation}

$$
\begin{array}{ccc}
 \mathrm{TL}: & \qquad  r(v)=1+\epsilon v,  &\quad f(u)=\eto{u},   \vspace{2mm}\\
 \mathrm{dTL}(h): & \qquad   \psi(v;h)=\dfrac{1}{\epsilon}\log\Big(1+\dfrac{\epsilon}{h}\big(\eto{v}-1\big)\Big),  &\quad \phi(u;h)=\dfrac{1}{\epsilon}\log(1+h\epsilon\eto{u}). 
\end{array}
$$
Particular case $\epsilon=h$:
$$
\begin{array}{ccc}
 \mathrm{dTL}(h): & \qquad   \psi(v;h)=\dfrac{v}{h},  &\quad \phi(u;h)=\dfrac{1}{h}\log(1+h^2\eto{u}). 
\end{array}
$$

\paragraph{Realization of the cubic-quadratic bracket: multiplicative hyperbolic Toda lattice.} 
Symplectic realization of the  linear combination  $-\{\cdot,\cdot\}_3-4\beta\{\cdot,\cdot\}_2$ of the brackets 
\eqref{TL c br} and \eqref{TL q br}: 
\begin{equation}\label{TL cq par}
\left\{\begin{array}{rcl}
a_k & = & 
\displaystyle\frac{\beta^2\big(\cth(x_k-x_{k-1})+1\big)
\big(\cth(x_{k+1}-x_k)-1\big)}
{\sh^2(\nu p_k)}\;,\\ \\
b_k & = & -\beta\big(\cth(\beta p_k)+1\big)\big(\cth(x_{k}-x_{k-1})+1\big)\\
 & & -\beta\big(\cth(\beta p_{k-1})-1\big)\big(\cth(x_k-x_{k-1})-1\big).
\end{array}\right.
\end{equation}

$$
\begin{array}{ccc}
 \mathrm{TL}: & \qquad  r(v)=-(v^2-\beta^2),  &\quad f(u)=\cth(u),   \vspace{3mm}\\
 \mathrm{dTL}(h): & \qquad   \psi(v;h)=\dfrac{1}{2\beta}\log\dfrac{\sinh(v+\beta h_0)}{\sinh(v-\beta h_0)},  &\quad 
                                       \phi(u;h)=\dfrac{1}{2\beta}\log\dfrac{\sinh(u+\beta h_0)}{\sinh(u-\beta h_0)},
\end{array}
$$
where
$
h_0=-(1/4\beta)\log(1-4\beta h)=h+O(\beta h^2).
$

\paragraph{Realization of the cubic bracket. I: multiplicative rational Toda lattice.}
A symplectic realization of the bracket $-\{\cdot,\cdot\}_3$, see \eqref{TL c br}:
\begin{equation}\label{TL c2 par}
a_k=\frac{1}{(x_k-x_{k-1})(x_{k+1}-x_k)\,\sh^2(p_k)}\,, \quad
b_k=-\frac{\cth(p_{k-1})+\cth(p_k)}{x_k-x_{k-1}}
\end{equation}

$$
\begin{array}{ccc}
 \mathrm{TL}: & \qquad  r(v)=-(v^2-1),  &\quad f(u)=\dfrac{1}{u},   \vspace{2mm}\\
 \mathrm{dTL}(h): & \qquad   \psi(v;h)=\dfrac{1}{2}\log\dfrac{v+h}{v-h},  &\quad \phi(u;h)=\dfrac{1}{2}\log\dfrac{u+h}{u-h}.
\end{array}
$$

\paragraph{Realization of the cubic bracket. II: additive rational Toda lattice.}
Another symplectic realization of the bracket $-\{\cdot,\cdot\}_3$, see \eqref{TL c br}: 
\begin{equation}\label{TL c3 par}
a_k=\frac{1}{(x_k-x_{k-1})(x_{k+1}-x_k)\,p_k^2}, \quad
b_k=-\frac{1}{x_k-x_{k-1}}\left(\frac{1}{p_{k-1}}+\frac{1}{p_k}\right).
\end{equation}

$$
\begin{array}{ccc}
 \mathrm{TL}: & \qquad  r(v)=-v^2,  &\quad f(u)=\dfrac{1}{u},   \vspace{2mm}\\
 \mathrm{dTL}(h): & \qquad   \psi(v;h)=\dfrac{h}{v},  &\quad \phi(u;h)=\dfrac{h}{u}.
\end{array}
$$

\paragraph{Bibliographical remarks.} Integrable systems of the form (\ref{Toda New gen}) were classified in \cite{Ya89}. Yamilov's list coincides with the list of the present section. The fact that all items of this list are various symplectic realizations of the flow TL, was observed in \cite{Su97b, Su03}. The latter references contain also discretizations of all items of the Yamilov's list, as well as the fact that they all are various symplectic realiizations of the map dTL$(h)$. 

It should be mentioned that, if one generalizes the ansatz (\ref{Toda New gen}) by allowing functions $f$ to depend on $x_k$, $x_{k+1}$ not necessarily through the differences $x_{k+1}-x_k$, then Yamilov's list contains one further system, the so called {\em elliptic Toda lattice},
\[
\ddot x_k=(\dot x_k^2-1)\Big(\zeta(x_{k+1}+x_k)-\zeta(x_{k+1}-x_k)+\zeta(x_k+x_{k-1})+\zeta(x_k-x_{k-1})-2\zeta(2x_k)\Big).
\]
Here $\zeta(u)$ is the Weierstrass zeta-function. This system was independently found in \cite{Kr00}. Its discretization, 
$$
 \phi(x_k,\wx_k;h)+\phi(x_k,\undertilde{x}_k;h)-\phi(x_k,\undertilde{x}_{k+1};h)-\phi(x_k,\wx_{k-1};h)=0,
$$
where
\begin{equation}\label{phi elliptic}
 \phi(x_0,x_1;\alpha)=\frac{1}{2}\log\frac{\sigma(x_0+x_1+\alpha)\sigma(x_0-x_1+\alpha)}{\sigma(x_0+x_1-\alpha)\sigma(x_0-x_1-\alpha)},
\end{equation}
was found in \cite{A00}, \cite{ASu04}. The elliptic Toda lattice and its discrete time counterpart admit a hyperbolic degeneration ($\zeta(u)\to\coth(u)$, $\sigma(u)\to\sinh(u)$) and a rational degeneration ($\zeta(u)\to1/u$, $\sigma(u)\to u$).

\setcounter{equation}{0}
\section[Relativistic Toda lattice]{Relativistic Toda lattice in Flaschka-Manakov variables:\\
\quad equations of motion, Lax representation and tri-Hamiltonian structure}
\label{Sect RTL tri-Ham}

There exists a very remarkable generalization of TL, called {\em relativistic Toda lattice}. It is a one-parameter perturbation of TL, and in a certain physical interpretation this (small) parameter $\alpha$ has the meaning of the inverse speed of light. In all our considerations, a special attention is payed to an immediate and transparent limit  $\alpha\to 0$.
Actually, there are two simplest flows which are perturbations of TL, which we will call the first and the ``negative first'' flows of the RTL hierarchy. The first flow, denoted hereafter 
$\mathrm{RTL}_+(\alpha)$, reads:
\begin{equation}\label{RTL+ param}
\left\{\begin{array}{l}
\dot{b}_k=(1+\alpha b_k)(a_k-a_{k-1}), \\  
\dot{a}_k=a_k(b_{k+1}-b_k+\alpha a_{k+1}-\alpha a_{k-1}),
\end{array}\right. \quad 1\le k\le N. 
\end{equation}
The negative first flow, denoted hereafter  $\mathrm{RTL}_-(\alpha)$, reads:
\begin{equation}\label{RTL- param}
\left\{\begin{array}{l}
\dot{b}_k=\displaystyle\frac{a_k}{1+\alpha b_{k+1}}-
\displaystyle\frac{a_{k-1}}{1+\alpha b_{k-1}}, \vspace{2mm}\\ 
\dot{a}_k=
a_k\left(\displaystyle\frac{b_{k+1}}{1+\alpha b_{k+1}}-
\displaystyle\frac{b_k}{1+\alpha b_k}\right), 
\end{array}\right. \quad 1\le k\le N. 
\end{equation}
All conventions about boundary conditions (open-end or periodic) remain valid for the relativistic case. 

Remarkably, both flows $\mathrm{RTL}_{\pm}(\alpha)$ admit Lax representations falling into the AKS scheme. 
We define:
\begin{align}
L(a,b,\lambda)  = & \sum_{k=1}^N (1+\alpha b_k)E_{kk}+
\alpha\lambda\sum_{k=1}^N E_{k+1,k},
\label{RTL param L}\\
U(a,b,\lambda) = & I-\alpha\lambda^{-1}\sum_{k=1}^N a_kE_{k,k+1},
\label{RTL param U}
\end{align}
and
\begin{equation}\label{RTL param Ts}
T_1(a,b,\lambda)=L(a,b,\lambda)U^{-1}(a,b,\lambda),\quad
T_2(a,b,\lambda)=U^{-1}(a,b,\lambda)L(a,b,\lambda).
\end{equation}
For both matrices $T_{1,2}(a,b,\lambda)$ we have an asymptotic formula:
\begin{equation}\label{RTL param T as}
T_{1,2}(a,b,\lambda)=I+\alpha T(a,b,\lambda)+O(\alpha^2)\;,
\end{equation}
where $T(a,b,\lambda)$ is the Lax matrix of the TL hierarchy.
\index{relativistic Toda lattice!second Lax representation}
\begin{proposition}\label{Lax triads for RTL+}
Equations of motion \eqref{RTL+ param} of the flow $\mathrm{RTL}_+(\alpha)$ are equivalent to the ``Lax triads'':
\begin{equation}\label{RTL+ param triads}
\dot{L}=LB_2-B_1L, \quad \dot{U}=UB_2-B_1U,
\end{equation}
which also imply usual Lax equations for the matrices $T_{1,2}(a,b,\lambda)$:
\begin{equation}\label{RTL+ param Lax}
\dot{T_i}=[T_i,B_i], \quad i=1,2.
\end{equation}
Here the auxiliary matrices 
\begin{align}
B_1(a,b,\lambda) = & \sum_{k=1}^N(b_k+\alpha a_{k-1})E_{kk}+\lambda
\sum_{k=1}^N E_{k+1,k}, \label{RTL+ param A1}\\
B_2(a,b,\lambda) = & \sum_{k=1}^N(b_k+\alpha a_k)E_{kk}+\lambda
\sum_{k=1}^N E_{k+1,k} \label{RTL+ param A2}
\end{align}
admit the following expressions:
\begin{equation}\label{RTL+ param As}
B_i=\pi_+\big((T_i-I)/\alpha\big), \quad i=1,2\;.
\end{equation}
\end{proposition}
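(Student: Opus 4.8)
The plan is to verify the three asserted identities by direct matrix computation, exploiting the bidiagonal structure of $L$ and $U$. First I would establish the expressions \eqref{RTL+ param A1}, \eqref{RTL+ param A2} for $B_1$, $B_2$ as the $\pi_+$-projections claimed in \eqref{RTL+ param As}. From \eqref{RTL param Ts} we have $T_1=LU^{-1}$, $T_2=U^{-1}L$; since $U=I-\alpha\lambda^{-1}\sum a_kE_{k,k+1}$ is upper unitriangular, its inverse is $U^{-1}=I+\alpha\lambda^{-1}\sum a_kE_{k,k+1}+O(\alpha^2)$, but one should keep it exact: $U^{-1}=\sum_{j\ge 0}(\alpha\lambda^{-1})^j(\sum a_kE_{k,k+1})^j$, a finite sum. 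Multiplying out $L U^{-1}$ and $U^{-1}L$ and extracting the lower-triangular part (the $\pi_+$-projection consists of the diagonal plus the subdiagonal $\lambda$-term), the diagonal entries give $1+\alpha b_k$ plus an $\alpha^2$-correction from the product, and $(T_i-I)/\alpha$ on the diagonal then yields $b_k+\alpha a_{k-1}$ for $i=1$ and $b_k+\alpha a_k$ for $i=2$; the subdiagonal entry of $L$ contributes the $\lambda\sum E_{k+1,k}$ term after dividing by $\alpha$. This confirms \eqref{RTL+ param As} and pins down the $B_i$.

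Next I would prove the Lax triads \eqref{RTL+ param triads}. Differentiating $L$ and $U$ along \eqref{RTL+ param}: $\dot L=\alpha\sum_k\dot b_kE_{kk}=\alpha\sum_k(1+\alpha b_k)(a_k-a_{k-1})E_{kk}$ (the off-diagonal $\alpha\lambda\sum E_{k+1,k}$ is constant), and $\dot U=-\alpha\lambda^{-1}\sum_k\dot a_kE_{k,k+1}=-\alpha\lambda^{-1}\sum_k a_k(b_{k+1}-b_k+\alpha a_{k+1}-\alpha a_{k-1})E_{k,k+1}$. Then I would compute the right-hand sides $LB_2-B_1L$ and $UB_2-B_1U$ entrywise. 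Both $B_1$ and $B_2$ have the same subdiagonal $\lambda\sum E_{k+1,k}$; writing $B_i=D_i+\lambda N^{T}$ with $D_1=\mathrm{diag}(b_k+\alpha a_{k-1})$, $D_2=\mathrm{diag}(b_k+\alpha a_k)$ and $N=\sum E_{k,k+1}$ the shift, the commutator-type expressions reduce to manipulations of diagonal matrices with the shift $N$, and the identity $D_2-D_1=\alpha\,\mathrm{diag}(a_k-a_{k-1})$ is what makes the diagonal part of $\dot L$ come out right, while the relation $[D,\lambda N^T]$-type terms reproduce the off-diagonal part of $\dot U$. This is the routine but slightly lengthy core of the argument.

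Finally, the usual Lax equations \eqref{RTL+ param Lax} follow formally from the triads: from $\dot L=LB_2-B_1L$ and $\dot U=UB_2-B_1U$ one gets $\dot{U^{-1}}=-U^{-1}\dot U U^{-1}=-B_2U^{-1}+U^{-1}B_1$, hence $\dot T_1=\dot L\,U^{-1}+L\,\dot{U^{-1}}=(LB_2-B_1L)U^{-1}+L(-B_2U^{-1}+U^{-1}B_1)=-B_1LU^{-1}+LU^{-1}B_1=[T_1,B_1]$, and similarly $\dot T_2=\dot{U^{-1}}L+U^{-1}\dot L=(-B_2U^{-1}+U^{-1}B_1)L+U^{-1}(LB_2-B_1L)=-B_2U^{-1}L+U^{-1}LB_2=[T_2,B_2]$. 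The only genuine obstacle is bookkeeping in the middle step: carrying the exact (not just $O(\alpha)$) entries of $U^{-1}$ through the products $LU^{-1}$ and $U^{-1}L$ to check that $(T_i-I)/\alpha$ has exactly the claimed lower-triangular truncation, and then matching the diagonal and subdiagonal entries of the commutator expressions against $\dot L$, $\dot U$. I expect that in the open-end case this is a finite, transparent computation, and the periodic case is identical once all indices are read mod $N$ and the $\lambda$-powers are tracked.
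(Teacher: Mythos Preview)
The paper states this proposition without proof, so there is nothing to compare against; your approach is the natural direct verification and it is correct. The computation of $\pi_+((T_i-I)/\alpha)$ is exactly as you describe (and in fact does not require the full series for $U^{-1}$: since $L$ has only diagonal and first subdiagonal entries, only the diagonal and first superdiagonal of $U^{-1}$ can contribute to $\pi_+(LU^{-1})$ and $\pi_+(U^{-1}L)$), the entrywise check of $\dot L=LB_2-B_1L$ and $\dot U=UB_2-B_1U$ goes through cleanly using $D_2-D_1=\alpha\,\mathrm{diag}(a_k-a_{k-1})$, and your derivation of $\dot T_i=[T_i,B_i]$ from the triads is the standard one-line argument.
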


\begin{proposition}\label{Lax triads for RTL-}
Equations of motion \eqref{RTL- param} of the flow
$\mathrm{RTL}_-(\alpha)$ are equivalent to the ``Lax triads'':
\begin{equation}
\dot{L}=C_1L-LC_2, \quad \dot{U}=C_1U-UC_2,
\end{equation}
which also imply usual Lax equations for the matrices $T_{1,2}(a,b,\lambda)$:
\begin{equation}\label{RTL- param Lax}
\dot{T_i}=[C_i,T_i], \quad i=1,2.
\end{equation}
Here the auxiliary matrices 
\begin{align}
C_1(a,b,\lambda) = &\ \lambda^{-1}\sum_{k=1}^N 
\frac{a_k}{1+\alpha b_{k+1}}\,E_{k,k+1}, \label{RTL- param C}\\
C_2(a,b,\lambda) = &\ \lambda^{-1}\sum_{k=1}^N 
\frac{a_k}{1+\alpha b_k}\,E_{k,k+1} \label{RTL- param D}
\end{align}
admit the following expressions:
\begin{equation}\label{RTL- param Cs}
C_i=\pi_-\big((I-T_i^{-1})/\alpha\big), \quad i=1,2.
\end{equation}
\end{proposition}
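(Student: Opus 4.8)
The plan is to mirror the strategy that proves Proposition~\ref{Lax triads for RTL+}, with the roles of the two triangular parts of the splitting interchanged, since $\mathrm{RTL}_-(\alpha)$ is the ``negative first'' flow and hence governed by the strictly-upper-triangular part $\pi_-$ rather than by $\pi_+$. First I would establish the projection formulas \eqref{RTL- param Cs}. From the asymptotics \eqref{RTL param T as}, $T_i=I+\alpha T+O(\alpha^2)$, so $(I-T_i^{-1})/\alpha = T + O(\alpha)$, and one must compute $\pi_-$ of this exactly, not just to leading order. Using the factorization $T_1=LU^{-1}$, $T_2=U^{-1}L$ with $L$ lower triangular (diagonal $+$ subdiagonal) and $U$ upper unitriangular (identity $+$ superdiagonal $\times(-\alpha a_k)$), one computes $T_i^{-1}$ and reads off the strictly-upper-triangular part directly. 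Concretely, $U^{-1}=I+\alpha\lambda^{-1}\sum a_kE_{k,k+1}+O(\lambda^{-2})$-type terms, and after multiplying out and subtracting from $I$ and dividing by $\alpha$, the only strictly-upper-triangular contribution is the superdiagonal, with coefficient $a_k/(1+\alpha b_{k+1})$ for $C_1$ (from $T_1=LU^{-1}$) and $a_k/(1+\alpha b_k)$ for $C_2$ (from $T_2=U^{-1}L$); the difference between the two cases is exactly which diagonal factor of $L$ gets paired with the superdiagonal of $U^{-1}$. This is the computational heart of the argument and where I expect the main obstacle: keeping track of the finite continued-fraction-free but still slightly delicate bookkeeping of which $\beta$-free diagonal entry $(1+\alpha b_j)$ lands next to which off-diagonal entry, and making sure the periodic wrap-around ($\lambda$-dependence) does not spoil the strict-triangularity of the projection.

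Next I would verify the ``Lax triads'' $\dot L = C_1 L - L C_2$ and $\dot U = C_1 U - U C_2$ by direct substitution. Using \eqref{RTL param L}, \eqref{RTL param U} one has $\dot L = \alpha\sum\dot b_k E_{kk}$ and $\dot U = -\alpha\lambda^{-1}\sum\dot a_k E_{k,k+1}$, and on the right-hand sides $C_i L$, $L C_i$, $C_i U$, $U C_i$ are all easy to expand since $C_i$ is a single superdiagonal. Matching the diagonal entries of $\dot L = C_1 L - L C_2$ reproduces the first equation of \eqref{RTL- param}, $\dot b_k = a_k/(1+\alpha b_{k+1}) - a_{k-1}/(1+\alpha b_{k-1})$; matching the superdiagonal entries of $\dot U = C_1 U - U C_2$ (or equivalently the $E_{k,k+1}$-component of the $L$-triad) reproduces the second equation, $\dot a_k = a_k\big(b_{k+1}/(1+\alpha b_{k+1}) - b_k/(1+\alpha b_k)\big)$. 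Conversely, these two triads are seen to be equivalent to \eqref{RTL- param} because the matching of entries is exhaustive: $\dot L$ has nonzero entries only on the diagonal (and the fixed subdiagonal, which is constant and matched trivially since $[C_i,\text{subdiagonal}]$ contributes only to the diagonal), and $\dot U$ only on the superdiagonal, so the triads carry exactly the same information as the scalar equations of motion.

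Finally, the usual Lax equations \eqref{RTL- param Lax} follow formally from the triads exactly as in the $\mathrm{RTL}_+$ case: differentiating $T_1 = L U^{-1}$ gives
\[
\dot T_1 = \dot L\, U^{-1} - L U^{-1}\dot U\, U^{-1}
= (C_1 L - L C_2)U^{-1} - L U^{-1}(C_1 U - U C_2)U^{-1},
\]
and expanding, the terms $L C_2 U^{-1}$ and $L U^{-1} C_1 U U^{-1} = L U^{-1} C_1$ do \emph{not} cancel against each other directly; rather one regroups to get $C_1 (L U^{-1}) - (L U^{-1}) \big(U^{-1}(C_1?\ldots)\big)$ — so the clean way is: $\dot T_1 = C_1 L U^{-1} - L C_2 U^{-1} - L U^{-1} C_1 + L C_2 U^{-1} = C_1 T_1 - T_1 C_1$, using $U^{-1}\dot U U^{-1} = U^{-1}C_1 U U^{-1} - U^{-1}U C_2 U^{-1} = U^{-1}C_1 - C_2 U^{-1}$; substituting this in and cancelling the $L C_2 U^{-1}$ pair yields $\dot T_1 = [C_1, T_1]$. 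The computation for $T_2 = U^{-1}L$ is identical with $C_1 \leftrightarrow C_2$, giving $\dot T_2 = [C_2, T_2]$. I would present the projection-formula verification in full and treat the triad-matching and the passage to Lax equations as direct computations parallel to Proposition~\ref{Lax triads for RTL+}, since no new idea is involved once \eqref{RTL- param Cs} is in hand.
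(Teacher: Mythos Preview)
Your approach is correct and is precisely the direct verification that the paper leaves implicit (no proof is given in the text for either Proposition~\ref{Lax triads for RTL+} or Proposition~\ref{Lax triads for RTL-}). Two small corrections to your write-up are worth making before you present it.

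First, your parenthetical ``or equivalently the $E_{k,k+1}$-component of the $L$-triad'' is wrong. When you expand $C_1L$ and $LC_2$, the superdiagonal parts are both equal to $\lambda^{-1}\sum_k a_k E_{k,k+1}$ and cancel; the $L$-triad therefore contributes \emph{only} the diagonal equation for $\dot b_k$. The $\dot a_k$ equation comes solely from the superdiagonal of the $U$-triad (where, as you should also check and state, the $\lambda^{-2}$ terms on the $E_{k,k+2}$ positions cancel between $C_1U$ and $UC_2$, so no spurious constraint appears).

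Second, your derivation of $\dot T_1=[C_1,T_1]$ is correct but presented in a muddled way. The clean version is one line:
\[
\dot T_1=\dot L\,U^{-1}-LU^{-1}\dot U\,U^{-1}
=(C_1L-LC_2)U^{-1}-LU^{-1}(C_1U-UC_2)U^{-1}
=C_1T_1-LC_2U^{-1}-T_1C_1+LC_2U^{-1}=[C_1,T_1],
\]
the two $LC_2U^{-1}$ terms cancelling directly; and symmetrically for $T_2=U^{-1}L$. For the projection formula \eqref{RTL- param Cs} in the periodic case, the cleanest argument is to expand $L^{-1}$ as a formal power series at $\lambda=0$ (only non-negative powers, since $L=D+\alpha\lambda N$ with $D$ invertible), so that $\pi_-(UL^{-1})=-\alpha\lambda^{-1}A\cdot(\lambda^0\text{-part of }L^{-1})=-\alpha\lambda^{-1}AD^{-1}$, which gives exactly $-\alpha C_1$; the $T_2$ case is symmetric with $D^{-1}A$ in place of $AD^{-1}$.
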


The Lax representations of the RTL flows lives in the same algebra $\g$ as the Lax representation of TL; moreover, Lax representation \eqref{RTL+ param Lax} of RTL$_+(\alpha)$ is an $O(\alpha)$-perturbation  of the $\pi_+$ version of the Lax equation \eqref{Tdot spec} for TL, while Lax representation \eqref{RTL- param Lax} of RTL$_-(\alpha)$ is an $O(\alpha)$-perturbation of the $\pi_-$ version of the Lax equation \eqref{Tdot spec} for TL.

The flows RTL$_\pm(\alpha)$, like their non-relativistic counterpart TL, are tri-Hamiltonian. That is, the phase space of RTL can be equipped with three local Poisson brackets which are preserved by these flows which is therefore Hamiltonian with respect to any of these Poisson structures. These brackets are compatible in the sense that their linear combinations are invariant Poisson brackets, as well. 

\paragraph{Linear bracket:}
\begin{equation}\label{RTL param l br}
\begin{array}{lcl}
\{b_k,a_k\}_1=-a_k, &  & \{a_k,b_{k+1}\}_1=-a_k,\\ 
\{b_k,b_{k+1}\}_1=\alpha a_k. & & 
\end{array}
\end{equation}
The Hamilton functions of RTL$_{\pm}(\alpha)$ are $\rH_2(a,b)$ and $-\alpha^{-1}\rH_0(a,b)$, respectively, where
\begin{align}
\rH_2= & \sum_{k=1}^N \left(\frac{1}{2}\ b_k^2+a_k\right)
+\alpha\sum_{k=1}^N(b_k+b_{k+1})a_k+
\alpha^2\sum_{k=1}^N\left(\frac{1}{2}\,a_k^2+a_ka_{k+1}\right),
 \label{RTL param H2}\\
\rH_0 = & \ \alpha^{-1}\,\sum_{k=1}^N 
\log(1+\alpha b_k).\label{RTL param H0}
\end{align}
Notice that the Hamilton function $-\alpha^{-1}\rH_0(a,b)$ is singular in $\alpha$, but it 
becomes regular upon adding the Casimir function $\alpha^{-1}\rH_1(a,b)$, where
\begin{equation}\label{RTL param H1}
\rH_1=\sum_{k=1}^N b_k+\alpha \sum_{k=1}^N a_k.
\end{equation}
Indeed:
\[
-\alpha^{-1}\rH_0+\alpha^{-1}\rH_1=\sum_{k=1}^N \left(\frac{1}{2}\ b_k^2+ a_k\right)+O(\alpha)\;.
\]

\paragraph{Quadratic bracket:}
\begin{equation}\label{RTL param q br}
\begin{array}{lcl}
\{b_k,a_k\}_2=-b_ka_k, &  &
\{a_k,b_{k+1}\}_2=-a_kb_{k+1}, \\ 
\{b_k,b_{k+1}\}_2=-a_k, &  & 
\{a_k,a_{k+1}\}_2=-a_ka_{k+1}. 
\end{array}
\end{equation}
Amazingly, this bracket coincides with the invariant Poisson bracket \eqref{TL q br} of TL. The Hamilton functions of the flows RTL$_\pm(\alpha)$ are $\rH_1(a,b)$ and
$-\alpha\,\rH_0(a,b)$, respectively.

\paragraph{Cubic bracket:}

\begin{equation}\label{RTL param c br}
\begin{array}{l}
\{b_k,a_k\}_3     = -a_k(b_k^2+a_k)-\alpha b_ka_k^2, \\ 
\{a_k,b_{k+1}\}_3 = -a_k(b_{k+1}^2+a_k)-\alpha a_k^2b_{k+1},  \\ 
\{b_k,b_{k+1}\}_3 = -a_k(b_k+b_{k+1})-\alpha b_ka_kb_{k+1}, \\ 
\{a_k,a_{k+1}\}_3 = -2a_kb_{k+1}a_{k+1}-\alpha a_ka_{k+1}(a_k+a_{k+1}), \\ 
\{b_k,a_{k+1}\}_3 = -a_ka_{k+1}-\alpha b_ka_ka_{k+1}, \\ 
\{a_k,b_{k+2}\}_3 = -a_ka_{k+1}-\alpha a_ka_{k+1}b_{k+2},\\ 
\{a_k,a_{k+2}\}_3 = -\alpha a_ka_{k+1}a_{k+2}.
\end{array}
\end{equation}
The Hamilton functions of the flows RTL$_\pm(\alpha)$ in this bracket are $\frac{1}{2}\rC(a,b)$ and $\frac{1}{2}\rC(a,b)-\alpha\rH_0(a,b)$, where
\begin{equation}\label{RTL+ Ham in c br}
\rC(a,b)=\sum_{k=1}^N\log(a_k).
\end{equation}

\paragraph{Bibliographical remarks.}

Relativistic Toda lattice was introduced in \cite{Rui90}, as the Newtonian equations (\ref{RTL+ New introd}). 
Early references, concerning inverse scattering and finite gap solutions, Lax representations and tri-Hamiltonian structure include: \cite{BR88, BR89a, BR89b},
\cite{OFZR89}, \cite{ZTOF91}.

The Lax representation in terms of $(L,U)\in\g\otimes\g$ was given in \cite{Su91}, where it was pointed out that the corresponding Lax matrices $T_{1,2}$ build an orbit of a
Lie--Poisson group. This was further developed in \cite{FM97}, \cite{FG00}, and generalized for all simple Lie groups 
in \cite{HKKR00}.

\setcounter{equation}{0}
\section{Discretization of the relativistic Toda lattice in the Flaschka-Manakov variables}
\label{Sect RTL param discretization}

To find integrable discretization of the flows RTL$_{\pm}(\alpha)$, we apply the recipe of Section \ref{Sect recipe}, i.e., we consider the maps
\[
\wT=\Pi_+^{-1}(F(T))\cdot T\cdot\Pi_+(F(T))=
\Pi_-(F(T))\cdot T\cdot\Pi_-^{-1}(F(T))
\]
with $T=T_1$ or $T_2$, and with
\[
F(T)=I+\frac{h}{\alpha}\,(T-I)\;\;\mathrm{for\;\; RTL}_+(\alpha), \quad
F(T)=I+\frac{h}{\alpha}\,(I-T^{-1})\;\;\mathrm{for\;\;RTL}_-(\alpha).
\]
As a matter of fact, it is much more convenient to work with the discrete time Lax triads, which take the form
\begin{align*}
\wL = & \ \Pi_+^{-1}(F(T_1))\cdot L\cdot\Pi_+(F(T_2))\ =\
\Pi_-(F(T_1))\cdot L\cdot\Pi_-^{-1}(F(T_2)),\\
\wU = & \ \Pi_+^{-1}(F(T_1))\cdot U\cdot\Pi_+(F(T_2))\ =\
\Pi_-(F(T_1))\cdot U\cdot\Pi_-^{-1}(F(T_2)).
\end{align*}
It turns out that for the flow $\mathrm{RTL}_+(\alpha)$ the version with the 
$\Pi_+$ factors is more suitable, while for the $\mathrm{RTL}_-(\alpha)$ flow 
the version with the $\Pi_-$ factors is preferable.  

\begin{lemma}\label{discr RTL+ param factors}
The factors $\mbB_{1,2}=\Pi_+\Big(I+\displaystyle\frac{h}{\alpha}
(T_{1,2}-I)\Big)$ are of the form
\begin{align}
\mbB_1(a,b,\lambda)= & \ \sum_{k=1}^N\ga_kE_{kk}+h\lambda\sum_{k=1}^N E_{k+1,k},
\label{dRTL+ param A} \\
\mbB_2(a,b,\lambda)= & \ \sum_{k=1}^N\gb_kE_{kk}+h\lambda\sum_{k=1}^N E_{k+1,k},
\label{dRTL+ param B}
\end{align}
where coefficients $\ga_k=\ga_k(a,b)$ are defined by the recurrent relations 
\begin{equation}\label{dRTL+ param a}
\ga_k=1+hb_k+\frac{h(\alpha-h)a_{k-1}}{\ga_{k-1}},
\end{equation}
and coefficients $\gb_k=\gb_k(a,b)$  are defined by  either of the relations
\begin{equation}\label{dRTL+ param b}
\gb_k=\ga_{k-1}\,\frac{\ga_k+h\alpha a_k}{\ga_{k-1}+h\alpha a_{k-1}}=\ga_k\frac{\alpha\ga_{k+1}-h(1+\alpha b_{k+1})}{\alpha\ga_k-h(1+\alpha b_k)},
\end{equation}
which are equivalent by virtue of \eqref{dRTL+ param a}.
\end{lemma}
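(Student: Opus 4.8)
\medskip
\noindent\textbf{Proof strategy.} The plan is to reduce both factorizations to ordinary $\Pi_+\Pi_-$-factorizations of one and the same simple (loop-)tridiagonal matrix. I would introduce the auxiliary matrix
\[
\rM:=\Big(I+\tfrac h\alpha(T_1-I)\Big)U=U\Big(I+\tfrac h\alpha(T_2-I)\Big)=\tfrac h\alpha L+\big(1-\tfrac h\alpha\big)U,
\]
where the two middle equalities both follow at once from $T_1=LU^{-1}$, $T_2=U^{-1}L$. Substituting the explicit forms \eqref{RTL param L}, \eqref{RTL param U}, the diagonal contributions $\tfrac h\alpha(1+\alpha b_k)+(1-\tfrac h\alpha)$ collapse to $1+hb_k$, and one gets
\[
\rM=\sum_{k=1}^N(1+hb_k)E_{kk}+h\lambda\sum_{k=1}^NE_{k+1,k}+(h-\alpha)\lambda^{-1}\sum_{k=1}^Na_kE_{k,k+1}.
\]

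Since $U$ is polynomial in $\lambda^{-1}$, equals $I$ at $\lambda=\infty$ and respects the $\Omega$-twist, both $U,U^{-1}\in\G_-$; hence the elementary rule $\Pi_+(g\,c)=\Pi_+(g)$ for $c\in\G_-$ gives $\mbB_1=\Pi_+\big(I+\tfrac h\alpha(T_1-I)\big)=\Pi_+(\rM\,U^{-1})=\Pi_+(\rM)$, i.e.\ $\mbB_1$ is just the $\G_+$-factor of $\rM$. I would then impose the ansatz $\mbB_1=\sum\ga_kE_{kk}+h\lambda\sum E_{k+1,k}$ and $\Pi_-(\rM)=I+\lambda^{-1}\sum v_kE_{k,k+1}$ — the bidiagonal shapes being forced by the band structure of $\rM$ and legitimized by uniqueness of the factorization near the identity (small $h$) — multiply out, and read off from the three bands the relations $\ga_kv_k=(h-\alpha)a_k$ and $\ga_k+hv_{k-1}=1+hb_k$; eliminating $v_{k-1}$ yields exactly \eqref{dRTL+ param a}.

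For $\mbB_2$ the same rule gives $\mbB_2=\Pi_+\big(I+\tfrac h\alpha(T_2-I)\big)=\Pi_+(U^{-1}\rM)=\Pi_+\big(U^{-1}\mbB_1\big)$, dropping the right $\G_-$-factor $\Pi_-(\rM)$. I would compute this by seeking the factorization $U^{-1}\mbB_1=\mbB_2\,\widetilde{\mathbf C}$, equivalently $\mbB_1\widetilde{\mathbf C}^{-1}=U\mbB_2$, with $\mbB_2=\sum\gb_kE_{kk}+h\lambda\sum E_{k+1,k}$ and $\widetilde{\mathbf C}^{-1}=I+\lambda^{-1}\sum t_kE_{k,k+1}$. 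Multiplying out both sides and matching bands forces the subdiagonal coefficient of $\mbB_2$ to equal $h$, and yields $t_k=-\alpha a_k\gb_{k+1}/\ga_k$ together with $\ga_k+ht_{k-1}+\alpha ha_k=\gb_k$, whence $\gb_k\big(1+h\alpha a_{k-1}/\ga_{k-1}\big)=\ga_k+\alpha ha_k$, which is the first equality in \eqref{dRTL+ param b}. That $\widetilde{\mathbf C}^{-1}$ comes out unit-diagonal and polynomial in $\lambda^{-1}$ confirms $\widetilde{\mathbf C}\in\G_-$, so this is indeed the genuine $\Pi_+\Pi_-$-factorization. The second expression for $\gb_k$ then follows from a one-line rearrangement of \eqref{dRTL+ param a} taken at index $k+1$, namely $\alpha\ga_{k+1}-h(1+\alpha b_{k+1})=(\alpha-h)(\ga_k+\alpha ha_k)/\ga_k$, which, together with its index-shift, collapses the second fraction in \eqref{dRTL+ param b} to the first.

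Almost everything here is routine matrix multiplication; the one genuinely non-mechanical step is recognizing that $\mbB_2$ must be extracted as $\Pi_+(U^{-1}\mbB_1)$ — a ``swap'' of $U$ past $\mbB_1$ — rather than by any naive direct factorization of $I+\tfrac h\alpha(T_2-I)$ or of $\rM$, and I expect that to be the only real obstacle. A minor additional caution concerns the periodic case: there the relevant factorization is the generalized $LU$ one and the $\ga_k$ are double-valued, so all the above ans\"atze and recurrences must be read exactly as in Lemma~\ref{discr TL factor}, the branch being singled out by the asymptotics $\ga_k=1+hb_k+O(h^2)$; no new phenomenon arises.
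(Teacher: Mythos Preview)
Your proof is correct and complete. The paper does not give an explicit proof of this lemma, but your reduction via the auxiliary matrix $\rM=\tfrac h\alpha L+(1-\tfrac h\alpha)U$ is exactly the natural approach: it converts the awkward factorization of $I+\tfrac h\alpha(T_{1,2}-I)$ (where $T_{1,2}$ involve $U^{-1}$) into the bidiagonal-times-bidiagonal factorization of a genuinely tridiagonal matrix, parallel to the argument for Lemma~\ref{discr TL factor}. The one nontrivial step you correctly isolate --- extracting $\mbB_2$ as $\Pi_+(U^{-1}\mbB_1)$ via the ``swap'' equation $\mbB_1\widetilde{\mathbf C}^{-1}=U\mbB_2$ --- is the right way to handle the left multiplication by $U^{-1}\in\G_-$, and your algebra checks out, including the final verification that the two formulas in \eqref{dRTL+ param b} agree via \eqref{dRTL+ param a}.
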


\noindent
\textbf{Remark.} As in the case of the discretization of TL, in the open--end 
case due to $a_0=0$ one can solve (\ref{dRTL+ param a}) explicitly in terms of 
finite continued fractions:
\[
\ga_1=1+hb_1;\quad 
\ga_2=1+hb_2+\frac{h(\alpha-h) a_1}{1+hb_1};\quad\ldots\quad;
\]
\[
\ga_N=1+hb_N+\frac{h(\alpha-h)a_{N-1}}{1+hb_{N-1}+
\displaystyle\frac{h(\alpha-h)a_{N-2}}{1+hb_{N-2}+\;
\raisebox{-3mm}{$\ddots$}
\raisebox{-4.5mm}{$\;+\displaystyle\frac{h(\alpha-h)a_1}{1+hb_1}$}}}.
\]
In the periodic case $\ga_k$ can be expressed as infinite $N$-periodic continued fractions, and therefore they are double-valued functions, with one of the branches satisfying $\ga_k=1+h(b_k+\alpha a_{k-1})+O(h^2)$ as $h\to 0$.

\begin{theorem} \label{discrete RTL+ param}
Consider the discrete time Lax triads
\begin{equation}\label{dRTL+ param Lax}
\wL=\mbB_1^{-1}L\mbB_2,\quad \wU=\mbB_1^{-1}U\mbB_2,
\end{equation}
with $\mbB_{1,2}=\Pi_+\Big(I+\displaystyle\frac{h}{\alpha}(T_{1,2}-I)\Big)$.
They serve as a Lax representation of the map $(a,b)\mapsto(\wa,\wb)$ given by 
\begin{equation}\label{dRTL+ param}
1+\alpha\wb_k=(1+\alpha b_k)\,\frac{\gb_k}{\ga_k}, \quad 
\wa_k=a_k\,\frac{\gb_{k+1}}{\ga_k},
\end{equation}
where functions $\ga_k$, $\gb_k$ are defined by \eqref{dRTL+ param a}, \eqref{dRTL+ param b}.
Map \eqref{dRTL+ param} will be denoted by $\mathrm{dRTL}_+(\alpha,h)$. 
\end{theorem}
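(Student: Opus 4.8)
The plan is to verify the claim in two stages: first establish that the factors $\mbB_{1,2}$ have the stated triangular shape with coefficients obeying \eqref{dRTL+ param a}, \eqref{dRTL+ param b} (this is essentially Lemma \ref{discr RTL+ param factors}, which I assume), and then extract the equations of motion \eqref{dRTL+ param} by unwinding the discrete Lax triads \eqref{dRTL+ param Lax}. For the first stage I would argue exactly as in the proof sketch accompanying Lemma \ref{discr TL factor}: one writes $I+\frac{h}{\alpha}(T_i-I) = \mbB_i \cdot (\text{strictly-upper-with-unit-diagonal factor})$, notes that the lower-triangular Hessenberg structure of $T_i = L U^{-1}$ (resp. $U^{-1}L$) forces $\mbB_i$ to be lower bidiagonal of the form \eqref{dRTL+ param A}, \eqref{dRTL+ param B}, and then matching entries in the factorization gives a recurrence for $\ga_k$ (resp. $\gb_k$); using $T_{1,2} = I + \alpha T + O(\alpha^2)$ one checks the stated $h\to 0$ asymptotics and, in the open-end case, the continued-fraction solvability.

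The core of the theorem is the second stage. Here I would not factor $I+\frac{h}{\alpha}(T_i-I)$ directly but instead work with the Lax triads $\wL = \mbB_1^{-1} L \mbB_2$, $\wU = \mbB_1^{-1} U \mbB_2$, since $L$ and $U$ from \eqref{RTL param L}, \eqref{RTL param U} are sparse and bidiagonal. The key structural observation is that both sides of $\mbB_1 \wL = L \mbB_2$ are lower Hessenberg (nonzero only on the diagonal, the first subdiagonal, and the $\lambda$-band), so comparing the finitely many nonzero entries yields a small closed system. Concretely: the diagonal of $\mbB_1 \wL = L \mbB_2$ reads $\ga_k(1+\alpha\wb_k) = (1+\alpha b_k)\gb_k$ after one also uses the corresponding relation coming from $\mbB_1\wU = U\mbB_2$ to pin down the diagonal of $\wL$ and $\wU$; the subdiagonal entries give $\wa_k$ in terms of $a_k$, $\ga_k$, $\gb_{k+1}$; and the $h\lambda$-band entries are automatically consistent because $\mbB_1$ and $\mbB_2$ share the same $h\lambda\sum E_{k+1,k}$ term. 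Assembling these produces precisely \eqref{dRTL+ param}. Conversely, one checks that \eqref{dRTL+ param} together with the definitions of $\ga_k$, $\gb_k$ reconstruct the triadic identities, giving equivalence.

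The step I expect to be the main obstacle is the bookkeeping in the periodic case: there $\lambda$ is a genuine variable, the loop-algebra grading must be respected, and the continued fractions defining $\ga_k$ are only double-valued, so one must check that the entrywise comparison in $\mbB_1\wL = L\mbB_2$ closes up consistently around the cycle (mod $N$) and that the two expressions for $\gb_k$ in \eqref{dRTL+ param b} really do agree modulo \eqref{dRTL+ param a}. In the open-end case this is routine, but in the periodic case one should verify that the resulting map (correspondence) does not depend on which branch subtlety — i.e., that \eqref{dRTL+ param} is well-defined given a choice of branch for the $\ga_k$'s, with the $\gb_k$'s then determined. A secondary, purely computational, point is confirming the second equality in \eqref{dRTL+ param b}; this is a direct algebraic manipulation using \eqref{dRTL+ param a} to eliminate $a_{k-1}$, and I would relegate it to a one-line remark rather than belabor it. Everything else follows from the general machinery: the Poisson property, commutativity with the RTL flows, and the $\alpha$-independence of the Lax matrix are inherited from the recipe of Section \ref{Sect recipe} and need no separate verification here.
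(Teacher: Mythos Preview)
Your approach is the same as the paper's: write the triads as $\mbB_1\wL=L\mbB_2$ and $\mbB_1\wU=U\mbB_2$ and compare entries. However, your bookkeeping is slightly off, and this causes you to understate what actually needs to be checked.

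Concretely: the diagonal of $\mbB_1\wL=L\mbB_2$ gives $\ga_k(1+\alpha\wb_k)=(1+\alpha b_k)\gb_k$ directly (no help from the $U$-equation needed), and the \emph{superdiagonal} of $\mbB_1\wU=U\mbB_2$ gives $\ga_k\wa_k=a_k\gb_{k+1}$. These are the two equations \eqref{dRTL+ param}. But the \emph{subdiagonal} of the $L$-triad and the \emph{diagonal} of the $U$-triad are not automatic: they give two further scalar relations,
\[
h(1+\alpha\wb_k)+\alpha\ga_{k+1}=h(1+\alpha b_{k+1})+\alpha\gb_k,\qquad
\ga_k-h\alpha\wa_{k-1}=\gb_k-h\alpha a_k,
\]
which the paper records as \eqref{dRTL+ param addition}. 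So the entrywise comparison produces \emph{four} equations, not two, and the substance of the proof is the direct computation showing that the two in \eqref{dRTL+ param addition} are consequences of \eqref{dRTL+ param} together with \eqref{dRTL+ param a}, \eqref{dRTL+ param b}. You gesture at this in your ``conversely'' sentence, but you should flag it as the actual work rather than a tautology; your claim that ``the $h\lambda$-band entries are automatically consistent'' is only true for the sub-subdiagonal of the $L$-equation and the subdiagonal of the $U$-equation, not for the two bands above. The periodic-case concerns you raise are not addressed separately in the paper; once the four scalar equations are shown to reduce to two, the closure around the cycle comes for free.
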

\begin{proof}
Matrix equations \eqref{dRTL+ param Lax}, written entry-wise, are equivalent to equations \eqref{dRTL+ param} in conjunction with
\begin{equation}\label{dRTL+ param addition}
h(1+\alpha\wb_k)+\alpha\ga_{k+1}=h(1+\alpha b_{k+1})+\alpha\gb_k, \quad 
\ga_k-h\alpha\wa_{k-1}=\gb_k-h\alpha a_k.
\end{equation}
A direct computation shows that, given \eqref{dRTL+ param}, equations \eqref{dRTL+ param addition} follow from \eqref{dRTL+ param a}, \eqref{dRTL+ param b}. This finishes the proof. For later reference, we mention the following formula:
\begin{equation}\label{dRTL+ d+c 1}
h(\wb_k+\alpha\wa_{k-1})-h(b_{k+1}+\alpha a_k) = \ga_k-\ga_{k+1}.
\end{equation}
It follows immediately by eliminating $\gb_k$ between equations \eqref{dRTL+ param addition}.
\end{proof}

Next, we turn to discretization of the flow RTL$_-(\alpha)$. 
\begin{lemma}\label{discr RTL- param factors}
The factors $\mbC_{1,2}=\Pi_-\Big(I+\displaystyle\frac{h}{\alpha}
(I-T_{1,2}^{-1})\Big)$ are of the form
\begin{align}
\mbC_1(a,b,\lambda) = &\  I+h\lambda^{-1}\sum_{k=1}^N\gc_kE_{k,k+1},
\label{dRTL- param C}\\
\mbC_2(a,b,\lambda) = &\ I+h\lambda^{-1}\sum_{k=1}^N\gd_kE_{k,k+1},
\label{dRTL- param D}
\end{align}
where coefficients $\gd_k=\gd_k(a,b)$ are defined by the recurrent relations
\begin{equation}\label{dRTL- param d}
\gd_k=\frac{a_k}{1+(\alpha+h)(b_k-h\gd_{k-1})},
\end{equation}
and coefficients $\gc_k=\gc_k(a,b)$ are given by either of two expressions
\begin{equation}\label{dRTL- param c}
\gc_k=\gd_k\,\frac{1+\alpha (b_k-h\gd_{k-1})}{1+\alpha (b_{k+1}-h\gd_k)}=\gd_{k+1}\,\frac{\alpha a_k+h\gd_k}{\alpha a_{k+1}+h\gd_{k+1}},
\end{equation}
which are equivalent by virtue of \eqref{dRTL- param d}.
\end{lemma}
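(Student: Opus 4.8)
\noindent The strategy is the one behind Lemmas~\ref{discr TL factor} and~\ref{discr RTL+ param factors}: first settle the shape of the factors, then turn the factorization equations into a scalar system and identify it with the asserted recurrences. As for the shape, the bidiagonal structure of $L$ and $U$ (see \eqref{RTL param L}, \eqref{RTL param U}) forces $T_1^{-1}=UL^{-1}$ and $T_2^{-1}=L^{-1}U$, and hence $I+(h/\alpha)(I-T_{1,2}^{-1})$, to be ``lower triangular plus the first super-diagonal only'' (the super-diagonal graded by $\lambda^{-1}$); by uniqueness of the $\Pi_+\Pi_-$ factorization this pins $\mbC_{1,2}=\Pi_-\big(I+(h/\alpha)(I-T_{1,2}^{-1})\big)$ to the form \eqref{dRTL- param C}, \eqref{dRTL- param D}. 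To compute the coefficients I would use $T_1=LU^{-1}$, $T_2=U^{-1}L$ to rewrite, with $F(T):=I+(h/\alpha)(I-T^{-1})$,
\[
F(T_2)=L^{-1}\widehat M,\qquad F(T_1)=\widehat M\,L^{-1}=L\,F(T_2)\,L^{-1},\qquad \widehat M:=\frac1\alpha\big((\alpha+h)L-hU\big),
\]
where a one-line computation gives the explicit matrix (again supported on the main, first sub- and first super-diagonal)
\[
\widehat M=\sum_{k=1}^N\big(1+(\alpha+h)b_k\big)E_{kk}+(\alpha+h)\lambda\sum_{k=1}^N E_{k+1,k}+h\lambda^{-1}\sum_{k=1}^N a_kE_{k,k+1}.
\]

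For $\mbC_2$: a product of a lower-bidiagonal matrix with a unit-upper-bidiagonal matrix is again of the ``diagonal $+$ first sub $+$ first super'' shape, with entries that can be read off at once; hence $\widehat M$ factors as $\widehat M=\widehat N\,\mbC_2$ with $\widehat N=\sum_k(a_k/\gd_k)E_{kk}+(\alpha+h)\lambda\sum_k E_{k+1,k}$ lower-bidiagonal and $\mbC_2=I+h\lambda^{-1}\sum_k\gd_k E_{k,k+1}$; the entries of $\widehat N$ and the recurrence \eqref{dRTL- param d} for $\gd_k$ drop out of matching the three diagonals of $\widehat N\mbC_2$ against those of $\widehat M$. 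Since $L^{-1}\widehat N\in\G_+$, the identity $F(T_2)=L^{-1}\widehat M=(L^{-1}\widehat N)\,\mbC_2$ is the desired $\Pi_+\Pi_-$ factorization, so $\mbC_2=\Pi_-(F(T_2))$ has the form \eqref{dRTL- param D}.

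For $\mbC_1$: from $F(T_1)=\widehat M L^{-1}=\widehat N\,\mbC_2\,L^{-1}$ and $\widehat N\in\G_+$ one has $\Pi_-(\widehat N\,X)=\Pi_-(X)$, so the problem reduces to $\mbC_1=\Pi_-(\mbC_2\,L^{-1})$, i.e.\ to solving $\mbC_1L=\mbQ\,\mbC_2$ for a lower-bidiagonal $\mbQ$ and $\mbC_1=I+h\lambda^{-1}\sum_k\gc_k E_{k,k+1}$. Computing $\mbC_1L$ (once more of the ``diagonal $+$ sub $+$ super'' shape) and matching entries delivers $\mbQ$ and the first expression for $\gc_k$ in \eqref{dRTL- param c}, namely $\gc_k=\gd_k\,(1+\alpha(b_k-h\gd_{k-1}))/(1+\alpha(b_{k+1}-h\gd_k))$. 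The equivalence of the two expressions in \eqref{dRTL- param c} is then a one-line consequence of \eqref{dRTL- param d}, which rewrites $1+\alpha(b_k-h\gd_{k-1})$ as $(\alpha a_k+h\gd_k)/((\alpha+h)\gd_k)$ and so turns the first form into $\gc_k=\gd_{k+1}\,(\alpha a_k+h\gd_k)/(\alpha a_{k+1}+h\gd_{k+1})$.

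The genuine obstacle is the one already present in Lemmas~\ref{discr TL factor} and~\ref{discr RTL+ param factors}: in the periodic case the generalized $LU$ (Riemann--Hilbert) factorization is transcendental rather than a finite linear-algebra operation, so \eqref{dRTL- param d} becomes an $N$-periodic recurrence and the $\gd_k$ (hence the $\gc_k$) are double-valued functions of $(a,b)$ --- infinite $N$-periodic continued fractions --- with the relevant branch singled out by the asymptotics $\gd_k=a_k/(1+\alpha b_k)+O(h)$ as $h\to0$. The scalar manipulations above are branch-independent and handle all the algebra; what one must invoke on top of them, rather than prove by hand, is the standard existence and uniqueness of the $\Pi_+\Pi_-$ factorization near the group unit, which is what legitimizes reading off $\mbC_{1,2}$ from these formulas.
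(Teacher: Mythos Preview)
Your proof is correct. The paper does not supply a proof for this lemma (just as it omits proofs for the parallel Lemmas~\ref{discr TL factor} and~\ref{discr RTL+ param factors}, treating them as routine matrix computations and following them only with remarks on the continued-fraction form of the coefficients), so there is no ``paper's approach'' to compare against in detail.

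That said, your route via the auxiliary matrix $\widehat M=\tfrac{1}{\alpha}\big((\alpha+h)L-hU\big)$ is worth a comment. A direct attack would require handling $L^{-1}$ explicitly inside $T_{1,2}^{-1}$, which is a full lower-triangular matrix and makes the bookkeeping unpleasant. By pulling out the $L^{-1}$ factor to write $F(T_2)=L^{-1}\widehat M$ and $F(T_1)=\widehat M L^{-1}$, you reduce everything to products of explicit bidiagonal matrices, and the recurrence \eqref{dRTL- param d} then drops out of a single three-diagonal match $\widehat M=\widehat N\,\mbC_2$. The further reduction $\mbC_1=\Pi_-(\mbC_2 L^{-1})$ obtained by stripping off $\widehat N\in\G_+$ is a clean way to get the relation \eqref{dRTL- param c} between $\gc_k$ and $\gd_k$ without recomputing from scratch. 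Your verification that the two expressions in \eqref{dRTL- param c} agree via the identity $1+\alpha(b_k-h\gd_{k-1})=(\alpha a_k+h\gd_k)/((\alpha+h)\gd_k)$, which is just \eqref{dRTL- param d} rearranged, is also correct. The caveat you flag about the periodic case---that \eqref{dRTL- param d} becomes an $N$-periodic recurrence with double-valued solutions, the relevant branch being fixed by $\gd_k=a_k/(1+\alpha b_k)+O(h)$---is exactly the point the paper makes in its Remark following the lemma.
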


\noindent
\textbf{Remark.} In the open-end case we have the following finite continued
fractions expressions for $\gd_k$:
\[
\gd_1=\frac{a_1}{1+(\alpha+h)b_1};\quad
\gd_2=\frac{a_2}{1+(\alpha+h)b_2-\displaystyle\frac{h(\alpha+h)a_1}
{1+(\alpha+h)b_1}};\quad
\ldots\quad;
\]
\[
\gd_{N-1}=\frac{a_{N-1}}{1+(\alpha+h)b_{N-1}-\displaystyle
\frac{h(\alpha+h)a_{N-2}}{1+(\alpha+h)b_{N-2}-\;
\raisebox{-3mm}{$\ddots$}
\raisebox{-4.5mm}{$\;-\displaystyle\frac{h(\alpha+h)a_1}{1+(\alpha+h)b_1}$}}}.
\]

\noindent
In the periodic case these continued fractions are replaced by $N$-periodic
ones, which represent double-valued functions, with one of the branches satisfying 
$\gd_k= a_k/(1+\alpha b_k)+O(h)$ as $h\to 0$.

\begin{theorem}\label{discrete RTL- param}
Consider the discrete time Lax triads
\begin{equation}\label{dRTL- param Lax triads}
\wL=\mbC_1L\, \mbC_2^{-1},\quad \wU=\mbC_1U\mbC_2^{-1},
\end{equation}
with $\mbC_{1,2}=\Pi_-\Big(I+\displaystyle\frac{h}{\alpha}(I-T_{1,2}^{-1})\Big)$.
They serve as a Lax representation of the map $(a,b)\mapsto(\wa,\wb)$ given by
\begin{equation}\label{dRTL- param}
1+\alpha\wb_k=(1+\alpha b_{k+1})\,\frac{\gc_k}{\gd_k}, \quad 
\wa_k=a_{k+1}\,\frac{\gc_k}{\gd_{k+1}},
\end{equation}
where functions $\gd_k$, $\gc_k$ are defined by \eqref{dRTL- param d}, \eqref{dRTL- param c}.
Map \eqref{dRTL- param} will be denoted $\mathrm{dRTL}_-(\alpha,h)$. 
\end{theorem}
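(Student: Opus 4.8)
The plan is to mimic exactly the proof of Theorem~\ref{discrete RTL+ param}, which was the ``$\Pi_+$-version'' of the same construction, so the structure should carry over verbatim with the obvious dualizations ($L$-on-the-left/right swaps, $B_i\leftrightarrow C_i$, $\ga_k,\gb_k\leftrightarrow\gd_k,\gc_k$, and the shift patterns $k\mapsto k+1$ in the arguments coming from the strictly-upper-triangular nature of $\g_-$). First I would expand the two matrix identities in \eqref{dRTL- param Lax triads} entrywise. With $L,U$ given by \eqref{RTL param L}--\eqref{RTL param U} and $\mbC_{1,2}$ given by \eqref{dRTL- param C}--\eqref{dRTL- param D}, both $\mbC_1 L\mbC_2^{-1}$ and $\mbC_1 U\mbC_2^{-1}$ are bidiagonal of the same shape as $L$ and $U$ (this is automatic: $\mbC_i$ are upper triangular with unit diagonal, the whole construction lives in the twisted loop algebra, and Lemma~\ref{discr RTL- param factors} already fixed the form of the factors). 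Reading off the diagonal and the off-diagonal entries of $\wL=\mbC_1L\mbC_2^{-1}$ gives, after clearing the $\mbC_2^{-1}$ by multiplying through by $\mbC_2$, exactly two families of scalar relations; one of them is \eqref{dRTL- param} (the equations of motion), the other is a pair of ``auxiliary'' identities analogous to \eqref{dRTL+ param addition}, say
\begin{equation*}
1+\alpha\wb_k - h\alpha\wa_k\cdot(\text{something}) \;=\;\dots,\qquad \gc_k-\dots=\dots,
\end{equation*}
relating $\gc_k,\gd_k$ at neighbouring sites to the old and new variables. The same entries of $\wU=\mbC_1U\mbC_2^{-1}$ produce a second, compatible pair of such identities.

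The core of the argument is then a bookkeeping lemma: given the definitions \eqref{dRTL- param d}, \eqref{dRTL- param c} of $\gd_k$ and $\gc_k$, and given the proposed update \eqref{dRTL- param}, the auxiliary identities are \emph{satisfied as algebraic identities}. This is the step that corresponds to ``a direct computation shows that\dots'' in the proof of Theorem~\ref{discrete RTL+ param}. Concretely I would substitute \eqref{dRTL- param} into each auxiliary identity, use the first expression for $\gc_k$ in \eqref{dRTL- param c} to eliminate $\gc_k$ in favour of $\gd_k,\gd_{k-1}$ and $b_k,b_{k+1}$, then use \eqref{dRTL- param d} to rewrite $a_k$ in terms of $\gd_k,\gd_{k-1},b_k$; everything should collapse. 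The equivalence of the two displayed expressions for $\gc_k$ in \eqref{dRTL- param c} (which the Lemma asserts follows from \eqref{dRTL- param d}) is the lubricant that lets one pass between the ``$L$-entry'' and the ``$U$-entry'' versions of the auxiliary relations, so I would prove that cross-multiplication identity first and cite it freely. Conversely — and this direction is what makes the theorem a genuine equivalence rather than a one-way implication — I would check that the two matrix Lax-triad equations \emph{together} force the update to have the form \eqref{dRTL- param}: the diagonal entry of $\wL=\mbC_1L\mbC_2^{-1}$ already gives $1+\alpha\wb_k$ up to the freedom in the factorization, and the subdiagonal entry of $\wL$ (together with an $\wU$-entry to pin down the remaining scaling freedom, exactly as the constant $\beta_k/e^{\wx_k-x_k}=1$ normalization was fixed in the proof of Theorem~\ref{discr Toda in linear parametrization}) gives $\wa_k$.

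The main obstacle I anticipate is purely organizational: keeping the index shifts straight. In the $\Pi_-$ picture everything is ``shifted by one site'' relative to the $\Pi_+$ picture — note that \eqref{dRTL- param} already features $b_{k+1}$ and $a_{k+1}$ where \eqref{dRTL+ param} had $b_k$ and $a_k$ — and the recurrences \eqref{dRTL- param d}, \eqref{dRTL- param c} run in the same direction but with the ``$h\gd_{k-1}$'' correction sitting inside the denominator, so sign errors and off-by-one errors in matching subdiagonal entries of $\mbC_1 L$ against those of $\wL\mbC_2$ are the realistic danger. There is no conceptual difficulty: the factorization $\mbC_{1,2}=\Pi_-(I+\tfrac{h}{\alpha}(I-T_{1,2}^{-1}))$ exists and has the stated form by Lemma~\ref{discr RTL- param factors}, the Lax-triad equations hold by the general AKS machinery (they are the discrete-time images of Proposition~\ref{Lax triads for RTL-} under the recipe of Section~\ref{Sect recipe}, with $F(T)=I+\tfrac{h}{\alpha}(I-T^{-1})$), and the content of the theorem is just the explicit entrywise translation plus the verification lemma. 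I would also, as in the $+$ case, record for later use the identity obtained by eliminating $\gc_k$ between the two auxiliary relations — the $-$-analogue of \eqref{dRTL+ d+c 1} — since it is the cleanest form of the equations of motion and will be needed when passing to the symplectic realizations in Sections~\ref{Sect additive exp rel Toda}, \ref{Sect Newtonian rel Toda}.
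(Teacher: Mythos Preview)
Your approach is correct and matches the paper's proof: write the Lax triads entrywise to obtain \eqref{dRTL- param} together with a pair of auxiliary relations (the paper's \eqref{dRTL- param addition}, namely $\wb_k+h\gd_{k-1}=b_k+h\gc_k$ and $\alpha\wa_k-h\gd_k=\alpha a_k-h\gc_k$), then verify by direct computation that the auxiliary pair follows from \eqref{dRTL- param d}, \eqref{dRTL- param c} once \eqref{dRTL- param} is imposed. Two small corrections: there is no normalization freedom to fix here (unlike in the $x$-variable realizations), since the entrywise expansion already gives the update uniquely; and for the later-use identity you should eliminate $\gd$, not $\gc$, between the auxiliary relations---adding the first to the $k\to k-1$ shift of the second yields $(\wb_k+\alpha\wa_{k-1})-(b_k+\alpha a_{k-1})=h\gc_k-h\gc_{k-1}$.
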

\begin{proof}
Matrix equations \eqref{dRTL- param Lax triads}, written entry-wise, are equivalent to a system of scalar equations consisting of \eqref{dRTL- param} and of 
\begin{equation}\label{dRTL- param addition}
\wb_k+h\gd_{k-1}=b_k+h\gc_k, \quad \alpha\wa_k-h\gd_k=\alpha a_k-h\gc_k.
\end{equation}
Given \eqref{dRTL- param}, equations \eqref{dRTL- param addition} are consequences of \eqref{dRTL- param d}, \eqref{dRTL- param c}. For the later reference, we observe the following formula:
\begin{equation}\label{dRTL- d+c 1}
(\wb_k+\alpha\wa_{k-1})-(b_k+\alpha a_{k-1}) = h\gc_k-h\gc_{k-1},
\end{equation}
which follows by eliminating $\gd_k$ between equations \eqref{dRTL- param addition}.
\end{proof}

By construction, both maps dRTL$_\pm(\alpha,h)$ are Poisson with respect to 
each one of compatible Poisson brackets \eqref{RTL param l br},  \eqref{RTL param q br}, \eqref{RTL param c br}, and hence with respect to 
their arbitrary linear combination. Moreover, these maps share integrals of motion with the flows RTL$_\pm(\alpha)$, commute with those flows and with one another,  and are interpolated by certain flows of the RTL$(\alpha)$ hierarchy.

\paragraph{Bibliographical remarks.}
Discretization of the relativistic Toda lattice was performed in \cite{Su96} as one of the first applications of our general recipe of integrable discretization.


\setcounter{equation}{0}
\section{Symplectic realization of the linear bracket: additive exponential relativistic Toda lattice}
\label{Sect additive exp rel Toda}

\begin{proposition}\label{rel Toda linear parametrization}
The map
\begin{equation}\label{RTL l par}
b_k=p_k-\alpha\eto{x_k\nm x_{k-1}},\quad a_k=\eto{x_{k+1}\nm x_k}
\end{equation}
 is a symplectic realization of the linear bracket $\{\cdot,\cdot\}_{1}$, see \eqref{RTL param l br}, on the phase space of $\mathrm{RTL}_\pm(\alpha)$.
\end{proposition}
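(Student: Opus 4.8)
The plan is to check directly that the map \eqref{RTL l par} pushes the canonical bracket on $\mathbb R^{2N}(x,p)$ forward to the bracket $\{\cdot,\cdot\}_1$ of \eqref{RTL param l br}. Since $\{\cdot,\cdot\}_1$ is determined by finitely many coordinate brackets, this amounts to computing $\{a_j,a_k\}$, $\{b_j,a_k\}$ and $\{b_j,b_k\}$ for all pairs and matching them against \eqref{RTL param l br}, with all unlisted brackets equal to zero. This is routine; the only thing to keep an eye on is that the extra term $-\alpha\,\eto{x_k\nm x_{k-1}}$ in $b_k$ must generate precisely the new bracket $\{b_k,b_{k+1}\}_1=\alpha a_k$ while leaving the $\{a,b\}$-brackets of the non-relativistic case untouched, and that the sign convention for the canonical bracket is the one already fixed by Proposition \ref{Toda linear parametrization}.

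I would organize the computation so as to reuse Proposition \ref{Toda linear parametrization} rather than differentiate exponentials by hand. Write $\beta_k:=p_k$ and $a_k=\eto{x_{k+1}\nm x_k}$, so that $(x,p)\mapsto(a,\beta)$ is exactly the Flaschka--Manakov map \eqref{TL l par}; by Proposition \ref{Toda linear parametrization} it realizes the TL linear bracket \eqref{TL l br}, i.e. $\{\beta_k,a_k\}=-a_k$, $\{a_k,\beta_{k+1}\}=-a_k$, $\{\beta_j,\beta_k\}=0$, $\{a_j,a_k\}=0$, and all remaining brackets among the $a$'s and $\beta$'s vanish. Now $b_k=\beta_k-\alpha a_{k-1}$ with $a_{k-1}=\eto{x_k\nm x_{k-1}}$; crucially $a_{k-1}$ depends on the $x$-variables only, hence Poisson-commutes with every $a_j$. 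Consequently $\{a_k,b_l\}=\{a_k,\beta_l\}-\alpha\{a_k,a_{l-1}\}=\{a_k,\beta_l\}$, so the mixed $\{a,b\}$-brackets are literally the old $\{a,\beta\}$-brackets; this already gives $\{b_k,a_k\}=-a_k$, $\{a_k,b_{k+1}\}=-a_k$, and zero for all other pairs, as required.

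For the $\{b,b\}$-brackets, bilinearity gives $\{b_k,b_l\}=\{\beta_k,\beta_l\}-\alpha\{\beta_k,a_{l-1}\}-\alpha\{a_{k-1},\beta_l\}+\alpha^2\{a_{k-1},a_{l-1}\}$; the first and last terms vanish, so the result is $O(\alpha)$ and is assembled entirely from the known brackets $\{\beta_i,a_j\}$. The case $l=k$ is $0$ by antisymmetry; for $l=k+1$ the only surviving term is $-\alpha\{\beta_k,a_k\}=\alpha a_k$, giving $\{b_k,b_{k+1}\}_1=\alpha a_k$; for $|l-k|\ge 2$ every term vanishes; and the case $l=k-1$ follows from $l=k+1$ by antisymmetry. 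This exhausts all pairs and yields exactly \eqref{RTL param l br}, so \eqref{RTL l par} is a symplectic realization of $\{\cdot,\cdot\}_1$. I expect no genuine obstacle: the proof is a bounded, purely mechanical verification, and the only subtlety is the bookkeeping observation that the correction $-\alpha a_{k-1}$ is a function of the $x$'s alone, which is exactly what decouples the new $\{b,b\}$-bracket from the inherited $\{a,b\}$-brackets. (One may of course bypass Proposition \ref{Toda linear parametrization} and differentiate the exponentials directly via $\{p_j,f(x)\}=\pm\,\partial f/\partial x_j$ and the chain rule; the arithmetic is identical.)
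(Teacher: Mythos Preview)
Your proof is correct. The paper states this proposition without proof (as it does for the analogous Proposition~\ref{Toda linear parametrization}), treating it as a routine verification; your argument is exactly the direct check the paper leaves implicit, and the device of writing $b_k=\beta_k-\alpha a_{k-1}$ so as to reuse Proposition~\ref{Toda linear parametrization} is a clean way to organize it.
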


\begin{theorem} \label{rel Toda in linear parametrization}
Pull-back of the flow $\mathrm{RTL}_+(\alpha)$ under 
parametrization \eqref{RTL l par} is a Hamiltonian system
with the Hamilton and Lagrange functions
\begin{align}
\rH(x,p)=& \ \frac{1}{2}\sum_{k=1}^Np_k^2+
\sum_{k=1}^N(1+\alpha p_k)\,\eto{x_{k+1}\nm x_k}, 
\label{RTL+ l H in xp}\\
\rL(x,\dot{x})= & \ \frac{1}{2}\;\sum_{k=1}^N\dot{x}_k^2-
\sum_{k=1}^N(1+\alpha\dot{x}_k)\,\eto{x_{k+1}\nm x_k}
+\frac{\alpha^2}{2}\,\sum_{k=1}^N\eto{{\scriptstyle 2}(x_{k+1}\nm x_k)}.
\label{RTL+ l Lagr}
\end{align}
The corresponding Newtonian equations of motion read:
\begin{equation}\label{RTL+ l New}
\ddot{x}_k  =  (1+\alpha\dot{x}_{k+1})\,\eto{x_{k+1}\nm x_k}-
(1+\alpha\dot{x}_{k-1})\,\eto{x_k\nm x_{k-1}}-\alpha^2\eto{{\scriptstyle 2}(x_{k+1}\nm x_k)}+\alpha^2\eto{{\scriptstyle 2}(x_k\nm x_{k-1})}.
\end{equation}
\end{theorem}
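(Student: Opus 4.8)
The plan is to run the argument in complete parallel to the non-relativistic case (Theorem~\ref{Toda in linear parametrization}), reducing everything to the work already done at the level of the Flaschka--Manakov variables. By Proposition~\ref{rel Toda linear parametrization} the parametrization \eqref{RTL l par} is a Poisson map from $\mathbb R^{2N}(x,p)$, with the canonical symplectic bracket, to the phase space of RTL equipped with the linear bracket $\{\cdot,\cdot\}_1$ of \eqref{RTL param l br}; and the flow $\mathrm{RTL}_+(\alpha)$ is Hamiltonian with respect to $\{\cdot,\cdot\}_1$ with Hamilton function $\rH_2(a,b)$ of \eqref{RTL param H2}. Since under a Poisson map the Hamiltonian vector field of a pulled-back function is the pull-back of the original Hamiltonian vector field, it follows at once that the pull-back of $\mathrm{RTL}_+(\alpha)$ under \eqref{RTL l par} is the \emph{canonical} Hamiltonian flow with Hamilton function $\rH(x,p)=\rH_2\big(a(x,p),b(x,p)\big)$. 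Everything else is computation.

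First I would evaluate this composition. Setting $a_k=\eto{x_{k+1}\nm x_k}$ and $b_k=p_k-\alpha a_{k-1}$ and substituting into \eqref{RTL param H2}, the relevant simplifications are the index shifts $\sum_k p_k a_{k-1}=\sum_k p_{k+1}a_k$, $\sum_k a_{k-1}^2=\sum_k a_k^2$ and $\sum_k a_{k-1}a_k=\sum_k a_k a_{k+1}$ — valid in both the open-end and the periodic case, since the extra boundary terms vanish. Using them, all terms of order $\alpha^2$ cancel identically and the order-$\alpha$ cross terms collapse to $\alpha\sum_k p_k a_k$, leaving precisely \eqref{RTL+ l H in xp}. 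The Lagrange function \eqref{RTL+ l Lagr} is then obtained by the Legendre transformation: from $\dot x_k=\partial\rH/\partial p_k=p_k+\alpha\eto{x_{k+1}\nm x_k}$ one solves $p_k=\dot x_k-\alpha\eto{x_{k+1}\nm x_k}$, and $\rL=\sum_k p_k\dot x_k-\rH$ simplifies, by the same index shifts, to the stated expression; note that expressing $\rH$ through $\dot x$ produces a compensating $-\tfrac{\alpha^2}{2}\sum_k\eto{2(x_{k+1}\nm x_k)}$, which is why the term $+\tfrac{\alpha^2}{2}\sum_k\eto{2(x_{k+1}\nm x_k)}$ reappears in $\rL$.

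It remains to derive the Newtonian form \eqref{RTL+ l New}. I would start from the canonical equations $\dot x_k=p_k+\alpha a_k$ and $\dot p_k=-\partial\rH/\partial x_k=(1+\alpha p_k)a_k-(1+\alpha p_{k-1})a_{k-1}$, eliminate the momenta via $p_k=\dot x_k-\alpha a_k$ (so that $\dot p_k=\ddot x_k-\alpha a_k(\dot x_{k+1}-\dot x_k)$ and $1+\alpha p_k=1+\alpha\dot x_k-\alpha^2 a_k$), and collect terms: the $\alpha\dot x_k a_k$ contributions cancel and $a_k$, $a_k^2$ regroup into $(1+\alpha\dot x_{k+1})a_k-\alpha^2 a_k^2$, with the analogous combination at index $k-1$, which is exactly \eqref{RTL+ l New} once $a_k=\eto{x_{k+1}\nm x_k}$ is reinstated. (The same Newtonian equations can equally be obtained as the Euler--Lagrange equations of \eqref{RTL+ l Lagr}, and one may cross-check directly that \eqref{RTL l par} carries \eqref{RTL+ l New} into \eqref{RTL+ param}.) I do not expect a genuine obstacle anywhere; the only delicate point is the bookkeeping of the $O(\alpha^2)$ terms and the index shifts, which is precisely what makes the cancellations in $\rH$ and $\rL$ — and hence the clean Newtonian equation — work.
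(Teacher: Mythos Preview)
Your proposal is correct and follows essentially the same approach as the paper: substitute \eqref{RTL l par} into $\rH_2$ of \eqref{RTL param H2} to obtain \eqref{RTL+ l H in xp}, then Legendre-transform to get \eqref{RTL+ l Lagr}, then derive the Newtonian equations. The paper's proof is a two-line sketch (``straightforward computations''), whereas you spell out the index shifts and $O(\alpha^2)$ cancellations; the only cosmetic difference is that the paper obtains \eqref{RTL+ l New} as the Euler--Lagrange equations of $\rL$, while you primarily eliminate $p$ from Hamilton's equations --- both routes are equivalent and you already note this.
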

\begin{proof}
Hamilton function \eqref{RTL+ l H in xp} is obtained from \eqref{RTL param H2} upon substitution \eqref{RTL l par}. Straightforward computations lead to Lagrange function \eqref{RTL+ l Lagr} and to its Euler-Lagrange equations \eqref{RTL+ l New}.
\end{proof}

\begin{theorem}\label{discr rel Toda in linear parametrization}
Pull-back of the map $\mathrm{dRTL}_+(\alpha,h)$ under
parametrization \eqref{RTL l par} is the following
symplectic map:
\begin{equation} \label{dRTL+ l}
\left\{ \begin{array}{l}  p_k  =  
\dfrac{1}{h}\big(\eto{\wx_k\nm x_k}-1\big)-
\displaystyle\frac{(\alpha-h)\,\eto{x_k\nm \wx_{k-1}}}
{\,\raisebox{-2mm}{$1-h\alpha\eto{x_k\nm \wx_{k-1}}$}}
+\alpha\eto{x_k\nm x_{k-1}}-\alpha\eto{x_{k+1}\nm x_k},
\vspace{2mm}\\
\wip_k  = 
\dfrac{1}{h}\big(\eto{\wx_k\nm x_k}-1\big)-
\displaystyle\frac{(\alpha-h)\,\eto{x_{k+1}\nm \wx_k}}
{\,\raisebox{-2mm}{$1-h\alpha\eto{x_{k+1}\nm \wx_k}$}}.
\end{array}\right.
\end{equation}
The corresponding Newtonian equations of motion:
\begin{equation}\label{dRTL+ l New}
\ueto{\wx_k\nm x_k}-\ueto{x_k\nm\undertilde{x}_k}  = 
h\alpha\ueto{x_{k+1}\nm x_k}-h\alpha\eto{x_k\nm x_{k-1}}
-\displaystyle\frac{h(\alpha-h)\,\ueto{\undertilde{x}_{k+1}-x_k}}
{\,\raisebox{-2mm}{$1-h\alpha\ueto{\undertilde{x}_{k+1}-x_k}$}}
+\displaystyle\frac{h(\alpha-h)\,\eto{x_k\nm \wx_{k-1}}}
{\,\raisebox{-2mm}{$1-h\alpha\eto{x_k\nm \wx_{k-1}}$}}.
\end{equation}
\end{theorem}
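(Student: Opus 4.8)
The plan is to pull the map $\mathrm{dRTL}_+(\alpha,h)$ of Theorem~\ref{discrete RTL+ param} back along \eqref{RTL l par} exactly as was done for the non-relativistic flow in Theorem~\ref{discr Toda in linear parametrization}: rewrite the map and all its auxiliary relations in the variables $(x,p)$, locate the one-parameter freedom in the choice of pull-back as a $k$-independent quantity, normalise it, and read off the equations of motion. First I would record the effect of the substitution \eqref{RTL l par}: one has $a_k=\eto{x_{k+1}\nm x_k}$ and, most importantly, the combination $b_k+\alpha a_{k-1}=p_k$ and likewise $\wb_k+\alpha\wa_{k-1}=\wip_k$, so the momenta are recovered directly. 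Inserting this into the recurrence \eqref{dRTL+ param a}, into the defining relations \eqref{dRTL+ param b} for $\gb_k$, into the map \eqref{dRTL+ param}, and into the auxiliary identities \eqref{dRTL+ param addition} and \eqref{dRTL+ d+c 1}, turns all of them into relations between $x,\wx,p,\wip$ and the scalars $\ga_k$ alone.

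Second I would pin down $\ga_k$. Under \eqref{RTL l par} the second equation in \eqref{dRTL+ param} reads $\eto{\wx_{k+1}\nm\wx_k}=\eto{x_{k+1}\nm x_k}\,\gb_{k+1}/\ga_k$, i.e. $\gb_{k+1}/\ga_k=\eto{(\wx_{k+1}\nm x_{k+1})\nm(\wx_k\nm x_k)}$; combining this (after a shift of index) with the first expression $\gb_k/\ga_{k-1}=(\ga_k\np h\alpha a_k)/(\ga_{k-1}\np h\alpha a_{k-1})$ from \eqref{dRTL+ param b}, one finds that $(\ga_k+h\alpha a_k)\big/\eto{\wx_k\nm x_k}$ does not depend on $k$. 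This is the precise analogue of the step producing \eqref{dTL l beta}; the $k$-independent value labels the different pull-backs of $\mathrm{dRTL}_+(\alpha,h)$, and setting it equal to $1$ fixes one representative and gives
\[
\ga_k=\eto{\wx_k\nm x_k}-h\alpha\,\eto{x_{k+1}\nm x_k}=\eto{\wx_k\nm x_k}\big(1-h\alpha\,\eto{x_{k+1}\nm\wx_k}\big).
\]

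Third I would substitute this expression for $\ga_k$, together with the factored form of $\ga_{k-1}$, into the recurrence \eqref{dRTL+ param a}: the continued-fraction tail $h(\alpha\nm h)a_{k-1}/\ga_{k-1}$ collapses to $h(\alpha\nm h)\eto{x_k\nm\wx_{k-1}}\big/\big(1-h\alpha\eto{x_k\nm\wx_{k-1}}\big)$, and solving the resulting relation for $p_k$ yields the first equation of \eqref{dRTL+ l}. For the second equation I would use \eqref{dRTL+ d+c 1}, which under \eqref{RTL l par} becomes $\wip_k=p_{k+1}+h^{-1}(\ga_k-\ga_{k+1})$; substituting the index-shifted first equation of \eqref{dRTL+ l} for $p_{k+1}$ and the above expressions for $\ga_k,\ga_{k+1}$, all purely exponential terms cancel in pairs and only the second equation of \eqref{dRTL+ l} remains. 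The Newtonian equation \eqref{dRTL+ l New} then follows by equating the two expressions for $p_k$ provided by \eqref{dRTL+ l} — the first equation for the transition $(x,p)\mapsto(\wx,\wip)$ and the second equation for the transition $(\undertilde{x},\undertilde{p})\mapsto(x,p)$ — and clearing the common factor $h$. Symplecticity is automatic: $\mathrm{dRTL}_+(\alpha,h)$ is Poisson with respect to the linear bracket \eqref{RTL param l br}, and by Proposition~\ref{rel Toda linear parametrization} the map \eqref{RTL l par} is a symplectic realization of that bracket; equivalently, \eqref{dRTL+ l} are the discrete Euler--Lagrange equations \eqref{dToda map gen} for the discrete Lagrangian
\[
\Lambda(x,\wx;h)=\sum_{k=1}^N\left(\frac1h\big(\eto{\wx_k\nm x_k}-(\wx_k\nm x_k)\big)-\frac{\alpha\nm h}{h\alpha}\log\big(1-h\alpha\,\eto{x_{k+1}\nm\wx_k}\big)-\alpha\,\eto{x_{k+1}\nm x_k}\right).
\]

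The main obstacle is the bookkeeping in the second and third steps: correctly guessing the $k$-independent combination $(\ga_k+h\alpha a_k)\eto{x_k\nm\wx_k}$, and then juggling the two equivalent forms of $\gb_k$ in \eqref{dRTL+ param b} together with the two auxiliary identities \eqref{dRTL+ param addition}, \eqref{dRTL+ d+c 1} so that the continued-fraction tail in \eqref{dRTL+ param a} contracts to the single rational term seen in \eqref{dRTL+ l} and so that the exponential contributions in the computation of $\wip_k$ telescope; everything else is routine algebra.
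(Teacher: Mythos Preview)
Your proposal is correct and follows essentially the same route as the paper: identify the $k$-independent combination $(\ga_k+h\alpha a_k)\eto{x_k\nm\wx_k}$ from the second equation in \eqref{dRTL+ param} combined with \eqref{dRTL+ param b}, normalise it to~$1$, feed the resulting $\ga_k$ into \eqref{dRTL+ param a} to obtain the $p_k$-equation, and then invoke \eqref{dRTL+ d+c 1} for $\wip_k$. The paper's argument is slightly more compressed (it writes $\wa_k=a_k(\ga_{k+1}+h\alpha a_{k+1})/(\ga_k+h\alpha a_k)$ in one step and records the intermediate formula \eqref{dRTL+ l b} for $\gb_k$), but the logic is the same; your explicit telescoping computation for $\wip_k$ and the added discrete Lagrangian are useful extras.
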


\begin{proof} The second equation of motion in
(\ref{dRTL+ param}) together with equation (\ref{dRTL+ param b}) yield:
\[
\wa_k=a_k\frac{\ga_{k+1}+h\alpha a_{k+1}}{\ga_k+h\alpha a_k}.
\]
In the parametrization $a_k=\eto{x_{k+1}\nm x_k}$ this is equivalent
to the following quantity being constant, i.e., not depending on $k$:
\[
\left.\eto{\wx_k\nm x_k}\right/(\ga_k+h\alpha a_k)=\mathrm{const}.
\]
Choosing this constant to be equal to 1, we get:
\begin{equation}\label{dRTL+ l aux1}
\ga_k+h\alpha a_k=\eto{\wx_k\nm x_k},
\end{equation}
hence
\begin{equation}\label{dRTL+ l a}
\ga_k=\eto{\wx_k\nm x_k}\Big(1-h\alpha\eto{x_{k+1}\nm\wx_k}\Big),
\end{equation}
and (from (\ref{dRTL+ param b})):
\begin{equation}\label{dRTL+ l b}
\gb_k=\eto{\wx_k\nm x_k}\Big(1-h\alpha\eto{x_k\nm\wx_{k-1}}\Big).
\end{equation}
Further, (\ref{dRTL+ l aux1}), (\ref{dRTL+ l a}) allow us to derive
from the recurrent relation (\ref{dRTL+ param a}):
\begin{equation}\label{dRTL+ l aux2}
\ga_k-hb_k=1+\frac{h(\alpha-h)a_{k-1}}{\ga_{k-1}}
=1+\frac{h(\alpha-h)\,\eto{x_k\nm\wx_{k-1}}}
{\,\raisebox{-2mm}{$1-h\alpha\eto{x_k\nm\wx_{k-1}}$}},
\end{equation}
which, taking into account formulas (\ref{dRTL+ l a}) and $p_k=b_k+\alpha\eto{x_k\nm x_{k-1}}$, leads to the first formula (for $p_k$) in \eqref{dRTL+ l}.
To obtain the second formula (for $\wip_k$) in \eqref{dRTL+ l}, we make use of \eqref{dRTL+ d+c 1}.
This completes derivation of the Lagrangian equations of motion \eqref{dRTL+ l}.
The Newtonian ones follow immediately. 
\end{proof}

\begin{theorem}
Pull-back of the flow $\mathrm{RTL}_-(\alpha)$ under 
parametrization \eqref{RTL l par} is a Hamiltonian system
with the Hamilton and the Lagrange functions
\begin{align}
\rH(x,p)= & \ \alpha^{-1}\sum_{k=1}^N p_k
-\alpha^{-2}\sum_{k=1}^N\log\big(1+\alpha p_k-\alpha^2\eto{x_k\nm x_{k-1}}\big),
\label{RTL- l H in xp}\\
\rL(x,\dot{x})= & \ -\alpha^{-2}\,\sum_{k=1}^N\big(\alpha\dot{x}_k+\log (1-\alpha\dot{x}_k )\big)
-\sum_{k=1}^N(1-\alpha\dot{x}_k)\,\eto{x_k\nm x_{k-1}}.
\label{RTL- l Lagr}
\end{align}
The corresponding Newtonian equations of motion:
\begin{equation}\label{RTL- l New}
\ddot{x}_k=(1-\alpha\dot{x}_k)^2\big((1-\alpha\dot{x}_{k+1})\,
\eto{x_{k+1}\nm x_k}- (1-\alpha\dot{x}_{k-1})\,\eto{x_k\nm x_{k-1}}\big).
\end{equation}
\end{theorem}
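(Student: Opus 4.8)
The plan is to follow the pattern of the proof of Theorem~\ref{rel Toda in linear parametrization}: first identify the Hamilton function by substituting the parametrization~\eqref{RTL l par} into the Hamilton function of $\mathrm{RTL}_-(\alpha)$ in the linear bracket~\eqref{RTL param l br}, then pass to the Lagrangian by a Legendre transformation, and finally read off the Newtonian equations as the Euler--Lagrange equations. The point that needs care at the outset is that the natural bracket-$1$ Hamiltonian of $\mathrm{RTL}_-(\alpha)$, namely $-\alpha^{-1}\rH_0$ with $\rH_0$ as in~\eqref{RTL param H0}, is singular as $\alpha\to 0$; so I would use the regularized Hamilton function $-\alpha^{-1}\rH_0+\alpha^{-1}\rH_1$, which differs from it by the Casimir $\alpha^{-1}\rH_1$ (see~\eqref{RTL param H1}) and hence generates the same flow on the phase space of $\mathrm{RTL}$. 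Denoting by $\Phi$ the map~\eqref{RTL l par} and using $1+\alpha b_k=1+\alpha p_k-\alpha^2 e^{x_k-x_{k-1}}$ together with $\alpha^{-1}\rH_1\circ\Phi=\alpha^{-1}\sum_k p_k$, one obtains exactly~\eqref{RTL- l H in xp}. Since $\Phi$ is a Poisson map by Proposition~\ref{rel Toda linear parametrization}, the canonical Hamiltonian flow of this $\rH$ projects under $\Phi$ onto $\mathrm{RTL}_-(\alpha)$; equivalently, one checks directly that $\dot x_k=\partial\rH/\partial p_k=b_k/(1+\alpha b_k)$ and $\dot p_k=-\partial\rH/\partial x_k=a_k/(1+\alpha b_{k+1})-a_{k-1}/(1+\alpha b_k)$, and then that $b_k$, $a_k$ evolve according to~\eqref{RTL- param}.

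For the Legendre transformation I would invert the velocity--momentum relation $\dot x_k=b_k/(1+\alpha b_k)$, obtaining $b_k=\dot x_k/(1-\alpha\dot x_k)$, hence $1+\alpha b_k=(1-\alpha\dot x_k)^{-1}$ and $p_k=\dot x_k/(1-\alpha\dot x_k)+\alpha e^{x_k-x_{k-1}}$. Forming $\rL=\sum_k p_k\dot x_k-\rH$, the pure-velocity part collapses thanks to the identity
\[
\frac{\dot x_k^2}{1-\alpha\dot x_k}-\frac{1}{\alpha}\,\frac{\dot x_k}{1-\alpha\dot x_k}=-\frac{1}{\alpha}\,\dot x_k,
\]
while $\log(1+\alpha b_k)=-\log(1-\alpha\dot x_k)$ and the exponential terms combine into $-\sum_k(1-\alpha\dot x_k)e^{x_k-x_{k-1}}$; this gives precisely~\eqref{RTL- l Lagr}.

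Finally, for the Euler--Lagrange equations one computes $\partial\rL/\partial\dot x_k=\dot x_k/(1-\alpha\dot x_k)+\alpha e^{x_k-x_{k-1}}$ (which is just $p_k$, as it must be) and $\partial\rL/\partial x_k=(1-\alpha\dot x_{k+1})e^{x_{k+1}-x_k}-(1-\alpha\dot x_k)e^{x_k-x_{k-1}}$. Using $\frac{d}{dt}\big(\dot x_k/(1-\alpha\dot x_k)\big)=\ddot x_k/(1-\alpha\dot x_k)^2$ and simplifying the coefficient of $e^{x_k-x_{k-1}}$ via $-(1-\alpha\dot x_k)-\alpha(\dot x_k-\dot x_{k-1})=-(1-\alpha\dot x_{k-1})$ then yields~\eqref{RTL- l New}.

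I expect the main obstacle to be conceptual rather than computational: one must be clear that the regularizing term $\alpha^{-1}\rH_1$ pulls back to $\alpha^{-1}\sum_k p_k$, whose canonical flow is the trivial uniform translation $x_k\mapsto x_k+\alpha^{-1}t$ — a symmetry of the system because it depends on $x$ only through the differences $x_{k+1}-x_k$ — so that the Hamiltonian flow of~\eqref{RTL- l H in xp} differs from the literal pull-back of $\mathrm{RTL}_-(\alpha)$ only by this symmetry while inducing the same dynamics on $(a,b)$. The remaining work is elementary algebra, the only traps being the index shifts (e.g.\ $e^{x_k-x_{k-1}}=a_{k-1}$) and the bookkeeping of signs.
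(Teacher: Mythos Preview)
Your proposal is correct and follows exactly the approach sketched in the paper's proof: obtain \eqref{RTL- l H in xp} by substituting the parametrization \eqref{RTL l par} into $\alpha^{-1}(\rH_1-\rH_0)$, then derive \eqref{RTL- l Lagr} and \eqref{RTL- l New} by the Legendre transformation and the Euler--Lagrange equations. You have simply filled in the details that the paper leaves implicit, and your discussion of why adding the Casimir $\alpha^{-1}\rH_1$ is harmless is a useful clarification.
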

\begin{proof}
Hamilton function \eqref{RTL- l H in xp} is obtained from $\alpha^{-1}(\rH_1-\rH_0)$, see \eqref{RTL param H0}, \eqref{RTL param H1}, under substitution \eqref{RTL l par}. Lagrange function \eqref{RTL- l Lagr} and Newtonian equations \eqref{RTL- l New} follow.
\end{proof}

\begin{theorem}
Pull-back of the map $\mathrm{dRTL}_-(\alpha,h)$ under parametrization \eqref{RTL l par} is the following symplectic
map:
\begin{equation}\label{dRTL- l}
\left\{\begin{array}{l}
p_k = 
\displaystyle\frac{h^{-1}\big(\eto{\wx_k\nm x_k}-1\big)}
{1-\alpha h^{-1}\big(\eto{\wx_k\nm x_k}-1\big)}
+(\alpha+h)\,\eto{x_k\nm\wx_{k-1}},  \vspace{3mm}\\
 \wip_k = 
\displaystyle\frac{h^{-1}\big(\eto{\wx_k\nm x_k}-1\big)} 
{1-\alpha h^{-1}\big(\eto{\wx_k\nm x_k}-1\big)}
+(\alpha+h)\,\eto{x_{k+1}\nm\wx_k}-\alpha\eto{\wx_{k+1}\nm\wx_k}+
\alpha\eto{\wx_k\nm\wx_{k-1}}. 
\end{array}\right.
\end{equation}
The corresponding Newtonian equations of motion:
\[
\frac{h^{-1}\big(\eto{\wx_k\nm x_k}-1\big)}{1-\alpha h^{-1}\big(\eto{\wx_k\nm x_k}-1\big)}-
\frac{h^{-1}\big(\eto{x_k\nm \undertilde{x}_k}-1\big)}{1-\alpha h^{-1}\big(\eto{x_k\nm \undertilde{x}_k}-1\big)}= \qquad\qquad\qquad\qquad\quad
\]
\begin{equation}\label{dRTL- l New}
=(\alpha+h)\big(\ueto{\undertilde{x}_{k+1}-x_k}-\ueto{x_k\nm \wx_{k-1}}\big)-\alpha\big(\ueto{x_{k+1}\nm x_k}-\ueto{x_k\nm x_{k-1}}\big).
\end{equation}
\end{theorem}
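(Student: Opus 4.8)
The plan is to follow, step for step, the proof of Theorem~\ref{discr rel Toda in linear parametrization} (the $\mathrm{RTL}_+(\alpha)$ case), with the auxiliary quantities $\gc_k,\gd_k$ of Lemma~\ref{discr RTL- param factors} now playing the roles that $\ga_k,\gb_k$ played there. Symplecticity need not be checked separately: parametrization \eqref{RTL l par} is a symplectic realization of $\{\cdot,\cdot\}_1$ (Proposition~\ref{rel Toda linear parametrization}) and $\mathrm{dRTL}_-(\alpha,h)$ is Poisson with respect to $\{\cdot,\cdot\}_1$, so the pulled-back map is automatically symplectic; the task is to bring it into the closed form \eqref{dRTL- l}. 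I would begin by combining the second equation of motion in \eqref{dRTL- param}, namely $\wa_k=a_{k+1}\gc_k/\gd_{k+1}$, with the second expression for $\gc_k$ in \eqref{dRTL- param c}; this gives $\wa_k(\alpha a_{k+1}+h\gd_{k+1})=a_{k+1}(\alpha a_k+h\gd_k)$. Under \eqref{RTL l par}, with $a_k=\eto{x_{k+1}\nm x_k}$ and $\wa_k=\eto{\wx_{k+1}\nm\wx_k}$, this is exactly the statement that $(\alpha a_k+h\gd_k)\,\eto{\wx_k\nm x_{k+1}}$ is independent of $k$. As in the $\mathrm{RTL}_+$ case, fixing this constant selects one representative among the pull-backs of the (a priori double-valued) map; the appropriate value is $\alpha+h$, i.e.
\[
\alpha a_k+h\gd_k=(\alpha+h)\,\eto{x_{k+1}\nm\wx_k},
\]
equivalently $h\gd_k=(\alpha+h)\,\eto{x_{k+1}\nm\wx_k}-\alpha\,\eto{x_{k+1}\nm x_k}$. (This is the branch with the correct limit $\gd_k\to a_k/(1+\alpha b_k)$ as $h\to0$.)

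The key step is then a cancellation: from the last formula and $b_k=p_k-\alpha\eto{x_k\nm x_{k-1}}$ one gets $b_k-h\gd_{k-1}=p_k-(\alpha+h)\,\eto{x_k\nm\wx_{k-1}}$, while $a_k/\gd_k=h\,\eto{\wx_k\nm x_k}/\big((\alpha+h)-\alpha\,\eto{\wx_k\nm x_k}\big)$. Substituting both into the recurrence \eqref{dRTL- param d}, rewritten as $1+(\alpha+h)(b_k-h\gd_{k-1})=a_k/\gd_k$, and solving for $p_k$ yields, after elementary algebra, the first equation of motion in \eqref{dRTL- l}. For the second one I would use formula \eqref{dRTL- d+c 1}, which together with $b_k+\alpha a_{k-1}=p_k$ reads $\wip_k=p_k+h(\gc_k-\gc_{k-1})$, combined with $h\gc_k=(\alpha a_k+h\gd_k)-\alpha\wa_k=(\alpha+h)\,\eto{x_{k+1}\nm\wx_k}-\alpha\,\eto{\wx_{k+1}\nm\wx_k}$ (read off from the second relation in \eqref{dRTL- param addition}); inserting the expression just found for $p_k$ gives the second equation of motion in \eqref{dRTL- l}.

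Finally, the Newtonian form \eqref{dRTL- l New} is obtained by eliminating the momenta: the $p_k$ furnished by the first equation of \eqref{dRTL- l} at the current step must equal the $\wip_k$ furnished by the second equation one step earlier (i.e.\ with $\wx,x$ replaced by $x,\undertilde{x}$), and rearranging that identity produces \eqref{dRTL- l New}. I expect the main obstacle to be the bookkeeping in the middle paragraph: correctly isolating the $k$-independent combination $\alpha a_k+h\gd_k$, pinning down the representative, and then pushing through the cancellations that collapse $b_k-h\gd_{k-1}$, $a_k/\gd_k$, $h\gc_k$ and $h\gd_k$ into their closed exponential forms; once those identities are in place, what remains is routine rearrangement, with only the index shift in the last step needing care.
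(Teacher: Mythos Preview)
Your proposal is correct and follows essentially the same route as the paper: identify the $k$-independent combination $(\alpha a_k+h\gd_k)\,\eto{\wx_k-x_{k+1}}$, fix it equal to $\alpha+h$ to obtain closed forms for $h\gd_k$ and then $h\gc_k$, feed these into the recursion \eqref{dRTL- param d} and into \eqref{dRTL- d+c 1} to extract $p_k$ and $\wip_k$, and finally eliminate the momenta for the Newtonian equation. The only cosmetic difference is that the paper computes $h\gc_k$ from the first expression in \eqref{dRTL- param c} via the intermediate identity $1+\alpha(b_k-h\gd_{k-1})=\big(1-\alpha h^{-1}(\eto{\wx_k-x_k}-1)\big)^{-1}$, whereas you read it off directly from the second relation in \eqref{dRTL- param addition}; both give the same formula \eqref{dRTL- l c}.
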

\begin{proof} The second equation of motion in
(\ref{dRTL- param}), together with the second expression for
$\gc_k$ in (\ref{dRTL- param c}), reads:
\[
\wa_k=a_{k+1}\,\frac{\alpha a_k+h\gd_k}{\alpha a_{k+1}+h\gd_{k+1}}.
\]
In the parametrization $a_k=\exp(x_{k+1}-x_k)$ this is equivalent to the
following quantity being constant, i.e., not depending on $k$:
\[
\left.(\alpha a_k+h\gd_k)\right/\eto{x_{k+1}\nm\wx_k}=\mathrm{const}.
\]
Choosing this constant equal to $\alpha+h$, we obtain:
\begin{equation}\label{dRTL- l d}
h\gd_k=(\alpha+h)\,\eto{x_{k+1}\nm\wx_k}-\alpha\eto{x_{k+1}\nm x_k}.
\end{equation}
The recurrent relation (\ref{dRTL- param d}) implies:
\[
1+(h+\alpha)(b_k-h\gd_{k-1})=\frac{a_k}{\gd_k}=
\frac{\eto{\wx_k\nm x_k}}{1-\alpha h^{-1}\big(\eto{\wx_k\nm x_k}-1\big)},
\]
which is equivalent to
\begin{equation}\label{dRTL- l aux2}
1+\alpha(b_k-h\gd_{k-1})=
\frac{1}{1-\alpha h^{-1}\big(\eto{\wx_k\nm x_k}-1\big)}.
\end{equation}
To compute $\gc_k$, we substitute (\ref{dRTL- l d}), (\ref{dRTL- l aux2})
into (\ref{dRTL- param c}), and arrive at:
\begin{equation}\label{dRTL- l c}
h\gc_k=(\alpha+h)\,\eto{x_{k+1}\nm\wx_k}-\alpha\eto{\wx_{k+1}\nm\wx_k}.
\end{equation}
Formula \eqref{dRTL- l aux2}, together with (\ref{dRTL- l d}) and $p_k=b_k+\alpha\eto{x_k\nm x_{k-1}}$, immediately implies the first formula (for $p_k$) in 
(\ref{dRTL- l}). To derive the second formula in (\ref{dRTL- l}) (for $\wip_k$), we make use of (\ref{dRTL- d+c 1}). Now Newtonian equations (\ref{dRTL- l New}) follow readily. 
\end{proof}

\paragraph{Bibliographical remarks.} Results of this section are from \cite{Su97a, Su03}.


\setcounter{equation}{0}
\section{The variety of symplectic realizations of RTL$_\pm(\alpha)$ and dRTL$_\pm(\alpha,h)$}
\label{Sect Newtonian rel Toda}

\paragraph{General form of equations of motion.} 
Results of the previous section illustrate a long list of Newtonian equations with continuous and discrete time, which can be obtained as symplectic realizations of the flows RTL$_\pm(\alpha)$ and of the maps dRTL$_\pm(\alpha,h)$. All the continuous time systems share the following general form of Newtonian equations of motion:
\begin{equation}\label{rel Toda New gen}
\ddot{x}_k=r(\dot{x}_k)\Big(\dot{x}_{k+1}g(x_{k+1}-x_k)-\dot{x}_{k-1}g(x_k-x_{k-1})+f(x_{k+1}-x_k)-f(x_k-x_{k-1})\Big).
\end{equation}
Systems \eqref{rel Toda New gen} are Lagrangian, with the Lagrange function
\begin{equation}\label{rel Toda New gen Lagr}
\rL(x,\dot{x})=\sum_{k=1}^N K(\dot{x}_k)-\sum_{k=1}^N\dot{x}_kQ(x_{k+1}-x_k)-\sum_{k=1}^N U(x_{k+1}-x_k),
\end{equation}
where $K''(v)=1/r(v)$, $Q'(u)=g(u)$, and $U'(u)=f(u)$. 

For all symplectic realizations of dRTL$_\pm(\alpha,h)$, Newtonian equations of motion have the following general structure:
\begin{equation}\label{d rel Toda New gen}
\psi(\wx_k-x_k)-\psi(x_k-\undertilde{x}_k)=\phi(\undertilde{x}_{k+1}-x_k)-\phi(x_k-\wx_{k-1})+\psi_0(x_{k+1}-x_k)-\psi_0(x_k-x_{k-1}).
\end{equation}
There are two natural ways to realize these equations as discrete Euler-Lagrange equations. 

The first one refers to the discrete Lagrange function 
\begin{equation}\label{d rel Toda New gen Lagr +}
\Lambda_+(x,\wx)=\sum_{k=1}^N\Psi(\wx_k-x_k)-\sum_{k=1}^{N} \Phi(x_{k+1}-\wx_k)-\sum_{k=1}^N\Psi_0(x_{k+1}-x_k),
\end{equation}
where $\Psi'(\xi)=\psi(\xi)$, $\Psi_0'(\xi)=\psi_0(\xi)$, $\Phi'(\xi)=\phi(\xi)$.
Corresponding symplectic map $(x,p)\mapsto(\wx,\wip)$ is given by
\begin{equation} \label{d rel Toda Lagr +}
\left\{\begin{array}{l}
p_k  =  \psi(\wx_k-x_k)+\phi(x_k-\wx_{k-1})+\psi_0(x_k-x_{k-1})-\psi_0(x_{k+1}-x_k), \vspace{1mm}\\
\wip_k = \psi(\wx_k-x_k)+\phi(x_{k+1}-\wx_{k}). 
\end{array}\right.
\end{equation}
This is the general formula for symplectic realizations of dRTL$_+(\alpha,h)$. Corresponding Newtonian equations coincide with \eqref{d rel Toda New gen}. 

The second possibility is 
\begin{equation}\label{d rel Toda New gen Lagr -}
\Lambda_-(x,\wx)=\sum_{k=1}^N\Psi(\wx_k-x_k)-\sum_{k=1}^{N} \Phi(x_{k+1}-\wx_k)-\sum_{k=1}^N\Psi_0(\wx_{k+1}-\wx_k), 
\end{equation}
which generates map $(x,p)\mapsto(\wx,\wip)$ given by
\begin{equation} \label{d rel Toda Lagr -}
\left\{\begin{array}{l}
p_k =  \psi(\wx_k-x_k)+\phi(x_k-\wx_{k-1}), \vspace{1mm}\\
\wip_k = \psi(\wx_k-x_k)+\phi(x_{k+1}-\wx_{k}) -\psi_0(\wx_k-\wx_{k-1})+\psi_0(\wx_{k+1}-\wx_k). 
\end{array}\right.
\end{equation}
This is the general formula for symplectic realizations of dRTL$_-(\alpha,h)$. Corresponding Newtonian equations coincide with \eqref{d rel Toda New gen}, as well.

Note that in all our examples the non-relativistic limit $\alpha=0$ corresponds to $g(u)=0$, resp. $\psi_0(u)=0$, which leads to symplectic realizations of TL, resp. dTL$(h)$.

\paragraph{Realization of the linear bracket: additive exponential relativistic Toda lattice.} Here we reproduce the results of the previous section. A symplectic realization of the bracket $\{\cdot,\cdot\}_{1}$:
\begin{equation}\label{RTL l par again}
b_k=p_k-\alpha\eto{x_k\nm x_{k-1}},\quad a_k=\eto{x_{k+1}\nm x_k}.
\end{equation}

$$
\begin{array}{cl}
 \mathrm{RTL}_+(\alpha): & \  r(v)=1,  \quad f(u)=\eto{u}-\alpha^2\eto{{\scriptstyle 2}u},  \quad  g(u)=\alpha\eto{u},\vspace{2mm}\\
 \mathrm{dRTL}_+(\alpha,h): & \   \psi(v)=\dfrac{1}{h}(\eto{v}-1),  \quad \phi(u)=\dfrac{(h-\alpha)\eto{u}}{1-h\alpha\eto{u}}, \quad \psi_0(u)=\alpha\eto{u},\vspace{3mm}\\
 \mathrm{RTL}_-(\alpha): & \  r(v)=(1-\alpha v)^2,  \quad f(u)=\eto{u},  \quad  g(u)=-\alpha\eto{u},\vspace{2mm}\\
 \mathrm{dRTL}_-(\alpha,h): & \   \psi(v)=\dfrac{h^{-1}(\eto{v}-1)}{1-\alpha h^{-1}(\eto{v}-1)},  \quad \phi(u)=(h+\alpha)\eto{u}, \quad \psi_0(u)=-\alpha\eto{u}. 
\end{array}
$$

\paragraph{Realization of the linear-quadratic bracket: Ruijsenaars Toda lattice.} Relativistic deformation of the exponential Toda lattice discovered by Ruijsenaars comes from the following symplectic realization of the bracket $\{\cdot,\cdot\}_{1}+\alpha\{\cdot,\cdot\}_{2}$, see \eqref{RTL param l br}, \eqref{RTL param q br}:
\begin{equation}\label{RTL m par}
b_k=\frac{1}{\alpha}\big(\eto{\alpha p_k}-1\big),\quad a_k=\eto{x_{k+1}\nm x_k\np\alpha p_k}.
\end{equation}

$$
\begin{array}{cl}
 \mathrm{RTL}_+(\alpha): & \  r(v)=1+\alpha v,  \quad f(u)=\dfrac{\eto{u}}{1+\alpha^2\eto{u}},  \quad  g(u)=\dfrac{\alpha\eto{u}}{1+\alpha^2\eto{u}},\vspace{2mm}\\
 \mathrm{dRTL}_+(\alpha,h): & \   \psi(v)=\dfrac{1}{\alpha}\log\Big(1+\dfrac{\alpha}{h}(\eto{v}-1)\Big),  \quad \phi(u)=-\dfrac{1}{\alpha}\log\Big(1-\alpha(h-\alpha)\eto{u}\Big), \vspace{3mm} \\
 & \  \psi_0(u)=\dfrac{1}{\alpha}\log(1+\alpha^2\eto{u}).
 \end{array}
$$
The ingredients of Newtonian equations for RTL$_-(\alpha)$, dRTL$_-(\alpha,h)$ are obtained from those for RTL$_+(\alpha)$, dRTL$_+(\alpha,h)$ by changing $\alpha$ to $-\alpha$.

\paragraph{Realization of the linear bracket: dual relativistic Toda lattice.} An alternative symplectic realization of of the linear bracket $\{\cdot,\cdot\}_{1}$, see \eqref{RTL param l br}:
\begin{equation}\label{RTL dual par}
b_k=x_k-x_{k-1}-\alpha\eto{p_{k-1}},\quad a_k=\eto{p_k}.
\end{equation} 

$$
\begin{array}{cl}
 \mathrm{RTL}_+(\alpha): & \  r(v)=v,  \quad f(u)=u,  \quad  g(u)=\dfrac{\alpha}{1+\alpha u},\vspace{2mm}\\
 \mathrm{dRTL}_+(\alpha,h): & \   \psi(v)=\log\dfrac{v}{h},  \quad \phi(u)=\log\dfrac{1+hu}{1+\alpha u}, \quad \psi_0(u)=\log(1+\alpha u),\vspace{3mm}\\
 \mathrm{RTL}_-(\alpha): & \  r(v)=v(1+\alpha^2 v),  \quad f(u)=\dfrac{u}{1+\alpha u},  \quad  g(u)=-\dfrac{\alpha}{1+\alpha u},\vspace{2mm}\\
 \mathrm{dRTL}_-(\alpha,h): & \   \psi(v)=\log\dfrac{h^{-1}v}{1+h^{-1}\alpha(\alpha+h)v},  \quad \phi(u)=\log(1+(h+\alpha)u), \vspace{2mm}\\
  & \  \psi_0(u)=-\log(1+\alpha u). 
\end{array}
$$

\paragraph{Realization of the quadratic bracket: modified relativistic Toda lattice.} A symplectic realization of the bracket $\{\cdot,\cdot\}_{2}$,
see \eqref{RTL param q br}:
\begin{equation}\label{RTL q par}
b_k=\eto{p_k}+\eto{x_k\nm x_{k-1}},\quad
a_k=\eto{x_{k+1}\nm x_k\np p_k}.
\end{equation}

$$
\begin{array}{cl}
 \mathrm{RTL}_+(\alpha): & \  r(v)=v,  \quad f(u)=\eto{u},  \quad  g(u)=\dfrac{\alpha\eto{u}}{1+\alpha\eto{u}},\vspace{2mm}\\
 \mathrm{dRTL}_+(\alpha,h): & \   \psi(v)=\log\dfrac{\eto{v}-1}{h},  \quad \phi(u)=\log\dfrac{1+h\eto{u}}{1+\alpha\eto{u}}, \quad \psi_0(u)=\log(1+\alpha\eto{u}),\vspace{3mm}\\
 \mathrm{RTL}_-(\alpha): & \  r(v)=v(1-\alpha v),  \quad f(u)=\dfrac{\eto{u}}{1+\alpha\eto{u}},  \quad  g(u)=-\dfrac{\alpha\eto{u}}{1+\alpha\eto{u}},\vspace{2mm}\\
 \mathrm{dRTL}_-(\alpha,h): & \   \psi(v)=\log\dfrac{h^{-1}(\eto{v}-1)}{1-\alpha h^{-1}(\eto{v}-1)},  \quad \phi(u)=\log\big(1+(h+\alpha)\eto{u}\big), \vspace{2mm}\\
  & \ \psi_0(u)=-\log(1+\alpha\eto{u}). 
\end{array}
$$

\paragraph{Realization of the linear-quadratic bracket II: general exponential relativistic Toda lattice.}
A symplectic realization of the bracket $\{\cdot,\cdot\}_{1}+\epsilon\{\cdot,\cdot\}_{2}$, see \eqref{RTL param l br}, \eqref{RTL param q br}:
\begin{equation}\label{RTL m2 par}
b_k=\frac{1}{\epsilon}\big(\eto{\epsilon p_k}-1\big)+(\epsilon-\alpha)\eto{x_k\nm x_{k-1}},\quad
a_k=\eto{x_{k+1}\nm x_k\np\epsilon p_k}
\end{equation}
The corresponding one-parameter family of symplectic realizations interpolates between the additive exponential relativistic Toda lattices for $\epsilon=0$ and the Ruijsenaars Toda lattices for $\epsilon=\alpha$. This family is the most general relativistic Toda lattice with exponential interactions.

$$
\begin{array}{cl}
 \mathrm{RTL}_+(\alpha): & \  r(v)=1+\epsilon v,  \quad f(u)=\dfrac{\eto{u}+\alpha(\epsilon-\alpha)\eto{{\scriptstyle 2}u}}{1+\epsilon\alpha\eto{u}},  \quad  
                                         g(u)=\dfrac{\alpha\eto{u}}{1+\epsilon\alpha\eto{u}},\vspace{2mm}\\
 \mathrm{dRTL}_+(\alpha,h): & \   \psi(v)=\dfrac{1}{\epsilon}\log\Big(1+\dfrac{\epsilon}{h}(\eto{v}-1)\Big),  \quad 
                                               \phi(u)=\dfrac{1}{\epsilon}\log\dfrac{1+h(\epsilon-\alpha)\eto{u}}{1+\alpha(\epsilon-h)\eto{u}}, \vspace{3mm}\\
                                        & \   \psi_0(u)=\dfrac{1}{\epsilon}\log(1+\epsilon\alpha\eto{u}),\vspace{3mm}\\
 \mathrm{RTL}_-(\alpha): & \  r(v)=(1-\alpha v)\big(1+(\epsilon-\alpha)v\big),  \quad f(u)=\dfrac{\eto{u}}{1+\epsilon\alpha\eto{u}},  \quad  
                                         g(u)=-\dfrac{\alpha\eto{u}}{1+\epsilon\alpha\eto{u}},\vspace{2mm}\\
 \mathrm{dRTL}_-(\alpha,h): & \   \psi(v)=\dfrac{1}{\epsilon}\log\dfrac{1+(\epsilon-\alpha)h^{-1}(\eto{v}-1)}{1-\alpha h^{-1}(\eto{v}-1)},  \quad 
                                              \phi(u)=\dfrac{1}{\epsilon}\log\big(1+\epsilon(h+\alpha)\eto{u}\big), \vspace{2mm}\\
                                        & \  \psi_0(u)=-\dfrac{1}{\epsilon}\log(1+\epsilon\alpha\eto{u}). 
\end{array}
$$

\paragraph{Realization of  the cubic-quadratic bracket: multiplicative hyperbolic relativistic Toda lattice.}
In this paragraph, we restrict ourselves to the ``first'' flow of the
hierarchy (and its discretization), since the resulting equations for the ``negative first'' one are essentially the same (obtained by the change of $\alpha$ to $-\alpha$. 
We assume that $\beta$ is a small parameter and use the following notation:
\[
\alpha_0=-\frac{1}{4\beta}\log(1-4\beta\alpha)=\alpha+O(\beta\alpha^2), \quad 
h_0=-\frac{1}{4\beta}\,\log(1-4h\beta)=h+O(\beta h^2), \quad \epsilon=\frac{1}{4\beta}.
\]
A symplectic realization of the bracket $-\{\cdot,\cdot\}_{3}-4\beta\{\cdot,\cdot\}_{2}$, see  \eqref{RTL param q br}, \eqref{RTL param c br}, is given by
$$
b_k=u_k+v_{k-1}, \quad a_k=u_kv_k,
$$
where
$$
u_k=\frac{y_k(1+\epsilon z_{k-1})}{1-\epsilon\alpha y_{k-1}z_{k-1}}, \; v_k=\frac{z_k(1+\epsilon y_k)}{1-\epsilon\alpha y_kz_k},
$$
and
\begin{equation}\label{MRVL cq par}
y_k=-2\beta\big(\cth(\beta p_k+\beta\alpha_0)+1\big),\quad
z_k=2\beta\big(\cth(x_{k+1}-x_k-\beta\alpha_0)-1\big).
\end{equation}

$$
\begin{array}{cl}
 \mathrm{RTL}_+(\alpha): & \  r(v)=-(v^2-\beta^2), \vspace{2mm} \\
  & \ f(u)=\dfrac{1}{2}\cdot\dfrac{\sh(2u)}{\sh^2(u)-\sh^2(\beta\alpha_0)},   \quad
 g(u)=-\dfrac{1}{2\beta} \cdot\dfrac{\sh(2\beta\alpha_0)}{\sh^2(u)-\sh^2(\beta\alpha_0)},\vspace{3mm}\\
 \mathrm{dRTL}_+(\alpha,h): & \   \psi(v)=\dfrac{1}{2\beta}\log\dfrac{\sinh(v+\beta h_0)}{\sinh(v-\beta h_0)},  \quad 
                                               \phi(u)=\dfrac{1}{2\beta}\log\dfrac{\sinh(u+\beta h_0-\beta\alpha_0)}{\sinh(u-\beta h_0+\beta\alpha_0)},\vspace{3mm}\\
                                        & \   \psi_0(u)=\dfrac{1}{2\beta}\log\dfrac{\sinh(u+\beta\alpha_0)}{\sinh(u-\beta\alpha_0)}.  
\end{array}
$$

\paragraph{Realization of the cubic bracket. I: multiplicative rational relativistic Toda lattice.} A symplectic realization of the bracket $-\{\cdot,\cdot\}_{3}$, see \eqref{RTL param c br}:
\begin{align}\label{RTL c2 par}
a_k  = & \ \frac{1}{\sh^2(p_k)\big(x_k-x_{k-1}+\alpha\,\cth(p_{k-1})\big)\big(x_{k+1}-x_k+\alpha\,\cth(p_k)\big)}, 
\nonumber\\
b_k  = & \ -\frac{\cth(p_{k-1})+\cth(p_k)}{x_k-x_{k-1}+\alpha\,\cth(p_{k-1})}.
\end{align}

$$
\begin{array}{cl}
 \mathrm{RTL}_+(\alpha): & \  r(v)=-(v^2-1),  \quad  f(u)=\dfrac{u}{u^2-\alpha^2},   \quad   g(u)=-\dfrac{\alpha}{u^2-\alpha^2},\vspace{3mm}\\
 \mathrm{dRTL}_+(\alpha,h): & \   \psi(v)=\dfrac{1}{2}\log\dfrac{v+h}{v-h},  \quad 
                                               \phi(u)=\dfrac{1}{2}\log\dfrac{u+h-\alpha}{u-h+\alpha}, \quad
                                               \psi_0(u)=\dfrac{1}{2}\log\dfrac{u+\alpha}{u-\alpha}.  
\end{array}
$$

\paragraph{Realization of the cubic bracket. II: additive rational relativistic Toda lattice.} Another symplectic realization of the bracket $-\{\cdot,\cdot\}_{3}$, see \eqref{RTL param c br}:
\begin{equation}\label{RTL c3 par}
a_k  = \frac{1}{p_k^2\Big(x_k-x_{k-1}+\displaystyle\frac{\alpha}{p_{k-1}}\Big)
\Big(x_{k+1}-x_k+\displaystyle\frac{\alpha}{p_k}\Big)},
\quad b_k =-\frac{1}{x_k-x_{k-1}+\displaystyle\frac{\alpha}{p_{k-1}}}\,
\left(\frac{1}{p_{k-1}}+\frac{1}{p_k}\right)
\end{equation}

$$
\begin{array}{cl}
 \mathrm{RTL}_+(\alpha): & \  r(v)=-v^2,  \quad  f(u)=\dfrac{1}{u},   \quad   g(u)=-\dfrac{\alpha}{u^2},\vspace{3mm}\\
 \mathrm{dRTL}_+(\alpha,h): & \   \psi(v)=\dfrac{h}{v},  \quad   \phi(u)=\dfrac{h-\alpha}{u}, \quad   \psi_0(u)=\dfrac{\alpha}{u}.  
\end{array}
$$

\paragraph{Bibliographic remarks.} In the paper \cite{Rui90} Ruijsenaars  introduced  ``relativistic Toda lattice'' as the system (\ref{RTL+ New introd}).
Discretization of this system was performed in \cite{Su96}. 
In \cite{Su97a, Su03} further systems of relativistic Toda type, both in continuos and in discrete time, were found and identified as symplectic realizations of RTL$_\pm(\alpha)$, resp. of dRTL$_\pm(\alpha,h)$. The latter result provides a complete understanding for all these systems, including the full set of integrals of motion, Lax representations etc. 

In \cite{ASh97a, ASh97b}, a classification of ``integrable'' systems of the type \eqref{rel Toda New gen} was achieved. The notion of ``integrability'' there is based 
on the requirement that the form of equations of motion is preserved under a sort of Legendre transformation. This  is, {\em \`a priori}, unrelated to the classical notion of Liouville-Arnold integrability. However, the resulting list coincides with the list of symplectic realizations of RTL$_\pm(\alpha)$ given in the present section. Thus, {\em \`a posteriori}, all these systems are integrable also in the classical sense. In \cite{A99}, a similar approach was used to perform a classification of ``integrable'' discrete time systems
of the type \eqref{d rel Toda New gen}. Again, the resulting list coincides with the list of symplectic realizations of dRTL$_\pm(\alpha)$ given in the present section.

In the course of classification in \cite{ASh97a, ASh97b, A99}, the authors discovered also systems of a more general form, where the functions $f$, $g$, $\psi$, $\phi$ and $\psi_0$ not necessarily depend on the difference of the arguments. These more general systems are: the {\em relativistic elliptic Toda lattice}, 
$$
\ddot x_k=(\dot x_k^2-1)\Big(
 \dot{x}_{k+1}\frac{\partial \phi(x_k,x_{k+1};\alpha)}{\partial x_{k+1}}
 -\dot{x}_{k-1}\frac{\partial \phi(x_k,x_{k-1};\alpha)}{\partial x_{k-1}} +\frac{\partial \phi(x_k,x_{k+1};\alpha)}{\partial\alpha}
 +\frac{\partial \phi(x_k,x_{k-1};\alpha)}{\partial\alpha}-2\zeta(2x)\Big),
$$
and its time discretization,
$$
 \phi(x_k,\wx_k;h)+\phi(x_k,\undertilde{x}_k;h)+\phi(x_k,\undertilde{x}_{k+1};\alpha-h)+\phi(x_k,\wx_{k-1};\alpha-h)-\phi(x_k,x_{k+1};\alpha)-\phi(x_k,x_{k-1};\alpha)=0,
$$
where $\phi(x_0,x_1;\alpha)$ is given in \eqref{phi elliptic}. The theory of the latter systems was developed in \cite{ASu04}. 


\section{Explicit discretizations of the Toda lattice}
\label{sect explicit}

An important particular case of the map dRTL$_+(\alpha,h)$ appears when the discrete time step $h$ coincides with the relativistic parameter $\alpha$. Continued fractions for $\ga_k$, defined by \eqref{dRTL+ param a}, obviously degenerate to explicit expressions $\ga_k=1+hb_k$. The reason for this is clear, since in this case $\mbB_{1,2}=\Pi_+(T_{1,2})$, so that, in particular, $\mbB_1=\Pi_+(LU^{-1})=L$. Formulas \eqref{dRTL+ param} turn into
\begin{equation}
1+h\wb_k=(1+hb_{k-1}) \frac{1+hb_k+h^2a_k}{1+hb_{k-1}+h^2a_{k-1}}, \quad \wa_k=a_k \frac{1+hb_{k+1}+h^2a_{k+1}}{1+hb_k+h^2a_k}.
\end{equation}
Thus, the map dRTL$_+(h,h)$ is an explicit rational map. Moreover, it is birational; to found the inverse map, it is sufficient to observe that $1+h\wb_k+h^2\wa_{k-1}=1+hb_k+h^2a_k$. Of course, the map dRTL$_+(h,h)$ belongs to the hierarchy RTL$(h)$; however, it serves as a discretization of TL. Indeed, as $h\to 0$, we have:
$$
\wb_k=b_k+h(a_k-a_{k-1})+O(h^2), \quad \wa_k=a_k+ha_k(b_{k+1}-b_k)+O(h^2).
$$

A similar simplification takes place for the map dRTL$_-(\alpha,h)$ if $h=-\alpha$. Continued fractions for $\gd_k$, defined by \eqref{dRTL- param d}, obviously degenerate to explicit expressions $\gd_k=a_k$. The reason is that in this case $\mbC_{1,2}=\Pi_-(T_{1,2}^{-1})$, so that, in particular, $\mbC_2=\Pi_-(L^{-1}U)=U$. Formulas \eqref{dRTL- param} turn into
\begin{equation}
1-h\wb_k=(1-hb_{k+1}) \frac{1-hb_k+h^2a_{k-1}}{1-hb_{k+1}+h^2a_{k}}, \quad \wa_k=a_k \ \frac{1-hb_{k}+h^2a_{k-1}}{1-hb_{k+1}+h^2a_k}.
\end{equation}
This map dRTL$_-(-h,h)$ is an explicit rational map which belongs to the hierarchy RTL$(-h)$ and serves as a discretization of TL.

We now consider various symplectic realizations of these explicit maps. Naturally, they are explicit discretizations of the corresponding symplectic realizations of the flow TL. We restrict ourselves to the map dRTL$_+(h,h)$. Newtonian equations of motion have the following general structure:
\begin{equation}\label{d rel Toda spec New gen}
\psi(\wx_k-x_k;h)-\psi(x_k-\undertilde{x}_k;h)=\psi_0(x_{k+1}-x_k;h)-\psi_0(x_k-x_{k-1};h).
\end{equation}
They serve as Euler-Lagrange equations for the discrete time Lagrange function 
\begin{equation}\label{d rel Toda spec New gen Lagr}
\Lambda(x,\wx;h)=\sum_{k=1}^N\Psi(\wx_k-x_k;h)-\sum_{k=1}^N\Psi_0(x_{k+1}-x_k;h).
\end{equation}
The corresponding symplectic map $(x,p)\mapsto(\wx,\wip)$ is given by
\begin{equation}\label{d rel Toda spec Lagr}
\left\{\begin{array}{l}
p_k  =  \psi(\wx_k-x_k;h)+\psi_0(x_k-x_{k-1};h)-\psi_0(x_{k+1}-x_k;h), \vspace{1mm}\\
\wip_k = \psi(\wx_k-x_k;h).
\end{array}\right.
\end{equation}

a) For the pull-back of the map $\mathrm{dRTL}_+(h,h)$ under parametrization \eqref{RTL l par again} with $\alpha=h$:
\begin{equation}\label{dRTL+ l spec}
\psi(v;h)=\frac{1}{h}(\eto{v}-1), \quad \psi_0(u;h)=h\eto{u}.
\end{equation}

b) For the pull-back of the map $\mathrm{dRTL}_+(h,h)$ under parametrization \eqref{RTL m par} with $\alpha=h$:
\begin{equation} \label{dRTL+ m spec}
\psi(v;h)=\frac{v}{h}, \quad \psi_0(u;h)=\frac{1}{h}\log(1+h^2\eto{u}).
\end{equation}

c) For the pull-back of the map $\mathrm{dRTL}_+(h,h)$ under parametrization \eqref{RTL dual par} with $\alpha=h$:
\begin{equation}\label{dRTL+ dual spec}
\psi(v;h)=\log\frac{v}{h},\quad \psi_0(u;h)=\log(1+hu).
\end{equation}

d) For the pull-back of the map $\mathrm{dRTL}_+(h,h)$ under  parametrization \eqref{RTL q par} with $\alpha=h$:
\begin{equation}\label{dRTL+ q spec} 
\psi(v;h)=\log\frac{\eto{v}-1}{h}, \quad \psi_0(u;h)=\log(1+h\eto{u}).
\end{equation}

e) For the pull-back of the map $\mathrm{dRTL}_+(h,h)$ under parametrization \eqref{MRVL cq par} with $\alpha=h$:
\begin{equation}\label{dRTL cq spec}
\psi(v;h)=\frac{1}{2\beta}\log\frac{\sinh(v+\beta h_0)}{\sinh(v-\beta h_0)},\quad \psi_0(u;h)=\frac{1}{2\beta}\log\frac{\sinh(u+\beta h_0)}{\sinh(u-\beta h_0)}.
\end{equation}

f) For the pull-back of the map $\mathrm{dRTL}_+(h,h)$ under parametrization \eqref{RTL c2 par} with $\alpha=h$:
\begin{equation}\label{dRTL c2 spec}
\psi(v;h)=\frac{1}{2}\log\frac{v+h}{v-h},\quad \psi_0(u;h)=\frac{1}{2}\log\frac{u+h}{u-h}.
\end{equation}

g) For the pull-back of the map $\mathrm{dRTL}_+(h,h)$ under parametrization \eqref{RTL c3 par} with $\alpha=h$:
\begin{equation}\label{dRTL c3 spec}
\psi(v;h)=\frac{h}{v},\quad \psi_0(u;h)=\frac{h}{u}.
\end{equation}

\paragraph{Bibliographical remarks.} The fact that explicit discretization of Toda lattice belongs to the relativistic Toda hierarchy was pointed out in \cite{Su90b}. It was used there to introduce Lie-algebraic generalizations of relativistic Toda systems.

\setcounter{equation}{0}
\section{Discrete 2D equations of Laplace type}
\label{sect: discr Toda}

We elaborate now on the common features of our discrete time systems.

\paragraph{Implicit discretizations of Toda lattices.}

In equation \eqref{dToda New gen}, we assume that 
\begin{equation}\label{ident 2D}
x_k=x_k(t)=x_{k,n}, \quad \wx_k=x_k(t+h)=x_{k,n+1}, \quad \undertilde{x}_k=x_k(t-h)=x_{k,n-1}
\end{equation}
for $t=nh$ with $n\in \mathbb Z$. Thus, equation \eqref{dToda New gen} can be interpreted as equation 
\begin{equation}\label{dToda New gen 2D}
\psi(x_{k,n+1}-x_{k,n};h)-\psi(x_{k,n}-x_{k,n-1};h)=\phi(x_{k+1,n-1}-x_{k,n};h)-\phi(x_{k,n}-x_{k-1,n+1};h)
\end{equation}
for $x:\mathbb Z^2\to \bbR$. This is a  {\em 2D lattice equation}. If we visualize such an equation by connecting all pairs of vertices which appear in its individual terms, then we end up with a stencil shown on Figure \ref{Fig: square lattice skew}\subref{Fig: dTL stencil}.

Introduce the graph $\Gamma$ with the set of vertices $V(\Gamma)=\mathbb Z^2$ and with the set of edges $E(\Gamma)$ connecting nearest neighbors in the south-to-north direction and in the south-east-to-north-west direction. Clearly, this graph is combinatorially nothing but the regular square lattice, but drawn in $\mathbb R ^2$ in a non-standard way. Equation \eqref{dToda New gen 2D} lives on vertex stars of $\Gamma$. 
\medskip

Moreover, equation \eqref{dToda New gen 2D} admits a 2D Lagrangian formulation. 
\begin{definition} \label{def: discr Toda}
Let $\Gamma$ be a graph with the set of vertices $V=V(\Gamma)$ and with the set of (directed) edges $E=E(\Gamma)$.
Let $\mathcal L: E(\Gamma)\to \bbR$ be {\itbf edge Lagrangians} defined on edges $e=(v_1,v_2)$ of the graph $\Gamma$ and depending on the values $x(v_1)$, $x(v_2)$ of a function $x: V(\Gamma)\to \bbR$ at the endpoints of $e$. Introduce the {\itbf action functional} $S[x]$ for any function $x: V(\Gamma)\to \bbR$, as a (formal) sum of Lagrangians over all edges of $\Gamma$,
\begin{equation}  \label{action over Gamma}
S[x]=\sum_{e=(v_1,v_2)\in E(\Gamma)} \mathcal L(e),
\end{equation}
A {\itbf discrete Laplace type system} on the graph $\Gamma$ consists of Euler-Lagrange equations for critical points $x:V(\Gamma)\to \bbR$ of the action $S$, that is, of equations
\begin{equation} \label{dToda over Gamma}
\frac{\partial S}{\partial x(v)}=0\quad \forall v\in V(\Gamma).
\end{equation}
\end{definition}

Clearly, for a given vertex $v\in V(\Gamma)$, the expression $\partial S/\partial x(v)$ involves the derivatives of discrete Lagrangians for all edges incident to $v$, i.e., for the vertex star of $v$.  

Equation \eqref{dToda New gen 2D} is Euler-Lagrange equation for the action functional on $\mathbb Z^2$,
\begin{equation}
S[x]=\sum_{(k,n)\in\mathbb Z^2}\Big(\Psi(x_{k,n+1}-x_{n,k};h)-\Phi(x_{k+1,n}-x_{k,n};h)\Big),
\end{equation}
where edge Lagrangians are given by
$$
\cL(e)=\left\{\begin{array}{ll}
\Psi(x_{k,n+1}-x_{k,n};h) & \mathrm{for\;\;} e=\big((k,n),(k,n+1)\big),\\
-\Phi(x_{k,n+1}-x_{k+1,n};h) & \mathrm{for\;\;} e=\big((k+1,n),(k,n+1)\big).
\end{array}\right.
$$

\begin{figure}[tbp]
   \centering
   \subfloat[]{\label{Fig: dTL stencil}
   \begin{tikzpicture}[scale=0.8,inner sep=2]
      \node (x) at (2,0) [circle,draw] {};
      \node (x1) at (4,0) [circle,fill,label=-90:${x}_{k,n-1}$] {};
      \node (x11) at (6,0) [circle,fill,label=-90:${x}_{k+1,n-1}$] {};
      \node (x2) at (2,2) [circle,draw,label=45:$x_{k-1,n}$] {};
      \node (x12) at (4,2) [circle,fill,label=45:$x_{k,n}$] {};
      \node (x112) at (6,2) [circle,draw,label=45:$x_{k+1,n}$] {};
      \node (x22) at (2,4) [circle,fill,label=90:${x}_{k-1,n+1}$] {};
      \node (x122) at (4,4) [circle,fill,label=90:${x}_{k,n+1}$] {};
      \node (x1122) at (6,4) [circle,draw]{};     
      \draw [dashed] (1.5,0.5) to (x)  to (x2) to (x1); 
      \draw [very thick] (x1) to (x12) to (x11);
      \draw [dashed] (x11) to (x112) to (6.5,1.5);
      \draw [dashed](1.5,2.5) to  (x2) to (x22) ;
      \draw [very thick] (x22) to (x12) to (x122);
      \draw [dashed] (x122) to (x112) to (x1122) to (6.5,3.5);
   \end{tikzpicture}
   }\qquad
   \subfloat[]{\label{Fig: dTL Lagrangian}
   \begin{tikzpicture}[scale=0.8,inner sep=2]
      \node (x1) at (2,0) [circle,draw] {};
      \node (x11) at (4,0) [circle,draw,label=-90:$\undertilde{x}_{k}$] {};
      \node (x111) at (6,0) [circle,draw,label=-90:$\undertilde{x}_{k+1}$] {};
      \node (x1111) at (8,0) [circle,draw] {};
      \node (x2) at (0,2) [circle,fill] {};
      \node (x12) at (2,2) [circle,fill,label=45:$x_{k-1}$] {};
      \node (x112) at (4,2) [circle,fill,label=45:$x_{k}$] {};
      \node (x1112) at (6,2) [circle,fill,label=45:$x_{k+1}$] {};
      \node (x11112) at (8,2) [circle,fill] {};
      \node (x22) at (0,4) [circle,fill] {};
      \node (x122) at (2,4) [circle,fill,label=90:$\widetilde{x}_{k-1}$] {};
      \node (x1122) at (4,4) [circle,fill,label=90:$\widetilde{x}_{k}$] {};
      \node (x11122) at (6,4) [circle,fill] {};
      \draw [dashed] (0,1.5) to (x2) to (x1) to (x12) to (x11) to (x112) to (x111) to (x1112) to (x1111) to (x11112) to (8.5,1.5);
      \draw [very thick](-0.5,2.5) to (x2) to (x22) to (x12) to (x122) to (x112) to (x1122) to (x1112) to (x11122) to (x11112) to (8,2.5);
   \end{tikzpicture}
   }
   
   \caption{Combinatorics of implicit discretizations of Toda lattices. \protect\subref{Fig: dTL stencil} 5-point vertex star of the graph $\Gamma$ supporting equation \eqref{dToda New gen 2D};  \protect\subref{Fig: dTL Lagrangian} Cauchy slice $\Gamma^n$ of the 2D square lattice supporting the discrete time Lagrange function $\Lambda(x,\wx)$ from \eqref{dToda New gen Lagr}.}
  \label{Fig: square lattice skew} 
\end{figure}
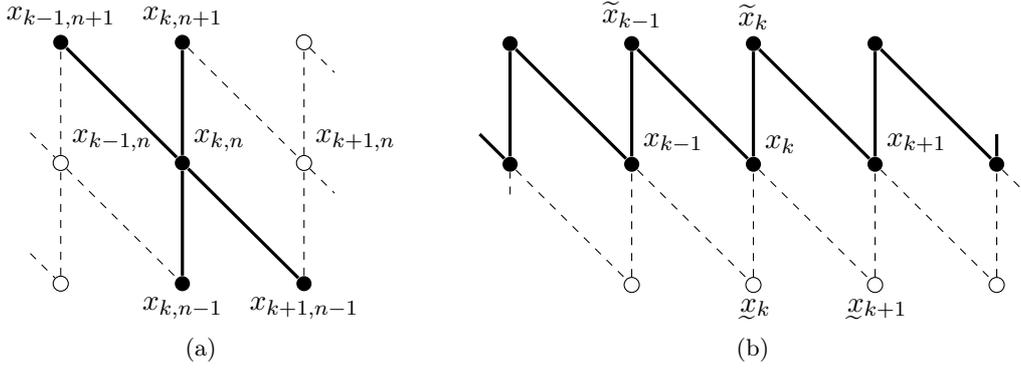

\paragraph{From 2D to 1D: discrete time evolution.} Formulation of Definition \ref{def: discr Toda} is covariant in the sense that there is no a priori privileged evolution direction on a general graph $\Gamma$ with $V(\Gamma)=\mathbb Z^2$. The time evolution arises if we consider a specific {\em Cauchy problem} on a graph. This can be formalized as follows.  

\begin{definition}\label{def: slicing} 
Let the vertices of a graph $\Gamma$ be represented as a disjoint union $V(\Gamma)=\bigsqcup_{n\in \mathbb Z}V^n$ in such a way that any two $V^n$ and $V^m$ are in a one-to-one correspondence. A {\itbf space-time splitting (slicing)} of $\Gamma$ is a sequence of subgraphs $\{\Gamma^n\}_{n\in\mathbb Z}$ ({\itbf Cauchy slices}) such that: 
\begin{itemize}
\item $V(\Gamma^n)=V^{n}\cup V^{n+1}$, 
\item the sets $E(\Gamma^n)$ are disjoint,  
\item $E(\Gamma)=\bigsqcup_{n\in\mathbb Z}E(\Gamma^n)$. 
\end{itemize}
The {\itbf discrete time Lagrange function} corresponding to the slicing $\{\Gamma^n\}$ is obtained by summing up discrete Lagrangians over the edges of one slice $\Gamma^n$:
\begin{equation}\label{Lagr slice}
 \Lambda(x,\wx)=\sum_{e\in E(\Gamma^n)}\cL(e).
\end{equation}
\end{definition}
Let us comment on the notation used on the left-hand side of \eqref{Lagr slice}.
For any fixed $n$, we denote the space of functions on $V^n$ by $\cX=\{x:V^n\to\mathbb R\}$ (this space does not depend on $n$ because any two $V^n$, $V^m$ are in a bijection). We take $\cX$ as the configuration space of our discrete time system.
Functions on $V(\Gamma^n)$ are naturally identified with pairs of functions $(x,\wx)$, where $x=x(n)$ is a function on $V^{n}$
and $\wx=x(n+1)$ is a function on $V^{n+1}$. Thus, the space of functions on on $V(\Gamma^n)$ is identified with $\cX\times\cX$. We have:
\[
S=\sum_{n\in \mathbb Z} \Lambda\big(x(n),x(n+1)\big).
\]
There follows from Definition \ref{def: slicing} that each edge of $\Gamma$ incident to a vertex $v\in V^n$ belongs to exactly one of the slices $E(\Gamma^n)$ and $E(\Gamma^{n-1})$. Therefore, for any vertex $v\in V^n$, the part of the action $S[x]$ containing edges incident to $v$ is 
\[
\sum_{e\in E(\Gamma^n)\sqcup E(\Gamma^{n-1})}\cL(e) =\Lambda\big(x(n),x(n+1)\big)+\Lambda\big(x(n-1),x(n)\big)=\Lambda(x,\wx)+\Lambda(\undertilde{x},x).
\]
It follows that discrete Laplace type equations \eqref{dToda over Gamma} coincide with the discrete time Euler-Lagrange equations \eqref{dEL gen}.
Equations \eqref{dToda map gen} produce out of this system of second order difference equations a symplectic map on $T^*\mathcal X$.
\medskip

For discrete Toda systems of the type \eqref{dToda New gen} with open-end or periodic boundary conditions, $V(\Gamma)=\{1,2,\ldots,N\}\times \mathbb Z$, and $V^n=\{1,2,\ldots,N\}\times\{n\}$. The edges of $\Gamma$ connect nearest neighbors in the south-to-north direction and in the south-east-to-north-west direction. The relevant slicing is shown on Figure \ref{Fig: square lattice skew}\subref{Fig: dTL Lagrangian}. The space $\cX=\mathbb R^N$ consists of finite sequences $x=(x_k)_{k=1}^N$. Equations \eqref{dToda map gen} define a symplectic map on $T^*\bbR^N$.

\paragraph{\textbf{Explicit discretizations of Toda lattices.}}
We now give a similar consideration of symplectic realizations of the explicit rational map dRTL$_+(h,h)$, listed in Section \ref {sect explicit}. 
Under identification \eqref{ident 2D}, equation \eqref{d rel Toda spec New gen} becomes a {\em 2D lattice equation}
\begin{equation}\label{d rel Toda spec New gen 2D}
\psi(x_{k,n+1}-x_{k,n};h)-\psi(x_{k,n}-{x}_{k,n-1};h)= \psi_0(x_{k+1,n}-x_{k,n};h)-\psi_0(x_{k,n}-x_{k-1,n};h)
\end{equation}
for a function $x:\mathbb Z^2\to\mathbb R$. The stencil supporting this equation is shown on Figure \ref{Fig: square lattice straight}\subref{Fig: dRTL spec stencil}.

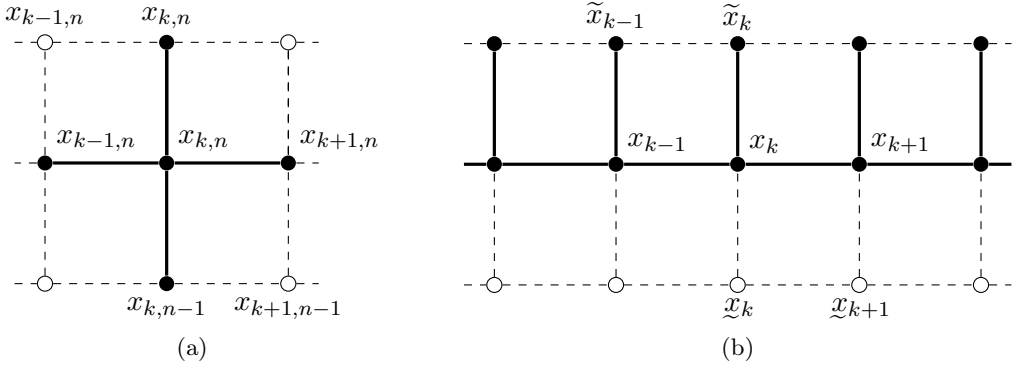
\begin{figure}[htbp]
   \centering
   \subfloat[]{ \label{Fig: dRTL spec stencil}
   \begin{tikzpicture}[scale=0.8,inner sep=2]  
      \node (x) at (2,0) [circle,draw] {};
      \node (x1) at (4,0) [circle,fill,label=-90:${x}_{k,n-1}$] {};
      \node (x11) at (6,0) [circle,draw,label=-90:${x}_{k+1,n-1}$] {};
      \node (x2) at (2,2) [circle,fill,label=45:$x_{k-1,n}$] {};
      \node (x12) at (4,2) [circle,fill,label=45:$x_{k,n}$] {};
      \node (x112) at (6,2) [circle,fill,label=45:$x_{k+1,n}$] {};
      \node (x22) at (2,4) [circle,draw,label=90:${x}_{k-1,n}$] {};
      \node (x122) at (4,4) [circle,fill,label=90:${x}_{k,n}$] {};
      \node (x1122) at (6,4) [circle,draw]{};     
      \draw [dashed] (1.5,0) to (x) to (x1) to (x11) to (6.5,0);
      \draw [dashed] (x)  to (x2) to (1.5,2) ;
      \draw [very thick] (x1) to (x12);
      \draw [dashed] (x11) to (x112)  to (6.5,2);
      \draw [very thick]  (x2) to (x12) to (x112);
      \draw [dashed] (x2) to (x22) ;
      \draw [very thick] (x122) to (x12) to (x122);
      \draw [dashed] (x1122) to (x112) to (x1122);
      \draw [dashed] (1.5,4) to (x22) to (x122) to (x1122) to (6.5,4);
   \end{tikzpicture}
   }
   \qquad
   \subfloat[]{\label{Fig: dRTL spec Lagrangian}
   \begin{tikzpicture}[scale=0.8,inner sep=2]
      \node (x) at (0,0) [circle,draw] {};
      \node (x1) at (2,0) [circle,draw] {};
      \node (x11) at (4,0) [circle,draw,label=-90:$\undertilde{x}_{k}$] {};
      \node (x111) at (6,0) [circle,draw,label=-90:$\undertilde{x}_{k+1}$] {};
      \node (x1111) at (8,0) [circle,draw] {};
      \node (x2) at (0,2) [circle,fill] {};
      \node (x12) at (2,2) [circle,fill,label=45:$x_{k-1}$] {};
      \node (x112) at (4,2) [circle,fill,label=45:$x_{k}$] {};
      \node (x1112) at (6,2) [circle,fill,label=45:$x_{k+1}$] {};
      \node (x11112) at (8,2) [circle,fill] {};
      \node (x22) at (0,4) [circle,fill] {};
      \node (x122) at (2,4) [circle,fill,label=90:$\widetilde{x}_{k-1}$] {};
      \node (x1122) at (4,4) [circle,fill,label=90:$\widetilde{x}_{k}$] {};
      \node (x11122) at (6,4) [circle,fill] {};
       \node (x111122) at (8,4) [circle,fill] {};
      \draw [dashed] (-0.5,0) to (x) to (x1) to (x11) to (x111) to (x1111) to (8.5,0);
      \draw [dashed] (x) to (x2)    (x1) to (x12)   (x11) to (x112)   (x111) to (x1112)   (x1111) to (x11112) ;
      \draw  [very thick](-0.5,2) to (x2) to (x12) to (x112) to (x1112) to (x11112) to (8.5,2);
      \draw [very thick] (x2) to (x22)   (x12) to (x122)   (x112) to (x1122)   (x1112) to (x11122)   (x11112) to (x111122);
      \draw [dashed] (-0.5,4) to (x22) to (x122) to (x1122) to (x11122) to (x111122) to (8.5,4);
   \end{tikzpicture}
   }
 \caption{Combinatorics of explicit discretizations of Toda lattices. \protect\subref{Fig: dRTL spec stencil} 5-point vertex star supporting equation \eqref{d rel Toda spec New gen 2D} \protect\subref{Fig: dRTL spec Lagrangian} Cauchy slice supporting discrete time Lagrange function \eqref{d rel Toda spec New gen Lagr} }  
 \label{Fig: square lattice straight}
\end{figure}

Equation \eqref{d rel Toda spec New gen} is a {\em discrete Laplace type system} on the regular square lattice, according to Definition \ref{def: discr Toda}, because it is Euler-Lagrange equation for the {\em action functional} on $\mathbb Z^2$,
\begin{equation} \label{action dRTL spec 2d}
S[x]=\sum_{(k,n)\in \mathbb Z^2} \Big(\Psi(x_{k,n+1}-x_{k,n};h)-\Psi_0(x_{k+1,n}-x_{k,n};h)\Big),
\end{equation}
where $\Psi'(\xi;h)=\psi(\xi;h)$, $\Psi_0'(\xi;h)=\psi_0(\xi;h)$.
Thus, discrete Lagrangians are given by
$$
\cL(e)=\left\{\begin{array}{ll}
\Psi(x_{k,n+1}-x_{k,n};h) & \mathrm{for\;\;} e=\big((k,n), (k,n+1)\big),\\
-\Psi_0(x_{k+1,n}-x_{k,n};h) & \mathrm{for\;\;} e=\big((k,n),(k+1,n)\big).
\end{array}\right.
$$
The space-time splitting of the regular square lattice leading to the discrete time Lagrange function \eqref{d rel Toda spec New gen Lagr}
is shown on Figure \ref{Fig: square lattice straight}\subref{Fig: dRTL spec Lagrangian}.


\paragraph{\textbf{Discretizations of relativistic Toda lattices.}}

Under the usual identification \eqref{ident 2D}, equation \eqref{d rel Toda New gen} becomes a {\em 2D lattice equation}
\begin{align}\label{d rel Toda New gen 2D}
\psi(x_{k,n+1}-x_{k,n})-\psi(x_{k,n}-{x}_{k,n-1})= & \phi({x}_{k+1,n-1}-x_{k,n})-\phi(x_{k,n}-x_{k-1,n+1})\nonumber\\
 &\quad +\psi_0(x_{k+1,n}-x_{k,n})-\psi_0(x_{k,n}-x_{k-1,n})
\end{align}
for a function $x:\mathbb Z^2\to\mathbb R$. If we visualize this equation by connecting all pairs of vertices which appear in its individual terms, then we arrive at the stencil shown on Figure \ref{Fig: dRTL stencil}.

\begin{figure}[tbp]
   \centering
   \begin{tikzpicture}[scale=0.8,inner sep=2]  
      \node (x1) at (2,0) [circle,draw] {};
      \node (x11) at (4,0) [circle,fill,label=-90:${x}_{k,n-1}$] {};
      \node (x111) at (6,0) [circle,fill,label=-90:${x}_{k+1,n-1}$] {};
      \node (x12) at (2,2) [circle,fill,label=45:$x_{k-1,n}$] {};
      \node (x112) at (4,2) [circle,fill,label=45:$x_{k,n}$] {};
      \node (x1112) at (6,2) [circle,fill,label=45:$x_{k+1,n}$] {};
      \node (x122) at (2,4) [circle,fill,label=90:${x}_{k-1,n+1}$] {};
      \node (x1122) at (4,4) [circle,fill,label=90:${x}_{k,n+1}$] {};
      \node (x11122) at (6,4) [circle,draw]{};     
      \draw [dashed] (1.5,0) to (x1) to (x11) to (x111) to (6.5,0);
      \draw [dashed] (1.5,0.5) to (x1)  to (x12) to (x11); 
      \draw [very thick] (x11) to (x112) to (x111);
      \draw [dashed] (x111) to (x1112) to (6.5,1.5);
      \draw [dashed]  (1.5,2) to (x12)  (x1112) to (6.5,2);
      \draw [very thick]  (x12) to (x112) to (x1112);
      \draw [dashed](1.5,2.5) to  (x12) to (x122) ;
      \draw [very thick] (x122) to (x112) to (x1122);
      \draw [dashed] (x1122) to (x1112) to (x11122) to (6.5,3.5);
      \draw [dashed] (1.5,4) to (x122) to (x1122) to (x11122) to (6.5,4);
   \end{tikzpicture}
        \caption{7-point vertex star of the graph $\Gamma$ supporting equation \eqref{d rel Toda New gen 2D}}
           \label{Fig: dRTL stencil}
\end{figure}
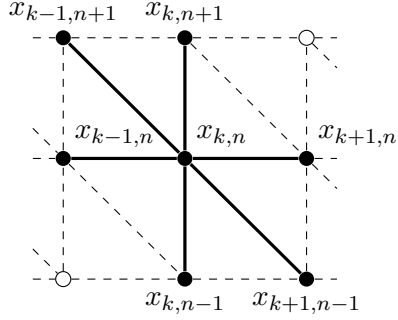

Consider the graph $\Gamma$ with the set of vertices $V(\Gamma)=\mathbb Z^2$, and with the set of edges $E(\Gamma)$ connecting nearest neighbors in the south-to-north direction, in the west-to-east direction, and in the south-east-to-north-west direction. Clearly, this graph is combinatorially nothing but the regular triangular lattice. Equation
\eqref{d rel Toda New gen 2D} lives on vertex stars of $\Gamma$. 
Moreover, equation \eqref{d rel Toda New gen} is a {\em discrete Laplace type system} on the regular triangular lattice, according to Definition \ref{def: discr Toda}, because it is Euler-Lagrange equation for the {\em action functional} on $\mathbb Z^2$,
\begin{equation} \label{action dRTL 2d}
S[x]=\sum_{(k,n)\in \mathbb Z^2} \Big(\Psi(x_{k,n+1}-x_{k,n})-\Psi_0(x_{k+1,n}-x_{k,n})-\Phi(x_{k+1,n}-x_{k,n+1})\Big),
\end{equation}
where $\Psi'(\xi)=\psi(\xi)$, $\Psi_0'(\xi)=\psi_0(\xi)$, $\Phi'(\xi)=\phi(\xi)$.
Thus, discrete Lagrangians are given by
$$
\cL(e)=\left\{\begin{array}{ll}
\Psi(x_{k,n+1}-x_{k,n}) & \mathrm{for\;\;} e=\big((k,n), (k,n+1)\big),\\
-\Psi_0(x_{k+1,n}-x_{k,n}) & \mathrm{for\;\;} e=\big((k,n),(k+1,n)\big),\\
-\Phi(x_{k,n+1}-x_{k+1,n}) & \mathrm{for\;\;} e=\big((k+1,n),(k,n+1)\big).
\end{array}\right.
$$

There are two natural ways to introduce a space-time splitting of the regular triangular lattice $\Gamma$.  A slice of the first slicing is shown on Figure \ref{Fig: dRTL slices}\subref{Fig: dRTL Lagrangian +}. Summing up discrete Lagrangians over the edges of one slice $\Gamma^n$, we come to the discrete time Lagrange function \eqref{d rel Toda New gen Lagr +} for maps dRTL$_+(\alpha,h)$. A slice of an alternative slicing is shown on Figure \ref{Fig: dRTL slices}\subref{Fig: dRTL Lagrangian -}. It supports discrete time Lagrange function \eqref{d rel Toda New gen Lagr -} for maps dRTL$_-(\alpha,h)$.

\begin{figure}[tbp]
   \centering
   \subfloat[Slice supporting Lagrange function \eqref{d rel Toda New gen Lagr +} ]{\label{Fig: dRTL Lagrangian +}
   \begin{tikzpicture}[scale=0.7,inner sep=2]
      \node (x1) at (2,0) [circle,draw] {};
      \node (x11) at (4,0) [circle,draw,label=-90:$\undertilde{x}_{k}$] {};
      \node (x111) at (6,0) [circle,draw,label=-90:$\undertilde{x}_{k+1}$] {};
      \node (x1111) at (8,0) [circle,draw] {};
      \node (x2) at (0,2) [circle,fill] {};
      \node (x12) at (2,2) [circle,fill,label=45:$x_{k-1}$] {};
      \node (x112) at (4,2) [circle,fill,label=45:$x_{k}$] {};
      \node (x1112) at (6,2) [circle,fill,label=45:$x_{k+1}$] {};
      \node (x11112) at (8,2) [circle,fill] {};
      \node (x22) at (0,4) [circle,fill] {};
      \node (x122) at (2,4) [circle,fill,label=90:$\widetilde{x}_{k-1}$] {};
      \node (x1122) at (4,4) [circle,fill,label=90:$\widetilde{x}_{k}$] {};
      \node (x11122) at (6,4) [circle,fill] {};
      \draw [dashed] (1.5,0) to (x1) to (x11) to (x111) to (x1111) to (8.5,0);
      \draw [dashed] (0,1.5) to (x2) to (x1) to (x12) to (x11) to (x112) to (x111) to (x1112) to (x1111) to (x11112) to (8.5,1.5);
      \draw  [very thick](-0.5,2) to (x2) to (x12) to (x112) to (x1112) to (x11112) to (8.5,2);
      \draw [very thick](-0.5,2.5) to (x2) to (x22) to (x12) to (x122) to (x112) to (x1122) to (x1112) to (x11122) to (x11112) to (8,2.5);
      \draw [dashed] (-0.5,4) to (x22) to (x122) to (x1122) to (x11122) to (6.5,4);
   \end{tikzpicture}
   }\qquad
   \subfloat[Slice supporting Lagrange function \eqref{d rel Toda New gen Lagr -}]{\label{Fig: dRTL Lagrangian -}
   \begin{tikzpicture}[scale=0.7,inner sep=2]
      \node (x1) at (2,0) [circle,draw] {};
      \node (x11) at (4,0) [circle,draw,label=-90:$\undertilde{x}_{k}$] {};
      \node (x111) at (6,0) [circle,draw,label=-90:$\undertilde{x}_{k+1}$] {};
      \node (x1111) at (8,0) [circle,draw] {};
      \node (x2) at (0,2) [circle,fill] {};
      \node (x12) at (2,2) [circle,fill,label=45:$x_{k-1}$] {};
      \node (x112) at (4,2) [circle,fill,label=45:$x_{k}$] {};
      \node (x1112) at (6,2) [circle,fill,label=45:$x_{k+1}$] {};
      \node (x11112) at (8,2) [circle,fill] {};
      \node (x22) at (0,4) [circle,fill] {};
      \node (x122) at (2,4) [circle,fill,label=90:$\widetilde{x}_{k-1}$] {};
      \node (x1122) at (4,4) [circle,fill,label=90:$\widetilde{x}_{k}$] {};
      \node (x11122) at (6,4) [circle,fill] {};
      \draw [dashed] (1.5,0) to (x1) to (x11) to (x111) to (x1111) to (8.5,0);
      \draw [dashed] (0,1.5) to (x2) to (x1) to (x12) to (x11) to (x112) to (x111) to (x1112) to (x1111) to (x11112) to (8.5,1.5);
      \draw  [dashed](-0.5,2) to (x2) to (x12) to (x112) to (x1112) to (x11112) to (8.5,2);
      \draw [very thick](-0.5,2.5) to (x2) to (x22) to (x12) to (x122) to (x112) to (x1122) to (x1112) to (x11122) to (x11112) to (8,2.5);
      \draw [very thick] (-0.5,4) to (x22) to (x122) to (x1122) to (x11122) to (6.5,4);
   \end{tikzpicture}
   }   
   \caption{Cauchy slices $\Gamma^n$ of the regular triangular lattice supporting discrete time Lagrange functions for dRTL$_\pm(\alpha,h)$.}
  \label{Fig: dRTL slices}
\end{figure}
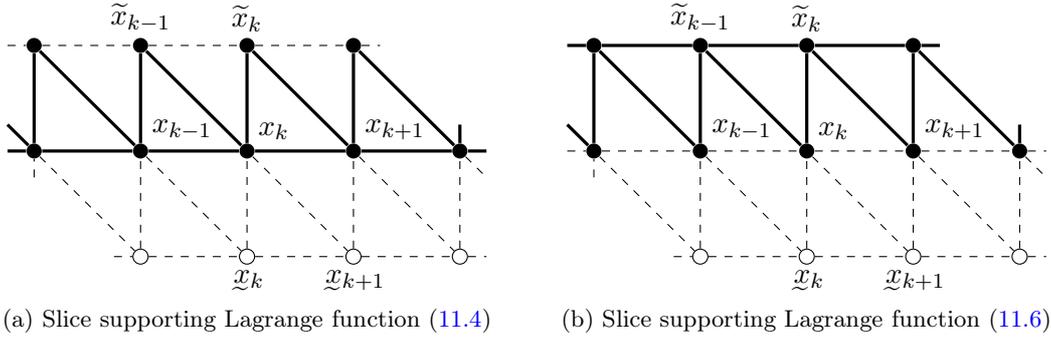

\paragraph{Bibliographical remarks.} Relation between lattice 2D systems and initial value problems for discrete 1D systems was discussed in the literature on many occasions, cf. \cite{PNC90, QCPN91}. Lagrangian aspects of this relation  were considered in \cite{CNP91}. The fact that different initial value problems for one and the same 2D lattice system lead to different 1D systems was instrumental in \cite{SR99} for a construction of a novel 1D integrable system, the relativistic Volterra lattice.

Some of the 2D Laplace type equations appearing as time discretizations of Toda lattices enjoy applications in different areas of mathematics. See, for instance, \cite{BH03} for an application of the discrete additive rational relativistic Toda  lattice to integrable circle patterns with the combinatorics of the regular hexagonal lattice and with prescribed intersection angles.

\section{Integrable discrete Laplace type equations and integrable quad-equations}
\label{Sect: dToda}

\paragraph{Discrete Laplace type equations.} 
We can slightly generalize definition of discrete Laplace type system, by removing the requirement that they come from a variational principle.

\begin{definition}\label{Dfn:dLaplace}
Let $\Gamma$ be a graph, with the set of vertices $V(\Gamma)$ and the
set of (directed) edges $E(\Gamma)$. A {\itbf discrete Laplace type system} on
$\Gamma$ for a function $x:V(\Gamma)\to\mathbb R$ consists of equations
\begin{equation}\label{eq:nlin Laplace}
\sum_{e_i=(v_0,v_i)\in \mathrm{star}(v)} \phi(x(v_0),x(v_i);e_i)=0.
\end{equation}
There is one equation for every vertex $v_0\in V(\Gamma)$; the
summation is extended over $\mathrm{star}(v_0)$, the set of edges
$e_i\in E(\Gamma)$ adjacent to $v_0$. The functions
$\phi=\phi(x_0,x_i;e_i)$ are possibly depending on $e_i$, often through
parameters $\alpha:E(\Gamma)\to\mathbb R$, assigned to the edges
of $\,\Gamma$.
\end{definition}

We will be mainly working with {\em planar} graphs $\Gamma$, that is, with graphs coming from a (strongly regular) polytopal cell decomposition
of an oriented surface. In this case, one can introduce the {\em dual graph} (cell
decomposition) $\Gamma^*$. Each edge $e\in E(\Gamma)$ separates two faces
of $\Gamma$, which in turn correspond to two vertices of $\Gamma^*$. A
path between these two vertices is then declared to be the edge
$e^*\in E(\Gamma^*)$ dual to $e$. If one assigns a direction to
an edge $e\in E(\Gamma)$, then it will be assumed that the dual
edge $e^*\in E(\Gamma^*)$ is also directed, in a way consistent
with the orientation of the underlying surface, namely so that the
pair $(e,e^*)$ is positively oriented at its crossing point.
This orientation convention implies that $e^{**}=-e$. Finally,
the faces of $\Gamma^*$ are in a one-to-one correspondence with the
vertices of $\Gamma$: if $v_0\in V(\Gamma)$, and $v_1,\ldots,v_n\in
V(\Gamma)$ are its neighbors connected with $v_0$ by the edges
$e_1=(v_0,v_1),\ldots, e_n=(v_0,v_n)\in E(\Gamma)$, then the face
of $\Gamma^*$ dual to $v_0$ is bounded by the dual edges
$e_1^*=(y_1,y_2),\ldots, e_n^*=(y_n,y_1)$; see Figure
\ref{Fig: graph and dual}\subref{Fig: dual face}.

\begin{figure}[tbp]
   \centering
   \subfloat[Faces of $\Gamma$ adjacent to $v_0$]{\label{Fig: star in a graph}
   \begin{tikzpicture}[scale=1.0,inner sep=1.5]  
      \node (x1) at (4,0) [circle,fill,label=-90:$v_5$] {};
      \node (x11) at (6,0) [circle,fill,label=-90:$v_6$] {};
      \node (x2) at (2,2) [circle,fill,label=135:$v_4$] {};
      \node (x12) at (4,2) [circle,fill,label=45:$v_0$] {};
      \node (x112) at (6,2) [circle,fill,label=45:$v_1$] {};
      \node (x22) at (2,4) [circle,fill,label=90:$v_3$] {};
      \node (x122) at (4,4) [circle,fill,label=90:$v_2$] {}; 
      \draw [very thick]  (x2) to (x1) to (x11);
      \draw [very thick] (x1) to (x12) to (x11) to (x112);
      \draw [very thick]  (x22) to (x122);
      \draw [very thick]  (x2) to (x12) to (x112);
      \draw [very thick]  (x2) to (x22)  to (x12) to (x122) to (x112);
   \end{tikzpicture}
   }\qquad
   \subfloat[Face of $\Gamma^*$ dual to $v_0$ and its oriented boundary]{\label{Fig: dual face}
  \begin{tikzpicture}[scale=1.0,>=stealth',inner sep=1.5]  
      \node (x1) at (4,0) [circle,fill,label=-90:$v_5$] {};
      \node (x11) at (6,0) [circle,fill,label=-90:$v_6$] {};
      \node (x2) at (2,2) [circle,fill,label=135:$v_4$] {};
      \node (x12) at (4,2) [circle,fill,label=45:$v_0$] {};
      \node (x112) at (6,2) [circle,fill,label=45:$v_1$] {};
      \node (x22) at (2,4) [circle,fill,label=90:$v_3$] {};
      \node (x122) at (4,4) [circle,fill,label=90:$v_2$] {}; 
      \node (y5) at (3.3,1.3) [circle,draw,label=-90:$y_5$]{};
      \node (y4) at (2.7,2.7) [circle,draw,label=-135:$y_4$]{};
       \node (y3) at (3.3,3.3) [circle,draw,label=90:$y_3$]{};
       \node (y2) at (4.7,2.7) [circle,draw,label=0:$y_2$]{};
        \node (y1) at (5.3,1.3) [circle,draw,label=45:$y_1$]{};
        \node (y6) at (4.7,0.7) [circle,draw,label=-90:$y_6$]{};
      \draw [very thick]  (x2) to (x1) to (x11) to (x112) to (x122) to (x22) to (x2);
      \draw [very thick,->] (x12) to (x1);
       \draw [very thick,->] (x12) to (x11);
      \draw [very thick,->]  (x12) to (x2);
      \draw [very thick,->] (x12) to (x112);
      \draw [very thick,->] (x12) to (x122);
      \draw [very thick,->] (x12) to (x22);
      \draw [very thick,dashed, ->] (y1) to (y2);
      \draw [very thick,dashed, ->] (y2) to (y3);
      \draw [very thick,dashed, ->] (y3) to (y4);
      \draw [very thick,dashed, ->] (y4) to (y5);
      \draw [very thick,dashed, ->] (y5) to (y6);
      \draw [very thick,dashed, ->] (y6) to (y1);
   \end{tikzpicture}
   }   
   \caption{}
   \label{Fig: graph and dual}
\end{figure}

\paragraph{Integrability of discrete Laplace type equations.} 
We will say that a discrete Laplace type system on $\Gamma$ is {\em integrable} if it possesses a
{\em discrete zero curvature representation}. That means the existence of a collection of
matrices $L(e^*;\lambda)\in G[\lambda]$ from some loop group $G[\lambda]$, associated to directed edges
$e^*\in\vec{E}(\Gamma^*)$ of the dual graph $\Gamma^*$, such that:
\begin{itemize}
 \item the matrix $L(e^*;\lambda)=L(x(v),x(w),\alpha;\lambda)$ depends on the fields
$x(v)$ and $x(w)$ at the vertices of the edge $e=(v,w)\in E(\Gamma)$,
dual to the edge $e^*\in E(\Gamma^*)$, as well as on the parameter
$\alpha=\alpha(e)$;
 \item for any directed edge
$e^*=(y_1,y_2)$, if $-e^*=(y_2,y_1)$, then
\begin{equation}\label{zero curv cond inv}
L(-e^*,\lambda)=\big(L(e^*,\lambda)\big)^{-1};
\end{equation}
\item for any closed path of directed edges $e^*_1=(y_1,y_2)$, $e^*_2=(y_2,y_3)$, $\ldots$, $e^*_n=(y_n,y_1)$,
we have
\begin{equation}\label{zero curv cond}
L(e^*_n,\lambda)\cdots L(e^*_2,\lambda)L(e^*_1,\lambda)=I.
\end{equation}
\end{itemize}
The matrix $L(e^*;\lambda)$ is interpreted as a transition
matrix  along the edge $e^*\in E(\Gamma^*)$, that is, a
transition across the edge $e\in E(\Gamma)$.
Under conditions \eqref{zero curv cond inv}, (\ref{zero curv
cond}) one can define a {\em wave function} $\Psi: V(\Gamma^*)\to
G[\lambda]$ on the vertices of the dual graph $\Gamma^*$, by the
following requirement: for any directed edge $e^*=(y_1,y_2)$,
the values of the wave functions at its ends must be connected via
\begin{equation}\label{wave function evol}
\Psi(y_2,\lambda)=L(e^*,\lambda)\Psi(y_1,\lambda).
\end{equation}

For an arbitrary graph, the analytical consequences of the zero
curvature representation for a given Laplace type system are
not clear. However, in the case of regular graphs, like the square
lattice or the regular triangular lattice, such a representation
may be used to determine conserved quantities for suitably defined
Cauchy problems, as well as to apply powerful analytical methods
for finding concrete solutions.

\paragraph{Quad-graphs and quad-equations.}
\label{Sect quad-graphs}

Although one can consider 2D integrable systems on very different
kinds of planar graphs, there is one kind, namely quad-graphs,
supporting the most fundamental integrable systems.
\begin{definition}\label{Def: quad-graph}
A {\itbf quad-graph} is a planar graph with all quadrilateral faces.
\end{definition}
Quad-graphs are privileged because from an arbitrary planar graph $\Gamma$ one can produce a
certain quad-graph $\cD$, called the double of $\Gamma$. The {\em
double} $\cD$ is a quad-graph, constructed from $\Gamma$ and its dual
$\Gamma^*$ as follows. The set of vertices
of the double $\cD$ is $V(\cD)=V(\Gamma)\sqcup V(\Gamma^*)$. Each pair
of dual edges, say $e=(v_0,v_1)\in E(\Gamma)$ and
$e^*=(y_1,y_2)\in E(\Gamma^*)$, defines a quadrilateral
$(v_0,y_1,v_1,y_2)$. These quadrilaterals constitute the faces of
a quad-graph $\cD$; see Figure \ref{Fig:flower}. Let us stress
that edges of $\cD$ belong neither to $E(\Gamma)$ nor to $E(\Gamma^*)$.

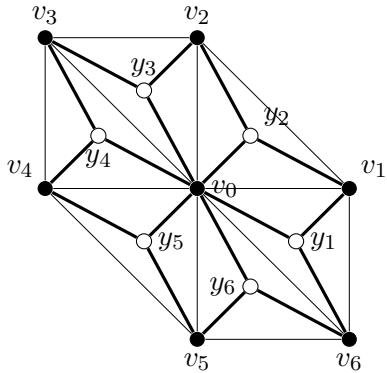
\begin{figure}[htbp]
\begin{center}
 \begin{tikzpicture}[scale=1.0,>=stealth',inner sep=2]  
      \node (x1) at (4,0) [circle,fill,label=-90:$v_5$] {};
      \node (x11) at (6,0) [circle,fill,label=-90:$v_6$] {};
      \node (x2) at (2,2) [circle,fill,label=135:$v_4$] {};
      \node (x12) at (4,2) [circle,fill,label=0:$v_0$] {};
      \node (x112) at (6,2) [circle,fill,label=45:$v_1$] {};
      \node (x22) at (2,4) [circle,fill,label=90:$v_3$] {};
      \node (x122) at (4,4) [circle,fill,label=90:$v_2$] {}; 
      \node (y5) at (3.3,1.3) [circle,draw,label=0:$y_5$]{};
      \node (y4) at (2.7,2.7) [circle,draw,label=-90:$y_4$]{};
       \node (y3) at (3.3,3.3) [circle,draw,label=90:$y_3$]{};
       \node (y2) at (4.7,2.7) [circle,draw,label=30:$y_2$]{};
        \node (y1) at (5.3,1.3) [circle,draw,label=0:$y_1$]{};
        \node (y6) at (4.7,0.7) [circle,draw,label=180:$y_6$]{};
      \draw [very thin]  (x2) to (x1) to (x11) to (x112) to (x122) to (x22) to (x2);
      \draw [very thin] (x12) to (x1);
       \draw [very thin] (x12) to (x11);
      \draw [very thin]  (x12) to (x2);
      \draw [very thin] (x12) to (x112);
      \draw [very thin] (x12) to (x122);
      \draw [very thin] (x12) to (x22);
      \draw [very thick] (x12) to (y5) to (x2) to (y4) to (x12);
      \draw [very thick] (y4) to (x22) to (y3) to (x12);
      \draw [very thick] (y3) to (x122) to (y2) to (x12);
      \draw [very thick] (y2) to (x112) to (y1) to (x12);
      \draw [very thick] (y1) to (x11) to (y6) to (x12);
      \draw [very thick] (y6) to (x1) to (y5);
   \end{tikzpicture}
   \caption{Faces of $\cD$ adjacent to the vertex $v_0$.}
   \label{Fig:flower}
\end{center}
    \end{figure}

Quad-graphs $\cD$ coming as doubles are bipartite: the set
$V(\cD)$ may be decomposed into two complementary halves,
$V(\cD)=V(\Gamma)\sqcup V(\Gamma^*)$ (``black'' and ``white'' vertices),
such that the ends of each edge from $E(\cD)$ are of different
colors. Equivalently, any closed loop consisting of edges of $\cD$
has an even length.

The construction of the double can be reversed. Start with a
bipartite quad-graph $\cD$. For instance, any quad-graph embedded
in a plane or in an open disc is automatically bipartite. Any
bipartite quad-graph produces two dual planar graphs $\Gamma$ and $\Gamma^*$, with
$V(\Gamma)$ containing all the ``black'' vertices of $\cD$ and
$V(\Gamma^*)$ containing all the ``white'' ones, and edges of $\Gamma$
(resp. of $\Gamma^*$) connecting ``black'' (resp. ``white'') vertices
along the diagonals of each face of $\cD$. The decomposition of
$V(\cD)$ into $V(\Gamma)$ and $V(\Gamma^*)$ is unique, up to
interchanging the roles of $\Gamma$ and $\Gamma^*$.

A privileged role played by the quad-graphs is reflected in the
privileged role played in the theory of discrete integrable
systems by the so called quad-equations supported by
quad-graphs.
\begin{definition} For a given bipartite quad-graph $\cD$, the system of {\itbf quad-equations} for a function $x:V(\cD)\to\mathbb R$ consists of equations of the type
\begin{equation}\label{eq:2d for 3leg}
Q(x_0,y_1,x_1,y_2)=0.
\end{equation}
There is one equation for every face $(x_0,y_1,x_1,y_2)$ of $\cD$, see Figure \ref{Fig: quad-equation}\subref{Fig: quadrilateral bipartite}. The function
$Q$ is supposed to be {\em multi-affine}, i.e., a polynomial of
degree $\le 1$ in each argument, so that equation (\ref{eq:2d for
3leg}) is uniquely solvable for any of its arguments. Usually, the function $Q=Q(x_0,y_1,x_1,y_2;\alpha,\beta)$
additionally depends on some parameters assigned to the
edges of the quadrilaterals, $\alpha:E(\cD)\to\mathbb C$, 
opposite edges carrying equal parameters:
$\alpha=\alpha(x_0,y_1)=\alpha(y_2,x_1)$ and
$\beta=\alpha(x_0,y_2)=\alpha(y_1,x_1)$.
\end{definition}

\begin{figure}[tbp]
   \centering
   \subfloat[A quad-equation.]{\label{Fig: quadrilateral bipartite}
   \begin{tikzpicture}[scale=0.8,inner sep=2]  
      \node (x0) at (0,2) [circle,fill,label=180:$x_0$] {};
      \node (y1) at (3,0) [circle,draw,label=-90:$y_1$] {};
      \node (x1) at (6,2) [circle,fill,label=0:$x_1$] {};
      \node (y2) at (3,4) [circle,draw,label=90:$y_2$] {};
      \draw [very thick]  (x0) -- (y1) node[pos=.5,sloped,above] {$\alpha$};
      \draw [very thick] (y1) -- (x1) node[pos=.5,sloped,above] {$\beta$};
       \draw [very thick] (x1) -- (y2) node[pos=.5,sloped,above] {$\alpha$}; 
       \draw [very thick] (y2) -- (x0) node[pos=.5,sloped,above] {$\beta$};
   \end{tikzpicture}
   }\qquad
   \subfloat[Three-leg form of a quad-equation.]{\label{Fig: 3leg}
  \begin{tikzpicture}[scale=0.8,inner sep=2]  
      \node (x0) at (0,2) [circle,fill,label=180:$x_0$] {};
      \node (y1) at (3,0) [circle,draw,label=-90:$y_1$] {};
      \node (x1) at (6,2) [circle,fill,label=0:$x_1$] {};
      \node (y2) at (3,4) [circle,draw,label=90:$y_2$] {};
      \draw [very thick]  (x0) to (y1) (x0) to (x1) (x0) to (y2);
   \end{tikzpicture}
   }   
   \caption{Different forms of a quad-equation.}
   \label{Fig: quad-equation}
\end{figure}
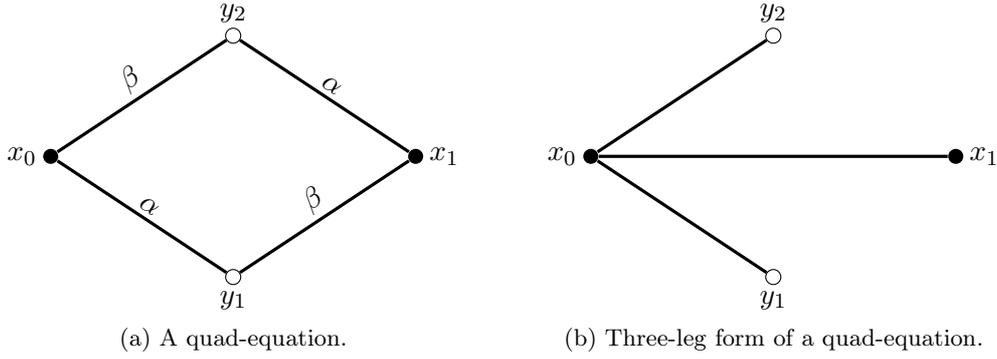

The geometric relation of a given planar graph $\Gamma$ to its
double $\cD$ leads to a relation of discrete Laplace type systems on $\Gamma$ to
systems of quad-equations on $\cD$. The latter relation is based on an intriguing
property of quad-equations to have the so called three-leg form. 

\begin{definition}\label{Def:3leg}
a) A quad-equation \eqref{eq:2d for 3leg} possesses a {\itbf three-leg
form} centered at the vertex $x_0$ if it is equivalent to the
equation
\begin{equation}\label{eq:3leg add}
\psi(x_0,y_1)-\psi(x_0,y_2)=\phi(x_0,x_1)
\end{equation}
with some functions $\psi, \phi$. The terms on the left-hand side
correspond to the ``short'' legs $(x_0,y_1)$, $(x_0,y_2)\in E(\cD)$,
while the right-hand side corresponds to the ``long'' leg
$(x_0,x_1)\in E(\Gamma)$, see Figure \ref{Fig: quad-equation}\subref{Fig: 3leg}.

b) A system of quad-equations \eqref{eq:2d for 3leg} on the faces of a
bipartite quad-graph $\cD$ has the {\itbf legs matching property}, if for 
any edge $(x,y)\in E(\cD)$, the short leg functions $\psi(x,y)$ for both quadrilaterals sharing this edge, coincide. 
\end{definition}

For a system of quad-equations with the leg matching property, consider the following discrete Laplace type equations on the ``black'' graph $\Gamma$, constructed from the ``long'' legs functions:
\begin{equation}\label{eq:Laplace for 3legs}
\sum_{x_k\in V(\Gamma),\,x_k\sim x_0} \phi(x_0,x_k)=0\qquad \forall x_0\in V(\Gamma).
\end{equation}

\begin{theorem}\label{Th: Laplace for 3legs}
a) The restriction of any solution $x:V(\cD)\to\mathbb R$ of the system of quad-equations \eqref{eq:2d for 3leg} to the
``black'' vertices $V(\Gamma)$ satisfies discrete Laplace type equations \eqref{eq:Laplace for 3legs}.

b) If the graph $\Gamma$ is embedded into a simply connected surface, then, conversely, given a solution $x:V(\Gamma)\to\mathbb R$ of
the Laplace type equations \eqref{eq:Laplace for 3legs}, there exists a one-parameter family
of extensions $x:V(\cD)\to\mathbb R$ satisfying quad-equations
\eqref{eq:2d for 3leg} on the double $\cD$. Such an extension is
uniquely determined by the value at one arbitrary vertex of
$V(\Gamma^*)$.
\end{theorem}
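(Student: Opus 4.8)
The plan is to prove both directions by passing to the three-leg form of the quad-equations and exploiting the combinatorics of the flower of faces of $\cD$ around a vertex (Figure \ref{Fig:flower}). Throughout, $\psi$ denotes the short-leg function and $\phi$ the long-leg function appearing in \eqref{eq:3leg add}, and the leg matching property (Definition \ref{Def:3leg}(b)) ensures that the short-leg function attached to a given edge $(x,y)\in E(\cD)$ with $x$ black is unambiguous, the same for the two faces of $\cD$ sharing it.

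For part (a), fix a black vertex $x_0\in V(\Gamma)$; let $x_1,\ldots,x_n\in V(\Gamma)$ be its neighbours in $\Gamma$, in cyclic order, and $y_1,\ldots,y_n\in V(\Gamma^*)$ the interlaced white vertices, so that the faces of $\cD$ incident to $x_0$ are the quadrilaterals $(x_0,y_i,x_i,y_{i+1})$, indices mod $n$. For each such face I would write its quad-equation in the three-leg form centered at $x_0$, namely $\psi(x_0,y_i)-\psi(x_0,y_{i+1})=\phi(x_0,x_i)$. Summing these $n$ identities around $x_0$, the left-hand sides telescope to zero — here is exactly where leg matching is used, so that $\psi(x_0,y_{i+1})$ occurs with opposite signs in the contributions of the two faces sharing the edge $(x_0,y_{i+1})$ — and what survives is $\sum_{i=1}^n\phi(x_0,x_i)=0$, which is the Laplace type equation \eqref{eq:Laplace for 3legs} at $x_0$. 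As $x_0$ is arbitrary, (a) follows; no hypothesis on the surface is needed here.

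For part (b), I would first note uniqueness: once $x$ is prescribed on $V(\Gamma)$ and at one white vertex $y_*$, every other white value is forced, since the value at $z_j$ along any path $y_*=z_0,z_1,\ldots,z_m$ in $\Gamma^*$ is recovered from $z_{j-1}$ by solving the (multi-affine, uniquely solvable) quad-equation on the face of $\cD$ dual to the edge $(z_{j-1},z_j)\in E(\Gamma^*)$ for its remaining slot, the two black entries on that face being already known. To construct the extension, define $x(z)$ by exactly this propagation and prove it is path-independent. Two paths with the same endpoints differ by a cycle in $\Gamma^*$; since $\Gamma^*$ is embedded in a simply connected surface, its cycle space is spanned by the boundaries of the faces of $\Gamma^*$, and these faces correspond to the vertices of $\Gamma$. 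So it suffices to check that propagating once around the boundary of the face of $\Gamma^*$ dual to a black vertex $x_0$ brings the white value back to itself. Running this propagation with the three-leg form centered at $x_0$ — the same identities as in (a), now read as the recursion determining $\psi(x_0,y_{i+1})$, hence $y_{i+1}$, from $\psi(x_0,y_i)$ — the total increment of $\psi(x_0,\cdot)$ around $x_0$ is $\sum_{i=1}^n\phi(x_0,x_i)$, which vanishes because $x$ solves \eqref{eq:Laplace for 3legs} at $x_0$; injectivity of $y\mapsto\psi(x_0,y)$ (forced by unique solvability of the quad-equation) then yields closure. With well-definedness in hand, every quad-equation holds: for a face of $\cD$ dual to $(z,z')\in E(\Gamma^*)$, compute $x(z),x(z')$ along a path ending with the edge $(z,z')$ — they satisfy that quad-equation by the construction step — and appeal to path-independence. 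The one free choice of $x(y_*)$ gives the stated one-parameter family.

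I expect the main obstacle to be the sign bookkeeping in the monodromy computation of part (b): choosing a consistent orientation of the flower at $x_0$ so that consecutive short-leg terms cancel with the right signs in the additive three-leg form \eqref{eq:3leg add}, and making precise the homological reduction ``cycles of $\Gamma^*$ are generated by face boundaries'' on a simply connected surface (in particular, treating black vertices on the boundary of the surface, if any). All the analytic inputs — multi-affinity, unique solvability for each argument, existence of the three-leg form, and leg matching — are hypotheses, so once these conventions are pinned down the argument is purely combinatorial-topological.
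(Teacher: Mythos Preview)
Your proposal is correct and follows the same idea as the paper: telescoping the three-leg forms of the quad-equations around the flower of faces at a black vertex $x_0$. The paper's own proof is a single sentence (``summation of quad-equations over the vertex star \ldots due to the telescoping effect''), so for part~(a) you have reproduced exactly that, and for part~(b) you have supplied the converse propagation/monodromy argument that the paper leaves to the reader.
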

\begin{proof} This follows by summation of quad-equations over the vertex star of $\cD$
adjacent to the ``black'' vertex $x_0\in V(\Gamma)$ (see Figure
\ref{Fig:flower}), due to the telescoping
effect.
\end{proof}

\paragraph{Multi-dimensional consistency of quad-equations.} The by now widely accepted notion of integrability of quad-equations is that of multi-dimensional consistency.
Consider a system on $\mathbb Z^m$ consisting of (possibly different) quad-equations $Q(x,x_i,x_{ij},x_j)=0$ on all affine two-dimensional sublattices 
$n_0 + \mathbb Z e_i +\mathbb Z e_j$. Here $x$ stands for $x(n)$ at a generic point $n\in\bbZ^m$, and $x_i=x(n+e_i)$, $x_{ij}=x(n+e_i+e_j)$, 
where $e_i$ is the unit vector of the $i$\textsuperscript{th} coordinate direction. Such a system is called {\em multi-dimensionally consistent}, if it has a solution 
whose restrictions on all two-dimensional sublattices  are generic solutions of corresponding equations. It turns out that the multi-dimensional consistency of quad-equations follows from 3D consistency, and the latter boils down to a local property for one elementary 3D cube.
\begin{definition}
Consider a six-tuple of (a priori different) quad-equations assigned to the faces of a 3D cube:
\begin{eqnarray} \label{system}
A\left(x,x_{1},x_{2},x_{12}\right)=0,& \quad &
\bar{A}\left(x_{3},x_{13},x_{23},x_{123}\right)=0,\nonumber\\
B\left(x,x_{2},x_{3},x_{23}\right)=0,& \quad &
\bar{B}\left(x_{1},x_{12},x_{13},x_{123}\right)=0,\\
C\left(x,x_{1},x_{3},x_{13}\right)=0,& \quad &
\bar{C}\left(x_{2},x_{12},x_{23},x_{123}\right)=0.\nonumber
\end{eqnarray}
Such a six-tuple is called {\itbf 3D consistent} if, for arbitrary initial data $x$, $x_{1}$, $x_{2}$, $x_{3}$, and for $x_{12}$, $x_{13}$, $x_{23}$ determined by using $A=0$, $B=0$, $C=0$, the three values for $x_{123}$ determined by using $\bar{A}=0$, $\bar{B}=0$, or $\bar{C}=0$, coincide. See Figure~\ref{Fig: cube six}. 
\end{definition}

\begin{figure}[htbp]
   \centering
   \begin{tikzpicture}[scale=1,inner sep=2]
   \tikzset{square/.style={regular polygon,regular polygon sides=4,inner sep=2}}
      \node (x) at (0,0) [circle,fill,label=-135:$x$] {};
      \node (x1) at (3,0) [circle,fill,label=-45:$x_1$] {};
      \node (x2) at (1,1) [circle,fill,label=-45:$x_2$] {};
      \node (x3) at (0,3) [circle,fill,label=135:$x_3$] {};
      \node (x12) at (4,1) [circle,draw,label=0:$x_{12}$] {};
      \node (x13) at (3,3) [circle,draw,label=0:$x_{13}$] {};
      \node (x23) at (1,4) [circle,draw,label=135:$x_{23}$] {};
      \node (x123) at (4,4) [square,draw,label=45:$x_{123}$] {};
      \draw [ultra thick] (x) to (x1) to (x12);
      \draw [very thick, dashed] (x12) to (x2) to (x);
      \draw [ultra thick] (x) to (x3) to (x13) to (x1);
      \draw [very thick, dashed] (x2) to (x23);
      \draw [ultra thick] (x3) to (x23) to (x123) to (x13);
      \draw [ultra thick] (x12) to (x123);
        \end{tikzpicture}
   \caption{3D consistency of quad-equations}
   \label{Fig: cube six}
\end{figure}
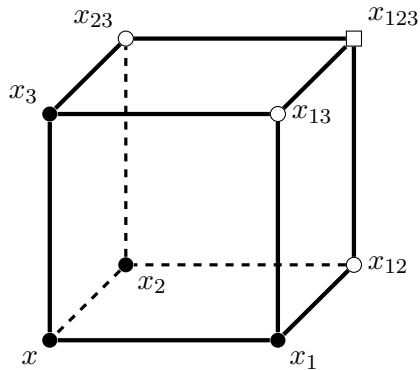

This notion is relevant for systems of quad-equations on quad-graphs, since, under certain mild conditions, a quad-graph can be realized as a quad-surface in a lattice of a sufficiently high dimension. An example relevant to the discrete relativistic Toda type systems will be given in the next section.

The property of 3D consistency is relevant for integrability, since it implies the more traditional attributes thereof, like the discrete zero curvature representation and the existence of permutable B\"acklund transformations.

\paragraph{Bibliographical remarks.}

Integrability of discrete Laplace type equations on arbitrary planar graphs was discussed in \cite{A01}.

The notion of 3D consistency of quad-equations which can be put into the basis of
the integrability theory, was clearly formulated in \cite{NW01}. A conceptual breakthrough has been made in \cite{BS02}
and \cite{Nij02}, where it was shown that 3D consistency allows one to derive in an algorithmic way such basic
integrability attributes as discrete zero curvature representations and B\"acklund transformations for quad-equations.
In \cite{ABS03}, this property has been put  into the basis of a classification of integrable quad-equations which provided a
finite list of such equations known nowadays as the ``ABS list''.

A relation between integrable quad-equations and discrete Laplace type equations based on the three-leg forms in \cite{BS02}. The existence of the three-leg forms was established for all equations of the ABS list in \cite{ABS03}, and was proved for all quad-equations with multi-affine functions $Q$ by V.~Adler, see Exercise 6.16 in \cite{BS08}.

\section[Discrete relativistic Toda systems from quad-equations]
{Discrete relativistic Toda systems from quad-equations on the dual kagome lattice} 
\label{Sect main}

The non-symmetric discrete relativistic Toda type systems live on
the regular triangular lattice $\cT$ and cannot be directly
generalized to arbitrary graphs. Therefore, we introduce now the
specific notation tailored for the regular triangular lattice. The
double of $\cT$ is the quad-graph $\cK$ known as the {\em dual
kagome lattice} (drawn on Figure \ref{fig:L3} in dashed lines).
The latter graph has vertices of two kinds, black vertices of
valence 6 and white vertices of valence 3, and edges of three
different directions.

\begin{figure}[htbp]
\begin{center}
 \begin{tikzpicture}[scale=1.2,>=stealth',inner sep=2.5]  
      \node (x1) at (6,0) [circle,fill,label=-90:$\undertilde{x}_k$] {};
      \node (x11) at (9,0) [circle,fill,label=-90:$\undertilde{x}_{k+1}$] {};
      \node (x2) at (3,3) [circle,fill,label=135:$x_{k-1}$] {};
      \node (x12) at (6,3) [circle,fill,label=45:$x_k$] {};
      \node (x112) at (9,3) [circle,fill,label=45:$x_{k+1}$] {};
      \node (x22) at (3,6) [circle,fill,label=90:$\wx_{k-1}$] {};
      \node (x122) at (6,6) [circle,fill,label=90:$\wx_k$] {}; 
      \node (y0) at (4,1) [circle,draw,label=180:$U_k$] {}; 
      \node (y1) at (5,2) [circle,draw,label=0:$V_k$]{};
      \node (y2) at (4,4) [circle,draw,label=180:$\wU_k$]{};
       \node (y3) at (5,5) [circle,draw,label=0:$\wV_k$]{};
       \node (y4) at (7,4) [circle,draw,label=180:$\wU_{k+1}$]{};
        \node (y5) at (8,2) [circle,draw,label=0:$V_{k+1}$]{};
        \node (y6) at (7,1) [circle,draw,label=180:$U_{k+1}$]{};
         \node (y7) at (8,5) [circle,draw,label=0:$\wV_{k+1}$]{};
      \draw [very thick]  (x2) to (x1) to (x11) to (x112) to (x122) to (x22) to (x2);
      \draw [very thick] (x12) to (x1);
       \draw [very thick] (x12) to (x11);
      \draw [very thick]  (x12) to (x2);
      \draw [very thick] (x12) to (x112);
      \draw [very thick] (x12) to (x122);
      \draw [very thick] (x12) to (x22);
      \draw [thin, dashed] (x12) to (y1) to (x2) to (y2) to (x12);
      \draw [thin, dashed] (y2) to (x22) to (y3) to (x12);
      \draw [thin, dashed] (y3) to (x122) to (y4) to (x12);
      \draw [thin, dashed] (y4) to (x112) to (y5) to (x12);
      \draw [thin, dashed] (y5) to (x11) to (y6) to (x12);
      \draw [thin, dashed] (y6) to (x1) to (y1);
       \draw [thin, dashed] (x2) to (y0) to (x1);  
       \draw [thin, dashed] (x122) to (y7) to (x112);      
   \end{tikzpicture}
   \caption{Fields on the triangular lattice and wave functions on its dual; when considered simultaneously, they live on the dual kagome lattice $\cK$.}
   \label{fig:L3}
\end{center}
    \end{figure}
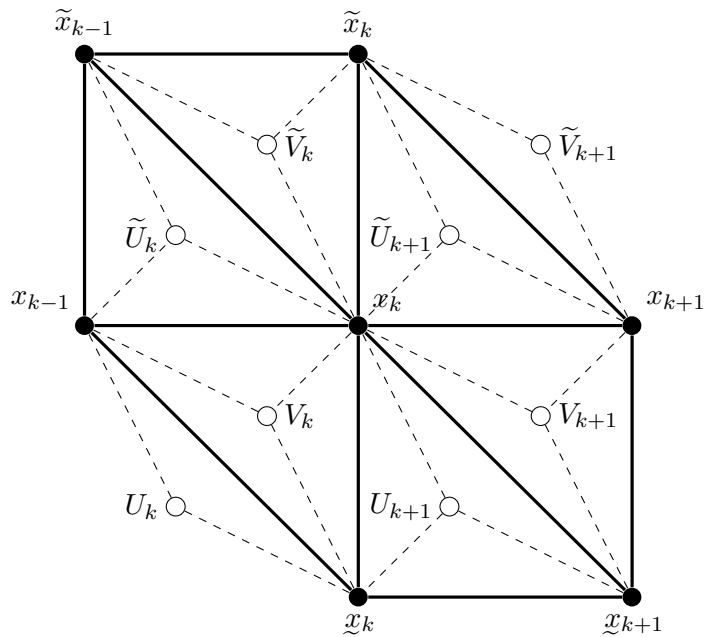

Correspondingly, it can be realized as a quad-surface in $\mathbb Z^3$, so that the three edge directions are 
realized as coordinate directions of $\mathbb Z^3$. In this realization, black vertices of $\cK$, that is, vertices of $\cT$, are the points
$(i_1,i_2,i_3)\in\mathbb Z^3$ lying in the plane $i_1+i_2+i_3=0$, while white vertices of $\cK$ are the points of $\mathbb Z^3$
lying in the planes $i_1+i_2+i_3=1$ (vertices $V$) and $i_1+i_2+i_3=-1$ (vertices $U$). See Figure \ref{fig:star6}.

\begin{figure}[tbp]
\begin{center}
 \begin{tikzpicture}[scale=0.43,inner sep=2]  
      \node (x111) at (10,6) [circle,fill,label=15:$x_k$] {};
      \node (x100) at (6,-2) [circle,fill,label=-90:$\undertilde{x}_{k}$] {};
      \node (x210) at (16,-2) [circle,fill,label=-45:$\undertilde{x}_{k+1}$] {};
      \node (x001) at (0,6) [circle,fill,label=180:$x_{k-1}$] {};
      \node (x221) at (20,6) [circle,fill,label=0:$x_{k+1}$] {};
      \node (x012) at (4,14) [circle,fill,label=135:$\wx_{k-1}$] {};
      \node (x122) at (14,14) [circle,fill,label=45:$\wx_k$] {}; 
      \node (x101) at (6,4) [circle,draw,label=-135:$V_k$]{};
      \node (x211) at (16,4) [circle,draw,label=-45:$V_{k+1}$]{};
      \node (x011) at (4,8) [circle,draw,label=135:$\wU_k$]{};
       \node (x121) at (14,8) [circle,draw,label=10:$\wU_{k+1}$]{};
       \node (x112) at (10,12) [circle,draw,label=90:$\wV_k$]{};
        \node (x110) at (10,0) [circle,draw,label=-90:$U_{k+1}$]{};  
        \node (x000) at (0,0) [circle,draw,label=-90:$U_{k}$]{}; 
        \node (x222) at (20,12) [circle,draw,label=0:$\wV_{k+1}$]{};
        \draw [thick,dotted]  (x000) to (x100) to (12,-4) to (x210) to (20,0) to (14,2) to (8,4) to (4,2) to (x000);
        \draw [thick,dotted]  (x001) to (x101) to (12,2) to (x211) to (x221) to (x121) to (8,10) to (x011) to (x001);
        \draw [thick,dotted]  (0,12) to (6,10) to (12,8) to (16,10) to (x222) to (x122) to (8,16) to (x012) to (0,12);
        \draw [thick,dotted] (x100) to (x110) to (14,2);
        \draw [thick,dotted] (4,2) to (x110) to (x210);
        \draw [thick,dotted] (x011) to (x111) to (x211);
        \draw [thick,dotted] (x101) to (x111) to (x121);
         \draw [thick,dotted] (6,10) to (x112) to (x122);
        \draw [thick,dotted] (x012) to (x112) to (16,10);
        \draw [thick,dotted] (x000) to (x001) to (0,12);
        \draw [thick,dotted] (x100) to (x101) to (6,10);
        \draw [thick,dotted] (12,-4) to (12,2) to (12,8);
        \draw [thick,dotted] (x210) to (x211) to (16,10);
        \draw [thick,dotted] (20,0) to (x221) to (x222);
        \draw [thick,dotted] (14,2) to (x121) to (x122);
        \draw [thick,dotted] (8,4) to (8,10) to (8,16);
        \draw [thick,dotted] (4,2) to (x011) to (x012);
        \draw [thick,dotted] (x110) to (x111) to (x112);
      \draw [very thick] (x100) to (x111) to (x122);
      \draw [very thick] (x001) to (x111) to (x221);
      \draw [very thick] (x012) to (x111) to (x210);
      \draw [very thick] (x100) to (x210) to (x221) to (x122) to (x012) to (x001) to (x100);
   \end{tikzpicture}
   \caption{Embedding of the triangular lattice and the dual kagome lattice into $\mathbb Z^3$}.
   \label{fig:star6}
\end{center}
    \end{figure}
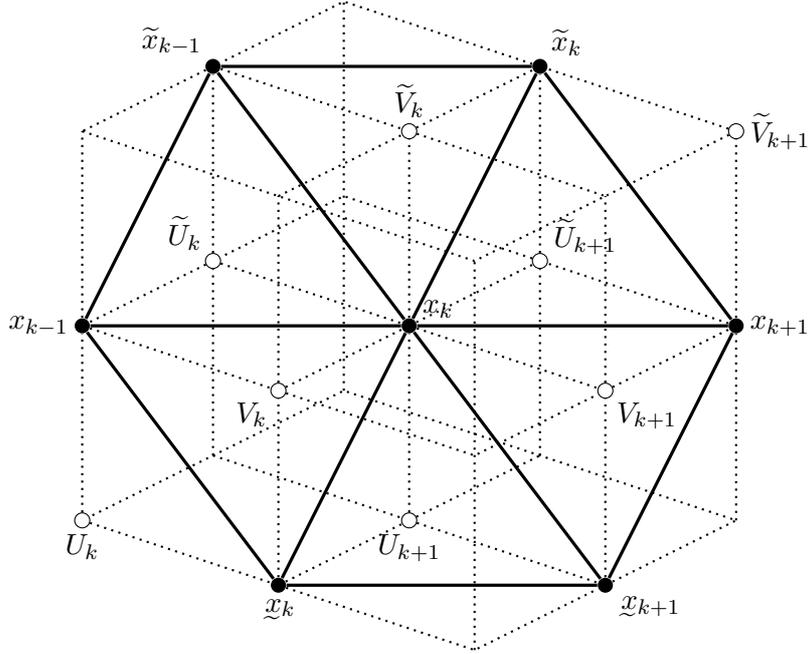

\begin{theorem}\label{Th main}
Each discrete relativistic Toda type system is a restriction to
the triangular lattice $\cT$ of a certain 3D consistent system of
quad-equations on the dual kagome lattice $\cK$ considered as a
quad-surface in $\mathbb Z^3$.
\end{theorem}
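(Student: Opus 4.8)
The plan is to exhibit, for each discrete relativistic Toda type system in the list of Section~\ref{Sect Newtonian rel Toda}, an explicit six-tuple of quad-equations on the faces of an elementary $3$D cube in $\mathbb Z^3$, verify its $3$D consistency, and check that the restriction of the resulting multi-dimensionally consistent system to the plane $i_1+i_2+i_3=0$ (the triangular lattice $\cT$, embedded as in Figure~\ref{fig:star6}) reproduces the given Laplace type equation~\eqref{d rel Toda New gen 2D}. The construction is modular: by Theorem~\ref{Th: Laplace for 3legs}, a system of quad-equations with the legs matching property produces a Laplace type system on the ``black'' graph $\Gamma=\cT$ whose equation is precisely $\sum_{x_k\sim x_0}\phi(x_0,x_k)=0$ assembled from the ``long-leg'' functions $\phi$. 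So the task splits into two halves: (i) realize each $\phi$ appearing in~\eqref{d rel Toda New gen 2D} — namely the three functions $\psi,\phi,\psi_0$ (after the identification~\eqref{ident 2D}, these become the three long-leg contributions along the three edge directions of $\cT$) — as the long-leg part of a genuine multi-affine quad-equation $Q(x_0,y_1,x_1,y_2;\alpha,\beta)=0$ possessing a three-leg form centered at $x_0$; and (ii) check that the three quad-equations sitting around a white vertex of $\cK$ fit together $3$D-consistently.

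First I would set up the precise dictionary between the edge directions of $\cT$ and the coordinate directions of $\mathbb Z^3$, using Figure~\ref{fig:star6}: the six neighbors of a black vertex $x_k$ are $x_{k\pm1}$ (west--east), $\wx_k,\undertilde x_k$ (south--north, i.e. the discrete-time direction), and $\wx_{k-1},\undertilde x_{k+1}$ (the NW--SE diagonal). Around each black vertex the six faces of $\cK$ carry six quad-equations, each involving $x_k$, two adjacent white vertices ($U$- and $V$-type), and one further black vertex; summation over the vertex star with the telescoping of the short legs (Theorem~\ref{Th: Laplace for 3legs}) collapses to~\eqref{d rel Toda New gen 2D}, provided the long legs along the three directions carry exactly $\psi(\,\cdot\,;h)$, $-\psi_0$, and $-\phi$ from the list. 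Thus step (i) reduces to the purely algebraic task: given $\psi,\psi_0,\phi$ from a particular row of the table in Section~\ref{Sect Newtonian rel Toda}, reconstruct a multi-affine $Q$ with three-leg form $\psi(x_0,y_1)-\psi(x_0,y_2)=\phi(x_0,x_1)$ (and the other two relations for the other directions, with matching short legs). For the exponential/rational/hyperbolic cases this amounts to recognizing the relevant $Q$ from the ABS-type list (cf.\ \cite{ABS03}, \cite{BS08}), after an exponential (resp.\ rational, resp.\ trigonometric) change of variables that turns the additive three-leg form into the standard multi-affine one; the parameters $\alpha,\beta$ on the edges of $\cK$ encode the lattice parameter $\alpha$ and the step $h$ of dRTL$_\pm(\alpha,h)$.

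Having produced the quad-equations, I would verify $3$D consistency directly: pick an elementary cube in $\mathbb Z^3$, assign the six faces the three (up to the bar/unbar pairing) quad-equations produced in step (i), compute $x_{12},x_{13},x_{23}$ from the three ``bottom'' faces, then compute $x_{123}$ three ways from $\bar A,\bar B,\bar C$, and confirm the three answers coincide. For equations coming from the ABS list this is a known theorem; for the ones that are new to this context it is a finite multi-affine identity that I would reduce, via the three-leg form, to a telescoping/cancellation statement in the additive variables $\psi,\psi_0,\phi$ — essentially the same cancellation that already underlies the consistency of the continuous and discrete Toda hierarchies and that shows up in the proofs of Theorems~\ref{discrete RTL+ param}, \ref{discrete RTL- param} (compare~\eqref{dRTL+ d+c 1}, \eqref{dRTL- d+c 1}). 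I expect the main obstacle to be step (i): matching the \emph{short} legs across adjacent faces (the legs matching property of Definition~\ref{Def:3leg}b) is a genuine constraint, because the same white vertex is shared by two quads whose long legs point in different directions, so the short-leg functions $\psi(x,y)$ must be literally equal — this is what forces the specific functional forms in the table and is the place where a wrong guess for $Q$ fails. Once the whole list is matched, Theorem~\ref{Th: Laplace for 3legs}a applied to the quad-surface $\{i_1+i_2+i_3=0\}$ yields the claimed restriction, completing the proof.
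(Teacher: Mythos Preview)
Your plan is correct and is essentially the paper's own strategy: construct the quad-equations case by case, exhibit their three-leg forms centered at each black vertex $x_k$, verify the legs matching property so that the short legs telescope around the vertex star (Theorem~\ref{Th: Laplace for 3legs}), and check 3D consistency directly. The paper carries this out explicitly only for the additive exponential case~\eqref{dRTL+ l New}, writing down three quad-equations (I), (II), (III) for the three face types of $\cK$, their six three-leg forms (N), (S), (E), (W), (NW), (SE), and observing that their sum collapses to~\eqref{dRTL+ l New}.

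One caveat on your step (i): the quad-equations that arise here are \emph{not} members of the ABS list in the usual sense. They are asymmetric, non-autonomous equations carrying an extra spectral-type parameter $\lambda$ (see the paper's equations (I)--(III)), and equation (III) in particular is visibly outside the multi-affine symmetric ABS family. So ``recognizing the relevant $Q$ from the ABS-type list'' will not work for the exponential cases; you must construct the quad-equations from scratch, guided by the required long-leg functions $\psi,\psi_0,\phi$ and the legs-matching constraint you correctly flag as the main obstacle. (The ABS connection in the bibliographical remarks applies only to the \emph{elliptic} Toda systems, which come from Q4 and its degenerations.) Likewise, 3D consistency in the paper is simply verified by direct computation on the explicit six-tuple, not by reduction to the cancellations in Theorems~\ref{discrete RTL+ param}--\ref{discrete RTL- param}; your proposed shortcut via those identities may or may not pan out, but the brute-force check always does.
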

\begin{proof}  The corresponding systems of quad-equations are constructed case by case. The quadrilateral
faces of the dual kagome lattice are of three different types (corresponding to three different directions of coordinate planes of $\mathbb Z^3$). We will denote them by type I, II, and III. The systems are specified by giving the quad-equations explicitly for each type of quadrilateral faces separately in notation of Figure \ref{Fig: quads}. One has to: a) check the 3D consistency of the quad-equations, and b) find the three-leg forms, centered at $x_k$, of quad-equations for all six quadrilaterals around $x_k$, and then check that adding these three-leg forms results in the corresponding discrete Toda equation. All this is a matter of direct computations. 

\begin{figure}[htbp]
   \centering
   \subfloat[Quadrilateral (NW) of type III]{\label{Fig: north-west quad}
   \begin{tikzpicture}[scale=0.9,inner sep=1.5]  
    \node (Y) at (0,3) [circle,fill,label=90:$Y{=}\eto{\wx_{k-1}}$] {};
    \node (U) at (2,2) [circle,draw,label=0:$U{=}\wV_k$] {};
    \node (X) at (3,0) [circle,fill,label=-90:$X{=}\eto{x_k}$] {};
    \node (V) at (1,1) [circle,draw,label=180:$V{=}\wU_k$] {};
    \draw [very thick] (X) to (Y);
    \draw [thin] (X) to (U) to (Y) to (V) to (X);
   \end{tikzpicture}
   }\;
   \subfloat[Quadrilateral (N) of type I]{\label{Fig: north quad}
  \begin{tikzpicture}[scale=0.9,inner sep=1.5]  
   \node (Y) at (1,3) [circle,fill,label=90:$Y{=}\eto{\wx_k}$] {};
    \node (U) at (2,1) [circle,draw,label=0:$U{=}\wU_{k+1}$] {};
    \node (X) at (1,0) [circle,fill,label=-90:$X{=}\eto{x_k}$] {};
    \node (V) at (0,2) [circle,draw,label=180:$V{=}\wV_k$] {};
    \draw [very thick] (X) to (Y);
    \draw [thin] (X) to (U) to (Y) to (V) to (X);
        \end{tikzpicture}
    }\;
   \subfloat[Quadrilateral (E) of type II]{\label{Fig: east quad}
  \begin{tikzpicture}[scale=0.9,inner sep=1.5]  
   \node (Y) at (3,1) [circle,fill,label=0:$Y{=}\eto{x_{k+1}}$] {};
    \node (V) at (1,2) [circle,draw,label=90:$V{=}\wU_{k+1}$] {};
    \node (X) at (0,1) [circle,fill,label=180:$X{=}\eto{x_k}$] {};
    \node (U) at (2,0) [circle,draw,label=-90:$U{=}V_{k+1}$] {};
    \draw [very thick] (X) to (Y);
    \draw [thin] (X) to (U) to (Y) to (V) to (X);
        \end{tikzpicture}     
   }  \\
   \subfloat[Quadrilateral (W) of type II]{\label{Fig: west quad}
  \begin{tikzpicture}[scale=0.9,inner sep=1.5]  
   \node (Y) at (3,1) [circle,fill,label=0:$Y{=}\eto{x_k}$] {};
    \node (V) at (1,2) [circle,draw,label=90:$V{=}\wU_k$] {};
    \node (X) at (0,1) [circle,fill,label=180:$X{=}\eto{x_{k-1}}$] {};
    \node (U) at (2,0) [circle,draw,label=-90:$U{=}V_k$] {};
    \draw [very thick] (X) to (Y);
    \draw [thin] (X) to (U) to (Y) to (V) to (X);
        \end{tikzpicture}  
     }\;
   \subfloat[Quadrilateral (S) of type I]{\label{Fig: south quad}
  \begin{tikzpicture}[scale=0.9,inner sep=1.5]  
   \node (Y) at (1,3) [circle,fill,label=90:$Y{=}\eto{x_k}$] {};
    \node (U) at (2,1) [circle,draw,label=0:$U{=}U_{k+1}$] {};
    \node (X) at (1,0) [circle,fill,label=-90:$X{=}\eto{\undertilde{x}_k}$] {};
    \node (V) at (0,2) [circle,draw,label=180:$V{=}V_k$] {};
    \draw [very thick] (X) to (Y);
    \draw [thin] (X) to (U) to (Y) to (V) to (X);
        \end{tikzpicture}
    }  \;
    \subfloat[Quadrilateral (SE) of type III]{\label{Fig: south-east quad}
   \begin{tikzpicture}[scale=0.9,inner sep=1.5]  
    \node (Y) at (0,3) [circle,fill,label=90:$Y{=}\eto{x_k}$] {};
    \node (U) at (2,2) [circle,draw,label=0:$U{=}V_{k+1}$] {};
    \node (X) at (3,0) [circle,fill,label=-90:$X{=}\eto{\undertilde{x}_{k+1}}$] {};
    \node (V) at (1,1) [circle,draw,label=180:$V{=}U_{k+1}$] {};
    \draw [very thick] (X) to (Y);
    \draw [thin] (X) to (U) to (Y) to (V) to (X);
   \end{tikzpicture}
   }
    \caption{Notation for single quadrilaterals of the dual kagome lattice around the vertex $x_k$}
     \label{Fig: quads}
\end{figure}
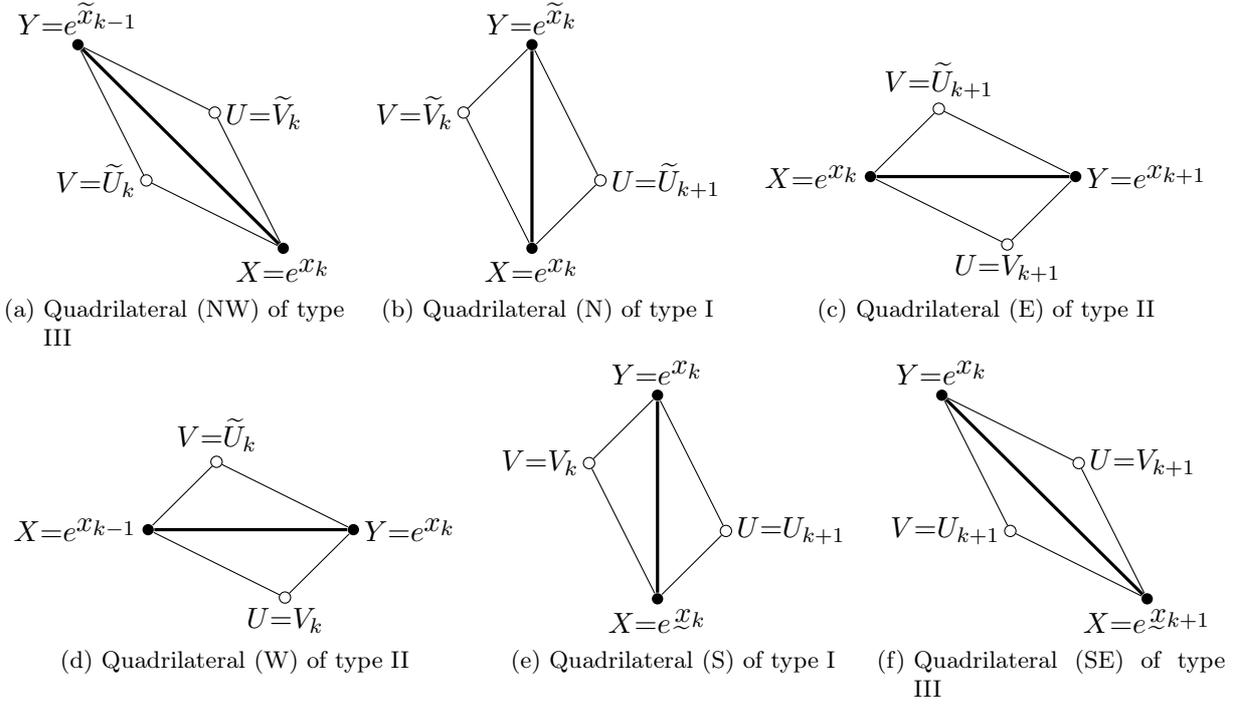

We only give details for the additive exponential relativistic Toda lattice (\ref{dRTL+ l New}).
For this system, we have the following 3D consistent system of quad-equations:
\begin{align}
& h(XY+UV)+YV+h^2XU-(1-h\lambda)XV=0, \tag{I}\\
& \alpha (XY+UV)+XV+\alpha^2YU-(1-\alpha\lambda)XU=0, \tag{II}\\
& (h-\alpha)(XY+UV)+(1-\alpha\lambda)YU-(1-h\lambda)YV  \nonumber\\
& \qquad\qquad +h^2(1-\alpha\lambda)XV-\alpha^2(1-h\lambda)XU=0.
\tag{III}
\end{align}
Three-leg forms of equation (I) centered at $X$ and at $Y$ read:
$$
\frac{Y}{X}+\frac{hU}{X}-\frac{(1-h\lambda)V}{V+hX}=0,\quad \mathrm{resp.}\quad 
\frac{Y}{X}+\frac{hY}{V}-\frac{(1-h\lambda)Y}{Y+hU}=0.
$$
In notation of Figure  \ref{Fig: quads}\subref{Fig: north quad} and \subref{Fig: south quad}, we obtain:
\begin{align}
& \eto{\wx_{k}\nm x_{k}}+
\frac{h\widetilde{U}_{k+1}}{\eto{x_{k}}}
-\frac{(1-h\lambda)\widetilde{V}_k}{\widetilde{V}_{k}+h\eto{x_{k}}}
=0,  \tag{N}\\
& \eto{x_{k}\nm\undertilde{x}_{k}}+\frac{h\eto{x_{k}}}{V_{k}}-
\frac{(1-h\lambda)\eto{x_{k}}}{\eto{x_{k}}+hU_{k+1}}
=0.  \tag{S}
\end{align}
For equation (II) we find the following three-leg forms (in notation of Figure \ref{Fig: quads}\subref{Fig: east quad} and \subref{Fig: west quad}):
\begin{align}
& \alpha\eto{x_{k+1}\nm x_{k}}+\frac{\widetilde{U}_{k+1}}{\eto{x_{k}}}-
\frac{(1-\alpha\lambda)V_{k+1}}{\eto{x_{k}}+\alpha V_{k+1}}
=0,  \tag{E}\\
& \alpha \eto{x_{k}-x_{k-1}}+\frac{\eto{x_{k}}}{V_{k}}
-\frac{(1-\alpha\lambda)\eto{x_{k}}}{\widetilde{U}_{k}+\alpha\eto{x_{k}}}
=0.  \tag{W}
\end{align}
For equation (III) we find the following three-leg forms (in notation of Figure \ref{Fig: quads}\subref{Fig: north-west quad} and \subref{Fig: south-east quad}):
\begin{align}
& \frac{(\alpha-h)\eto{x_{k}\nm \wx_{k-1}}}{1-h\alpha\eto{x_{k}\nm \wx_{k-1}}}
-\frac{(1-\alpha\lambda)\eto{x_{k}}}{\widetilde{U}_{k}+\alpha\eto{x_{k}}}
+\frac{(1-h\lambda)\eto{x_{k}}}{\widetilde{V}_{k}+h\eto{x_{k}}}
=0,   \tag{NW}\\
& \frac{(\alpha-h)\eto{\undertilde{x}_{k+1}\nm x_k}}
{1-h\alpha\eto{\undertilde{x}_{k+1}\nm x_{k}}}-
\frac{(1-\alpha\lambda) V_{k+1}}{\eto{x_{k}}+\alpha V_{k+1}}+
\frac{(1-h\lambda) U_{k+1}}{\eto{x_{k}}+hU_{k+1}}
=0.  \tag{SE}
\end{align}
Adding/subtracting these six equations, we see that all contributions of the short legs (depending on $\lambda$) cancel away, and we are left with the combination of the long legs expressed as (\ref{dRTL+ l New}).
\end{proof}

\paragraph{Zero curvature representations.}
\label{Sect zcr}

The construction of discrete Laplace type systems on graphs from
systems of quad-equations allows one to find, in an algorithmic
way, discrete zero curvature representations for the former.
Indeed, each quad-equation can be viewed as a M\"obius
transformation of the field at one white vertex of a quad into the
field at the other white vertex, with the coefficients dependent
on the fields at the both black vertices. The $PSL(2,\mathbb R)$
matrices representing these M\"obius transformations play then the
role of transition matrices across the edges connecting the black
vertices. The property (\ref{zero curv cond}) is satisfied
automatically, by construction.

Specializing this construction to the case of the regular
triangular lattice, we denote by $L_k$
the transition matrix from $V_k$ to $V_{k+1}$, and by $M_k$
the transition matrix from $V_{k}$ to $\widetilde{V}_{k}$  (see Figure \ref{fig:L3} for notations). Then the discrete zero curvature representation reads:
\begin{equation}\label{eq: dzcr}
  \widetilde{L}_kM_{k}=M_{k+1}L_k,
\end{equation}
both parts representing the transition from $V_k$ to
$\widetilde{V}_{k+1}$ along two different paths. Discrete zero curvature representation depending on a spectral parameter $\lambda$ is one of the central integrability attributes. In particular, it implies that the {\em monodromy matrix}
$$
T_N(x,p,\lambda)=L_N(x,p,\lambda)\cdots L_2(x,p,\lambda)L_1(x,p,\lambda)
$$
remains isospectral under the discrete time evolution (at least in the case of periodic boundary conditions):
$$
\widetilde{T}_N M_1=M_1T_N,
$$
so that its spectral invariants are integrals of motion of the system.

It is clear (see Figure \ref{fig:L3})
that $L_k$ is the product of two matrices, the first corresponding
to the transition from $V_k$ to $U_{k+1}$ across the edge
$[x_k,\undertilde{x}_k]$ (equation (S) on Figure \ref{Fig: quads}\subref{Fig: south quad}), and the second corresponding to the
transition from $U_{k+1}$ to $V_{k+1}$ across the edge
$[x_k,\undertilde{x}_{k+1}]$ (equation (SE) on Figure \ref{Fig: quads}\subref{Fig: south-east quad}). Similarly, $M_k$ can be represented as the product of two
matrices, the first corresponding to the transition from $V_{k}$
to $\widetilde{U}_{k}$ across the edge $[x_k,x_{k-1}]$ (equation (W) on Figure \ref{Fig: quads}\subref{Fig: west quad}), and the
second corresponding to the transition from $\widetilde{U}_{k}$
to $\widetilde{V}_{k}$ across the edge $[x_k,\wx_{k-1}]$ (equation (NW) on Figure \ref{Fig: quads}\subref{Fig: north-west quad}).
The matrices $L_k$, $M_k$ for a given discrete relativistic Toda
type equation can be computed in a straightforward way, as soon as
the generating system of quad-equations mentioned in Theorem \ref{Th
main} is known.

\begin{theorem}
For all discrete relativistic Toda type systems, the transition
matrix $L_k$ is local when expressed in terms of canonically conjugate variables,
$L_k=L(x_k,p_k;\lambda)$, and does not depend on
the time discretization parameter $h$. 
\end{theorem}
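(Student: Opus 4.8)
The plan is to run, for each discrete relativistic Toda type system, the algorithmic construction sketched in the paragraph preceding the theorem and to watch the auxiliary data organize itself. By that construction $L_k=L_k^{(2)}L_k^{(1)}$, where $L_k^{(1)}$ is the $2\times2$ matrix representing the M\"obius transformation $V_k\mapsto U_{k+1}$ defined by equation (S) --- it depends only on $\eto{x_k}$, $\eto{\undertilde{x}_k}$ and the parameters $h,\alpha,\lambda$ --- and $L_k^{(2)}$ represents the M\"obius transformation $U_{k+1}\mapsto V_{k+1}$ defined by equation (SE), depending only on $\eto{x_k}$, $\eto{\undertilde{x}_{k+1}}$ and $h,\alpha,\lambda$. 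A priori $L_k$ therefore depends on the three fields $x_k,\undertilde{x}_k,\undertilde{x}_{k+1}$ sitting on two consecutive time levels, and on $h$; the assertion is that this collapses to a dependence on $x_k$ and the conjugate momentum $p_k$ alone, with $h$ gone.

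The mechanism rests on three elementary facts, to be checked directly from the explicit generating quad-equations of Theorem \ref{Th main}. First, multi-affinity makes each factor affine in the corresponding ``long-leg'' term: writing $c$ for the long-leg term of (S), a function of $x_k-\undertilde{x}_k$ only, and $d$ for that of (SE), a function of $\undertilde{x}_{k+1}-x_k$ only, one has $L_k^{(1)}=A+cB$ and $L_k^{(2)}=C+dD$ with $A,B,C,D$ depending on $\eto{x_k}$ and $h,\alpha,\lambda$ only, so $L_k=CA+c\,CB+d\,DA+cd\,DB$. Second, a short computation gives $DB=0$ (the cross-term dies) and $DA=\mu\,CB$ for a scalar $\mu$ (equal to $-h$ in the additive exponential case), whence $c\,CB+d\,DA=(c+\mu d)\,CB$. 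Third, the calibration of the three-leg forms is exactly such that $c+\mu d$ is an affine function of $\psi(x_k-\undertilde{x}_k)+\phi(\undertilde{x}_{k+1}-x_k)$ --- in the additive exponential case literally $c-hd=1+h\bigl(\psi(x_k-\undertilde{x}_k)+\phi(\undertilde{x}_{k+1}-x_k)\bigr)$ --- and this last quantity is precisely the canonical momentum $p_k$, as one reads off by shifting the Lagrangian formula \eqref{d rel Toda Lagr +} for $\wip_k$ back by one time step. Hence $L_k$ becomes an affine function of $p_k$ whose matrix coefficients depend on $\eto{x_k}$, $h$, $\alpha$, $\lambda$.

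It then remains to carry out this last computation and observe that the result equals a scalar prefactor --- of the shape $h(1-h\lambda)\eto{2x_k}$, hence immaterial for the zero curvature representation and trivial in $PSL(2,\mathbb R)$ --- times a matrix that is manifestly local in $(x_k,p_k)$ and free of $h$. For the additive exponential relativistic Toda lattice \eqref{dRTL+ l New}, whose generating system (I)--(III) and its three-leg forms are written out in the proof of Theorem \ref{Th main}, the outcome is
\[
L_k=L(x_k,p_k;\lambda)=\begin{pmatrix} -(\lambda+p_k) & -\eto{x_k}\\[1mm] (1+\alpha p_k)\,\eto{-x_k} & \alpha \end{pmatrix}.
\]
The same three steps go through verbatim for every item in the list of Theorem \ref{Th main}, with $\psi,\phi,\psi_0$ and the quad-equations replaced by the corresponding ones: the $cd$-term cancels, the linear terms in $c$ and $d$ fuse into $p_k$, and the surviving powers of $h$ are absorbed into the projectively irrelevant scalar prefactor. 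As an independent consistency check, the $L(x_k,p_k;\lambda)$ so obtained coincides, up to conjugation by a diagonal matrix depending on $x_k$, with the elementary factor of the monodromy matrix of the underlying continuous-time relativistic Toda system, which is $h$-independent by its very definition.

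The only real difficulty I anticipate is bookkeeping: performing the verifications $DB=0$, $DA=\mu\,CB$, and the fusion $c+\mu d\propto p_k$ for each of the half-dozen systems, and tracking the scalar normalizations carefully enough that the vanishing of $h$ from the normalized matrix is transparent rather than looking like an accident. The conceptual crux --- which I would emphasize explicitly --- is that the specific combination of long-leg terms surviving in the product $L_k^{(2)}L_k^{(1)}$ is exactly (an affine image of) the canonical momentum $p_k$: the three-leg form of the generating quad-equations is normalized so as to reproduce $p_k$, which is why $L_k$ comes out local and $h$-free, whereas the companion matrix $M_k$, assembled in the same way from equations (W) and (NW), must and does keep its $h$-dependence since it encodes the discrete time step.
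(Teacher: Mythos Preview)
Your argument is correct and arrives at the same matrix as the paper (up to an overall sign, which is immaterial in $PSL(2,\mathbb R)$). The paper's own proof is more terse: it simply eliminates the intermediate white variable $U_{k+1}$ between the three-leg forms (S) and (SE), then recognizes the surviving combination as $p_k$ via the downshifted second equation in \eqref{dRTL+ l}, and reads off the M\"obius matrix directly.

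Your route is essentially the same computation, but organized so as to expose \emph{why} the miracle happens: factoring $L_k^{(1)}=A+cB$, $L_k^{(2)}=C+dD$ and observing $DB=0$, $DA=-h\,CB$ makes transparent that only the single combination $c-hd$ can survive in the product, and you then identify that combination with $1+hp_k$. The paper's elimination does all of this implicitly in one step. What your decomposition buys is a uniform template for the bookkeeping across the list: once one knows to look for $DB=0$ and $DA\propto CB$, the case-by-case checks become mechanical, and the appearance of the scalar prefactor $h(1-h\lambda)\eto{2x_k}$ carrying all the $h$-dependence is no longer a surprise. Conversely, the paper's direct elimination is shorter for any single case and avoids introducing the auxiliary matrices $A,B,C,D$. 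Neither approach escapes the case-by-case verification; both authors are candid about that.
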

\begin{proof} This is obtained via direct
computations on the case-by-case basis. We illustrate the claims of the theorem with the
case (\ref{dRTL+ l New}). For this case, we first eliminate $U_{k+1}$ between equations (S) and (SE). Upon taking into account the second equation of motion in \eqref{dRTL+ l}, downshifted in discrete time, we find:
$$
p_k+\lambda+\frac{\eto{x_k}}{V_k}+\frac{(1-\alpha\lambda)V_{k+1}}{\eto{x_k}+\alpha V_{k+1}}=0.
$$
This can be put as $V_{k+1}=L_k\big[V_k\big]$ with
$$
L_k=L(x_k,p_k;\lambda)=\begin{pmatrix} p_k+\lambda & \eto{x_k} \vspace{2mm}\\ -(1+\alpha p_k)\eto{-x_k} & -\alpha\end{pmatrix}.
$$
Analogously, we eliminate $\wU_k$ between equations (W) and (NW). Taking into account the second equation of motion in \eqref{dRTL+ l} with $k\to k-1$, we find:
$$
h\eto{- \wx_{k-1}}(1+\alpha\wip_{k-1})+\frac{1}{V_k}-\frac{1-h\lambda}{\wV_k+h\eto{x_k}}=0.
$$
This can be put as $\wV_k=M_k\big[V_k\big]$ with
$$
M_k=M(x_k,\wx_{k-1},\wip_{k-1};\lambda)=\begin{pmatrix} 
1-h\lambda-h^2(1+\alpha\wip_{k-1})\eto{x_k\nm \wx_{k-1}} & -h\eto{x_k} \vspace{2mm}\\ h(1+\alpha\wip_{k-1})\eto{-\wx_{k-1}} & 1\end{pmatrix}.
$$
The fact that $L_k$ does not depend on $h$ means that the corresponding
symplectic maps $(x,p)\mapsto(\wx,\widetilde{p})$ belong to the
same integrable hierarchies as their respective continuous time
Hamiltonian counterparts. This confirms once again that these maps
serve as B\"acklund transformations for the respective Hamiltonian
flows, the B\"acklund parameter being the time step $h$.
\end{proof}

\paragraph{Bibliographical remarks.} Our presentation here follows \cite{BolS10}, where one can also find details for other discrete relativistic Toda type systems. For the elliptic Toda systems mentioned in the bibliographical remarks to Sections \ref{sect Toda New}, \ref{Sect Newtonian rel Toda}, see \cite{ASu04}. These discrete Laplace type systems are obtained by the same procedure as discussed in this section from the master integrable equation Q4 of the ABS list (and its hyperbolic and rational degenerations Q3$_{\delta=1}$ and Q2). For these equations, all leg functions (the ``short'' and the ``long'' ones) are given by \eqref{phi elliptic}.


\section{General theory of discrete one-dimensional pluri-Lagrangian systems}
\label{sect: discr 1d results}

Recall that symplectic maps describing the discrete time Toda lattice with different step sizes commute. We now turn to the theory which gives a deep insight into the nature of commuting symplectic maps.
\begin{definition} {\itbf (One-dimensional pluri-Lagrangian problem)} \label{def:pluriLagr problem 1d}
Let $\cL$ be a discrete $1$-form on $\bbZ^{m}$ (a function of directed edges $\sigma$ of $\bbZ^{m}$ with $\cL(-\sigma)=-\cL(\sigma)$), depending on a function $x:\bbZ^{m}\to\mathcal{X}$, where $\mathcal{X}$ is some vector space. It is supposed that $\cL(\sigma)$ depends on the values of the field $x$ at the endpoints of the edge $\sigma$.
\begin{itemize}
\item To an arbitrary discrete curve $\Sigma$ in $\bbZ^{m}$ (a concatenation of a sequence of directed edges in $\mathbb Z^m$ such that the endpoint of any edge is the beginning of the next one), there corresponds the {\itbf action functional}
\[
S_{\Sigma}=\sum_{\sigma\in\Sigma}\cL(\sigma)
\]
(it depends only on the fields at vertices of $\Sigma$).
\item We say that the field $x:V(\Sigma)\to\mathcal{X}$ is a critical point of $S_{\Sigma}$, if at any interior point $n$ of the curve $\Sigma$, we have
\[\frac{\partial S_{\Sigma}}{\partial x(n)}=0.
\]
\item We say that the field $x:\bbZ^{m}\to\mathcal{X}$ solves the {\itbf pluri-Lagrangian problem} for the Lagrangian $1$-form $\cL$ if, for any discrete curve $\Sigma$ in $\bbZ^{m}$, the restriction $\left.x\right|_{V(\Sigma)}$ is a critical point of the corresponding action $S_{\Sigma}$.
\end{itemize}
\end{definition}

We use the following notations: $x$ for $x(n)$ at a generic point $n\in\bbZ^m$, and
\begin{alignat}{3}\label{eq: shifts}
    &x_i=x(n+e_i),&\quad& x_{-i}=x(n-e_i),&\quad& i=1,\ldots,m,
\end{alignat}
where $e_i$ is the unit vector of the $i$\textsuperscript{th} coordinate direction, then we assume that
\[
\cL(\sigma_i)=\cL(n,n+e_i)=\Lambda_i(x,x_i)\quad\Leftrightarrow\quad \cL(-\sigma_i)=\cL(n+e_{i},n)=-\Lambda_i(x,x_{i}).
\]
Here $\Lambda_i:{\mathcal X}\times {\mathcal X}\to\mathbb R$ are local Lagrangians corresponding to the edges of the $i$\textsuperscript{th} coordinate direction.

Any interior point of any discrete curve $\Sigma$ in $\bbZ^m$ is of one of the four types shown on Figure \ref{Fig: corners}.\par
\begin{figure}[htbp]
\centering
\subfloat[]{\label{Fig: corners1}
\begin{tikzpicture}[auto,scale=0.6,>=stealth',inner sep=2]
   \node (x-1) at (-2,0) [circle,fill,label=-135:$x_{-i}$] {};
   \node (x) at (0,0) [circle,fill,label=-90:$x$] {};
   \node (x1) at (2,0) [circle,fill,label=-45:$x_i$] {};
   \draw[->] (x) to (x1);
   \draw[->] (x-1) to (x);
\end{tikzpicture}
}\qquad
\subfloat[]{\label{Fig: corners2}
\begin{tikzpicture}[auto,scale=0.6,>=stealth',inner sep=2pt]
   \node (x) at (0,0) [circle,fill,label=-135:$x$] {};
   \node (x1) at (2,0) [circle,fill,label=-45:$x_i$] {};
   \node (x2) at (0,2) [circle,fill,label=135:$x_j$] {};
   \draw[->] (x) to (x1);
   \draw[->] (x2) to (x);
\end{tikzpicture}
}\qquad
\subfloat[]{\label{Fig: corners3}
\begin{tikzpicture}[auto,scale=0.6,>=stealth',inner sep=2pt]
   \node (x) at (0,0) [circle,fill,label=-135:$x_{-i}$] {};
   \node (x1) at (2,0) [circle,fill,label=-45:$x$] {};
   \node (x12) at (2,2) [circle,fill,label=45:$x_{j}$] {};
   \draw[->] (x) to (x1) ;
   \draw[->] (x1) to (x12);
\end{tikzpicture}
}\qquad
\subfloat[]{\label{Fig: corners4}
\begin{tikzpicture}[auto,scale=0.6,>=stealth',inner sep=2pt]
   \node (x1) at (2,0) [circle,fill,label=-45:$x_{-j}$] {};
   \node (x2) at (0,2) [circle,fill,label=135:$x_{-i}$] {};
   \node (x12) at (2,2) [circle,fill,label=45:$x$] {};
   \draw[<-] (x1) to (x12);
   \draw[->] (x2) to (x12);
\end{tikzpicture}
}
\caption{Four types of vertices of a discrete curve. Case \protect\subref{Fig: corners1}: two edges of one coordinate direction meet at $n$. Case \protect\subref{Fig: corners2}: a negatively directed edge followed by a positively directed edge. Case \protect\subref{Fig: corners3}: two equally (positively or negatively) directed edges of two different coordinate directions meet at $n$. Case \protect\subref{Fig: corners4}: a positively directed edge followed by a negatively directed edge. Values of the field $x$ at the points of the curve are indicated.}
\label{Fig: corners}
\end{figure}
The pieces of discrete curves as on Figures~\ref{Fig: corners}\subref{Fig: corners2}, \subref{Fig: corners3}, and \subref{Fig: corners4} will be called \emph{2D corners}. Observe that a straight piece of a discrete curve, as on Figure \ref{Fig: corners}\subref{Fig: corners1}, is a sum of 2D corners, as on Figures~\ref{Fig: corners}\subref{Fig: corners2} and \subref{Fig: corners3}. The whole variety of Euler-Lagrange equations for a pluri-Lagrangian system with $d=1$ reduces to the following three types of \emph{2D corner equations}:
\begin{align}
\label{eq: E 0}
&\frac{\partial\Lambda_i(x,x_i)}{\partial x}-
\frac{\partial\Lambda_j(x,x_j)}{\partial x}=0,
\\
\label{eq: E1 0}
&\frac{\partial\Lambda_i(x_{-i},x)}{\partial x}+
\frac{\partial\Lambda_j(x,x_j)}{\partial x}=0,
\\
\label{eq: E12 0}
&\frac{\partial\Lambda_i(x_{-i},x)}{\partial x}-
\frac{\partial\Lambda_j(x_{-j},x)}{\partial x} = 0.
\end{align}
In particular, the standard single-time discrete Euler-Lagrange equation,
\begin{equation*}\label{eq: dEL}
\frac{\partial\Lambda_i(x_{-i},x)}{\partial x}+\frac{\partial\Lambda_i(x,x_i)}{\partial x}=0,
\end{equation*}
corresponding to a straight piece of a discrete curve as on Figure~\ref{Fig: corners}\subref{Fig: corners1} is a consequence of equations \eqref{eq: E 0} and \eqref{eq: E1 0}, corresponding to 2D corners as on on Figures~\ref{Fig: corners}\subref{Fig: corners2} and \subref{Fig: corners3}.

To discuss \emph{consistency} of the system of 2D corner equations, it will be more convenient to re-write them with appropriate shifts, as
\begin{align}
\label{eq: E}\tag{$E$}
&\frac{\partial\Lambda_i(x,x_i)}{\partial x}-
\frac{\partial\Lambda_j(x,x_j)}{\partial x}=0,
\\
\label{eq: Ei}\tag{$E_i$}
&\frac{\partial\Lambda_i(x,x_i)}{\partial x_i}+
\frac{\partial\Lambda_j(x_i,x_{ij})}{\partial x_i}=0,
\\
\label{eq: Ej}\tag{$E_j$}
&\frac{\partial\Lambda_j(x,x_j)}{\partial x_j}+
\frac{\partial\Lambda_i(x_j,x_{ij})}{\partial x_j}=0,
\\
\label{eq: Eij}\tag{$E_{ij}$}
&\frac{\partial\Lambda_i(x_j,x_{ij})}{\partial x_{ij}}-
\frac{\partial\Lambda_j(x_i,x_{ij})}{\partial x_{ij}} = 0.
\end{align}
In this form, 2D corner equations \eqref{eq: E}--\eqref{eq: Eij} correspond to the four vertices of an elementary square $\sigma_{ij}$ of the lattice, as on Figure~\ref{Fig: consistency}\subref{Fig: consistency1}.
Consistency of the system of 2D corner equations \eqref{eq: E}--\eqref{eq: Eij} should be understood as follows: start with the fields $x$, $x_i$, $x_j$ satisfying equation \eqref{eq: E}. Then each of equations \eqref{eq: Ei}, \eqref{eq: Ej} can be solved for $x_{ij}$. Thus, we obtain two alternative values for the latter field. Consistency takes place if these values coincide identically (with respect to the initial data), and, moreover, if the resulting field $x_{ij}$ satisfies equation \eqref{eq: Eij}. In other words:
\begin{definition}\label{def: 2D corner eqs consist}
The system of 2D corner equations \eqref{eq: E}--\eqref{eq: Eij} is called \emph{consistent}, if it has the minimal possible rank 2, i.e., if exactly two of these four equations are independent.
\end{definition}

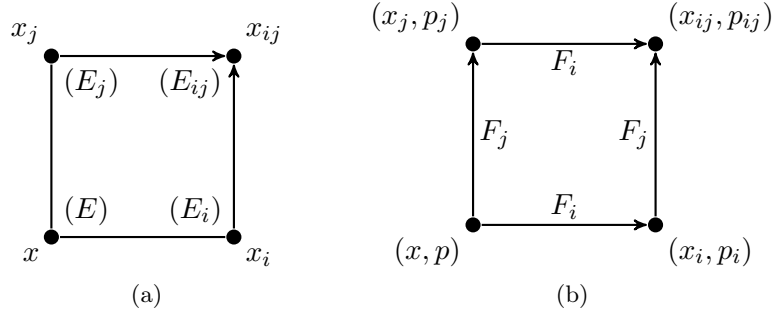
\begin{figure}[tbp]
\centering
\subfloat[]{\label{Fig: consistency1}
\begin{tikzpicture}[auto,scale=1.2,>=stealth',inner sep=2]
   \node (x) at (0,0) [circle,fill,thick,label=-135:$x$,label=45:$(E)$] {};
   \node (x1) at (2,0) [circle,fill,thick,label=135:$(E_{i})$,label=-45:$x_i$] {};
   \node (x2) at (0,2) [circle,fill,thick,label=-45:$(E_{j})$,label=135:$x_j$] {};
   \node (x12) at (2,2) [circle,fill,thick,label=45:$x_{ij}$,label=-135:$(E_{ij})$] {};
   \draw [thick,->] (x) to (x1) to (x12);
   \draw [thick,->] (x) to (x2) to (x12);
\end{tikzpicture}
}\qquad
\subfloat[]{\label{Fig: consistency2}
\begin{tikzpicture}[auto,scale=1.2,>=stealth',inner sep=2]
   \node (x) at (0,0) [circle,fill,thick,{label=-135:$(x,p)$}] {};
   \node (x1) at (2,0) [circle,fill,thick,{label=-45:$(x_i,p_i)$}] {};
   \node (x2) at (0,2) [circle,fill,thick,{label=135:$(x_j,p_j)$}] {};
   \node (x12) at (2,2) [circle,fill,thick,{label=45:$(x_{ij},p_{ij})$}] {};
   \draw [thick,->] (x) to node {$F_{i}$} (x1);
   \draw [thick,->] (x) to node [swap] {$F_{j}$} (x2);
   \draw [thick,->] (x2) to node [swap]{$F_{i}$} (x12);
   \draw [thick,->] (x1) to node {$F_{j}$} (x12);
\end{tikzpicture}
}
\caption{Consistency of 2D corner equations: \protect\subref{Fig: consistency1} Start with data $x$, $x_i$, $x_j$ related by 2D~corner equation \eqref{eq: E}; solve 2D~corner equations \eqref{eq: Ei} and \eqref{eq: Ej} for $x_{ij}$; consistency means that the two values of $x_{ij}$ coincide identically and satisfy 2D~corner equation \eqref{eq: Eij}. \protect\subref{Fig: consistency2} Maps $F_{i}$ and $F_{j}$ commute.}
\label{Fig: consistency}
\end{figure}

Observe that 2D corner equations \eqref{eq: E}--\eqref{eq: Eij} can be put as
\begin{alignat}{4}\label{eq: 2D corner eqs}
&\frac{\partial S^{ij}}{\partial x}=0,&\quad
&\frac{\partial S^{ij}}{\partial x_i}=0,&\quad
&\frac{\partial S^{ij}}{\partial x_j}=0,&\quad
&\frac{\partial S^{ij}}{\partial x_{ij}}=0,
\end{alignat}
where $S^{ij}$ is the action along the boundary of an oriented elementary square $\sigma_{ij}$ (this action can be identified with the discrete exterior derivative $d\cL$ evaluated at $\sigma_{ij}$),
\[
S^{ij}=d\cL(\sigma_{ij})=\Delta_i\cL(\sigma_j)-\Delta_j\cL(\sigma_i)
=\Lambda_i(x,x_i)+\Lambda_j(x_i,x_{ij})-\Lambda_i(x_j,x_{ij})-\Lambda_j(x,x_j).
\]
Here and in what follows, $\Delta_i=T_i-I$ is the difference operator, $T_i$ being the shift operator in the $i^{\mathrm th}$ coordinate, so that, e.g., $T_i x=T_i x(n)=x(n+e_i)=x_i$ and $T_i x_j=T_i x(n+e_j)=x(n+e_i+e_j)=x_{ij}$.

The main feature of our definition is that the ``almost closedness'' of the 1-form $\cL$ on solutions of the system of 2D~corner equations is, so to say, built-in from the outset.

\begin{theorem}\label{th: discr almost closed}
If the system of 2D~corner equations \eqref{eq: corner eqs} is consistent, then, for any pair of the coordinate directions $i,j$, the action $S^{ij}$ over the boundary of an elementary square of these coordinate directions is constant on solutions:
\[
S^{ij}(x,x_i,x_{ij},x_j)=\ell^{ij}=\mathrm{const}  \pmod{\partial S^{ij}/\partial x=0,\ldots,
\partial S^{ij}/\partial x_{ij}=0}.
\]
\end{theorem}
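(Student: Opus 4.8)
The plan is to observe that the four 2D corner equations say nothing more than that all four partial gradients of the single function
\[
S^{ij}=S^{ij}(x,x_i,x_j,x_{ij})=\Lambda_i(x,x_i)+\Lambda_j(x_i,x_{ij})-\Lambda_i(x_j,x_{ij})-\Lambda_j(x,x_j),
\]
regarded as a function of the four corner fields $x,x_i,x_j,x_{ij}\in\mathcal X$, vanish. First I would record this identification explicitly: differentiating $S^{ij}$ with respect to $x$, $x_i$, $x_j$, $x_{ij}$ reproduces, respectively, the left-hand sides of \eqref{eq: E}, \eqref{eq: Ei}, and (up to an overall minus sign) \eqref{eq: Ej}, \eqref{eq: Eij}; this is precisely the content of \eqref{eq: 2D corner eqs}. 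Hence the whole system of 2D corner equations for the elementary square $\sigma_{ij}$ is the single vector equation asserting that the total differential $dS^{ij}=\sum_{a}(\partial S^{ij}/\partial a)\,da$, with $a$ ranging over $x,x_i,x_j,x_{ij}$, vanishes.

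Next I would invoke consistency. By Definition \ref{def: 2D corner eqs consist}, among \eqref{eq: E}, \eqref{eq: Ei}, \eqref{eq: Ej}, \eqref{eq: Eij} exactly two are independent; as explained before that definition, one may impose \eqref{eq: E} together with \eqref{eq: Ei}, solve \eqref{eq: Ei} for $x_{ij}$ in terms of $(x,x_i,x_j)$ constrained by \eqref{eq: E}, and then \eqref{eq: Ej} and \eqref{eq: Eij} hold automatically. Let $\mathcal S_{ij}\subset\mathcal X^4$ be this solution set. The key point is that on $\mathcal S_{ij}$ \emph{all four} corner equations hold, so by the identification above every partial derivative of $S^{ij}$ vanishes identically on $\mathcal S_{ij}$; therefore $dS^{ij}$ restricted to $\mathcal S_{ij}$ is zero, and $S^{ij}$ is locally constant on $\mathcal S_{ij}$, equal to a constant $\ell^{ij}$ on each connected component — which is exactly the claimed congruence $S^{ij}=\ell^{ij}\pmod{\partial S^{ij}/\partial x=0,\ldots,\partial S^{ij}/\partial x_{ij}=0}$. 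If one prefers to avoid any manifold language, one parametrizes $\mathcal S_{ij}$ by the free data (say $x,x_i,x_j$ subject to \eqref{eq: E}, with $x_{ij}$ the resulting function of them determined by \eqref{eq: Ei}), differentiates $S^{ij}$ along each free parameter, and applies the chain rule: each such derivative is a linear combination of the left-hand sides of \eqref{eq: E}--\eqref{eq: Eij}, all of which vanish on $\mathcal S_{ij}$.

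The only step where the hypothesis is essential — and hence the place to be careful — is the claim that consistency in the sense of rank $2$ forces the two corner equations \emph{not} used to cut out $\mathcal S_{ij}$ to be satisfied on it; without this, only two of the four partials of $S^{ij}$ would be known to vanish on the solution set, and $S^{ij}$ would in general not be constant there. Everything else is bookkeeping: verifying the four partial derivatives of $S^{ij}$ against \eqref{eq: E}--\eqref{eq: Eij}, the chain-rule computation, and the harmless passage from ``locally constant'' to ``constant'' appropriate to the local statement intended here.
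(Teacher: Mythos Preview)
Your proof is correct and is precisely the argument the paper has in mind. In fact the paper does not write out a proof of this theorem at all: it remarks just before the statement that ``the `almost closedness' of the 1-form $\cL$ on solutions of the system of 2D~corner equations is, so to say, built-in from the outset,'' and the preceding display \eqref{eq: 2D corner eqs} already records that the four corner equations are literally $\partial S^{ij}/\partial x=0,\ldots,\partial S^{ij}/\partial x_{ij}=0$. Your write-up simply makes explicit the one-line reason this suffices---on the solution set all four partials of $S^{ij}$ vanish, hence $dS^{ij}=0$ there---and correctly flags that consistency (rank~2) is exactly what guarantees the two ``redundant'' corner equations hold on the solution set cut out by the other two.
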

In particular, if all these constants $\ell^{ij}$ vanish, then the discrete 1-form $\cL$ is closed on solutions of the Euler-Lagrange equations, so that the critical value of the action functional $S_\Sigma$ does not depend on the choice of the curve $\Sigma$ connecting two given points in $\bbZ^m$.
\medskip

Consistency of the system of 2D corner equations \eqref{eq: E 0}--\eqref{eq: E12 0} is equivalent to existence of a function $p:\bbZ^m\to {\mathcal X}$ satisfying all the relations
\begin{alignat}{2}
    p & = \frac{\partial\Lambda_i(x,x_i)}{\partial x},&\qquad& i=1,\ldots,m,
    \label{eq: discr p 1}\\
    p  & = -\frac{\partial\Lambda_i(x_{-i},x)}{\partial x},&\qquad& i=1,\ldots,m.
     \label{eq: discr p 2}
\end{alignat}
Suppose that all the equations (\ref{eq: discr p 1}) can be solved for $x_i$ in terms of $x,p$, so that equations
\begin{alignat}{2}\label{eq: single Lagr map}
    &p=\frac{\partial\Lambda_i(x,x_i)}{\partial x},&\qquad
    &p_i=-\frac{\partial\Lambda_i(x,x_i)}{\partial x_i},
\end{alignat}
define symplectic maps $F_i:(x,p)\mapsto(x_i,p_i)$.
\begin{theorem}\label{thm 1d commute}
For a consistent one-dimensional pluri-Lagrangian system, maps $F_i$ commute:
\begin{equation}\label{eq: commute}
F_i\circ F_j=F_j\circ F_i,
\end{equation}
see Figure~\ref{Fig: consistency}\subref{Fig: consistency2}). Conversely, for a given system of $m$ commuting symplectic maps
$F_i$ admitting Lagrangians (generating functions) $\Lambda_i$, the 1-form $\cL$ on $\bbZ^m$ defined by $\cL(\sigma_i)=\Lambda_i(x,x_i)$, generates a consistent one-dimensional pluri-Lagrangian system.
\end{theorem}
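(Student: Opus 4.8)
The plan is to prove both directions by reducing everything to the four 2D corner equations attached to the elementary square $\sigma_{ij}$, and to the function $p$ introduced in \eqref{eq: discr p 1}--\eqref{eq: discr p 2}. First, for the forward direction, suppose the pluri-Lagrangian system is consistent and let $F_i:(x,p)\mapsto(x_i,p_i)$ be the symplectic map defined by \eqref{eq: single Lagr map}; here $p$ is the momentum attached to the vertex $n$ via \eqref{eq: discr p 1}, and $p_i$ the momentum at $n+e_i$. I would start from a point with data $(x,p)$, apply $F_i$ to obtain $(x_i,p_i)$, then apply $F_j$ to obtain a value $(x_{ij},p_{ij})$; and symmetrically apply $F_j$ then $F_i$. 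The claim that the two resulting values of $x_{ij}$ coincide is exactly the consistency hypothesis: equation \eqref{eq: E} holds for $(x,x_i,x_j)$ since by \eqref{eq: discr p 1} both $\partial\Lambda_i(x,x_i)/\partial x$ and $\partial\Lambda_j(x,x_j)/\partial x$ equal $p$; then \eqref{eq: Ei} is precisely the defining relation $p_i=-\partial\Lambda_i(x,x_i)/\partial x_i$ combined with $p_i=\partial\Lambda_j(x_i,x_{ij})/\partial x_i$, which determines $x_{ij}$ from the $F_i\!\to\!F_j$ route, and \eqref{eq: Ej} determines $x_{ij}$ from the $F_j\!\to\!F_i$ route; consistency says these agree. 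It then remains to check that the two momenta at the corner $x_{ij}$ also agree: each is $-\partial\Lambda_\bullet(\cdot,x_{ij})/\partial x_{ij}$ for the appropriate last edge, and equation \eqref{eq: Eij} says exactly $\partial\Lambda_i(x_j,x_{ij})/\partial x_{ij}=\partial\Lambda_j(x_i,x_{ij})/\partial x_{ij}$, which is the equality of the two momenta $p_{ij}$. Hence $F_i\circ F_j=F_j\circ F_i$ as maps of $T^*\mathcal X$.

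For the converse, assume $m$ symplectic maps $F_i$ commute and each $F_i$ has a generating function $\Lambda_i$ in the sense of \eqref{eq: single Lagr map}. Define $\cL$ on $\bbZ^m$ by $\cL(\sigma_i)=\Lambda_i(x,x_i)$. I would then simply reverse the computation above: the system of 2D corner equations \eqref{eq: E}--\eqref{eq: Eij} attached to $\sigma_{ij}$ must be shown to have rank exactly $2$. Equations \eqref{eq: E} and \eqref{eq: Ei} together are equivalent to ``$x_i$ and $x_j$ are the first components of $F_i(x,p)$ and $F_j(x,p)$, and $x_{ij}$ is the first component of $F_j(F_i(x,p))$'' — these are two independent relations (they determine $x_i,x_j,x_{ij}$ from $x$ and the auxiliary $p$). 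The remaining equations \eqref{eq: Ej} and \eqref{eq: Eij} assert respectively that $x_{ij}$ is also the first component of $F_i(F_j(x,p))$ and that the momenta produced by the two routes coincide — and both of these are consequences of $F_i\circ F_j=F_j\circ F_i$. So only two of the four equations are independent, which is Definition \ref{def: 2D corner eqs consist}; thus $\cL$ generates a consistent pluri-Lagrangian system.

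Throughout, the translation between ``the generating-function relations \eqref{eq: single Lagr map}'' and ``the corner equations with appropriate shifts'' is bookkeeping with the shift operators $T_i$, so I would set up the dictionary once — $p$ at vertex $n$, $p_i=T_ip$ at vertex $n+e_i$, and the identifications \eqref{eq: discr p 1}, \eqref{eq: discr p 2} — and then quote it in both directions. I expect the main (though still modest) obstacle to be the momentum-matching part: showing not merely that the two routes give the same $x_{ij}$ but that they give the same $p_{ij}$, and identifying that statement cleanly with \eqref{eq: Eij}. One subtlety worth spelling out is that a priori $F_j(F_i(x,p))$ and $F_i(F_j(x,p))$ each carry their own ``fourth'' momentum relation, and one must use that $x_{ij}$ is already forced to be common (from \eqref{eq: E}, \eqref{eq: Ei}, \eqref{eq: Ej}) before the two expressions $-\partial\Lambda_i(x_j,x_{ij})/\partial x_{ij}$ and $-\partial\Lambda_j(x_i,x_{ij})/\partial x_{ij}$ can even be compared; their equality is then both the content of \eqref{eq: Eij} and the last coordinate of the commutativity $F_i\circ F_j=F_j\circ F_i$. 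The symplecticity of the $F_i$ is used only to guarantee that the generating-function description \eqref{eq: single Lagr map} is legitimate (i.e.\ that $(x,p)\mapsto(x_i,p_i)$ is well defined and invertible, given the non-degeneracy $\det(\partial^2\Lambda_i/\partial x\,\partial x_i)\neq 0$), and does not otherwise enter the combinatorial heart of the argument.
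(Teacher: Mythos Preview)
Your proposal is correct and follows precisely the approach the paper outlines. The paper does not give an explicit proof of this theorem; it states just before the theorem that consistency of the 2D corner equations is equivalent to the existence of the momentum function $p:\bbZ^m\to\mathcal X$ satisfying \eqref{eq: discr p 1}--\eqref{eq: discr p 2}, and then introduces the maps $F_i$ via \eqref{eq: single Lagr map}. Your argument is exactly the unpacking of this equivalence: you set up the dictionary between the four corner equations \eqref{eq: E}--\eqref{eq: Eij} and the four generating-function relations defining $F_j\circ F_i$ and $F_i\circ F_j$, and observe that consistency (rank two) on one side is commutativity on the other. The point you flag as the ``main obstacle'' --- matching the momenta $p_{ij}$ via \eqref{eq: Eij} --- is indeed the only step beyond bookkeeping, and you handle it correctly.
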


We are mainly interested in one-parameter families of commuting symplectic maps $F_i:(x,p)\mapsto(x_i,p_i)$, depending on the parameter $\lambda$, and admitting a generating function $\Lambda(x,x_i;\lambda)$. To avoid double indices, we will denote its action by a tilde:
\begin{equation}\label{eq: BT1}
F_i:\
p=-\frac{\partial \Lambda(x,\wx;\lambda)}{\partial x},\quad
\wip=\frac{\partial \Lambda(x,\wx;\lambda)}{\partial \wx}.
\end{equation}
When considering a second such map, say $F_j$, corresponding to another parameter value $\mu$, we will denote its action by a hat:
\begin{equation}\label{eq: BT2}
F_j:\
p=-\frac{\partial \Lambda(x,\whx;\mu)}{\partial x},\quad
\whp=\frac{\partial \Lambda(x,\whx;\mu)}{\partial \whx}.
\end{equation}
We assume that \eqref{eq: commute} is satisfied for any two parameter values $\lambda$, $\mu$.
Corner equations \eqref{eq: E}--\eqref{eq: Eij} in these new notations take the form 
\begin{align}
\frac{\partial\Lambda(x,\wx;\lambda)}{\partial x}-\frac{\partial\Lambda(x,\whx;\mu)}{\partial x}= & \ 0,
\label{eq: E tilde}\tag{$E$}\\
\frac{\partial\Lambda(x,\wx;\lambda)}{\partial \wx}+\frac{\partial\Lambda(\wx,\widehat{\wx};\mu)}{\partial \wx}= & \ 0,
\label{eq: E1 tilde}\tag{$E_i$}\\
\frac{\partial\Lambda(x,\whx;\mu)}{\partial \whx}+\frac{\partial\Lambda(\whx,\widehat{\wx};\lambda)}{\partial \whx}= & \ 0,
\label{eq: E2 tilde}\tag{$E_j$}\\
\frac{\partial\Lambda(\whx,\widehat{\wx};\lambda)}{\partial \widehat{\wx}}-
\frac{\partial\Lambda(\wx,\widehat{\wx};\mu)}{\partial \widehat{\wx}} = & \ 0.
\label{eq: E12 tilde}\tag{$E_{ij}$}
\end{align}

\begin{theorem}\label{thm spectrality}
For a consistent system of corner equations  \eqref{eq: E tilde}--\eqref{eq: E12 tilde}, the discrete multi-time Lagrangian 1-form is closed on solutions,  that is, $\ell(\lambda,\mu)=0$, if and only if $\partial \Lambda(x,\wx;\lambda)/\partial \lambda$ is a common integral of motion for all $F_j$. 
\end{theorem}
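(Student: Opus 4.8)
The plan is to compute directly the value $\ell(\lambda,\mu)=S^{ij}$ of the action over an elementary square, using the definition
\[
S^{ij}=\Lambda(x,\wx;\lambda)+\Lambda(\wx,\widehat{\wx};\mu)-\Lambda(x,\whx;\mu)-\Lambda(\whx,\widehat{\wx};\lambda),
\]
evaluated on a solution of the corner equations \eqref{eq: E tilde}--\eqref{eq: E12 tilde}, and to differentiate it with respect to $\lambda$. By Theorem~\ref{th: discr almost closed}, $S^{ij}$ is already constant (as a function of the initial data $x$) on solutions, so $\ell(\lambda,\mu)$ is a well-defined function of the two parameters alone; closedness of $\cL$ means $\ell(\lambda,\mu)\equiv0$. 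First I would argue that it suffices to show $\partial \ell/\partial\lambda\equiv 0$ together with the normalization $\ell(\lambda,\lambda)=0$ (the latter being immediate: for $\mu=\lambda$ the square degenerates, $\wx=\whx$ on solutions since $F_i=F_j$, and the four Lagrangian terms cancel pairwise). Then by symmetry $\partial\ell/\partial\mu\equiv0$ as well, forcing $\ell\equiv0$.

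The key computation is the $\lambda$-derivative of $S^{ij}$ along solutions. Since $x$ may be held fixed and the remaining fields $\wx,\whx,\widehat{\wx}$ depend on $\lambda$ through the maps $F_i,F_j$, I would apply the chain rule and collect terms by the field being differentiated. The coefficient of $\partial\wx/\partial\lambda$ is
\[
\frac{\partial\Lambda(x,\wx;\lambda)}{\partial\wx}+\frac{\partial\Lambda(\wx,\widehat{\wx};\mu)}{\partial\wx},
\]
which vanishes by \eqref{eq: E1 tilde}; the coefficient of $\partial\whx/\partial\lambda$ is
\[
-\frac{\partial\Lambda(x,\whx;\mu)}{\partial\whx}-\frac{\partial\Lambda(\whx,\widehat{\wx};\lambda)}{\partial\whx},
\]
which vanishes by \eqref{eq: E2 tilde}; and the coefficient of $\partial\widehat{\wx}/\partial\lambda$ is
\[
\frac{\partial\Lambda(\wx,\widehat{\wx};\mu)}{\partial\widehat{\wx}}-\frac{\partial\Lambda(\whx,\widehat{\wx};\lambda)}{\partial\widehat{\wx}},
\]
which vanishes by \eqref{eq: E12 tilde}. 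Hence along solutions
\[
\frac{\partial\ell}{\partial\lambda}=\frac{\partial\Lambda(x,\wx;\lambda)}{\partial\lambda}-\frac{\partial\Lambda(\whx,\widehat{\wx};\lambda)}{\partial\lambda},
\]
i.e.\ $\partial\ell/\partial\lambda=(I-T_j)\big(\partial\Lambda(x,\wx;\lambda)/\partial\lambda\big)$, where $T_j=F_j^\ast$ is the shift in the $j$-th direction. This is precisely the statement that $\partial\ell/\partial\lambda\equiv0$ is equivalent to $\partial\Lambda(x,\wx;\lambda)/\partial\lambda$ being invariant under $F_j$, and by symmetry in $i\leftrightarrow j$ (equivalently in $\lambda$, which enters symmetrically once we also run the argument on the $\mu$-derivative), invariant under all the maps $F_j$ of the family. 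Combining this with $\ell(\lambda,\lambda)=0$ gives the desired equivalence: $\ell(\lambda,\mu)\equiv0$ iff $\partial\Lambda(x,\wx;\lambda)/\partial\lambda$ is a common integral of motion.

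The step I expect to require the most care is the bookkeeping in the chain-rule computation above — specifically, making sure that \emph{all} terms not explicitly killed by a corner equation really do cancel, and that the "free" variable $x$ contributes nothing (it does not, since $S^{ij}$ is $x$-independent on solutions by Theorem~\ref{th: discr almost closed}, but one should check the $x$-coefficient directly as a consistency test: it is $\partial\Lambda(x,\wx;\lambda)/\partial x-\partial\Lambda(x,\whx;\mu)/\partial x$, which is \eqref{eq: E tilde}). A secondary subtlety is the logical structure of "common integral of motion for all $F_j$": one obtains $F_j$-invariance from the $\lambda$-derivative and would like $F_i$-invariance too; this follows either by re-running the computation differentiating in $\mu$ and using $\partial\ell/\partial\mu=(I-T_i)\big(\partial\Lambda(x,\whx;\mu)/\partial\mu\big)$ together with the fact that $\Lambda$ is a single function evaluated at two parameter values, so invariance of $\partial\Lambda/\partial\lambda$ under $F_j$ for all $\lambda,\mu$ already encodes invariance under every member of the commuting family. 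I would phrase the final step carefully to cover both directions of the "if and only if".
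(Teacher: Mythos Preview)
Your proposal is correct and follows essentially the same route as the paper: invoke Theorem~\ref{th: discr almost closed} to get that $S^{ij}=\ell(\lambda,\mu)$ is constant on solutions, differentiate in $\lambda$, and use the corner equations \eqref{eq: E1 tilde}, \eqref{eq: E2 tilde}, \eqref{eq: E12 tilde} to kill the chain-rule contributions from $\partial\wx/\partial\lambda$, $\partial\whx/\partial\lambda$, $\partial\widehat{\wx}/\partial\lambda$, leaving
\[
\frac{\partial\ell}{\partial\lambda}=\frac{\partial\Lambda(x,\wx;\lambda)}{\partial\lambda}-\frac{\partial\Lambda(\whx,\widehat{\wx};\lambda)}{\partial\lambda}.
\]

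The one place where you diverge slightly from the paper is in deducing $\ell\equiv0$ from $\partial\ell/\partial\lambda=0$. You argue via the diagonal normalization $\ell(\lambda,\lambda)=0$, justified by $F_i=F_j$ when $\mu=\lambda$ so that $\wx=\whx$. The paper instead observes directly that $\ell$ is skew-symmetric, $\ell(\mu,\lambda)=-\ell(\lambda,\mu)$, which is immediate from the definition of $S^{ij}$ (swapping $\lambda\leftrightarrow\mu$, $\wx\leftrightarrow\whx$ negates the expression). Skew-symmetry is a cleaner and more robust route: it gives $\ell(\lambda,\lambda)=0$ for free and avoids any subtlety about branches of $\wx,\whx$ coinciding in the multi-valued (periodic) case. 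Your final paragraph worrying about ``$F_i$-invariance too'' is unnecessary: the statement ``common integral for all $F_j$'' simply means for all values of the parameter $\mu$, and that is exactly what the displayed identity gives.
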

\begin{proof} According to Theorem \ref{th: discr almost closed}, 
\begin{equation}\label{eq: BT closure}
    \Lambda(x,\wx;\lambda)+\Lambda(\wx,\widehat{\wx};\mu)-
    \Lambda(x,\whx;\mu)-\Lambda(\whx,\widehat{\wx};\lambda)=\ell(\lambda,\mu)
\end{equation}
is constant on solutions of corner equations  \eqref{eq: E tilde}--\eqref{eq: E12 tilde}. This constant is obviously skew-symmetric: $\ell(\mu,\lambda)=-\ell(\lambda,\mu)$. Then $\ell(\lambda,\mu)=0$ is equivalent to $\partial \ell/\partial\lambda=0$. Differentiating equation \eqref{eq: BT closure} with respect to $\lambda$ and taking into account that the terms containing $\partial \wx/\partial\lambda$ etc., appearing due to the chain rule, vanish by virtue of corner equations  \eqref{eq: E tilde}--\eqref{eq: E12 tilde}, we arrive at
$$
\frac{\partial  \Lambda(x,\wx;\lambda)}{\partial\lambda}-\frac{\partial \Lambda(\whx,\widehat{\wx};\lambda)}{\partial\lambda}\, = \,  0.
$$
This is equivalent to $\partial \Lambda(x,\wx;\lambda)/\partial \lambda$ being an integral of motion for $F_j$. 
\end{proof}

\paragraph{Bibliographical remarks.} 
The first example of the pluri-Lagrangian structure for discrete 1D systems was given in \cite{YLN11}. A general theory was developed in \cite{Su13}, on which this section is based. The result of Theorem \ref{thm spectrality}  is a re-formulation of the mysterious ``spectrality property'' of B\"acklund transformations discovered by Kuznetsov and Sklyanin \cite{KS}. Spectrality was originally understood as the property of $\partial \Lambda(x,\wx;\lambda)/\partial \lambda$ to be a spectral invariant of the Lax matrix for the system at hand. Our re-formulation avoids an a priori knowledge of the Lax matrix. We remark that the problem of completeness of the set of the integrals encoded in this quantity requires for a separate study in both approaches.

\section{Commutativity of B\"acklund transformations for exponential Toda lattice}
\label{sect: BT Toda}

Here we illustrate the main constructions by the example of B\"acklund transformations for the exponential Toda lattice \eqref{TL l New}. The maps $\mathrm{dTL}(\lambda)=F_i: T^*\bbR^N\to T^*\bbR^N$ are given by equations \eqref{dTL l}:
\begin{equation}\label{eq: BT1 Toda}
F_i:\ \left\{\begin{array}{l}
p_k=\dfrac{1}{\lambda}\left(\eto{\wx_k\nm x_k}-1\right)+\lambda \eto{x_k\nm\wx_{k-1}},\vspace{0.2truecm}\\
\wip_k=\dfrac{1}{\lambda}\left(\eto{\wx_k\nm x_k}-1\right)+\lambda \eto{x_{k+1}\nm\wx_k}.
\end{array}\right.
\end{equation}
The corresponding Lagrangian is given by
\begin{equation}\label{eq: BT Toda Lagr}
\Lambda(x,\wx;\lambda)=\frac{1}{\lambda}\sum_{k=1}^N\left(\eto{\wx_k\nm x_k}-1-(\wx_k\nm x_k)\right)-
    \lambda\sum_{k=1}^N \eto{x_{k+1}\nm\wx_k},
\end{equation}
and the standard single-time Euler-Lagrange equations coincide with (\ref{dTL l New}) with $h=\lambda$.

As discussed in the previous section, commutativity of the maps $F_i$, $F_j$ (in the open-end case, when they are well-defined, i.e., single-valued) is equivalent to consistency of the system of corner equations:
\begin{align}
 \dfrac{1}{\lambda}\Big(\eto{\wx_k\nm x_k}-1\Big)+\lambda \eto{x_k\nm\wx_{k-1}}= & \
 \dfrac{1}{\mu}\Big(\eto{\whx_k\nm x_k}-1\Big)+\mu \eto{x_k\nm \whx_{k-1}}, 
 \label{eq: BT Toda E}\tag{$E$} \\
 \dfrac{1}{\lambda}\Big(\eto{\wx_k\nm x_k}-1\Big)+\lambda \eto{x_{k+1}\nm \wx_k}= & \
 \dfrac{1}{\mu}\Big(\eto{\widehat{\widetilde{x}}_k\nm \wx_k}-1\Big)
 +\mu \eto{\wx_k\nm \widehat{\widetilde{x}}_{k-1}},
  \label{eq: BT Toda E1}\tag{$E_i$}\\
 \dfrac{1}{\mu}\Big(\eto{\whx_k\nm x_k}-1\Big)+\mu \eto{x_{k+1}\nm \whx_k}= & \
 \dfrac{1}{\lambda}\Big(\eto{\widehat{\widetilde{x}}_k\nm \whx_k}-1\Big)
 +\lambda \eto{\whx_k\nm \widehat{\widetilde{x}}_{k-1}},
  \label{eq: BT Toda E2}\tag{$E_j$}\\
 \dfrac{1}{\lambda}\Big(\eto{\widehat{\widetilde{x}}_k\nm \whx_k}-1\Big)
 +\lambda \eto{\whx_{k+1}\nm \widehat{\widetilde{x}}_k}= & \
 \dfrac{1}{\mu}\Big(\eto{\widehat{\widetilde{x}}_k\nm \wx_k}-1\Big)
 +\mu \eto{\wx_{k+1}\nm \widehat{\widetilde{x}}_k}.
  \label{eq: BT Toda E12}\tag{$E_{ij}$}
\end{align}
We have to clarify the meaning of the both notions (commutativity of  $F_i$, $F_j$ and consistency of corner equations) in the periodic case. To do this, we prove the following statement.
\begin{theorem}\label{th: superposition}
Suppose that the fields $x$, $\wx$, $\whx$ satisfy corner equations \eqref{eq: BT Toda E}. Define the fields $\widehat{\widetilde{x}}$ by any of the following two formulas, which are equivalent by virtue of \eqref{eq: BT Toda E}:
\begin{equation}\label{eq: BT Toda S1}\tag{$S1$}
\dfrac{1}{\lambda}\Big(\eto{\widehat{\widetilde{x}}_k\nm\whx_k}-1\Big)
-\dfrac{1}{\mu}\Big(\eto{\widehat{\widetilde{x}}_k\nm\wx_k}-1\Big)
+\lambda \eto{x_{k+1}\nm\wx_k}-\mu \eto{x_{k+1}\nm\whx_k}=0,
\end{equation}
\begin{equation}\label{eq: BT Toda S2}\tag{$S2$}
\dfrac{1}{\lambda}\Big(\eto{\wx_{k+1}\nm x_{k+1}}-1\Big)
-\dfrac{1}{\mu}\Big(\eto{\whx_{k+1}\nm x_{k+1}}-1\Big)
+\lambda \eto{\whx_{k+1}\nm\widehat{\widetilde{x}}_k}
-\mu \eto{\wx_{k+1}\nm\widehat{\widetilde{x}}_k}=0,
\end{equation}
called superposition formulas. Then corner equations \eqref{eq: BT Toda E1}, \eqref{eq: BT Toda E2}, \eqref{eq: BT Toda E12} are satisfied, as well.
\end{theorem}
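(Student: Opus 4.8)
The plan is to verify directly that the four corner equations \eqref{eq: BT Toda E1}, \eqref{eq: BT Toda E2}, \eqref{eq: BT Toda E12} follow from \eqref{eq: BT Toda E} together with the definition of $\widehat{\widetilde{x}}$ via the superposition formulas, using only elementary algebra. The first step is to establish that the two superposition formulas \eqref{eq: BT Toda S1} and \eqref{eq: BT Toda S2} are indeed equivalent given \eqref{eq: BT Toda E}. To see this, I would write \eqref{eq: BT Toda E} at the index $k+1$ (i.e. with all subscripts shifted up by one), subtract it from \eqref{eq: BT Toda S1}, and check that the result is precisely \eqref{eq: BT Toda S2}; each term cancels pairwise. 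This makes it legitimate to use whichever of the two forms is convenient at each later step.

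Next I would prove \eqref{eq: BT Toda E1}. The idea is to start from \eqref{eq: BT Toda E} itself and add to both sides the quantity needed to convert the ``$\lambda \eto{x_k\nm\wx_{k-1}}$'' and ``$\mu \eto{x_k\nm \whx_{k-1}}$'' terms into the shifted form appearing in \eqref{eq: BT Toda E1}. Concretely: \eqref{eq: BT Toda E} downshifted (at index $k$) controls the ``$-$'' side, and the superposition formula \eqref{eq: BT Toda S1} with $k\mapsto k-1$ controls the difference $\lambda \eto{x_{k}\nm\wx_{k-1}}-\mu \eto{x_{k}\nm\whx_{k-1}}$ in terms of $\widehat{\widetilde{x}}_{k-1}$. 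Substituting, the left- and right-hand sides of \eqref{eq: BT Toda E1} reduce to an identity modulo \eqref{eq: BT Toda E}. Equation \eqref{eq: BT Toda E2} is obtained by the same manipulation with the roles of $(\lambda,\wx)$ and $(\mu,\whx)$ interchanged, which is a symmetry of the whole setup; alternatively, \eqref{eq: BT Toda E2} follows from \eqref{eq: BT Toda E1} by subtracting \eqref{eq: BT Toda E}, exactly as in the general theory (Theorem \ref{thm 1d commute} and the remarks around the standard single-time Euler--Lagrange equation). Finally, \eqref{eq: BT Toda E12} is derived by feeding the expressions just obtained for $\widehat{\widetilde{x}}$ into the combination \eqref{eq: BT Toda S1} at index $k$ versus \eqref{eq: BT Toda S2} at index $k$, or equivalently by using \eqref{eq: BT Toda S2} directly, which already has the ``$\widehat{\widetilde{x}}_k$ at the bottom'' structure that \eqref{eq: BT Toda E12} requires after one upshift.

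The essentially bookkeeping nature of the argument means there is no deep obstacle, but the one place requiring care is the \emph{index shifts}: the superposition formulas are stated at a generic index $k$, yet the corner equations mix values at $k-1$, $k$, $k+1$, so one must be scrupulous about which shifted copy of \eqref{eq: BT Toda E}, \eqref{eq: BT Toda S1}, \eqref{eq: BT Toda S2} is invoked at each stage — and, in the periodic case, that these shifts are consistent modulo $N$ so that no boundary terms are created. The cleanest way to organize this, and the way I would present it, is to introduce the two ``discrete momenta'' $p_k$ and $\wip_k$ exactly as in \eqref{eq: BT1 Toda} (and their hatted/double-tilded analogues), observe that \eqref{eq: BT Toda E} says $p_k$ computed via $F_i$ equals $p_k$ computed via $F_j$, and that \eqref{eq: BT Toda S1} says the ``$\wip_k$ of $F_i$ followed by $\wip_k$ of $F_j$'' matches the symmetric composition; then \eqref{eq: BT Toda E1}--\eqref{eq: BT Toda E12} are just the statements that the twice-transformed momenta agree. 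Phrased this way the verification is a short chain of substitutions, and the main obstacle — keeping the shift indices straight — is handled uniformly.
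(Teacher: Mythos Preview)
Your proposal has a genuine gap: the ``pairwise cancellation'' you invoke simply does not happen. Consider your first step, the equivalence of \eqref{eq: BT Toda S1} and \eqref{eq: BT Toda S2}. Subtracting $(E)_{k+1}$ from \eqref{eq: BT Toda S1} leaves the terms
\[
\dfrac{1}{\lambda}\Big(\eto{\widehat{\widetilde{x}}_k\nm\whx_k}-1\Big)
-\dfrac{1}{\mu}\Big(\eto{\widehat{\widetilde{x}}_k\nm\wx_k}-1\Big),
\]
whereas \eqref{eq: BT Toda S2} contains instead
\[
\lambda\,\eto{\whx_{k+1}\nm\widehat{\widetilde{x}}_k}-\mu\,\eto{\wx_{k+1}\nm\widehat{\widetilde{x}}_k}.
\]
These are \emph{different} exponentials of $\widehat{\widetilde{x}}_k$; nothing cancels pairwise. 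The same obstruction wrecks your derivation of \eqref{eq: BT Toda E1}: after substituting \eqref{eq: BT Toda S1}$_{k\to k-1}$ into \eqref{eq: BT Toda E} you eliminate $\wx_{k-1},\whx_{k-1}$, but the variables $x_{k+1}$, $\widehat{\widetilde{x}}_k$ that appear in \eqref{eq: BT Toda E1} never enter, so you cannot reach that equation by linear combination. Your momentum reformulation does not save the argument either: \eqref{eq: BT Toda S1} mixes the exponentials $\eto{\widehat{\widetilde{x}}_k\nm\whx_k}$, $\eto{\widehat{\widetilde{x}}_k\nm\wx_k}$, $\eto{x_{k+1}\nm\wx_k}$, $\eto{x_{k+1}\nm\whx_k}$, which is not any ``$\wip$'' from \eqref{eq: BT1 Toda} applied to consistent data --- it is a genuinely new relation, not a matching condition.

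What the paper's proof actually does is a nonlinear rewriting. Each superposition formula is a multi-affine quad-equation in four exponentiated variables, and the paper recasts it in several \emph{three-leg forms} centered at different vertices (for example, \eqref{eq: BT Toda S1} centered at $x_{k+1}$ becomes
$\dfrac{\lambda-\mu}{\eto{\widehat{\widetilde{x}}_k\nm x_{k+1}}-\lambda\mu}
=\lambda\eto{x_{k+1}\nm\wx_k}-\mu\eto{x_{k+1}\nm\whx_k}$,
and centered at $\wx_k$ it becomes a different rational identity). It is \emph{these} algebraically equivalent but nonlinearly related forms whose linear combinations produce the corner equations; the ``long leg'' terms carrying $\widehat{\widetilde{x}}_k$ cancel only because the two three-leg forms share that leg identically. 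This is the missing idea in your outline: the heart of the proof is not bookkeeping of index shifts, but the specific algebraic identities behind the three-leg rewritings.
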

\begin{proof} First of all, we show that equations (\ref{eq: BT Toda S1}) and (\ref{eq: BT Toda S2}) are indeed equivalent by virtue of (\ref{eq: BT Toda E}). For this, we re-write these equations in algebraically equivalent forms:
\begin{equation}\label{eq: BT Toda S1 3leg xk+1}
\frac{\lambda-\mu}{\eto{\widehat{\widetilde{x}}_k\nm x_{k+1}}-\lambda\mu}=\lambda \eto{x_{k+1}\nm \wx_k}-\mu \eto{x_{k+1}\nm \whx_k},
\end{equation}
and
\begin{equation}\label{eq: BT Toda S2 3leg xk+1}
\frac{(\lambda-\mu)\eto{x_{k+1}\nm \widehat{\wx}_k}}{1-\lambda\mu \eto{x_{k+1}\nm \widehat{\wx}_k}}=
\dfrac{1}{\mu}\Big(\eto{\whx_{k+1}\nm x_{k+1}}-1\Big)-\dfrac{1}{\lambda}\Big(\eto{\wx_{k+1}\nm x_{k+1}}-1\Big),
\end{equation}
respectively. The left-hand sides of the latter two equations are equal. Thus, their difference coincides with (\ref{eq: BT Toda E}).

 Second, we show that equations (\ref{eq: BT Toda S1}) and (\ref{eq: BT Toda S2}) yield (\ref{eq: BT Toda E1}). (For (\ref{eq: BT Toda E2}) everything is absolutely analogous.) For this aim, we re-write these equations in still other algebraically equivalent forms. Namely, (\ref{eq: BT Toda S1}) is equivalent to
\begin{equation}\label{eq: BT Toda S1 3leg wx}
\eto{\widehat{\widetilde{x}}_k\nm \wx_k}=
 \lambda\mu \eto{x_{k+1}\nm \wx_k}+\frac{\mu-\lambda}{\mu \eto{\wx_k\nm \whx_k}-\lambda},
\end{equation}
while (\ref{eq: BT Toda S2}) with $k$ replaced by $k-1$ is equivalent to
\begin{equation}\label{eq: BT Toda S2 3leg wx}
\lambda\mu \eto{\wx_k\nm \widehat{\widetilde{x}}_{k-1}}=
 \eto{\wx_k\nm x_k}+\frac{\lambda-\mu}{\mu -\lambda \eto{\whx_k\nm \wx_k}}.
\end{equation}
An obvious linear combination of these expressions leads to
\[
\frac{1}{\mu} \eto{\widehat{\widetilde{x}}_k\nm \wx_k}+\mu \eto{\wx_k\nm \widehat{\widetilde{x}}_{k-1}}=
\lambda \eto{x_{k+1}\nm \wx_k}+\frac{1}{\lambda} \eto{\wx_k\nm x_k}+\frac{\lambda-\mu}{\lambda\mu},
\]
which is nothing but (\ref{eq: BT Toda E1}).

Third, we observe that the sum of equations (\ref{eq: BT Toda E}), (\ref{eq: BT Toda S1}) and (\ref{eq: BT Toda S2}) is nothing but the corner equation (\ref{eq: BT Toda E12}). \end{proof}

\textbf{Remark.} Observe that equations (\ref{eq: BT Toda S1}) and (\ref{eq: BT Toda S2}) are quad-equations with respect to
\[
\Big(\eto{x_{k+1}},\eto{\wx_k},\eto{\whx_k},\eto{\widehat{\widetilde{x}}_k}\Big),\quad \mathrm{resp.}\quad
\Big(\eto{x_{k+1}},\eto{\wx_{k+1}},\eto{\whx_{k+1}},\eto{\widehat{\widetilde{x}}_k}\Big),
\]
i.e., they can be formulated as vanishing of multi-affine polynomials of the four specified variables. Equations
(\ref{eq: BT Toda S1 3leg xk+1}) and (\ref{eq: BT Toda S2 3leg xk+1}) are then interpreted as the three-leg forms of these quad-equations, centered at $x_{k+1}$. Similarly, equations (\ref{eq: BT Toda S1 3leg wx}) and (\ref{eq: BT Toda S2 3leg wx}) are the three-leg forms of the quad-equations, centered at $\wx_k$.
\medskip

Theorem \ref{th: superposition} allows us to achieve an exhaustive understanding of consistency and commutativity for double-valued B\"acklund transformations. First, suppose that we are given the fields $x$, $\wx$, $\whx$ satisfying corner equation (\ref{eq: BT Toda E}). Each of corner equations (\ref{eq: BT Toda E1}), (\ref{eq: BT Toda E2}) produces two values for $\widehat{\wx}$. Then consistency is reflected in the following fact: one of the values for $\widehat{\wx}$ obtained from (\ref{eq: BT Toda E1}) coincides with one of the values for $\widehat{\wx}$ obtained from (\ref{eq: BT Toda E2}), see Fig.~\ref{Fig: consistency 4-valued}\subref{Fig: commut x}. Indeed, this common value is nothing but $\widehat{\wx}$ obtained from the superposition formulas (\ref{eq: BT Toda S1}), (\ref{eq: BT Toda S2}), as in Theorem \ref{th: superposition}.

The ``loose ends'' on Fig.~\ref{Fig: consistency 4-valued}\subref{Fig: commut x} are best explained by considering the (double-valued) maps $F_i$, $F_j$, i.e., by working with the variables $(x,p)$ rather than with the variables $x$ alone. Indeed, each of the compositions $F_i\circ F_j$ and $F_j\circ F_i$ is four-valued. It follows from Theorem \ref{th: superposition} that their branches must pairwise coincide, as shown on Fig.~\ref{Fig: consistency 4-valued}\subref{Fig: commut x,p}. Indeed, Theorem \ref{th: superposition} delivers four possible values for $(\wx,\whx,\widehat{\wx})$ satisfying all corner equations (\ref{eq: BT Toda E})--(\ref{eq: BT Toda E12}), namely one $\widehat{\wx}$ for each of the four possible combinations of $(x,\wx,\whx)$.

\begin{figure}[tbp]
\centering
\subfloat[]{\label{Fig: commut x}
\begin{tikzpicture}[auto,scale=1.7,>=stealth',inner sep=2]
   \node (x) at (0,0) [circle,fill,thick,label=-135:$x$,label=45:$\left(E\right)$] {};
   \node (x1) at (2,0) [circle,fill,thick,label=135:$\left(E_{i}\right)$,label=-45:$\wx$] {};
   \node (x2) at (0,2) [circle,fill,thick,label=-45:$\left(E_{j}\right)$,label=135:$\whx$] {};
   \node (x12) at (2,2) [circle,fill,thick,label=45:$\widehat{\wx}$,label=-135:$\left(E_{ij}\right)$] {};
   \node (x121) at (2.25,2) [circle,fill,thick] {};
   \node (x122) at (2,2.25) [circle,fill,thick] {};
   \draw [thick,->] (x) to (x1) to (x12);
   \draw [thick,->] (x) to (x2) to (x12);
   \draw [thick,->] (x) to (x1) to (x121);
   \draw [thick,->] (x) to (x2) to (x122);
\end{tikzpicture}
}\qquad
\subfloat[]{\label{Fig: commut x,p}
\begin{tikzpicture}[auto,scale=1.7,>=stealth',inner sep=2]
   \node (x) at (0,0) [circle,fill,thick,{label=-135:$\left(x,p\right)$}] {};
   \node (x1) at (2,0) [circle,fill,thick,{label=-45:$\left(\wx,\wip\right)$}] {};
   \node (x11) at (1.75,0.25) [circle,fill,thick] {};
   \node (x2) at (0,2) [circle,fill,thick,{label=135:$\left(\whx,\whp\right)$}] {};
   \node (x22) at (0.25,1.75) [circle,fill,thick] {};
   \node (x12) at (2,1.75) [circle,fill,thick,{label=45:$\big(\widehat{\wx},\widehat{\wip}\big)$}] {};
   \node (x121) at (2.25,1.5) [circle,fill,thick] {};
   \node (x122) at (1.75,2) [circle,fill,thick] {};
   \node (x123) at (1.5,2.25) [circle,fill,thick] {};
   \draw [thick,->] (x) to (x1);
   \draw [thick,->] (x) to node {$F_{i}$} (x11);
   \draw [thick,->] (x) to (x2);
   \draw [thick,->] (x) to node [swap] {$F_{j}$} (x22);
   \draw [thick,->] (x2) to (x122);
   \draw [thick,->] (x2) to node{$F_{i}$} (x123);
   \draw [thick,->] (x22)  to (x121);
   \draw [thick,->] (x22) to (x12);
   \draw [thick,->] (x1) to  (x122);
   \draw [thick,->] (x1) to node [swap] {$F_{j}$} (x121);
   \draw [thick,->] (x11) to  (x123);
   \draw [thick,->] (x11) to  (x12);
\end{tikzpicture}
}
\caption{Consistency of multi-time Euler-Lagrange equations in the case of periodic boundary conditions. \\
\protect\subref{Fig: commut x} Given the fields $x$, $\wx$, $\whx$ satisfying corner equation (\ref{eq: BT Toda E}), each of corner equations (\ref{eq: BT Toda E1}), (\ref{eq: BT Toda E2}) has two solutions $\widehat{\wx}$. One of the values for $\widehat{\wx}$ obtained from (\ref{eq: BT Toda E1}) coincides with one of the values for $\widehat{\wx}$ obtained from (\ref{eq: BT Toda E2}). This common value of $\widehat{\wx}$, together with $\wx$ and $\whx$, satisfies (\ref{eq: BT Toda E12}). \\
\protect\subref{Fig: commut x,p} Each one of the four branches of $F_i\circ F_j$ coincides with exactly one of the four branches of $F_j\circ F_i$.}
\label{Fig: consistency 4-valued}
\end{figure}
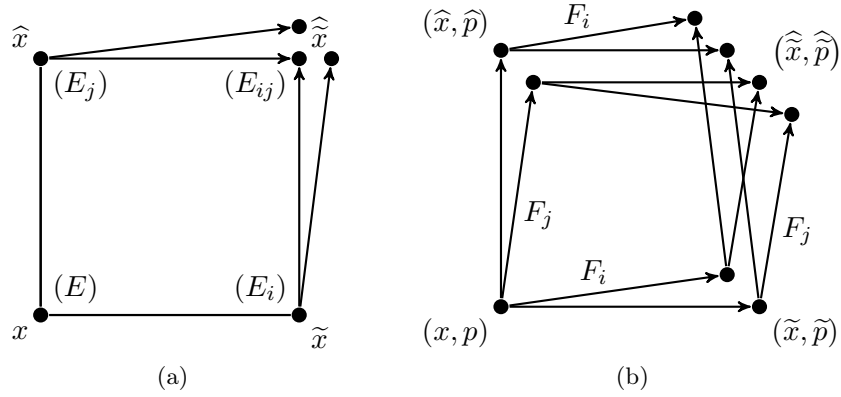

\begin{theorem}\label{th: Toda closure}
The quantity 
\begin{equation}\label{eq: BT Toda P}
P(x,\wx;\lambda)=\prod_{k=1}^N \eto{\wx_k\nm x_k}
\end{equation}
is a common integral of motion for all maps $F_j$. Equivalently, the discrete multi-time Lagrangian 1-form $\cL$ is closed on any solution of the corner equations \eqref{eq: BT Toda E}--\eqref{eq: BT Toda E12}.
\end{theorem}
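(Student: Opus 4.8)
The plan is to identify $P$ with a generating function of the spectral invariants of the Toda Lax matrix and then invoke isospectrality of the maps $F_j$. Recall from the proof of Theorem~\ref{discr Toda in linear parametrization} that the auxiliary coefficients $\beta_k$ of the LU-type factorization $\mbA_+\mbA_-=I+\lambda T$ of Lemma~\ref{discr TL factor} (with the step parameter $h$ renamed $\lambda$) satisfy, under the chosen normalization, $\beta_k=\eto{\wx_k\nm x_k}$; see \eqref{dTL l beta}. Hence $P(x,\wx;\lambda)=\prod_{k=1}^N\beta_k$. In the open-end case, $\mbA_+$ from \eqref{dTL Pi+} is lower triangular with diagonal entries $\beta_k$, while $\mbA_-$ from \eqref{dTL Pi-} is upper triangular with unit diagonal, so that
\[
P=\prod_{k=1}^N\beta_k=\det\mbA_+=\det\mbA_+\cdot\det\mbA_-=\det(I+\lambda T),
\]
with $T=T(a,b)$ evaluated at $a_k=\eto{x_{k+1}\nm x_k}$, $b_k=p_k$.

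Next I would observe that $\det(I+\lambda T)=\prod_{j=1}^N(1+\lambda\mu_j)$, where $\mu_1,\dots,\mu_N$ are the eigenvalues of $T$, so that its coefficients as a polynomial in $\lambda$ are the spectral invariants $\tr(\bigwedge^{j}T)$, $j=0,\dots,N$. Since every map $F_j=\mathrm{dTL}(\mu)$ shares the Lax matrix $T$ with the flow TL and is isospectral (the discrete Lax equation of Theorem~\ref{discrete TL} conjugates $T$; see Section~\ref{Sect discretization TL}), each spectral invariant of $T$, and thus $P$ itself, is a common integral of motion of all the $F_j$. This proves the first assertion in the open-end case. In the periodic case $P$ is double-valued; here the generalized LU factorization shows that $\prod_k\beta_k$ satisfies a quadratic equation whose coefficients are spectral invariants of the loop matrix $T$ (this is exactly the source of the double-valuedness of $\mathrm{dTL}(\lambda)$ and of $P$), and the same isospectrality argument applies branchwise. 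It is, however, the more robust formulation --- closedness of $\cL$ on solutions of the corner equations \eqref{eq: BT Toda E}--\eqref{eq: BT Toda E12} --- that is genuinely uniform.

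For the ``equivalently'' part, I would invoke Theorem~\ref{thm spectrality}: $\cL$ is closed on solutions if and only if $\partial\Lambda(x,\wx;\lambda)/\partial\lambda$ is a common integral of motion for all $F_j$. From \eqref{eq: BT Toda Lagr},
\[
\frac{\partial\Lambda}{\partial\lambda}=-\frac{1}{\lambda^2}\sum_{k=1}^N\big(\eto{\wx_k\nm x_k}-1\big)+\frac{1}{\lambda^2}\sum_{k=1}^N(\wx_k\nm x_k)-\sum_{k=1}^N\eto{x_{k+1}\nm\wx_k},
\]
and substituting the first equation in \eqref{eq: BT1 Toda}, which gives $\sum_k(\eto{\wx_k\nm x_k}-1)=\lambda\sum_k p_k-\lambda^2\sum_k\eto{x_{k+1}\nm\wx_k}$, one finds on solutions that $\partial\Lambda/\partial\lambda=-\lambda^{-1}\sum_k p_k+\lambda^{-2}\log P$. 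As $\sum_k p_k=\tr T$ is itself a common integral of all the $F_j$, the quantity $\partial\Lambda/\partial\lambda$ is a common integral precisely when $P$ is; together with the preceding paragraph this yields both statements of the theorem at once.

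A purely one-dimensional alternative, which also explains the periodic case, is to prove directly that $\log P=\sum_k(\wx_k\nm x_k)$ is $F_j$-invariant by using the superposition formulas of Theorem~\ref{th: superposition}: one must derive $\sum_k(\widehat{\wx}_k\nm\whx_k)=\sum_k(\wx_k\nm x_k)$ from \eqref{eq: BT Toda S1} and \eqref{eq: BT Toda S2}. In either route the main obstacle is the periodic case: the identification $P=\det(I+\lambda T)$ is transparent only for the genuine (single-valued) open-end map, and for the correspondence $\mathrm{dTL}(\lambda)$ one has to work with the double-valued factorization and keep track of branches --- precisely the phenomenon already dissected through the superposition formulas and Figure~\ref{Fig: consistency 4-valued}.
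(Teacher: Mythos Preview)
Your argument for the open-end case is correct and takes a genuinely different route from the paper. You identify $P=\prod_k\beta_k=\det(I+\lambda T)$ via Lemma~\ref{discr TL factor} and \eqref{dTL l beta}, and then invoke isospectrality of the maps $F_j=\mathrm{dTL}(\mu)$ to conclude that $P$ is a common integral; the reduction of the ``equivalently'' statement to Theorem~\ref{thm spectrality} and the computation leading to $\partial\Lambda/\partial\lambda=-\lambda^{-1}\sum_k p_k+\lambda^{-2}\log P$ on solutions are both fine and in fact reproduce the paper's own intermediate formula. What this buys you is a structural, almost one-line proof in the open-end case, and it nicely anticipates the content of Theorem~\ref{th: BT Toda zcr} (where the paper re-derives $P$ as a spectral quantity, but of the $2\times 2$ monodromy matrix rather than of the tridiagonal $T$).

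The paper instead proves the invariance of $\sum_k(\wx_k-x_k)$ directly from the superposition formulas: it rewrites \eqref{eq: BT Toda S1}--\eqref{eq: BT Toda S2} as the multiplicative telescoping identity
\[
\eto{\widehat{\widetilde{x}}_k-\wx_k-\whx_k+x_{k+1}}
=\frac{\lambda\,\eto{\whx_{k+1}}-\mu\,\eto{\wx_{k+1}}}{\lambda\,\eto{\whx_k}-\mu\,\eto{\wx_k}},
\]
and multiplies over $k$. This route is uniform for both boundary conditions and stays entirely within the pluri-Lagrangian framework, whereas your Lax-determinant identification is transparent only in the open-end case; in the periodic case your ``$\prod_k\beta_k$ satisfies a quadratic with spectral-invariant coefficients'' is morally correct but left as a sketch. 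Since you yourself point to the superposition-formula route as the uniform alternative, to close the gap you would need to carry out precisely the telescoping step above (together with the boundary counterparts in the open-end case), which is exactly the paper's proof.
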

\begin{proof} First of all, we show that the closure relation $d\cL=\ell(\lambda,\mu)=0$ is equivalent to
\begin{equation}\label{eq: BT Toda for closeness}
\sum_{k=1}^N(\widehat{\widetilde{x}}_k-\wx_k-\whx_k+x_k)=0\quad \Leftrightarrow\quad
\prod_{k=1}^N \eto{\widehat{\widetilde{x}}_k\nm \wx_k\nm \whx_k\np x_k}=1.
\end{equation}
This can be done in two different ways. On one hand, combining (\ref{eq: BT Toda S1}), (\ref{eq: BT Toda S2}) with (\ref{eq: BT Toda E}), we arrive at the two formulas
\begin{align}
  \dfrac{1}{\lambda}\eto{\wx_{k+1}\nm x_{k+1}}-\dfrac{1}{\mu}\eto{\whx_{k+1}\nm x_{k+1}}
  -\dfrac{1}{\lambda} \eto{\widehat{\widetilde{x}}_k\nm \whx_k}
  +\dfrac{1}{\mu} \eto{\widehat{\widetilde{x}}_k\nm \wx_k} =& 0,
  \label{eq: BT Toda superp 1}\\
  \lambda \eto{x_{k+1}\nm \wx_k}-\mu \eto{x_{k+1}\nm \whx_k}
  -\lambda \eto{\whx_{k+1}\nm \widehat{\widetilde{x}}_k}
  +\mu \eto{\wx_{k+1}\nm \widehat{\widetilde{x}}_k}  = & 0.
  \label{eq: BT Toda superp 2}
\end{align}
By virtue of these formulas, most of the terms on the left-hand side of (\ref{eq: BT closure}) with the Lagrange function (\ref{eq: BT Toda Lagr}) cancel, leaving us with
\[
\ell(\lambda,\mu)=\left(\frac{1}{\lambda}-\frac{1}{\mu}\right)
\sum_{k=1}^N(\widehat{\widetilde{x}}_k-\wx_k-\whx_k+x_k).
\]
Alternatively, we can refer to Theorem \ref{thm spectrality} stating that  $\ell(\lambda,\mu)=0$ is equivalent to $\partial \Lambda(x,\wx;\lambda)/\partial \lambda$ being an integral of motion for $F_j$. One easily computes:
\[
\frac{\partial \Lambda(x,\wx;\lambda)}{\partial \lambda}=-\frac{1}{\lambda}\sum_{k=1}^N p_k+\frac{1}{\lambda^2}\sum_{k=1}^N(\wx_k-x_k),
\]
the first sum on the right-hand side being an obvious integral of motion.

Now the desired result (\ref{eq: BT Toda for closeness}) can be derived from the following form of the superposition formula:
\begin{equation}\label{eq: BT Toda superp}
    \eto{\widehat{\widetilde{x}}_k\nm \wx_k\nm \whx_k\np x_{k+1}}=\dfrac{\lambda \eto{\whx_{k+1}}- \mu \eto{\wx_{k+1}}}{\lambda \eto{\whx_{k}}-\mu \eto{\wx_{k}}},
\end{equation}
which is in fact equivalent to either of equations (\ref{eq: BT Toda superp 1}), (\ref{eq: BT Toda superp 2}). In the periodic case (\ref{eq: BT Toda for closeness}) follows directly by multiplying equations (\ref{eq: BT Toda superp}) for $1\le k\le N$, in the open-end case equation (\ref{eq: BT Toda superp}) holds true for $1\le k\le N-1$ and has to be supplemented by the boundary counterparts
\begin{equation}\label{eq: BT Toda superp open boundary}
 \eto{x_{1}}=\dfrac{\lambda \eto{\whx_1}-\mu \eto{\wx_1}}{\lambda-\mu}, \quad
 \eto{\widehat{\widetilde{x}}_N\nm \wx_N\nm \whx_N}=\dfrac{\lambda-\mu}{\lambda \eto{\whx_N}-\mu \eto{\wx_N}},
\end{equation}
which are equivalent to (\ref{eq: BT Toda E}) for $k=1$, resp. to (\ref{eq: BT Toda E12}) for $k=N$.
\end{proof}

The conserved quantity \eqref{eq: BT Toda P}, expressed through $(x,p)$, can be given a beautiful expression in terms of matrices which turn out to be transition matrices of the zero curvature representation for $F_i$ (but the latter notion is not necessary for establishing the result).
\begin{theorem}\label{th: BT Toda zcr}
Set
\begin{equation*}\label{eq: BT Toda zcr L}
    L_k(x,p;\lambda)=\begin{pmatrix} 1+\lambda p_k & -\lambda^2 \eto{x_k\nm x_{k-1}} \vspace{2mm}\\
                                     1 & 0 \end{pmatrix},
\end{equation*}
and
\begin{equation*}\label{eq: BT Toda zcr T}
    T_N(x,p;\lambda)=L_N(x,p;\lambda) \cdots L_2(x,p;\lambda)L_1(x,p;\lambda).
\end{equation*}
Then in the periodic case conserved quantity \eqref{eq: BT Toda P} is an eigenvalue of $T_N(x,p;\lambda)$, while in the open-end case it is equal to $\tr T_N(x,p;\lambda)$.
\end{theorem}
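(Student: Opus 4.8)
The plan is to express the conserved quantity $P(x,\wx;\lambda)=\prod_{k=1}^{N}\eto{\wx_k\nm x_k}$ purely through $(x,p)$ and then to recognize the resulting function as the relevant spectral datum of the monodromy $T_N$ assembled from the transfer matrices $L_k$. First I would recall that the auxiliary quantity $\beta_k=\eto{\wx_k\nm x_k}$ occurring in the proof of Theorem~\ref{discr Toda in linear parametrization}, cf.\ \eqref{dTL l beta}, is already a function of $(x,p)$ and $\lambda$ alone: by \eqref{dTL l proof aux3} with $h=\lambda$, $b_k=p_k$, and $\eto{x_k\nm x_{k-1}}$ in the role of $a_{k-1}$,
\[
\beta_k=1+\lambda p_k-\frac{\lambda^{2}\eto{x_k\nm x_{k-1}}}{\beta_{k-1}},
\]
so that $P(x,p;\lambda)=\prod_{k=1}^{N}\beta_k$.

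Second, I would observe that $L_k$ is precisely the $2\times2$ transfer matrix of the scalar second-order recursion underlying that continued fraction: setting $\Psi_k=(\psi_k,\psi_{k-1})^{\top}$, the relation $\Psi_{k+1}=L_k\Psi_k$ is equivalent to
\[
\psi_{k+1}=(1+\lambda p_k)\,\psi_k-\lambda^{2}\eto{x_k\nm x_{k-1}}\,\psi_{k-1},
\]
whence $\Psi_{N+1}=T_N(x,p;\lambda)\,\Psi_1$. The elementary identity that does the work is: for any solution $(\psi_k)$ with nonvanishing entries, the ratios $\beta_k:=\psi_{k+1}/\psi_k$ satisfy exactly the $\beta$-recurrence above; conversely, a given solution $(\beta_k)$ is recovered from $\psi_k=\psi_1\prod_{j=1}^{k-1}\beta_j$, $\psi_0=\psi_1/\beta_0$. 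Either way, $\prod_{k=1}^{N}\beta_k=\psi_{N+1}/\psi_1$.

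Third comes the case distinction. In the periodic case, each of the two branches of the $N$-periodic continued fraction for $\beta_k$ is itself $N$-periodic, $\beta_{k+N}=\beta_k$ (being a fixed point of the period map $M_N\circ\cdots\circ M_1$, $M_k(\xi)=1+\lambda p_k-\lambda^{2}\eto{x_k\nm x_{k-1}}/\xi$); hence $\psi_{k+N}/\psi_k=\prod_{j=1}^{N}\beta_j$ is independent of $k$. Calling this constant $\Lambda$, we get $\Psi_{k+N}=\Lambda\Psi_k$, in particular $T_N\Psi_1=\Psi_{N+1}=\Lambda\Psi_1$ with $\Psi_1\ne0$, so $P=\Lambda$ is an eigenvalue of $T_N(x,p;\lambda)$. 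In the open-end case $\eto{x_1\nm x_0}=a_0=0$, so the $(1,2)$-entry of $L_1$ vanishes; therefore $L_1$, and with it $T_N=L_N\cdots L_1$, has vanishing second column, which gives $\tr T_N=(T_N)_{11}$. Taking $\psi_1=1$ (the value of $\psi_0$ being immaterial because $a_0=0$) we obtain $(T_N)_{11}=(T_N\Psi_1)_1=(\Psi_{N+1})_1=\psi_{N+1}=\prod_{k=1}^{N}\beta_k=P$, i.e.\ $\tr T_N=P$.

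The routine parts are the two transfer-matrix verifications (that $\Psi_{k+1}=L_k\Psi_k$ reproduces the scalar recursion, and that $\psi_{k+1}/\psi_k$ obeys the $\beta$-recurrence). The genuinely delicate point, and the one I expect to be the main obstacle, is the periodic case: one must argue carefully that a fixed branch of the double-valued $\beta_k$ yields an honestly $N$-periodic sequence (so that a true Bloch-type eigenvector of $T_N$ exists), and keep track of the dense open locus on which the divisions $\psi_k\ne0$, $\beta_0\ne0$ are legitimate, extending the identity elsewhere by continuity and polynomiality in $\lambda$. Everything else is direct computation.
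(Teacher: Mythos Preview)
Your proof is correct and is essentially the same as the paper's: the paper works directly with the normalized vectors $(\gamma_k,1)^{\mathrm T}$ (where $\gamma_k=\eto{\wx_k-x_k}$) and the key identity $L_k(\gamma_{k-1},1)^{\mathrm T}=\gamma_{k-1}(\gamma_k,1)^{\mathrm T}$, which is the projective version of your $\Psi_{k+1}=L_k\Psi_k$ with $\beta_k=\psi_{k+1}/\psi_k$. Working with the normalized vectors spares the division-by-$\psi_k$ bookkeeping and makes the periodic case immediate, so your flagged ``delicate point'' is in fact routine.
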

\begin{proof} We use the following notation for the action of matrices from $GL(2,\mathbb C)$ on $\mathbb C$ by M\"obius transformations:
\[
\begin{pmatrix} a & b \\ c & d \end{pmatrix}[z]=\frac{az+b}{cz+d}.
\]
With this notation, we can re-write the first equation in (\ref{eq: BT1 Toda}) as
\[
\eto{\wx_k\nm x_k}=(1+\lambda p_k)-\lambda^2 \eto{x_k\nm \wx_{k-1}}=L_k(x,p;\lambda)\big[\eto{\wx_{k-1}\nm x_{k-1}}\big].
\]
This is equivalent to saying that
\[
L_k(x,p;\lambda)\begin{pmatrix} \gamma_{k-1} \\ 1 \end{pmatrix} \sim
\begin{pmatrix} \gamma_k \\ 1 \end{pmatrix}, \quad \mathrm{where} \quad \gamma_k=\eto{\wx_k\nm x_k}.
\]
The proportionality coefficient is easily determined by comparing the second components of these vectors:
\begin{equation}\label{eq: BT Toda zcr transition}
L_k(x,p;\lambda)\begin{pmatrix} \gamma_{k-1} \\ 1 \end{pmatrix} = \gamma_{k-1} \begin{pmatrix} \gamma_k \\ 1 \end{pmatrix}.
\end{equation}
Now in the periodic case we see that $(\gamma_N, 1)^\mathrm{T}$ is an eigenvector of $T_N(x,p;\lambda)$ with the eigenvalue $\prod_{k=1}^N \gamma_k$. In the open-end case, equation (\ref{eq: BT Toda zcr transition}) holds true for $2\le k\le N$, and has to be supplemented by the following two relations:
\[
L_1(x,p;\lambda)\begin{pmatrix} 1 \\ 0 \end{pmatrix} = \begin{pmatrix} 1+\lambda p_1 \\ 1 \end{pmatrix}=\begin{pmatrix} \gamma_1 \\ 1 \end{pmatrix}, \quad
\mathrm{and}\quad
\begin{pmatrix} 1 & 0 \end{pmatrix}\begin{pmatrix} \gamma_N \\ 1 \end{pmatrix}
= \gamma_N.
\]
As a consequence,
\[
\begin{pmatrix} 1 & 0 \end{pmatrix}T_N(x,p;\lambda)\begin{pmatrix} 1 \\ 0 \end{pmatrix}=\prod_{k=1}^N \gamma_k. 
\]
Thus, the (11)-entry of $T_N(x,p;\lambda)$ equals $\prod_{k=1}^N \gamma_k$. It coincides with $\tr T_N(x,p;\lambda)$, since the (22)-entry of this matrix vanishes.
\end{proof}

\paragraph{Bibliographical remarks.} Presentation here is based on \cite{BPS13}. One finds there similar results  for all discrete Toda type systems given in Section \ref{sect Toda New}.

\section{General theory of discrete two-dimensional pluri-Lagrangian systems}
\label{sect: discr 2d results}

Multi-dimensional consistency is a fundamental integrability concept for quad-equations. In the present section, we address the question about an analog of this property for discrete variational Laplace type systems.
\begin{definition} {\itbf (Two-dimensional pluri-Lagrangian problem)} \label{def:pluriLagr problem d=2}
Let $\cL$ be a discrete 2-form on $\bbZ^{m}$ (a function of oriented elementary squares
$\sigma_{ij}=(n,n+e_{i},n+e_{i}+e_{j},n+e_{j})$
such that $\cL(\sigma_{ij})=-\cL(\sigma_{ji})$), depending on a function $x:\bbZ^{m}\to\mathcal{X}$, where $\mathcal{X}$ is some vector space. It is supposed that $\cL(\sigma_{ij})$ depends on the values $x,x_i,x_j,x_{ij}$ of the field at the four vertices of the elementary square $\sigma_{ij}$. 
\begin{itemize}
\item To an arbitrary quad-surface $\Sigma$ in $\bbZ^{m}$ (an oriented surface in $\mathbb R^m$ composed of elementary squares of $\mathbb Z^m$), there corresponds the {\itbf action functional}
\[
S_{\Sigma}=\sum_{\sigma\in\Sigma}\cL(\sigma)
\]
(it depends only on the fields at vertices of $\Sigma$).
\item We say that the field $x:V(\Sigma)\to\mathcal{X}$ is a critical point of $S_{\Sigma}$, if at any interior point $n\in V(\Sigma)$, we have
\[\frac{\partial S_{\Sigma}}{\partial x(n)}=0.
\]
\item We say that the field $x:\bbZ^{m}\to\mathcal{X}$ solves the {\itbf pluri-Lagrangian problem} for the Lagrangian 2-form $\cL$ if, for any quad-surface $\Sigma$ in $\bbZ^{m}$, the restriction $\left.x\right|_{V(\Sigma)}$ is a critical point of the corresponding action $S_{\Sigma}$.
\end{itemize}
\end{definition}

One can show that the vertex star of any interior vertex of an oriented quad-surface $\Sigma$ in $\bbZ^m$ can be represented as a sum of (oriented) 3D~corners in $\bbZ^{m+1}$, see Figure \ref{Fig: star from corners}. Here, a \emph{3D~corner} is a quad-surface consisting of three elementary squares adjacent to a vertex of valence 3. As a consequence, the action over any vertex star can be represented as a sum of actions for several 3D~corners. Thus, Euler-Lagrange equation for any interior vertex $n$ of $\Sigma$ can be represented as a sum of several Euler-Lagrange equations for 3D~corners.

\begin{figure}[htbp]
   \centering
   \begin{tikzpicture}[scale=0.8, inner sep=2]  
      \node (x12) at (4,1) [label=-45:$n$] {};
      \node (x123) at (4,4) [label=135:$n+e_{m+1}$]{};     
      \draw [thick, dashed] (1,1) to (1,4) to (7,4) to (7,1);
      \draw [thick, dashed] (3,0) to (3,3)  to (5,5) to (5,2); 
      \draw [thick, dashed] (4,1) to (4,4); 
      \draw [very thick] (0,0) to (6,0) to (8,2) to (2,2) to (0,0);
      \draw [very thick] (1,1) to (7,1);
      \draw [very thick] (3,0) to (5,2);
   \end{tikzpicture}
        \caption{Vertex star of a quad-surface in $\mathbb Z^m$ as a sum of 3D corners in $\mathbb Z^{m+1}$, here $m=2$.}
           \label{Fig: star from corners}
\end{figure}
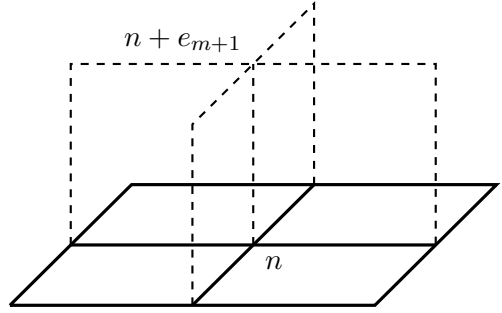

This justifies the following fundamental definition:
\begin{definition}\label{def:pluriLagr system}
The {\itbf system of 3D~corner equations} for a given discrete 2-form $\cL$ consists of discrete Euler-Lagrange equations for all possible 3D~corners in $\bbZ^m$. If the action for the surface of an oriented elementary cube $\sigma_{ijk}$ of the coordinate directions $i,j,k$ (which can be identified with the discrete exterior derivative $d\cL$ evaluated at $\sigma_{ijk}$) is denoted by
\begin{equation}\label{eq: Sijk}
S^{ijk}=d\cL(\sigma_{ijk})=\Delta_k\cL(\sigma_{ij})+\Delta_i\cL(\sigma_{jk})+\Delta_j\cL(\sigma_{ki}),
\end{equation}
then the system of 3D~corner equations consists of the eight equations
\begin{equation}\label{eq: corner eqs}
\begin{alignedat}{4}
&\dfrac{\partial S^{ijk}}{\partial x}=0, &\quad& \dfrac{\partial S^{ijk}}{\partial x_i}=0, &\quad& \dfrac{\partial S^{ijk}}{\partial x_j}=0, &\quad& \dfrac{\partial S^{ijk}}{\partial x_k}=0, \\
&\dfrac{\partial S^{ijk}}{\partial x_{ij}}=0, &\quad& \dfrac{\partial S^{ijk}}{\partial x_{jk}}=0, &\quad& \dfrac{\partial S^{ijk}}{\partial x_{ik}}=0, &\quad& \dfrac{\partial S^{ijk}}{\partial x_{ijk}}=0,
\end{alignedat}
\end{equation}
for each triple $i,j,k$.
\end{definition}
Thus, the system of 3D~corner equations encompasses all possible discrete Euler-Lagrange equations for all possible quad-surfaces $\Sigma$. In other words, solutions of a two-dimensional pluri-Lagrangian problem as introduced in Definition \ref{def:pluriLagr problem d=2} are precisely solutions of the corresponding system of 3D~corner equations.
\medskip

Of course, in order that the above definition be meaningful, the system of 3D~corner equations has to be \emph{consistent}:
\begin{definition} \label{def: corner eqs consist}
The system \eqref{eq: corner eqs} is called \emph{consistent}, if it has the minimal possible rank 2, i.e., if exactly two of these equations are independent.
\end{definition}
Like in the one-dimensional case, the ``almost closedness'' of the 2-form $\cL$ on solutions of the system of 3D~corner equations is built-in from the outset.
\begin{theorem}\label{Th: almost closed 2d}
If  the system of 3D~corner equations \eqref{eq: corner eqs} is consistent, then, for any triple of the coordinate directions $i,j,k$, the action $S^{ijk}$ over an elementary cube of these coordinate directions is constant on solutions:
\[
S^{ijk}\left(x,\ldots,x_{ijk}\right)=c^{ijk}=\mathrm{const}  \pmod{\partial S^{ijk}/\partial x=0,\ldots,\partial S^{ijk}/\partial x_{ijk}=0}.
\]
\end{theorem}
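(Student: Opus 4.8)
## Proof Proposal for Theorem \ref{Th: almost closed 2d}

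The plan is to adapt the reasoning already used in the one-dimensional case (Theorem \ref{th: discr almost closed}) to the present setting, where the relevant combinatorial object is the oriented elementary cube rather than the oriented elementary square. The key structural fact I would use is that $S^{ijk} = d\mathcal L(\sigma_{ijk})$ is a function of the eight field values $x,x_i,x_j,x_k,x_{ij},x_{jk},x_{ik},x_{ijk}$ sitting at the vertices of the cube $\sigma_{ijk}$, and that the eight 3D~corner equations in \eqref{eq: corner eqs} are precisely the vanishing of the eight partial derivatives of $S^{ijk}$ with respect to these eight variables. Thus the statement to be proved is: \emph{on the common zero locus of all eight partial derivatives of $S^{ijk}$, the function $S^{ijk}$ is locally constant.}

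First I would observe that, by the consistency hypothesis (Definition \ref{def: corner eqs consist}), the system \eqref{eq: corner eqs} has rank $2$, so its solution set is a smooth submanifold $M \subset \mathcal X^8$ of codimension $2$, i.e. of dimension $6$ — parametrized, for instance, by suitable initial data (the four corner-cube initial fields, subject to the two independent relations). On this submanifold $M$, I want to show $S^{ijk}$ is constant. The natural argument is to compute the differential of $S^{ijk}$ restricted to $M$: for any tangent vector $\xi$ to $M$ at a point of $M$,
\[
d\big(S^{ijk}\big|_M\big)(\xi) = \sum_{v} \frac{\partial S^{ijk}}{\partial x(v)}\, \xi(v),
\]
where the sum runs over the eight vertices $v$ of the cube. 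But every term $\partial S^{ijk}/\partial x(v)$ vanishes on $M$, since these are exactly the 3D~corner equations defining $M$. Hence $d(S^{ijk}|_M) = 0$ identically on $M$, and therefore $S^{ijk}$ is constant on each connected component of $M$; denote this constant $c^{ijk}$. This is the content of the claimed congruence ``$\pmod{\partial S^{ijk}/\partial x = 0, \ldots, \partial S^{ijk}/\partial x_{ijk}=0}$''.

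The main obstacle I anticipate is not the differential computation — that is a one-line chain-rule argument — but rather making rigorous the claim that on the solution manifold $M$ \emph{all eight} partials vanish simultaneously. A priori, consistency (rank $2$) only guarantees that, once two of the equations are imposed, the remaining six are \emph{consequences}; one must still check that these consequences are genuinely the vanishing of the other six partials and not merely some weaker algebraic relations. In practice this is exactly what Definition \ref{def: corner eqs consist} asserts — that the variety cut out by the two independent equations coincides with the variety cut out by all eight — so the obstruction is really only one of bookkeeping: I would spell out that a solution of the pluri-Lagrangian problem is, by the discussion following Definition \ref{def:pluriLagr system}, precisely a point where all eight corner equations hold, and then invoke the dimension count above. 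A secondary, purely local, caveat is that $M$ need not be connected, so $c^{ijk}$ could depend on the component; this is harmless for the stated ``$\pmod{\cdots}$'' formulation, and I would simply note it. Finally, by skew-symmetry $\mathcal L(\sigma_{ij}) = -\mathcal L(\sigma_{ji})$ one gets the expected antisymmetry of $c^{ijk}$ under odd permutations of $i,j,k$, which I would record as a remark although it is not strictly part of the statement.
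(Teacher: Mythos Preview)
Your argument is correct and is exactly what the paper has in mind: the paper does not give a detailed proof of this theorem, merely remarking that the almost-closedness is ``built-in from the outset'' --- precisely because the eight 3D~corner equations are, by definition, the vanishing of all eight partial derivatives of $S^{ijk}$, so that $d(S^{ijk}|_M)=0$ is tautological. Your write-up just spells this out, including the role of the rank-$2$ consistency hypothesis in ensuring that the solution set is a genuine submanifold on which all eight partials vanish; the caveat about connected components is appropriate and matches the ``locally constant'' nature of the conclusion.
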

The most interesting case is, of course, when all $c^{ijk}=0$. Then $d\cL=0$, that is, the discrete 2-form $\cL$ is \emph{closed} on solutions of the system of 3D~corner equations, so that the critical value of the action $S_\Sigma$ does not change under perturbations of the quad-surface $\Sigma$ in $\bbZ^m$ fixing its boundary.
\medskip

\paragraph{Case of three-point 2-forms.} We formulated the system of 3D~corner equations for a generic 2-form $\cL$. We now specialize the theory for an important particular ansatz for the discrete 2-form, namely we consider the so called {\em three-point 2-form}:
\begin{equation}\label{eq: 3point}
    \cL(\sigma_{ij})=\Psi_i(x_i-x)-\Psi_j(x_j-x)-\Phi_{ij}(x_j-x_i),
\end{equation}
where the Lagrangians $\Psi_i$ and $\Phi_{ij}$ only depend on the differences of the fields at the end points, and the diagonal Lagrangians are skew-symmetric in the sense that $\Phi_{ij}(\xi)=-\Phi_{ji}(-\xi)$. Thus, one considers as a main building block in \eqref{action dRTL 2d} the discrete 2-form rather than edge dependent Lagrangians. This seemingly minor change of view point turns out to be very important conceptually.

For a 3-point 2-form, expression \eqref{eq: Sijk} specializes to
\begin{equation}\label{eq: 3point S}
\begin{split}
S^{ijk} & =  \Psi_i(x_{ik}-x_k)+\Psi_j(x_{ij}-x_i)+\Psi_k(x_{jk}-x_j)\\
        &   \phantom{=}\ -\Psi_i(x_{ij}-x_j)-\Psi_j(x_{jk}-x_k)-\Psi_k(x_{ik}-x_i)\\
        &   \phantom{=}\ -\Phi_{ij}(x_{jk}-x_{ik})-\Phi_{jk}(x_{ik}-x_{ij})-\Phi_{ki}(x_{ij}-x_{jk})\\
        &   \phantom{=}\ +\Phi_{ij}(x_j-x_i)+\Phi_{jk}(x_k-x_j)+\Phi_{ki}(x_i-x_k).
\end{split}
\end{equation}
Thus, $S^{ijk}$ depends on neither $x$ nor $x_{ijk}$, and its domain of definition is better visualized as an octahedron shown in Figure~\ref{Fig: 3-point case}\subref{Fig: octahedron}.

\begin{figure}[htbp]
   \centering
   \subfloat[]{ \label{Fig: octahedron}
   \begin{tikzpicture}[scale=0.85,inner sep=2]
      \node (x) at (0,0) [circle,draw,label=-135:$x$] {};
      \node (x1) at (3,0) [circle,fill,label=-45:$x_{i}$] {};
      \node (x2) at (1,1) [circle,fill,label=-90:$x_{j}$] {};
      \node (x3) at (0,3) [circle,fill,label=135:$x_{k}$] {};
      \node (x12) at (4,1) [circle,fill,label=0:$x_{ij}$] {};
      \node (x13) at (3,3) [circle,fill,label=0:$x_{ik}$] {};
      \node (x23) at (1,4) [circle,fill,label=135:$x_{jk}$] {};
      \node (x123) at (4,4) [circle,draw,label=45:$x_{ijk}$] {};
      \draw (x) to (x1);
      \draw (x) to (x2);
      \draw (x) to (x3);
      \draw [ultra thick] (x1) to (x2);
      \draw [ultra thick] (x1) to (x3);
      \draw [ultra thick] (x1) to (x12);
      \draw [ultra thick] (x1) to (x13);
      \draw [ultra thick] (x2) to (x3);
      \draw [ultra thick,dashed] (x2) to (x12);
      \draw [ultra thick,dashed] (x2) to (x23);
      \draw [ultra thick] (x3) to (x13);
      \draw [ultra thick] (x3) to (x23);
      \draw [ultra thick] (x12) to (x13);
      \draw (x12) to (x123);
      \draw [ultra thick,dashed] (x12) to (x23);
      \draw [ultra thick] (x13) to (x23);
      \draw (x13) to (x123);
      \draw (x23) to (x123);
   \end{tikzpicture}
  }
   \;
   \subfloat[]{\label{fig Ei}
   \begin{tikzpicture}[scale=0.85,inner sep=2]
      \node (x) at (0,0) [circle,draw,label=-135:$x$] {};
      \node (x1) at (3,0) [circle,fill,label=-45:$x_{i}$] {};
      \node (x2) at (1,1) [circle,fill,label=-90:$x_{j}$] {};
      \node (x3) at (0,3) [circle,fill,label=135:$x_{k}$] {};
      \node (x12) at (4,1) [circle,fill,label=0:$x_{ij}$] {};
      \node (x13) at (3,3) [circle,fill,label=0:$x_{ik}$] {};
      \node (x123) at (4,4) [circle,draw,label=45:$x_{ijk}$] {};
      \draw (x) to (x1);
      \draw (x) to (x2);
      \draw (x) to (x3);
      \draw [ultra thick] (x1) to (x2);
      \draw [ultra thick] (x1) to (x3);
      \draw [ultra thick] (x1) to (x12);
      \draw [ultra thick] (x1) to (x13);
      \draw  (x2) to (x12);
      \draw (x3) to (x13);
      \draw (x12) to (x123);
      \draw (x13) to (x123);
   \end{tikzpicture}
   }
\;
\subfloat[]{\label{fig Eij}
   \begin{tikzpicture}[scale=0.85,inner sep=2]
      \node (x) at (0,0) [circle,draw,label=-135:$x$] {};
      \node (x1) at (3,0) [circle,fill,label=-45:$x_{i}$] {};
      \node (x2) at (1,1) [circle,fill,label=-90:$x_{j}$] {};
      \node (x12) at (4,1) [circle,fill,label=0:$x_{ij}$] {};
      \node (x13) at (3,3) [circle,fill,label=0:$x_{ik}$] {};
      \node (x23) at (1,4) [circle,fill,label=135:$x_{jk}$] {};
      \node (x123) at (4,4) [circle,draw,label=45:$x_{ijk}$] {};
      \draw (x) to (x1);
      \draw (x) to (x2);
      \draw [ultra thick] (x1) to (x12);
      \draw (x1) to (x13);
      \draw [ultra thick] (x2) to (x12);
      \draw (x2) to (x23);
      \draw [ultra thick] (x12) to (x13);
      \draw (x12) to (x123);
      \draw [ultra thick] (x12) to (x23);
      \draw (x13) to (x123);
      \draw (x23) to (x123);
   \end{tikzpicture}
   }
   \caption{\protect\subref{Fig: octahedron} Octahedron supporting $d\cL$ for a three-point discrete 2-form $\cL$;\\
   \protect\subref{fig Ei}  Stencil supporting 3D corner equation \eqref{eq: 3point Ei};
   \protect\subref{fig Eij} Stencil supporting corner equation \eqref{eq: 3point Eij}.}
   \label{Fig: 3-point case}
\end{figure}

Accordingly, the system of corner equations consists of six equations per elementary 3D cube, which we denote by $({\mathcal E}_i)$, $({\mathcal E}_j)$, $({\mathcal E}_k)$, $({\mathcal E}_{ij})$, $({\mathcal E}_{ik})$, and $({\mathcal E}_{jk})$. To write them down, we set
\begin{equation}\label{eq: phi psi}
    \psi_i(\xi)=\Psi'_i(\xi),\quad  \phi_{ij}(\xi)=\Phi'_{ij}(\xi).
\end{equation}
In particular, we have: $\phi_{ij}(\xi)=\phi_{ji}(-\xi)$. In terms of these functions, corner equations read:
\begin{align}\label{eq: 3point Ei}\tag{${\mathcal E}_i$}
&\psi_j(x_{ij}-x_i)+\phi_{ij}(x_j-x_i)=\psi_k(x_{ik}-x_i)+\phi_{ik}(x_k-x_i),\\
\label{eq: 3point Eij}\tag{${\mathcal E}_{ij}$}
&\psi_j(x_{ij}-x_i)+\phi_{kj}(x_{ij}-x_{ik})=\psi_i(x_{ij}-x_j)+\phi_{ki}(x_{ij}-x_{jk}).
\end{align}
They can be characterized as 5-point 4-leg equations, see Figure \ref{Fig: 3-point case}\subref{fig Ei} and \subref{fig Eij}. The consistency of the system of 3D corner equations is defined literally as in Definition 
\ref{def: corner eqs consist}. 

3D corner equations for three-point 2-forms are elementary building blocks for discrete Laplace type equations on the regular triangular lattice, like all symplectic realizations of the discrete time relativistic Toda lattices. This is illustrated on Figure \ref{fig: from corner eqs to 7-point}.

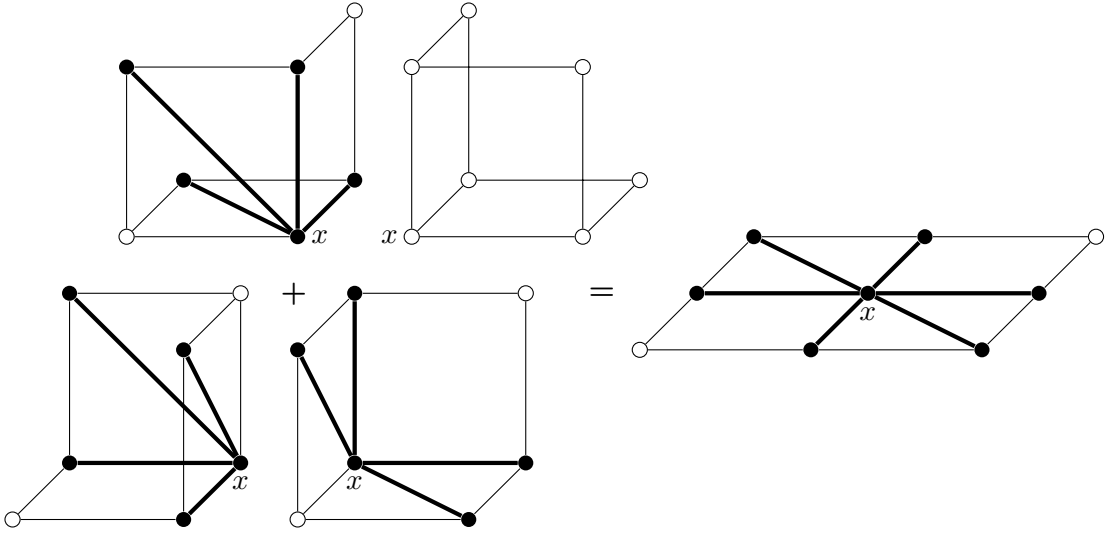
\begin{figure}[htbp]
   \centering
\begin{tikzpicture}[auto,scale=0.15,inner sep=2,>=stealth']
%
%
 \node (x 1) at (0,0) [circle,draw] {};
 \node (x1 1) at (15,0) [circle,fill] {};
 \node (x13 1) at (15,15) [circle,fill] {};
 \node (x2 1) at (5,5) [circle,fill] {};
 \node (x23 1) at (5,20) [circle,fill] {};
  \node (x123 1) at (20,20) [circle,draw] {};
  \node (x12 1) at (20,5) [circle,fill,label=-90:$x$] {};
 \draw [ultra thick] (x13 1) to (x12 1) to (x1 1); 
 \draw [ultra thick]  (x23 1) to (x12 1) to (x2 1);
 \draw [thin] (x 1) to (x2 1) to (x23 1) to (x12 1) to (x123 1);
  \draw [thin] (x 1) to (x1 1) to (x13 1) to (x123 1) to (x23 1);
%
 \node (x 2) at (10,25) [circle, draw]{};
 \node (x1 2) at (25,25) [circle, fill,label=0:$x$]{};
 \node (x3 2) at (10,40)  [circle, fill]{};
 \node (x13 2) at (25,40)[circle, fill]{};
 \node (x123 2) at (30,45)[circle, draw]{};
 \node (x2 2) at (15,30)[circle, fill]{};
 \node (x12 2) at (30,30)[circle, fill]{};
 \draw [ultra thick] (x3 2) to (x1 2) to (x13 2) ;
 \draw [ultra thick] (x2 2) to (x1 2) to (x12 2) ;
\draw [thin] (x 2) to (x2 2) to (x12 2) to (x1 2) to (x 2) to (x3 2) to (x13 2) to (x123 2) to (x12 2); 
 \node (x 3) at (25,0) [circle,draw]{};
 \node (x1 3) at (40,0) [circle,fill]{};
 \node (x3 3) at (25,15) [circle,fill]{};
 \node (x23 3) at (30,20) [circle,fill]{};
 \node (x123 3) at (45,20) [circle,draw]{};
 \node (x2 3) at (30,5) [circle,fill,label=-90:$x$]{};
 \node (x12 3) at (45,5) [circle,fill]{};
\draw [ultra thick] (x1 3) to (x2 3) to (x12 3);
\draw [ultra thick] (x3 3) to (x2 3) to (x23 3);
\draw [thin] (x 3) to (x2 3) to (x12 3) to (x1 3) to (x 3) to (x3 3) to (x23 3) to (x123 3) to (x12 3); 
 \node (x 4) at (35,25)  [circle,draw,label=180:$x$]{};
 \node (x1 4) at (50,25) [circle,draw]{};
 \node (x3 4) at (35,40) [circle,draw]{};
 \node (x13 4) at (50,40) [circle,draw]{};
 \node (x23 4)  at (40,45) [circle,draw]{};
 \node (x2 4) at (40,30) [circle,draw]{};
 \node (x12 4) at (55,30) [circle,draw]{};
\draw [thin]  (x2 4) to (x 4) to (x1 4);
\draw [thin] (x 4) to (x3 4) to (x13 4) to (x1 4) to (x12 4) to (x2 4) to (x23 4) to (x3 4);
\draw (23,20) node [right]{$\boldsymbol{+}$};
%
%
\node (x 5) at (55,15) [circle,draw]{};
\node (x1 5) at (70,15) [circle,fill]{};
\node (x11 5) at (85,15) [circle,fill]{};
\node (x2 5) at (60,20) [circle,fill]{};
\node (x12 5) at (75,20) [circle,fill,label=-90:$x$]{}; 
\node (x112 5) at (90,20) [circle,fill]{};
\node (x22 5) at (65,25) [circle,fill]{};
\node (x122 5) at (80,25) [circle,fill]{};
\node (x1122 5) at (95,25) [circle,draw]{};
\draw [thin]  (x 5) to (x1 5) to (x11 5) to (x112 5) to (x1122 5) to (x122 5) to (x22 5) to (x2 5) to (x 5);
\draw [ultra thick] (x12 5) to (x112 5);
\draw [ultra thick] (x12 5) to (x122 5);
\draw [ultra thick] (x12 5) to (x2 5);
\draw [ultra thick] (x12 5) to (x1 5);
\draw [ultra thick] (x12 5) to (x22 5);
\draw [ultra thick] (x12 5) to (x11 5);
\draw (50,20) node[right]{$\boldsymbol{=}$};
\end{tikzpicture}
   \caption{Sum of four 3D corner equations (matched at the vertex $x$, one of the equations being void) results in a planar seven-point Laplace type equation on the regular triangular lattice}
   \label{fig: from corner eqs to 7-point}
\end{figure}

Now we turn to the case of one-parameter families of 3D corner equations, where we can obtain results generalizing to 3D the spectrality property of Theorem \ref{thm spectrality}. We fix the following framework. Suppose that one of the coordinate directions plays a special role (we denote this direction by ``0''). Assume that all other coordinate directions (denoted by $i$, $j$, etc.) correspond to certain instances of a parameter (denoted, respectively, by $\lambda$, $\mu$, etc.). Thus,
\begin{equation}\label{Lagr 2D param}
\Psi_i(\xi)=\Psi(\xi;\lambda), \quad \Phi_{ij}(\xi)=\Phi(\xi; \lambda,\mu), \quad \Phi_{i0}(\xi)=\Phi_0(\xi;\lambda),
\end{equation}
where $\Phi(\xi;\lambda,\mu)=-\Phi(-\xi;\mu,\lambda)$. Moreover, we will denote the shifts in the coordinate directions $i$, $j$ by tilde and by hat, respectively. The indices for the coordinate direction ``0'' will be denoted by $k$, and their shift will not be abbreviated. In this specific context, we can re-write expression \eqref{eq: 3point S} for $d\cL$ as follows:
\begin{equation}\label{eq: dL}
\begin{split}
d\cL=S^{ij0}& = \Psi(\wx_{k+1}-x_{k+1};\lambda)-\Psi(\whx_{k+1}-x_{k+1};\mu)-\Psi(\widehat{\wx}_k-\whx_k;\lambda)+\Psi(\widehat{\wx}_k-\wx_k;\mu)\\
       &\quad -\Psi_0(\wx_{k+1}-\wx_k) +\Psi_0(\whx_{k+1}-\whx_k)-\Phi(\whx_{k+1}-\wx_{k+1};\lambda,\mu)+\Phi(\whx_k-\wx_k;\lambda,\mu)\\
       &\quad +\Phi_{0}(\whx_{k+1}-\widehat{\wx}_k;\lambda)-\Phi_{0}(\wx_{k+1}-\widehat{\wx}_k;\mu)- \Phi_{0}(x_{k+1}-\wx_k;\lambda)+\Phi_{0}(x_{k+1}-\whx_k;\mu).\qquad
\end{split}
\end{equation}

\begin{theorem}\label{th: 2d conserv laws}
A three-point discrete 2-form $\cL$ with the discrete edge Lagrangians \eqref{Lagr 2D param} is closed on solutions of the system of 3D corner equations if and only if the latter system admits the conservation law
\begin{equation}\label{eq: 2d conserv law}
    \Delta_j P_{i0}=\Delta_0 P_{ij},
\end{equation}
with the densities
\begin{align}
    &P_{i0} = \frac{\partial \Psi(\wx_k-x_k;\lambda)}{\partial\lambda}-
    \frac{\partial\Phi_{0}(x_{k+1}-\wx_k;\lambda)}{\partial\lambda},
    \label{eq: 2d conserv law dens i0}\\
    &P_{ij} = \frac{\partial \Psi(\wx_k-x_k;\lambda)}{\partial\lambda}-
    \frac{\partial\Phi(\whx_k-\wx_k;\lambda,\mu)}{\partial\lambda}.
    \label{eq: 2d conserv law dens ij}
\end{align}
\end{theorem}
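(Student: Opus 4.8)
The plan is to mimic, in the 2D setting, the argument used in the proof of Theorem \ref{thm spectrality}: relate the closure defect $d\cL$ to a $\lambda$-derivative, and then identify that derivative with a discrete divergence. By Theorem \ref{Th: almost closed 2d}, on solutions of the 3D corner equations the quantity $d\cL = S^{ij0}$ is a constant $c^{ij0} = c(\lambda,\mu)$, skew-symmetric in the sense $c(\lambda,\mu;\text{direction }0) = -c(\mu,\lambda;\dots)$ under interchange of the tilde- and hat-directions. As in the 1D case, closedness $c(\lambda,\mu)=0$ is then equivalent to $\partial c/\partial\lambda = 0$. So the first step is to differentiate the explicit expression \eqref{eq: dL} for $S^{ij0}$ with respect to $\lambda$ (treating $\mu$ as fixed), using the chain rule. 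All the terms arising from differentiating the field values $\wx_k$, $\widehat{\wx}_k$, $\wx_{k+1}$, etc.\ with respect to $\lambda$ are multiplied by factors $\partial S^{ij0}/\partial(\text{that field})$, which vanish on solutions by the corner equations \eqref{eq: corner eqs}; hence only the ``explicit'' $\lambda$-dependence survives.

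Second, I would collect the surviving explicit-$\lambda$ terms. Inspecting \eqref{eq: dL}, the terms carrying an explicit $\lambda$ are: $\Psi(\wx_{k+1}-x_{k+1};\lambda)$, $-\Psi(\widehat{\wx}_k-\whx_k;\lambda)$, $-\Phi(\whx_{k+1}-\wx_{k+1};\lambda,\mu)$, $+\Phi(\whx_k-\wx_k;\lambda,\mu)$, $+\Phi_0(\whx_{k+1}-\widehat{\wx}_k;\lambda)$, and $-\Phi_0(x_{k+1}-\wx_k;\lambda)$. Differentiating and writing $\partial_\lambda\Psi$, $\partial_\lambda\Phi$, $\partial_\lambda\Phi_0$ for the partials with respect to the parameter slot, I expect to be able to reorganize the result as a combination of the two densities $P_{i0}$ and $P_{ij}$ evaluated at shifted arguments, i.e.\ as $\Delta_j P_{i0} - \Delta_0 P_{ij}$ (up to possibly a further term that itself vanishes on corner equations, or that telescopes). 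Concretely, $P_{i0}$ from \eqref{eq: 2d conserv law dens i0} combines a $\partial_\lambda\Psi$ of a tilde-edge with a $\partial_\lambda\Phi_0$ of a $0$-to-tilde diagonal, and $P_{ij}$ from \eqref{eq: 2d conserv law dens ij} combines the same $\partial_\lambda\Psi$-term with a $\partial_\lambda\Phi$ of a hat-minus-tilde diagonal; the bookkeeping of which vertex each term sits on is what realizes $\partial_\lambda S^{ij0}$ as $\Delta_j P_{i0} - \Delta_0 P_{ij}$.

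Third, with this identity in hand the theorem follows by a short logical argument in both directions. If $\cL$ is closed on solutions, then $\partial_\lambda S^{ij0} = 0$ identically on solutions, hence $\Delta_j P_{i0} = \Delta_0 P_{ij}$, which is the claimed conservation law \eqref{eq: 2d conserv law}. Conversely, if the conservation law holds on solutions, then $\partial_\lambda c(\lambda,\mu) = 0$; combined with skew-symmetry $c(\mu,\lambda) = -c(\lambda,\mu)$ (so that $c(\lambda,\lambda)=0$) and the fact that $c$ does not depend on $\mu$ once $\partial_\lambda c=0$ is integrated, one concludes $c(\lambda,\mu)\equiv 0$, i.e.\ $d\cL = 0$ on solutions. (One must be a little careful here: $\partial_\lambda c = 0$ alone gives $c = c(\mu)$, and skew-symmetry then gives $c(\mu) = -c(\lambda)$ for all $\lambda,\mu$, forcing $c$ constant and equal to its own negative, hence zero — exactly as in the 1D proof.)

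The main obstacle I anticipate is the second step: correctly matching the six surviving derivative terms in $\partial_\lambda S^{ij0}$, which live on several different vertices of the octahedron of Figure \ref{Fig: 3-point case}\subref{Fig: octahedron}, against the shifted copies of $P_{i0}$ and $P_{ij}$ appearing in $\Delta_j P_{i0} - \Delta_0 P_{ij}$. This requires keeping precise track of the shift structure (which terms are at $x_{k+1}$ versus $x_k$, and which carry a tilde, a hat, or both) and exploiting the skew-symmetry relations $\Phi(\xi;\lambda,\mu) = -\Phi(-\xi;\mu,\lambda)$ and $\Phi_0(\xi;\lambda)$ appropriately; it is conceivable that a residual term survives which must then be argued away using one of the corner equations \eqref{eq: 3point Ei}–\eqref{eq: 3point Eij}. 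Everything else is the same formal skeleton as the 1D spectrality theorem and should go through routinely.
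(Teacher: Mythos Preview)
Your proposal is correct and follows essentially the same route as the paper: invoke Theorem~\ref{Th: almost closed 2d} to get $d\cL=\ell(\lambda,\mu)$ constant and skew-symmetric, reduce closedness to $\partial\ell/\partial\lambda=0$, differentiate \eqref{eq: dL} with the chain-rule terms killed by the corner equations, and identify the six surviving explicit-$\lambda$ terms with $\Delta_j P_{i0}-\Delta_0 P_{ij}$. Your anticipated obstacle does not materialize: after the $\partial_\lambda\Psi(\wx_k-x_k;\lambda)$ terms cancel between $\Delta_j P_{i0}$ and $\Delta_0 P_{ij}$, the remaining six terms match exactly (up to an overall sign) the six surviving derivative terms, with no residual and no need to invoke any further corner equation.
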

\begin{proof} According to Theorem \ref{Th: almost closed 2d}, quantity \eqref{eq: dL} is constant on solutions of the system of 3D corner equations: $d\cL=\ell(\lambda,\mu)$. This constant is obviously skew-symmetric: $\ell(\mu,\lambda)=-\ell(\lambda,\mu)$. Then $\ell(\lambda,\mu)=0$ is equivalent to $\partial \ell/\partial\lambda=0$. Differentiating equation \eqref{eq: dL} with respect to $\lambda$ and taking into account that the terms containing $\partial \wx_k/\partial\lambda$ etc., appearing due to the chain rule, vanish by virtue of the corresponding 3D corner equations, we arrive at
\begin{align*}
&\frac{\partial \Psi(\wx_{k+1}-x_{k+1};\lambda)}{\partial\lambda}
-\frac{\partial \Psi(\widehat{\wx}_k-\whx_k;\lambda)}{\partial\lambda}
-\frac{\partial\Phi_{0}(x_{k+1}-\wx_k;\lambda)}{\partial\lambda}
 +\frac{\partial\Phi_{0}(\whx_{k+1}-\widehat{\wx}_k;\lambda)}{\partial\lambda} \\
&\qquad -\frac{\partial\Phi(\whx_{k+1}-\wx_{k+1};\lambda,\mu)}{\partial\lambda}
 +\frac{\partial\Phi(\whx_k-\wx_k;\lambda,\mu)}{\partial\lambda}
 \, = \,  0.
\end{align*}
This is equivalent to formula \eqref{eq: 2d conserv law}.
\end{proof}

\paragraph{Bibliographical remarks.} Discrete three-point 2-forms as in \eqref{eq: 3point} were introduced in \cite{LN09} as an ingenious device to generalize the action \eqref{action dRTL 2d} from $\mathbb Z^2$ to an arbitrary quad-surface in a multi-dimensional lattice. For several equations from the ABS list, it was shown in \cite{LN09} that solutions of quad-equations deliver critical points for the action functional over an arbitrary quad-surface in $\mathbb Z^m$, and that the critical value of action is invariant under local flips of the quad-surface. This paper pioneered the pluri-Lagrangian theory. In \cite{BS10}, a conceptual proof  of these facts has been given for all quad-equations of the ABS list. 
 A decisive step, shifting the focus from quad-equations to 3D corner equations as main objects of interest, was made in \cite{BPS15}. Our presentation follows that paper.

\section{3D corner equations for relativistic Toda type systems}
\label{sect: from pluri to rel}

Like in the previous section, we now consider the situation where one of the coordinate directions (which we denote as the 0\textsuperscript{th} one) plays a distinguished role. We will use the index $k$ for this coordinate direction only. It will enumerate the sites of relativistic Toda chains. Accordingly, we will only consider surfaces in $\mathbb Z^m$ which contain, along with any point, the whole line through this point parallel to the 0\textsuperscript{th} coordinate axis. One can call such surfaces \emph{cylindrical}. The set of values of $x$ along such a line, $x=\{x_k: k\in\mathbb Z\}$, or, upon a finite-dimensional reduction, $x=\{x_k: 1\le k\le N\}$, is an element of the configuration space $\mathcal X$ of the relativistic Toda lattice. For other coordinate directions (denoted by $i$, $j$), we will use tilde, resp. hat to denote the corresponding shifts. The shift in the 0\textsuperscript{th} coordinate direction will not be abbreviated.

\begin{definition}\label{def: Fi}
Consider a pluri-Lagrangian system with a three-point 2-form \eqref{eq: 3point}. 
The maps $F_i: T^*\mathcal X\to T^*\mathcal X$, $(x,p)\mapsto(\wx,\wip)$ are defined as the symplectic maps with the generating functions
\begin{equation}\label{eq: dRTL1 Lagr}
\Lambda_i(x,\wx)=\sum_{n=1}^N \Psi_i(\wx_k-x_k)-\sum_{n=1}^N \Psi_0(x_{k+1}-x_k)-\sum_{n=1}^N\Phi_{i0}(x_{k+1}-\wx_k),
\end{equation}
thus equations of motion for $F_i$ read:
\begin{equation}
F_i:\left\{\begin{array}{ll}
p_k=-\dfrac{\partial \Lambda_{i}}{\partial x_k} & =\ \psi_i(\wx_k-x_k)+\phi_{i0}(x_k-\wx_{k-1})-\psi_0(x_{k+1}-x_k)+\psi_0(x_k-x_{k-1}), \vspace{3pt}\\
\wip_k= \quad\dfrac{\partial \Lambda_{i}}{\partial \wx_k} & =\ \psi_i(\wx_k-x_k)+\phi_{i0}(x_{k+1}-\wx_k),
\end{array}\right.
\end{equation}
with the corresponding Euler-Lagrange equations
\begin{align}
& \psi_i(\wx_k-x_k)-\psi_i(x_k-\undertilde{x}_k) \nonumber\\
& \qquad = \psi_0(x_{k+1}-x_k)-\psi_0(x_k-x_{k-1})+\phi_{i0}(\undertilde{x}_{k+1}-x_k)-\phi_{i0}(x_k-\wx_{k-1}).
\end{align}
\end{definition}

The map $F_i$ corresponds to the edges $(x,\wx)=(x,x_i)$ of the $i$\textsuperscript{th} coordinate direction, to which the strip supporting $\Lambda_{i}$ projects along the 0\textsuperscript{th} coordinate axis. See the identifications of variables on Figure~\ref{Fig: dRTL Lagrangian +}. 

\begin{theorem}\label{thm 3D for Fi commute Fj}
If the system of 3D corner equations corresponding to a three-point 2-form \eqref{eq: 3point} is consistent, then the maps $F_i$, $F_j$ commute.
\end{theorem}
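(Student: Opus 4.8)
The plan is to deduce commutativity of the maps $F_i$, $F_j$ from the consistency of the system of 3D corner equations, following exactly the pattern of Theorem~\ref{thm 1d commute} but now applied to a three-point 2-form and with the extra ``spectator'' index $k$ running along the $0^{\mathrm{th}}$ direction. The central observation is that the composition $F_j\circ F_i$ produces, out of $(x,p)$, first the pair $(\wx,\wip)$ and then the pair $(\htilde{x},\htilde{p})$, and at each stage the transition is encoded entirely by corner equations of the three-point 2-form $\cL$ restricted to elementary squares projecting to the edges $\sigma_i$, $\sigma_j$ under the $0^{\mathrm{th}}$-axis projection. So the whole statement is a bookkeeping reduction: ``$F_j\circ F_i = F_i\circ F_j$'' must be translated into ``the fields $x,\wx,\whx,\htilde{x}$ (together with the momenta) satisfy all four corner equations $(E)$, $(E_i)$, $(E_j)$, $(E_{ij})$ of the appropriate type''.

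First I would write out, using Definition~\ref{def: Fi}, what each map does in terms of the generating functions $\Lambda_i$, $\Lambda_j$ from \eqref{eq: dRTL1 Lagr}. The equation $p = -\partial\Lambda_i/\partial x_k$ defines $\wx$ in terms of $(x,p)$; the equation $\wip = \partial\Lambda_i/\partial\wx_k$ then gives $\wip$. Composing with $F_j$ (parameter $\mu$, shift ``hat''), $\wip = -\partial\Lambda_j(\wx,\htilde{x})/\partial\wx_k$ and $\htilde{p} = \partial\Lambda_j(\wx,\htilde{x})/\partial\htilde{x}_k$. Adding the two expressions for $\wip$ yields precisely corner equation $(E_i)$ (in the tilde/hat notation, the analog of \eqref{eq: E1 tilde}) summed over $k$; running the composition in the other order, $F_i\circ F_j$, one gets $(E_j)$ and, for the common value $\htilde{x}$, equation $(E_{ij})$. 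The starting relation between $x$, $\wx$, $\whx$ is corner equation $(E)$: it is the statement that $p$ computed via $\Lambda_i$ from $(x,\wx)$ equals $p$ computed via $\Lambda_j$ from $(x,\whx)$, which is automatic since both equal the given $p$. So the real content is: \emph{consistency} of the 3D corner system (Definition~\ref{def: corner eqs consist}, rank 2) guarantees that given $x,\wx,\whx$ satisfying $(E)$, the value $\htilde{x}$ solving $(E_i)$ coincides with the one solving $(E_j)$ and also satisfies $(E_{ij})$ — which is exactly the geometric diagram of Figure~\ref{Fig: consistency}\subref{Fig: consistency2}, now at each site $k$.

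The key step, and the one I expect to be slightly delicate, is verifying that the 3D corner equations for the full $\bbZ^m$ three-point 2-form, when restricted to \emph{cylindrical} surfaces and summed over the $0^{\mathrm{th}}$ direction, reduce exactly to the 1D-type corner equations $(E)$--$(E_{ij})$ of Theorem~\ref{thm 1d commute} for the generating functions $\Lambda_i$, $\Lambda_j$. Concretely: the vertex-star Euler--Lagrange equation at a site of the cylindrical surface is a sum of $0^{\mathrm{th}}$-direction 3D corner equations $\partial S^{ij0}/\partial(\cdot) = 0$ (the decomposition used in Figure~\ref{fig: from corner eqs to 7-point} and Figure~\ref{Fig: star from corners}), and collecting these over the whole line $\{x_k\}$ telescopes the $\Psi_0$ and $\Phi_{i0}$ contributions into the expressions appearing in \eqref{eq: dRTL1 Lagr}. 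Once this identification is in place, the commutativity follows verbatim from the ``conversely'' direction is not even needed — it is the forward direction of Theorem~\ref{thm 1d commute} applied to the one-parameter family $\{F_i\}$, with $\Lambda_i = \Lambda(x,\wx;\lambda)$ playing the role of the 1D generating function and the index $k$ being an inert label.

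Thus the proof structure is: (1) write the equations of motion of $F_i$, $F_j$ from Definition~\ref{def: Fi}; (2) observe that the compatibility conditions for $F_j\circ F_i = F_i\circ F_j$ are, site by site, the corner equations $(E)$--$(E_{ij})$ in the hat/tilde notation of \eqref{eq: E tilde}--\eqref{eq: E12 tilde}; (3) invoke consistency of the 3D corner system — noting that restriction to cylindrical surfaces plus summation over the $0^{\mathrm{th}}$ direction sends the 3D corner equations of the three-point 2-form to precisely these 1D-type corner equations — to conclude that the two orders of composition agree. The only genuine obstacle is the careful check in step (3) that the $0^{\mathrm{th}}$-direction summation/telescoping really does produce $\Lambda_i$ of \eqref{eq: dRTL1 Lagr} and no extra boundary terms survive; this is straightforward for periodic or open-end boundary conditions but deserves to be spelled out. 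Everything else is the same argument as in the purely one-dimensional Theorem~\ref{thm 1d commute}, which we may quote.
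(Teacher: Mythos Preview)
Your overall framework is correct and matches the paper's: reduce commutativity of $F_i$, $F_j$ to consistency of the system of 2D corner equations $(E)$, $(E_i)$, $(E_j)$, $(E_{ij})$ for the 1D pluri-Lagrangian system with generating functions $\Lambda_i$, $\Lambda_j$ (Theorem~\ref{thm 1d commute}), and then derive that consistency from the assumed consistency of the 3D corner equations. Steps (1) and (2) of your plan are fine.

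The gap is in step (3). Your ``summation/telescoping over the $0^{\mathrm{th}}$ direction'' is not the actual mechanism, and as stated it does not establish what you need. The 2D corner equations $(E)$ and $(E_{ij})$ are each \emph{directly} a single 3D corner equation (namely $\partial S^{ij0}/\partial x_{k}=0$ and $\partial S^{ij0}/\partial \htilde{x}_k=0$); no sum is involved. The equations $(E_i)$ and $(E_j)$, by contrast, are each a sum of \emph{two} 3D corner equations coming from two neighboring cubes sharing a face --- not a telescoping sum over all $k$. More importantly, knowing that $(E_i)$ is a sum of two 3D corner equations does not by itself tell you that solving $(E_i)$ for $\htilde{x}$ gives the same answer as solving $(E_j)$: both $(E_i)$ and $(E_j)$ are recursions in $k$ (each involves $\htilde{x}_k$ and $\htilde{x}_{k-1}$), and ``each is a sum of consistent pieces'' does not imply the two recursions coincide.

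What the paper does, and what you are missing, is to introduce the \emph{superposition formulas}: the four remaining 3D corner equations in the cube, those at the vertices $\wx_k$, $\whx_k$, $\wx_{k+1}$, $\whx_{k+1}$. Each of these is \emph{local} in $\htilde{x}$ --- it determines $\htilde{x}_k$ outright from $x,\wx,\whx$ at sites $k$ and $k+1$. The six equations $(E)_{k\to k+1}$, $(E_{ij})$, and the four superposition formulas constitute exactly the six 3D corner equations of one cube; the assumed consistency (rank 2) then says that if $(E)$ and one superposition formula hold, all six hold. Finally one checks that $(E_i)$ is the difference of two of these superposition formulas (one from the cube at level $k$, one from the cube at level $k-1$), and similarly for $(E_j)$. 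This is the content of the paper's Theorem~\ref{th: rtl+}, and it is the step your outline does not supply.
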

\begin{figure}[tp]
   \centering
   \subfloat[Domain of the equation \eqref{eq: rtl E}]{\label{fig:6a}
   \begin{tikzpicture}[scale=0.6,inner sep=2]
      \node (x) at (0,0) [circle,draw,label=-90:$x_{k-1}$] {};
      \node (x1) at (3,0) [circle,fill,label=-90:$x_{k}$] {};
      \node (x2) at (1,1) [circle,fill,label=135:$\widehat{x}_{k-1}$] {};
      \node (x3) at (0,3) [circle,fill,label=90:$\widetilde{x}_{k-1}$] {};
      \node (x11) at (6,0) [circle,draw,label=-90:$x_{k+1}$] {};
      \node (x12) at (4,1) [circle,fill,label=90:$\widehat{x}_{k}$] {};
      \node (x13) at (3,3) [circle,fill,label=90:$\widetilde{x}_{k}$] {};
      \node (x112) at (7,1) [circle,draw,label=90:$\widehat{x}_{k+1}$] {};
      \node (x113) at (6,3) [circle,draw,label=90:$\widetilde{x}_{k+1}$] {};
      \node (y) at (9,0) [circle,fill,label=-90:$x$] {};
      \node (y2) at (10,1) [circle,fill,label=90:$\widehat{x}$] {};
      \node (y3) at (9,3) [circle,fill,label=90:$\widetilde{x}$] {};
      \draw (x) to (x1);
      \draw (x) to (x2);
      \draw (x) to (x3);
      \draw [ultra thick] (x1) to (x2);
      \draw [ultra thick] (x1) to (x3);
      \draw (x1) to (x11);
      \draw [ultra thick] (x1) to (x12);
      \draw [ultra thick] (x1) to (x13);
      \draw (x2) to (x12);
      \draw (x3) to (x13);
      \draw (x11) to (x112);
      \draw (x11) to (x113);
      \draw (x12) to (x112);
      \draw (x13) to (x113);
      \draw [ultra thick] (y2) to (y) to (y3);
      \node (x2) at (1,1) [circle,fill,label={135,fill=white}:$\widehat{x}_{k-1}$] {};
   \end{tikzpicture}
   }\qquad
   \subfloat[Domain of the equation \eqref{eq: rtl E1}]{\label{fig:6b}
   \begin{tikzpicture}[scale=0.6,inner sep=2]
      \node (x) at (0,0) [circle,draw,label=-90:$x_{k-1}$] {};
      \node (x1) at (3,0) [circle,fill,label=-90:$x_{k}$] {};
      \node (x3) at (0,3) [circle,fill,label=-135:$\widetilde{x}_{k-1}$] {};
      \node (x11) at (6,0) [circle,fill,label=-90:$x_{k+1}$] {};
      \node (x13) at (3,3) [circle,fill,label=-135:$\widetilde{x}_{k}$] {};
      \node (x23) at (1,4) [circle,fill,label=90:$\htilde{x}_{k-1}$] {};
      \node (x113) at (6,3) [circle,fill,label=-45:$\widetilde{x}_{k+1}$] {};
      \node (x123) at (4,4) [circle,fill,label=90:$\htilde{x}_{k}$] {};
      \node (x1123) at (7,4) [circle,draw,label=90:$\htilde{x}_{k+1}$] {};
      \node (y) at (9,0) [circle,fill,label=-90:$x$] {};
      \node (y23) at (10,4) [circle,fill,label=90:$\htilde{x}$] {};
      \node (y3) at (9,3) [circle,fill,label=-45:$\widetilde{x}$] {};
      \draw (x) to (x1);
      \draw (x) to (x1);
      \draw (x) to (x3);
      \draw (x1) to (x11);
      \draw [ultra thick] (x1) to (x13);
      \draw [ultra thick] (x3) to (x13);
      \draw (x3) to (x23);
      \draw [ultra thick] (x11) to (x13);
      \draw (x11) to (x113);
      \draw [ultra thick] (x13) to (x23);
      \draw [ultra thick] (x13) to (x113);
      \draw [ultra thick] (x13) to (x123);
      \draw (x23) to (x123);
      \draw (x113) to (x1123);
      \draw (x123) to (x1123);
      \draw [ultra thick] (y23) to (y3) to (y);
   \end{tikzpicture}
   }\\
   \subfloat[Domain of the equation \eqref{eq: rtl E2}]{\label{fig:6c}
   \begin{tikzpicture}[scale=0.6,inner sep=2]
      \node (x) at (0,0) [circle,draw,label=-90:$x_{k-1}$] {};
      \node (x1) at (3,0) [circle,fill,label=-90:$x_{k}$] {};
      \node (x2) at (1,1) [circle,fill,label=135:$\widehat{x}_{k-1}$] {};
      \node (x11) at (6,0) [circle,fill,label=-90:$x_{k+1}$] {};
      \node (x12) at (4,1) [circle,fill,label=45:$\widehat{x}_{k}$] {};
      \node (x23) at (1,4) [circle,fill,label=90:$\htilde{x}_{k-1}$] {};
      \node (x112) at (7,1) [circle,fill,label=45:$\widehat{x}_{k+1}$] {};
      \node (x123) at (4,4) [circle,fill,label=90:$\htilde{x}_{k}$] {};
      \node (x1123) at (7,4) [circle,draw,label=90:$\htilde{x}_{k+1}$] {};
      \node (y) at (9,0) [circle,fill,label=-90:$x$] {};
      \node (y2) at (10,1) [circle,fill,label=45:$\widehat{x}$] {};
      \node (y23) at (10,4) [circle,fill,label=90:$\htilde{x}$] {};
      \draw (x) to (x1);
      \draw (x) to (x2);
      \draw (x1) to (x11);
      \draw [ultra thick] (x1) to (x12);
      \draw [ultra thick] (x2) to (x12);
      \draw (x2) to (x23);
      \draw [ultra thick] (x11) to (x12);
      \draw (x11) to (x112);
      \draw [ultra thick] (x12) to (x23);
      \draw [ultra thick] (x12) to (x112);
      \draw [ultra thick] (x12) to (x123);
      \draw (x23) to (x123);
      \draw (x112) to (x1123);
      \draw (x123) to (x1123);
      \draw [ultra thick] (y23) to (y2) to (y);
   \end{tikzpicture}
   }\qquad
      \subfloat[Domain of the equation \eqref{eq: rtl E12}]{\label{fig:6d}
   \begin{tikzpicture}[scale=0.6,inner sep=2]
      \node (x2) at (1,1) [circle,draw,label=-90:$\widehat{x}_{k-1}$] {};
      \node (x3) at (0,3) [circle,draw,label=-90:$\widetilde{x}_{k-1}$] {};
      \node (x12) at (4,1) [circle,fill,label=-90:$\widehat{x}_{k}$] {};
      \node (x13) at (3,3) [circle,fill,label=-90:$\widetilde{x}_{k}$] {};
      \node (x23) at (1,4) [circle,draw,label=90:$\htilde{x}_{k-1}$] {};
      \node (x112) at (7,1) [circle,fill,label=-90:$\widehat{x}_{k+1}$] {};
      \node (x113) at (6,3) [circle,fill,label=-45:$\widetilde{x}_{k+1}$] {};
      \node (x123) at (4,4) [circle,fill,label=90:$\htilde{x}_{k}$] {};
      \node (x1123) at (7,4) [circle,draw,label=90:$\htilde{x}_{k+1}$] {};
      \node (y23) at (10,4) [circle,fill,label=90:$\htilde{x}$] {};
      \node (y2) at (10,1) [circle,fill,label=-90:$\widehat{x}$] {};
      \node (y3) at (9,3) [circle,fill,label=-90:$\widetilde{x}$] {};
      \draw (x2) to (x12);
      \draw (x2) to (x23);
      \draw (x3) to (x13);
      \draw (x3) to (x23);
      \draw (x12) to (x112);
      \draw [ultra thick] (x12) to (x123);
      \draw (x13) to (x113);
      \draw [ultra thick] (x13) to (x123);
      \draw (x23) to (x123);
      \draw [ultra thick] (x112) to (x123);
      \draw (x112) to (x1123);
      \draw [ultra thick] (x113) to (x123);
      \draw (x113) to (x1123);
      \draw (x123) to (x1123);
      \draw [ultra thick] (y2) to (y23) to (y3);
      \node (x113) at (6,3) [circle,fill,label={-45,fill=white}:$\widetilde{x}_{k+1}$] {};
   \end{tikzpicture}
   }
   \caption{2D~corner equations for the system of $F_i$ and $F_j$ interpreted as 3D corner equations: equations $(E)$ and $(E_{ij})$ are 3D~corner equations of the corresponding three-point 2-form, while each of the equations $(E_i)$ and $(E_j)$ is a sum of two 3D~corner equations coming from two neighboring 3D~corners sharing a face.}
   \label{fig: 2d for Fi, Fj}
\end{figure}
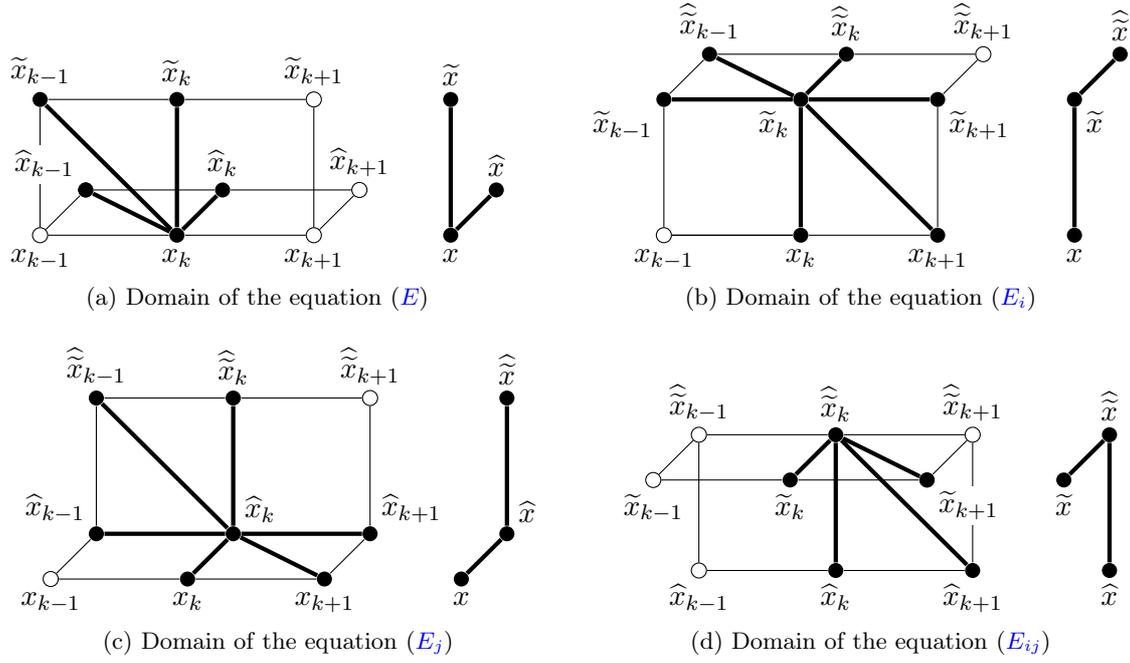

\begin{proof} According to Theorem \ref{thm 1d commute}, commutativity of $F_i$ and $F_j$  is equivalent to the consistency of the system of 2D corner equations
for the corresponding one-dimensional pluri-Lagrangian system with the multi-time $\bbZ^2$. This system reads:
\begin{align}
\psi_i(\wx_k-x_k)+\phi_{i0}(x_k-\wx_{k-1})= & \ 
\psi_j(\whx_k-x_k)+\phi_{j0}(x_k-\whx_{k-1}),
\label{eq: rtl E}\tag{$E$} \\
\psi_i(\wx_k-x_k)+\phi_{i0}(x_{k+1}-\wx_k)= & \ 
\psi_j(\widehat{\wx}_k-\wx_k)+\phi_{j0}(\wx_k-\widehat{\wx}_{k-1}) \nonumber\\
&\qquad -\psi_0(\wx_{k+1}-\wx_k)+\psi_0(\wx_k-\wx_{k-1}),
\label{eq: rtl E1}\tag{$E_i$} \\
\psi_j(\whx_k-x_k)+\phi_{j0}(x_{k+1}-\whx_k)= & \ 
\psi_i(\widehat{\wx}_k-\whx_k)+\phi_{i0}(\whx_k-\widehat{\wx}_{k-1}) \nonumber\\
&\qquad-\psi_0(\whx_{k+1}-\whx_k)+\psi_0(\whx_k-\whx_{k-1}),
\label{eq: rtl E2}\tag{$E_j$}\\
\psi_i(\widehat{\wx}_k-\whx_k)+\phi_{i0}(\whx_{k+1}-\widehat{\wx}_k)= & \
\psi_j(\widehat{\wx}_k-\wx_k)+\phi_{j0}(\wx_{k+1}-\widehat{\wx}_k).
\label{eq: rtl E12}\tag{$E_{ij}$}
\end{align}
A visualization of the 2D~corner equations embedded in $\bbZ^{3}$ is given in Figure~\ref{fig: 2d for Fi, Fj}. Discrete curves in the multi-time plane $\bbZ^{2}$ (including the simplest such curves, the 2D corners themselves) are in a one-to-one correspondence with cylindrical surfaces in $\bbZ^3$, via the projection along the 0\textsuperscript{th} coordinate direction of $\bbZ^3$. 

Consistency of system \eqref{eq: rtl E}--\eqref{eq: rtl E12}  is proved with the help of the following statement.
\begin{theorem}\label{th: rtl+}
Suppose that the system of 3D corner equations corresponding to a three-point 2-form \eqref{eq: 3point} is consistent. Let the fields $x$, $\wx$, and $\whx$ satisfy 2D~corner equations \eqref{eq: rtl E}. Define the fields $\widehat{\wx}$ by any of the following four formulas, which are equivalent by virtue of \eqref{eq: rtl E}:
\begin{align}
\label{eq: rtl S1}\tag{$S1a$}
& \psi_j(\widehat{\wx}_k-\wx_k)+\phi_{ij}(\whx_k-\wx_k)=
\psi_0(\wx_{k+1}-\wx_k)+\phi_{i0}(x_{k+1}-\wx_k), \\
\label{eq: rtl S2}\tag{$S1b$}
& \psi_i(\widehat{\wx}_k-\whx_k)+\phi_{ji}(\wx_k-\whx_k)=
\psi_0(\whx_{k+1}-\whx_k)+\phi_{j0}(x_{k+1}-\whx_k), \\
\label{eq: rtl S3}\tag{$S2a$}
& \psi_i(\wx_{k+1}-x_{k+1})+\phi_{ji}(\wx_{k+1}-\whx_{k+1})=
\psi_0(\wx_{k+1}-\wx_k)+\phi_{j0}(\wx_{k+1}-\widehat{\wx}_k), \\
\label{eq: rtl S4}\tag{$S2b$}
& \psi_j(\whx_{k+1}-x_{k+1})+\phi_{ij}(\whx_{k+1}-\wx_{k+1})=
\psi_0(\whx_{k+1}-\whx_k)+\phi_{i0}(\whx_{k+1}-\widehat{\wx}_k),
\end{align}
called {\em superposition formulae} (note that each one of these formulas is local with respect to $\widehat{\wx}$). Then the 2D~corner equations \eqref{eq: rtl E1}, \eqref{eq: rtl E2}, and \eqref{eq: rtl E12} are satisfied, as well.
\end{theorem}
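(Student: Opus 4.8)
The plan is to mimic, almost verbatim, the structure of the proof of Theorem~\ref{th: superposition} for the exponential Toda lattice, now inside the general three-point 2-form framework. The essential point is that the four superposition formulae \eqref{eq: rtl S1}--\eqref{eq: rtl S4} are nothing but three-leg forms (centered at various ``black'' vertices) of the quad-equations of the underlying 3D consistent system, restricted to suitable faces of the dual kagome lattice. So the whole argument reduces to bookkeeping of which 3D corner equation lives on which face of the prism over a 2D corner in $\bbZ^2$, together with the telescoping identity \eqref{eq: Sijk} for $d\cL$.

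First I would establish the equivalence of the four superposition formulae modulo \eqref{eq: rtl E}. For each of \eqref{eq: rtl S1}--\eqref{eq: rtl S4} I rewrite the equation in its three-leg form centered at the relevant black vertex (for \eqref{eq: rtl S1} centered at $\wx_k$, for \eqref{eq: rtl S2} at $\whx_k$, for \eqref{eq: rtl S3} and \eqref{eq: rtl S4} at $x_{k+1}$), exactly as was done in the exponential case where \eqref{eq: BT Toda S1} became \eqref{eq: BT Toda S1 3leg xk+1} and \eqref{eq: BT Toda S1 3leg wx}. The key structural fact I invoke is the \emph{legs matching property}: the ``short-leg'' function $\psi$ attached to a given white-to-black edge is the same for both quadrilaterals sharing that edge. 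Using this, the three-leg forms of two superposition formulae written about the same vertex have identical left-hand (or right-hand) long-leg contributions, so their difference is precisely equation \eqref{eq: rtl E}. This gives all pairwise equivalences.

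Next I derive \eqref{eq: rtl E1} (and, symmetrically, \eqref{eq: rtl E2}) from the superposition formulae. Here I take the appropriate pair --- concretely \eqref{eq: rtl S1} and \eqref{eq: rtl S4} with $k\to k-1$ --- written as three-leg forms centered at $\wx_k$, and take the obvious linear combination, just as \eqref{eq: BT Toda S1 3leg wx} and \eqref{eq: BT Toda S2 3leg wx} combined to give \eqref{eq: BT Toda E1}. The anti-symmetry $\phi_{ij}(\xi)=\phi_{ji}(-\xi)$ ensures the diagonal terms combine correctly, and the $\psi_0$ terms reassemble into the evolution part of \eqref{eq: rtl E1}. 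Finally, \eqref{eq: rtl E12} is obtained as in the third step of Theorem~\ref{th: superposition}: I observe that the sum of \eqref{eq: rtl E}, \eqref{eq: rtl S1} (say) and the matching $S2$-type formula telescopes --- this is exactly the identity $d\cL(\sigma_{ij0})=\Delta_0\cL(\sigma_{ij})+\Delta_i\cL(\sigma_{j0})+\Delta_j\cL(\sigma_{0i})$ read on the prism over the 2D corner --- so that the remaining relation among the top vertices is \eqref{eq: rtl E12}. Consistency of the system of 3D corner equations is what guarantees that this sum is legitimate (the relevant 3D corner equations hold), and also that the $\widehat{\wx}$ produced is unambiguous.

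The main obstacle I anticipate is purely combinatorial: correctly identifying, for the prism over each 2D corner in $\bbZ^2$ (the four cases in Figure~\ref{fig: 2d for Fi, Fj}), which elementary cubes $\sigma_{ij0}$, $\sigma_{i0j'}$, etc. tile it, hence which 3D corner equations appear with which signs in the telescoping sum --- and matching these against the six faces of the dual kagome lattice around $x_k$ in Figure~\ref{Fig: quads}. Once the dictionary between ``3D corner equation on a face of the kagome prism'' and ``leg of a quad-equation'' is pinned down, the algebra is a routine repetition of the exponential-Toda computation with the concrete functions $\psi$, $\phi$, $\psi_0$ replaced by abstract $\psi_i$, $\phi_{ij}$, $\psi_0$, and the verification that all short-leg ($\lambda$- and $\mu$-dependent in the parametrized case) contributions cancel is automatic from the leg-matching property rather than from any explicit identity.
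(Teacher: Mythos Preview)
Your strategy diverges substantially from the paper's, and it imports machinery that is not available under the stated hypotheses. You invoke three-leg forms of quad-equations and the legs-matching property from Section~\ref{Sect: dToda}, but the theorem is formulated purely in terms of a three-point 2-form \eqref{eq: 3point} whose system of 3D corner equations is assumed consistent in the sense of Definition~\ref{def: corner eqs consist}; no quad-equations are given, and there is no reason the corner equations should arise from some system with the leg-matching property. The paper's proof avoids all of this by a single structural observation you missed: the two equations \eqref{eq: rtl E}$_{k\to k+1}$, \eqref{eq: rtl E12} together with the four superposition formulae \eqref{eq: rtl S1}--\eqref{eq: rtl S4} are \emph{precisely} the six 3D corner equations \eqref{eq: 3point Ei}, \eqref{eq: 3point Eij} of one elementary cube of $\bbZ^3$ (compare Figures~\ref{fig: 2d for Fi, Fj}\subref{fig:6a},\subref{fig:6d} and \ref{fig: 2d for superposition formulas}). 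The consistency hypothesis (rank~2) then says directly that \eqref{eq: rtl E} and any one superposition formula force all six --- hence the equivalence of \eqref{eq: rtl S1}--\eqref{eq: rtl S4} and the validity of \eqref{eq: rtl E12} come for free, with no explicit computation.

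For \eqref{eq: rtl E1} and \eqref{eq: rtl E2} the paper simply subtracts: \eqref{eq: rtl E1} is \eqref{eq: rtl S1} minus \eqref{eq: rtl S3}$_{k\to k-1}$ (the $\phi_{ij}$-terms cancel by the skew-symmetry $\phi_{ij}(\xi)=\phi_{ji}(-\xi)$), and \eqref{eq: rtl E2} is \eqref{eq: rtl S2} minus \eqref{eq: rtl S4}$_{k\to k-1}$. Your proposed pairing of \eqref{eq: rtl S1} with \eqref{eq: rtl S4}$_{k\to k-1}$ does not produce \eqref{eq: rtl E1} by direct subtraction (the surviving terms involve $\whx$-variables absent from \eqref{eq: rtl E1}), and the extra three-leg rewriting you suggest to fix this is exactly the step that relies on unavailable quad-equation structure. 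The paper's route is both shorter and uses only what is assumed.
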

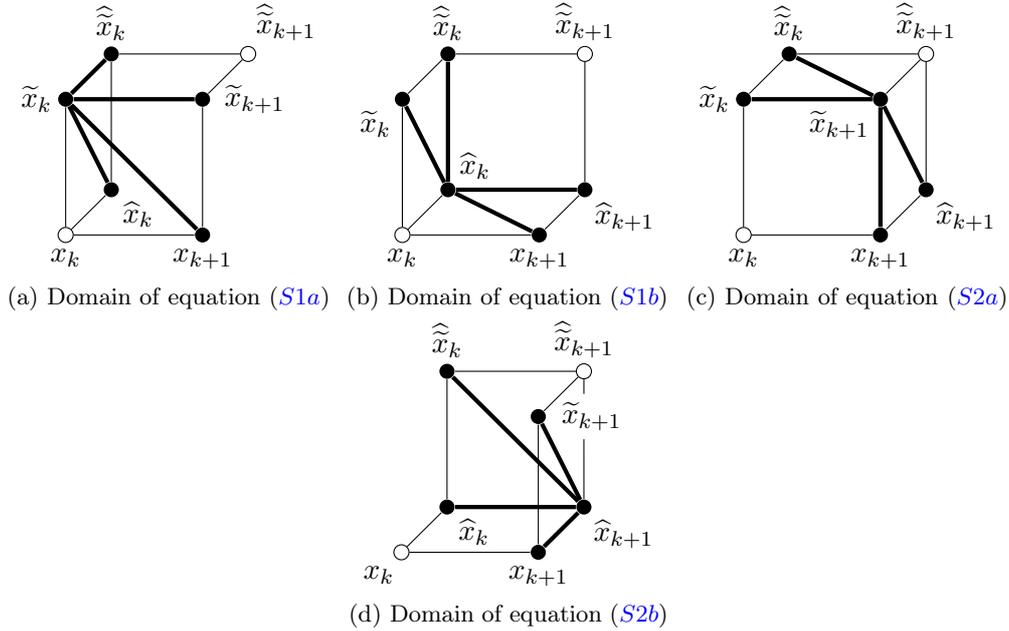
\begin{figure}[htb]
   \centering
   \subfloat[Domain of equation \eqref{eq: rtl S1}]{\label{fig: S1}
   \begin{tikzpicture}[scale=0.6,inner sep=2]
      \node (x) at (0,0) [circle,draw,label=-90:$x_{k}$] {};
      \node (x1) at (3,0) [circle,fill,label=-90:$x_{k+1}$] {};
      \node (x2) at (1,1) [circle,fill,label=-45:$\widehat{x}_{k}$] {};
      \node (x3) at (0,3) [circle,fill,label=180:$\widetilde{x}_{k}$] {};
      \node (x13) at (3,3) [circle,fill,label=0:$\;\widetilde{x}_{k+1}$] {};
      \node (x23) at (1,4) [circle,fill,label=90:$\htilde{x}_{k}$] {};
      \node (x123) at (4,4) [circle,draw,label=75:$\htilde{x}_{k+1}$] {};
      \draw (x) to (x2);
      \draw (x) to (x3);
      \draw (x) to (x1);
      \draw (x1) to (x13);
      \draw (x2) to (x23);
      \draw (x23) to (x123);
      \draw (x13) to (x123);
      \draw [ultra thick] (x1) to (x3);
       \draw [ultra thick] (x2) to (x3);
      \draw [ultra thick] (x3) to (x13);
      \draw [ultra thick] (x3) to (x23);
   \end{tikzpicture}
   }\,
   \subfloat[Domain of equation \eqref{eq: rtl S2}]{\label{fig: S2}
   \begin{tikzpicture}[scale=0.6,inner sep=2]
      \node (x) at (0,0) [circle,draw,label=-90:$x_{k}$] {};
      \node (x1) at (3,0) [circle,fill,label=-90:$x_{k+1}$] {};
      \node (x2) at (1,1) [circle,fill,label=45:$\widehat{x}_{k}$] {};
      \node (x3) at (0,3) [circle,fill,label=-135:$\widetilde{x}_{k}$] {};
      \node (x12) at (4,1) [circle,fill,label=-60:$\widehat{x}_{k+1}$] {};
      \node (x23) at (1,4) [circle,fill,label=90:$\htilde{x}_{k}$] {};
      \node (x123) at (4,4) [circle,draw,label=90:$\htilde{x}_{k+1}$] {};
      \draw (x) to (x1);
      \draw (x) to (x2);
      \draw (x) to (x3);
      \draw (x3) to (x23);
      \draw (x1) to (x12);
      \draw (x12) to (x123);
      \draw (x23) to (x123);
      \draw [ultra thick] (x2) to (x12);
      \draw [ultra thick] (x2) to (x23);
      \draw [ultra thick] (x2) to (x3);
      \draw [ultra thick] (x2) to (x1);
     \end{tikzpicture}
   }\,
   \subfloat[Domain of equation \eqref{eq: rtl S3}]{\label{fig: S3}
   \begin{tikzpicture}[scale=0.6,inner sep=2]
       \node (x) at (0,0) [circle,draw,label=-90:$x_{k}$] {};
      \node (x1) at (3,0) [circle,fill,label=-90:$x_{k+1}$] {};
      \node (x3) at (0,3) [circle,fill,label=180:$\widetilde{x}_{k}$] {};
      \node (x12) at (4,1) [circle,fill,label=-60:$\widehat{x}_{k+1}$] {};
      \node (x13) at (3,3) [circle,fill,label=-135:$\widetilde{x}_{k+1}$] {};
      \node (x23) at (1,4) [circle,fill,label=90:$\htilde{x}_{k}$] {};
      \node (x123) at (4,4) [circle,draw,label=90:$\htilde{x}_{k+1}$] {};
      \draw (x) to (x3);
      \draw (x) to (x1);
      \draw (x1) to (x12);
      \draw (x3) to (x23);
      \draw (x12) to (x123);
      \draw (x13) to (x123);
      \draw (x23) to (x123);
      \draw [ultra thick] (x1) to (x13);
       \draw [ultra thick] (x3) to (x13);
      \draw [ultra thick] (x23) to (x13);
      \draw [ultra thick] (x12) to (x13);
   \end{tikzpicture}
   }\,
       \subfloat[Domain of equation \eqref{eq: rtl S4}]{\label{fig: S4}
   \begin{tikzpicture}[scale=0.6,inner sep=2]
       \node (x) at (0,0) [circle,draw,label=-100:$x_{k}$] {};
      \node (x1) at (3,0) [circle,fill,label=-90:$x_{k+1}$] {};
      \node (x2) at (1,1) [circle,fill,label=-45:$\widehat{x}_{k}$] {};
      \node (x12) at (4,1) [circle,fill,label=-60:$\widehat{x}_{k+1}$] {};
      \node (x13) at (3,3) [circle,fill,label=0:$\ \widetilde{x}_{k+1}$] {};
      \node (x23) at (1,4) [circle,fill,label=90:$\htilde{x}_{k}$] {};
      \node (x123) at (4,4) [circle,draw,label=90:$\htilde{x}_{k+1}$] {};
      \draw (x) to (x2);
      \draw (x) to (x1);
      \draw (x1) to (x13);
      \draw (x2) to (x23);
      \draw (x12) to (4,2.5); 
      \draw (4,3.5) to (x123);
      \draw (x13) to (x123);
      \draw (x23) to (x123);
      \draw [ultra thick] (x1) to (x12);
       \draw [ultra thick] (x2) to (x12);
      \draw [ultra thick] (x23) to (x12);
      \draw [ultra thick] (x12) to (x13);
   \end{tikzpicture}
   }
   \caption{Superposition formulas as 3D~corner equations of the corresponding three-point 2-form.}
   \label{fig: 2d for superposition formulas}
\end{figure}
\begin{proof}
One easily checks that the two corner equations \eqref{eq: rtl E}$_{k\to k+1}$, \eqref{eq: rtl E12} and the four superposition formulae \eqref{eq: rtl S1}--\eqref{eq: rtl S4} build nothing but the system of six 3D~corner equations \eqref{eq: 3point Ei}, \eqref{eq: 3point Eij} within one elementary cube of $\bbZ^3$, see Figures \ref{fig: 2d for Fi, Fj}\subref{fig:6a},\subref{fig:6d} and \ref{fig: 2d for superposition formulas}\subref{fig: S1}--\subref{fig: S4}. Due to consistency of the latter system, as formulated in Theorem \ref{th: rtl+}, if equation \eqref{eq: rtl E} and one of equations \eqref{eq: rtl S1}--\eqref{eq: rtl S4} hold, then  equation \eqref{eq: rtl E12} and the remaining three of equations \eqref{eq: rtl S1}--\eqref{eq: rtl S4} are satisfied, as well. Furthermore, equation \eqref{eq: rtl E1} is the difference of \eqref{eq: rtl S1} and \eqref{eq: rtl S3}$_{k\to k-1}$, while equation \eqref{eq: rtl E2} is the difference of \eqref{eq: rtl S2} and \eqref{eq: rtl S4}$_{k\to k-1}$. This completes the proof.
\end{proof}
Theorem \ref{th: rtl+} provides a proof of Theorem \ref{thm 3D for Fi commute Fj} in the case of open-end boundary conditions, where the maps $F_i$ are well-defined. At the same time, it provides us with an exhaustive understanding of commutativity also in the case of periodic boundary conditions, where the maps $F_i$ are double-valued. In this case, each of the compositions $F_i\circ F_j$ and $F_j\circ F_i$ applied to a point $(x,p)$ produces four different branches for $(\widehat{\wx},\widehat{\wip})$. Commutativity is reflected in the following fact: each of the branches of $F_i\circ F_j$ coincides with one of the branches of $F_j\circ F_i$. Indeed, Theorem~\ref{th: rtl+} delivers four possible values for $(x,\wx,\whx,\widehat{\wx})$ satisfying all 2D~corner equations \eqref{eq: rtl E}--\eqref{eq: rtl E12}, namely one $\widehat{\wx}$ for each of the four possible combinations of $\left(x,\wx,\whx\right)$. Cf. Figure \ref{Fig: consistency 4-valued}.
\end{proof}

\paragraph{Example: B\"acklund transformations for the additive exponential relativistic Toda lattice.}
We consider system \eqref{dRTL+ l New} which is a discretization of (and a B\"acklund trabnsformation for) system \eqref{RTL+ l New}.
The corresponding maps $F_i:\bbR^{2N}\to\bbR^{2N}$ are given by  
\begin{equation}\label{eq: BT RTL l}
F_i:
\begin{cases}[2]
p_{k}=\dfrac{1}{\lambda}\big(\eto{\wx_{k}\nm x_{k}}-1\big)+\dfrac{(\lambda-\alpha)\eto{x_{k}\nm \wx_{k-1}}}{1-\lambda\alpha \eto{x_{k}\nm \wx_{k-1}}}-\alpha \eto{x_{k+1}\nm x_{k}}+\alpha \eto{x_{k}\nm x_{k-1}},\vspace{5pt}\\
\wip_{k}=\dfrac{1}{\lambda}\big(\eto{\wx_{k}\nm x_{k}}-1\big)+\dfrac{(\lambda-\alpha)\eto{x_{k+1}\nm \wx_{k}}}{1-\lambda\alpha \eto{x_{k+1}\nm \wx_{k}}}.
\end{cases}
\end{equation}
Thus,
\begin{equation}\label{BT RTL l psi}
\psi_i(\xi)=\psi(\xi;\lambda)=\frac{1}{\lambda}\big(\eto{\xi}-1\big), \quad \psi_0(\xi)=\alpha\eto{\xi}, \quad \phi_{i0}(\xi)=\phi_0(\xi;\lambda)=\dfrac{(\lambda-\alpha)\eto{\xi}}{1-\lambda\alpha \eto{\xi}}.
\end{equation}
One can show that, in order to obtain a consistent system of 3D corner equations, these leg functions have to be supplemented  by
\begin{equation}\label{BT RTL l phi}
\phi_{ij}(\xi)=\phi(\xi;\lambda,\mu)=\frac{\eto{\xi}-1}{\lambda\eto{\xi}-\mu}.
\end{equation}
The corresponding 2D~corner equations are given by:
\begin{align} 
\frac{1}{\lambda}\Big(\eto{\wx_{k}\nm x_{k}}-1\Big)+\dfrac{(\lambda-\alpha)\eto{x_{k}\nm \wx_{k-1}}}{1-\lambda\alpha \eto{x_{k}\nm \wx_{k-1}}}= & \
\frac{1}{\mu}\Big(\eto{\whx_{k}\nm x_{k}}-1\Big)+\dfrac{(\mu-\alpha)\eto{x_{k}\nm \whx_{k-1}}}{1-\mu\alpha \eto{x_{k}\nm \whx_{k-1}}},
\label{eq: BT RTL l E}\tag{$E$}  \\ 
\dfrac{1}{\lambda}\Big(\eto{\wx_{k}\nm x_{k}}-1\Big)+\dfrac{(\lambda-\alpha)\eto{x_{k+1}\nm \wx_{k}}}{1-\lambda\alpha \eto{x_{k+1}\nm \wx_{k}}}= & \
\dfrac{1}{\mu}\Big(\eto{\htilde{x}_{k}\nm \wx_{k}}-1\Big)+\dfrac{(\mu-\alpha)\eto{\wx_{k}\nm \htilde{x}_{k-1}}}{1-\mu\alpha \eto{\wx_{k}\nm \htilde{x}_{k-1}}}
\nonumber \\
 & \qquad +\alpha \eto{\wx_{k}\nm \wx_{k-1}}-\alpha \eto{\wx_{k+1}-\wx_{k}},
\label{eq: BT RTL l E1}\tag{$E_i$} \\ 
\dfrac{1}{\mu}\Big(\eto{\whx_{k}\nm x_{k}}-1\Big)+\dfrac{(\mu-\alpha)\eto{x_{k+1}\nm \whx_{k}}}{1-\mu\alpha \eto{x_{k+1}\nm \whx_{k}}}= & \ 
\dfrac{1}{\lambda}\Big(\eto{\htilde{x}_{k}\nm \whx_{k}}-1\Big)+\dfrac{(\lambda-\alpha)\eto{\whx_{k}\nm \htilde{x}_{k-1}}}{1-\lambda\alpha \eto{\whx_{k}\nm \htilde{x}_{k-1}}}
\nonumber  \\
 & \qquad +\alpha \eto{\whx_{k}\nm \whx_{k-1}}-\alpha \eto{\whx_{k+1}-\whx_{k}},
\label{eq: BT RTL l E2}\tag{$E_j$} \\  
\dfrac{1}{\lambda}\Big(\eto{\htilde{x}_{k}\nm \whx_{k}}-1\Big)+\dfrac{(\lambda-\alpha)\eto{\whx_{k+1}\nm \htilde{x}_{k}}}{1-\lambda\alpha \eto{\whx_{k+1}-\htilde{x}_{k}}}= & \
\dfrac{1}{\mu}\Big(\eto{\htilde{x}_{k}\nm \wx_{k}}-1\Big)+\dfrac{(\mu-\alpha)\eto{\wx_{k+1}\nm \htilde{x}_{k}}}{1-\mu\alpha \eto{\wx_{k+1}\nm \htilde{x}_{k}}}.
\label{eq: BT RTL l E12}\tag{$E_{ij}$}
\end{align}
while the superposition formulas are given by:
\begin{align}
&\dfrac{1}{\mu}\Big(\eto{\htilde{x}_{k}-\wx_{k}}-1\Big)+\frac{\eto{\whx_{k}}-\eto{\wx_{k}}}{\lambda \eto{\whx_{k}}-\mu \eto{\wx_{k}}}=
\alpha \eto{\wx_{k+1}\nm \wx_{k}}+\dfrac{(\lambda-\alpha)\eto{x_{k+1}\nm \wx_{k}}}{1-\lambda\alpha \eto{x_{k+1}\nm \wx_{k}}},
\label{eq: BT RTL l S1}\tag{$S1a$}   \\
&\frac{1}{\lambda}\Big(\eto{\htilde{x}_{k}\nm \whx_{k}}-1\Big)+\frac{\eto{\wx_{k}}-\eto{\whx_{k}}}{\mu \eto{\wx_{k}}-\lambda \eto{\whx_{k}}}=
\alpha \eto{\whx_{k+1}\nm \whx_{k}}+\frac{(\mu-\alpha)\eto{x_{k+1}\nm \whx_{k}}}{1-\mu\alpha \eto{x_{k+1}\nm \whx_{k}}},
\label{eq: BT RTL l S2}\tag{$S1b$}  \\
&\frac{1}{\lambda}\Big(\eto{\wx_{k+1}\nm {x}_{k+1}}-1\Big)+\frac{\eto{\wx_{k+1}}-\eto{\whx_{k+1}}}{\mu \eto{\wx_{k+1}}-\lambda \eto{\whx_{k+1}}}=
\alpha \eto{\wx_{k+1}\nm \wx_{k}}+\frac{(\mu-\alpha)\eto{\wx_{k+1}\nm \htilde{x}_{k}}}{1-\mu\alpha \eto{\wx_{k+1}\nm \htilde{x}_{k}}},
\label{eq: BT RTL l S3}\tag{$S2a$}\\ 
&\frac{1}{\mu}\Big(\eto{\whx_{k+1}\nm {x}_{k+1}}-1\Big)+\frac{\eto{\whx_{k+1}}-\eto{\wx_{k+1}}}{\lambda \eto{\whx_{k+1}}-\mu \eto{\wx_{k+1}}}=
\alpha \eto{\whx_{k+1}\nm \whx_{k}}+\frac{(\lambda-\alpha)\eto{\whx_{k+1}\nm \htilde{x}_{k}}}{1-\lambda\alpha \eto{\whx_{k+1}-\htilde{x}_{k}}}.
\label{eq: BT RTL l S4}\tag{$S2b$}
\end{align}
Of course, in order to apply our general results, one has to prove consistency of the system of 3D corner equations consisting of \eqref{eq: BT RTL l E}$_{k\to k+1}$, \eqref{eq: BT RTL l E12}, and \eqref{eq: BT RTL l S1}--\eqref{eq: BT RTL l S4}. For this, one shows by direct computations that any two of these six equations are equivalent by virtue of the following {\em octahedron equation}:
\begin{equation}\label{BT RTL l oct}
\frac{1}{\lambda}\, \eto{\wx_{k+1}\nm x_{k+1}}-\frac{1}{\mu}\, \eto{\whx_{k+1}\nm x_{k+1}}-\frac{1}{\lambda}\, \eto{\htilde{x}_k \nm \whx_k} +\frac{1}{\mu}\, \eto{\htilde{x}_k \nm \wx_k}
+\alpha\, \eto{\whx_{k+1}\nm \whx_k}-\alpha\, \eto{\wx_{k+1}\nm \wx_k}=0.
\end{equation}
Indeed, octahedron relation \eqref{BT RTL l oct} is an immediate consequence of \eqref{eq: BT RTL l E}$_{k\to k+1}$, \eqref{eq: BT RTL l S1} and \eqref{eq: BT RTL l S2} (or, alternatively, of \eqref{eq: BT RTL l E12}, \eqref{eq: BT RTL l S3} and \eqref{eq: BT RTL l S4}). On the other hand, eliminating from any of the corner equations one of the variables by means of the (multi-affine) octahedron equation, we obtain another corner equation. \qed
\begin{theorem}\label{th: RTL l closure}
The system of 3D corner equations consisting of \eqref{eq: BT RTL l E}$_{k\to k+1}$, \eqref{eq: BT RTL l E12}, and \eqref{eq: BT RTL l S1}--\eqref{eq: BT RTL l S4} admits the conservation law \eqref{eq: 2d conserv law}. Therefore, the discrete Lagrangian 2-form $\cL$ is closed on any solution of this system.
\end{theorem}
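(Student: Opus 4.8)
The plan is to apply Theorem~\ref{th: 2d conserv laws} directly. That theorem states that for a three-point discrete 2-form with edge Lagrangians of the form \eqref{Lagr 2D param}, the form $\cL$ is closed on solutions of the system of 3D corner equations \emph{if and only if} that system admits the conservation law \eqref{eq: 2d conserv law} with the densities \eqref{eq: 2d conserv law dens i0}, \eqref{eq: 2d conserv law dens ij}. So the entire claim reduces to exhibiting this conservation law for the concrete system of 3D corner equations at hand, whose constituent equations are \eqref{eq: BT RTL l E}$_{k\to k+1}$, \eqref{eq: BT RTL l E12}, and the four superposition formulae \eqref{eq: BT RTL l S1}--\eqref{eq: BT RTL l S4}, together with the underlying leg functions \eqref{BT RTL l psi}, \eqref{BT RTL l phi}.

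First I would compute the densities explicitly. From \eqref{BT RTL l psi} we have $\Psi(\xi;\lambda)$ with $\Psi'(\xi;\lambda)=\tfrac{1}{\lambda}(\eto{\xi}-1)$ and $\Phi_0(\xi;\lambda)$ with $\Phi_0'(\xi;\lambda)=\tfrac{(\lambda-\alpha)\eto{\xi}}{1-\lambda\alpha\eto{\xi}}$, while $\Phi(\xi;\lambda,\mu)$ has $\Phi'(\xi;\lambda,\mu)=\tfrac{\eto{\xi}-1}{\lambda\eto{\xi}-\mu}$. Differentiating antiderivatives with respect to $\lambda$ commutes with differentiating in $\xi$ in the sense we need, so $\partial\Psi/\partial\lambda$, $\partial\Phi_0/\partial\lambda$, $\partial\Phi/\partial\lambda$ can be written down, and then $P_{i0}$, $P_{ij}$ from \eqref{eq: 2d conserv law dens i0}, \eqref{eq: 2d conserv law dens ij} become concrete (if slightly messy) functions of $\wx_k-x_k$, $x_{k+1}-\wx_k$, resp. $\whx_k-\wx_k$. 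The conservation law \eqref{eq: 2d conserv law}, $\Delta_j P_{i0}=\Delta_0 P_{ij}$, is then the identity
\[
P_{i0}(\text{hat-shifted})-P_{i0}=P_{ij}(\text{$0$-shifted, i.e. }k\to k+1)-P_{ij},
\]
which I would verify by direct computation on solutions of the corner equations.

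The mechanism that makes this work is the octahedron equation \eqref{BT RTL l oct}: as already observed in the text, any two of the six 3D corner equations are equivalent modulo \eqref{BT RTL l oct}, and \eqref{BT RTL l oct} itself is a linear relation among the six exponential monomials $\tfrac{1}{\lambda}\eto{\wx_{k+1}-x_{k+1}}$, $\tfrac{1}{\mu}\eto{\whx_{k+1}-x_{k+1}}$, $\tfrac{1}{\lambda}\eto{\htilde x_k-\whx_k}$, $\tfrac{1}{\mu}\eto{\htilde x_k-\wx_k}$, $\alpha\eto{\whx_{k+1}-\whx_k}$, $\alpha\eto{\wx_{k+1}-\wx_k}$. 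I expect that $\partial\ell/\partial\lambda$, as produced inside the proof of Theorem~\ref{th: 2d conserv laws} from \eqref{eq: dL}, organizes precisely into the telescoping difference $\Delta_0 P_{ij}-\Delta_j P_{i0}$, and that the remaining non-telescoping terms cancel by virtue of \eqref{BT RTL l oct} (differentiated appropriately, or used in the already-integrated form). Concretely: I would substitute \eqref{Lagr 2D param} specialized to \eqref{BT RTL l psi}, \eqref{BT RTL l phi} into \eqref{eq: dL}, differentiate in $\lambda$, drop the chain-rule terms killed by the corner equations (as in the proof of Theorem~\ref{th: 2d conserv laws}), and recognize the result as \eqref{eq: 2d conserv law}; alternatively, and perhaps more cleanly, I would directly verify $\Delta_j P_{i0}=\Delta_0 P_{ij}$ on solutions using the superposition formulae \eqref{eq: BT RTL l S1}--\eqref{eq: BT RTL l S4} to eliminate $\htilde x_k$.

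The main obstacle is purely computational bookkeeping: the density $\partial\Phi/\partial\lambda$ of the diagonal term involves $\tfrac{\partial}{\partial\lambda}\log$-type expressions in two spectral parameters, and one must keep careful track of which corner/superposition equation licenses dropping which variation term. There is no conceptual difficulty once Theorem~\ref{th: 2d conserv laws} is invoked; the risk is only in an algebraic slip. A good sanity check along the way is the non-relativistic limit $\alpha\to 0$ (resp. $\psi_0\to 0$), where the system should degenerate to the exponential Toda case of Section~\ref{sect: BT Toda}, and the conserved density should reduce — up to the additive integral $-\tfrac{1}{\lambda}\sum p_k$ — to $\tfrac{1}{\lambda^2}\sum(\wx_k-x_k)$, matching the computation of $\partial\Lambda/\partial\lambda$ in the proof of Theorem~\ref{th: Toda closure}. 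Having confirmed that the conservation law \eqref{eq: 2d conserv law} holds, the closedness of $\cL$ is immediate from Theorem~\ref{th: 2d conserv laws}, completing the proof. \endpf
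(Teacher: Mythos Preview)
Your proposal is correct and follows essentially the same approach as the paper: invoke Theorem~\ref{th: 2d conserv laws}, compute the $\lambda$-derivatives of the antiderivatives $\Psi$, $\Phi_0$, $\Phi$, and then verify the conservation law $\Delta_j P_{i0}=\Delta_0 P_{ij}$ on solutions, with the octahedron relation \eqref{BT RTL l oct} doing the work. The one refinement the paper supplies that would dissolve your ``computational bookkeeping'' worry is a clean splitting $P_{i0}=R_{i0}-\tfrac{1}{\lambda}S_{i0}$ and $P_{ij}=R_{ij}-\tfrac{1}{\lambda}S_{ij}$ into a ``rational'' part $R$ and a ``logarithmic'' part $S$, so that $\Delta_j R_{i0}=\Delta_0 R_{ij}$ follows directly from \eqref{eq: BT RTL l E12}, \eqref{eq: BT RTL l S1}, \eqref{eq: BT RTL l S3}, while the exponentiated form of $\Delta_j S_{i0}=\Delta_0 S_{ij}$ is literally equivalent to \eqref{BT RTL l oct}.
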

\begin{proof}
We apply Theorem  \ref{th: 2d conserv laws}. To do this, we compute from \eqref{BT RTL l psi}, \eqref{BT RTL l phi}:
\begin{alignat*}{3}
&\Psi(\xi;\lambda)=\frac{1}{\lambda}\big(\eto{\xi}-1-\xi\big)&&\;\Rightarrow\; 
\frac{\partial\Psi(\xi;\lambda)}{\partial\lambda}=-\frac{1}{\lambda^2}\big(\eto{\xi}-1-\xi\big)\\
& \Phi(\xi;\lambda,\mu)=\frac{1}{\mu}\xi+\frac{\mu-\lambda}{\lambda\mu}\log(\lambda\eto{\xi}-\mu)&&\; \Rightarrow\; 
\frac{\partial\Phi(\xi;\lambda,\mu)}{\partial\lambda}=-\frac{1}{\lambda\mu}+\frac{1}{\lambda}\cdot\frac{\eto{\xi}-1}{\lambda\eto{\xi}-\mu}-\frac{1}{\lambda^2}\log(\lambda\eto{\xi}-\mu),\\
&\Phi_0(\xi;\lambda)=-\frac{\lambda-\alpha}{\lambda\alpha}\log(1-\lambda\alpha\eto{\xi})&&\;\Rightarrow\; \frac{\partial\Phi_0(\xi;\lambda)}{\partial\lambda}=\frac{1}{\lambda}\cdot \frac{(\lambda-\alpha)\eto{\xi}}{1-\lambda\alpha\eto{\xi}}-\frac{1}{\lambda^2}\log(1-\lambda\alpha\eto{\xi}).
\end{alignat*}
As a consequence, we can write conservation law \eqref{eq: 2d conserv law} as 
\begin{equation}\label{eq: BT RTL l conserv law}
    \Delta_j (R_{i0}-\tfrac{1}{\lambda}S_{i0})=\Delta_0 (R_{ij}-\tfrac{1}{\lambda}S_{ij}),
\end{equation}
where
$$
R_{i0}=\frac{1}{\lambda}\big(\eto{\wx_k\nm x_k}-1\big)+\frac{(\lambda-\alpha)\eto{x_{k+1}\nm \wx_k}}{1-\lambda\alpha\eto{x_{k+1}\nm \wx_k}},
\quad
R_{ij}=\frac{1}{\lambda}\big(\eto{\wx_k\nm x_k}-1\big)+\frac{\eto{\whx_k}-\eto{\wx_k}}{\lambda\eto{\whx_k}-\mu\eto{\wx_k}},
$$
and
$$
S_{i0}=(\wx_k-x_k)+\log\big(1-\lambda\alpha\eto{x_{k+1}\nm\wx_k}\big),
\quad
S_{ij}=- x_k+\log\big(\lambda\eto{\whx_k}-\mu\eto{\wx_k}\big).
$$
We show that the two conservation laws  $\Delta_j R_{i0}=\Delta_0 R_{ij}$ and $\Delta_j S_{i0}=\Delta_0 S_{ij}$ are satisfied separately. Indeed, $\Delta_j R_{i0}=\Delta_0 R_{ij}$ is an immediate consequence of \eqref{eq: BT RTL l E12}, \eqref{eq: BT RTL l S1} and  \eqref{eq: BT RTL l S3}, while (the exponentiated form of) $\Delta_j S_{i0}=\Delta_0 S_{ij}$ reads
$$
\eto{\htilde{x}_k\nm \wx_k\nm \whx_k\np x_k} \cdot \frac{1-\lambda\alpha\eto{\whx_{k+1}\nm \htilde{x}_k}}{1-\lambda\alpha\eto{x_{k+1}\nm \wx_k}}=
\eto{-x_{k+1}\np x_k} \cdot \frac{\lambda\eto{\whx_{k+1}}-\mu\eto{\wx_{k+1}}}{\lambda\eto{\whx_k}-\mu\eto{\wx_k}},
$$
and is equivalent to \eqref{BT RTL l oct} (as a straightforward clearing of denominators shows).
\end{proof}

As a corollary, $\sum_{k=1}^N S_{i0}$ is a common conserved quantity for all $F_j$. Its exponentiated form is
\begin{equation}
P(x,\wx;\lambda)=\prod_{k=1}^N \gamma_k,\quad \mathrm{where} \quad \gamma_k=\eto{\wx_k\nm x_k}\Big(1-\lambda\alpha\eto{x_{k+1}\nm \wx_k}\Big).
\end{equation}
Like in the non-relativistic case, one can find a nice expression of this quantity through $(x,p)$.

\begin{theorem}\label{th: BT RTL l zcr}
Set
\begin{equation*}\label{eq: BT RTL l zcr L}
    L_k(x,p;\lambda)=\begin{pmatrix} 1+\lambda p_k-\lambda\alpha\eto{x_k\nm x_{k-1}} & -\lambda(\lambda-\alpha) \eto{x_k\nm x_{k-1}} \vspace{2mm}\\ 
                                     1 & 0 \end{pmatrix},
\end{equation*}
and
\begin{equation*}\label{eq: BT RTL l zcr T}
    T_N(x,p;\lambda)=L_N(x,p;\lambda) \cdots L_2(x,p;\lambda)L_1(x,p;\lambda).
\end{equation*}
Then in the periodic case the quantity $P(x,\wx;\lambda)$ is an eigenvalue of $T_N(x,p;\lambda)$, while in the open-end case it is equal to $\tr T_N(x,p;\lambda)$.
\end{theorem}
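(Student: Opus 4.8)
The plan is to follow the template of the proof of Theorem~\ref{th: BT Toda zcr}: read the first equation of motion of $F_i$ in \eqref{eq: BT RTL l} as a scalar M\"obius recursion for the quantities $\gamma_k=\eto{\wx_k\nm x_k}\big(1-\lambda\alpha\eto{x_{k+1}\nm\wx_k}\big)$ introduced above, and then telescope. As there, for $g=\left(\begin{smallmatrix}a&b\\c&d\end{smallmatrix}\right)\in GL(2,\mathbb C)$ we write $g[z]=(az+b)/(cz+d)$.

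First I would bring the first equation in \eqref{eq: BT RTL l} into the desired form. Using $\eto{x_k\nm\wx_{k-1}}=\eto{x_k\nm x_{k-1}}\eto{x_{k-1}\nm\wx_{k-1}}$ together with the identity $\gamma_{k-1}=\eto{\wx_{k-1}\nm x_{k-1}}-\lambda\alpha\eto{x_k\nm x_{k-1}}$, one gets
\[
\frac{(\lambda-\alpha)\eto{x_k\nm\wx_{k-1}}}{1-\lambda\alpha\eto{x_k\nm\wx_{k-1}}}=\frac{(\lambda-\alpha)\eto{x_k\nm x_{k-1}}}{\gamma_{k-1}},
\]
while $\eto{\wx_k\nm x_k}=\gamma_k+\lambda\alpha\eto{x_{k+1}\nm x_k}$. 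Substituting both into \eqref{eq: BT RTL l} and clearing the denominator $\gamma_{k-1}$, the contributions proportional to $\eto{x_{k+1}\nm x_k}$ cancel and one is left with
\[
\gamma_k\gamma_{k-1}=\big(1+\lambda p_k-\lambda\alpha\eto{x_k\nm x_{k-1}}\big)\gamma_{k-1}-\lambda(\lambda-\alpha)\eto{x_k\nm x_{k-1}}.
\]
This is precisely the statement that $L_k(x,p;\lambda)\binom{\gamma_{k-1}}{1}=\gamma_{k-1}\binom{\gamma_k}{1}$, the proportionality factor being determined by comparing the second components; equivalently, $\gamma_k=L_k[\gamma_{k-1}]$.

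Next I would telescope. In the periodic case all indices are taken $\mathrm{mod}\ N$, so $\gamma_0=\gamma_N$, and iterating the transition relation for $k=1,\dots,N$ yields $T_N(x,p;\lambda)\binom{\gamma_N}{1}=\big(\prod_{k=1}^N\gamma_k\big)\binom{\gamma_N}{1}$; hence $P(x,\wx;\lambda)=\prod_{k=1}^N\gamma_k$ is an eigenvalue of $T_N$. In the open--end case the conventions $x_0=\wx_0=\infty$, $x_{N+1}=-\infty$ force $\eto{x_1\nm x_0}=0$, so $L_1=\left(\begin{smallmatrix}1+\lambda p_1&0\\1&0\end{smallmatrix}\right)$; moreover the $k=1$ equation of motion (with $\eto{x_1\nm\wx_0}=0$ as well) reduces to $1+\lambda p_1=\eto{\wx_1\nm x_1}\big(1-\lambda\alpha\eto{x_2\nm\wx_1}\big)=\gamma_1$, so that $L_1\binom{1}{0}=\binom{\gamma_1}{1}$. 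Since the second column of $L_1$ vanishes, the second column of $T_N$ vanishes too, whence $(T_N)_{22}=0$; and iterating the transition relation for $k=2,\dots,N$ gives $(T_N)_{11}=\prod_{k=1}^N\gamma_k$. Therefore $\tr T_N(x,p;\lambda)=\prod_{k=1}^N\gamma_k=P(x,\wx;\lambda)$ (note that here $\gamma_N=\eto{\wx_N\nm x_N}$, since $\eto{x_{N+1}\nm\wx_N}=0$).

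The only delicate points are bookkeeping ones, and I do not expect a genuine obstacle. The substitutions must be arranged so that the $\lambda\alpha\,\eto{x_{k+1}\nm x_k}$ terms cancel cleanly, which is exactly what dictates the specific shape of $L_k$; and one must handle the fictitious boundary sites in the open--end case to see that $L_1$ degenerates to a rank-one matrix with image spanned by $\binom{\gamma_1}{1}$. In the periodic case $F_i$ is double-valued, but the transition relation holds verbatim for each admissible branch, so the conclusion holds branchwise. This is consistent with Theorem~\ref{th: RTL l closure}, which already identified $\prod_{k=1}^N\gamma_k$ as a common conserved quantity of the maps $F_j$; the present theorem merely realizes it as a spectral invariant of the monodromy matrix $T_N$.
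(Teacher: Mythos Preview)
Your proof is correct and follows essentially the same approach as the paper: rewrite the first equation of motion in \eqref{eq: BT RTL l} as the scalar recursion $\gamma_k=L_k[\gamma_{k-1}]$, then telescope exactly as in Theorem~\ref{th: BT Toda zcr}. The paper simply states the rewritten equation and refers back to that earlier proof, whereas you spell out the algebra (the identities $\gamma_{k-1}=\eto{\wx_{k-1}\nm x_{k-1}}-\lambda\alpha\eto{x_k\nm x_{k-1}}$ and $\eto{\wx_k\nm x_k}=\gamma_k+\lambda\alpha\eto{x_{k+1}\nm x_k}$, and the cancellation of the $\eto{x_{k+1}\nm x_k}$ terms) that makes the rewriting transparent.
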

\begin{proof} The first equation in \eqref{eq: BT RTL l} can be put as
\[
\eto{\wx_k\nm x_k}\Big(1-\lambda\alpha\eto{x_{k+1}\nm \wx_k}\Big)=\Big(1+\lambda p_k-\lambda\alpha\eto{x_k\nm x_{k-1}}\Big)-
\frac{\lambda(\lambda-\alpha) \eto{x_k\nm x_{k-1}}}{\eto{\wx_{k-1}\nm x_{k-1}}\Big(1-\lambda\alpha\eto{x_k\nm\wx_{k-1}}\Big)}.
\]
This is equivalent to saying that
\[
L_k(x,p;\lambda)\begin{pmatrix} \gamma_{k-1} \\ 1 \end{pmatrix} \sim
\begin{pmatrix} \gamma_k \\ 1 \end{pmatrix}.
\]
From this point, the proof is literally the same as for Theorem \ref{th: BT Toda zcr}.
\end{proof}

\paragraph{Bibliographical remarks.} Our presentation here follows \cite{BPS15}. Similar results are obtained there for all relativistic Toda lattices listed in Section \ref{Sect Newtonian rel Toda}.

\section*{Acknowledgments}
This research was supported by the DFG Collaborative Research Center TRR 109 ``Discretization in Geometry and Dynamics''.

{\small
\bibliographystyle{amsalpha}

\begin{thebibliography}{HKKR90}

\bibitem[A79]{A79} 
M. Adler. 
  {\em On a trace functional for formal pseudo-differential operators and the symplectic structure for Korteweg--de Vries type equations},  
  Invent. Math., \textbf{50} (1979), 219--248.

\bibitem[A99]{A99}
V.E. Adler.
   \emph{Legendre transformations on the triangular lattice},
    Funct. Anal. Appl. \textbf{34} (1999), 1--9.
    
\bibitem[A00]{A00} 
V.E. Adler. 
   {\em Discretizations of the Landau-Lifshitz equation},
   Teor. Math. Phys. \textbf{124} (2000) 897--908.   
 
\bibitem[A01]{A01} 
V.E. Adler.  
   {\em Discrete equations on planar graphs}, 
   J. Phys. A: Math. Gen. \textbf{34} (2001), 10453--10460.   

\bibitem[ABS03]{ABS03}
V.E. Adler, A.I. Bobenko, and Yu.B. Suris.
  \emph{Classification of integrable equations on quad-graphs. The consistency approach}, 
  Comm. Math. Phys. \textbf{233} (2003), 513--543.

\bibitem[ASh97a]{ASh97a}
V.E. Adler, A.B. Shabat,
   \emph{{On a class of Toda chains}},
   Theor. Math. Phys. \textbf{111} (1997), 647--657.

\bibitem[ASh97b]{ASh97b}
V.E. Adler, A.B. Shabat,
   \emph{{Generalized Legendre transformations}},
    Theor. Math. Phys. \textbf{112} (1997), 935--948.

\bibitem[AS04]{ASu04}
V.E. Adler, Yu.B. Suris, \emph{Q4: integrable master equation
  related to an elliptic curve}, Intern. Math. Research Notices (2004), no.~47, 2523--2553.
  
  
\bibitem[B99]{Bo99} 
A.I. Bobenko. {\em Discrete integrable systems and geometry.} -- In: {\em XII International Congress on Mathematical Physics,
ICMP'97}, Eds. D. De Wit, A. Bracken, M.Gould, P.Pearce. Boston: International Press  (1999), 219--226.  

\bibitem[BH03]{BH03}
A.I. Bobenko, T. Hoffmann. {\em Hexagonal circle patterns and integrable systems. Patterns with constant angles}, Duke Math. J. \textbf{116} (2003) 525--566.

\bibitem[BS02]{BS02}
A.I. Bobenko, Yu.B. Suris. \emph{Integrable systems on  quad-graphs}, Intern. Math. Research Notices \textbf{11} (2002),  573--611.

\bibitem[BS08]{BS08}
A.I. Bobenko, Yu.B. Suris. \emph{{Discrete differential geometry. Integrable Structure}}, Graduate
  Studies in Mathematics, vol.~98, AMS, 2008.

\bibitem[BS10]{BS10}
A.I. Bobenko, Yu.B. Suris.  \emph{{On the Lagrangian structure of integrable quad-equations}},
  Lett. Math. Phys. \textbf{92} (2010), no.~1, pp. 17--31.

\bibitem[BS14]{BS14}
A.I. Bobenko, Yu.B. Suris.  \emph{{Discrete pluriharmonic functions as solutions of linear pluri-Lagrangian systems}},
   Commun. Math. Phys., \textbf{336} (2015), 199--215. 

\bibitem[BPS13]{BPS13}
R. Boll, M. Petrera, and Yu.B. Suris. 
\emph{Multi-time Lagrangian  1-forms for families of B{\"a}cklund transformations. Toda-type systems}, 
J. Phys. A: Math. Theor. \textbf{46} (2013), No.~275204, 26 pp.

\bibitem[BPS15]{BPS15}
R. Boll, M. Petrera, and Yu.B. Suris. 
\emph{Multi-time Lagrangian 1-forms for families of B{\"a}cklund transformations. Relativistic Toda-type systems}, 
J. Phys. A: Math. Theor., \textbf{48} (2015), No. 085203, 28 pp.
  
\bibitem[BPS14]{BPS14}
R. Boll, M. Petrera, and Yu.B. Suris. 
\emph{What is integrability of discrete variational systems?}, 
Proc. R. Soc. A \textbf{470} (2014), no.~20130550, 15 pp.

\bibitem[BPS16]{BPS16}
R. Boll, M. Petrera, and Yu.B. Suris. 
\emph{On integrability of discrete variational systems. Octahedron  relations},  
Internat. Math. Research Notices, 2016, No. 3, 645--668.

\bibitem[BolS10]{BolS10}
R. Boll, Yu.B. Suris, \emph{Non-symmetric discrete Toda systems from
  quad-graphs}, Applicable Analysis \textbf{89} (2010), no.~4, pp. 547--569.
  
\bibitem[BR88]{BR88} 
M. Bruschi, O. Ragnisco. 
{\em Recursion operator and B\"acklund transformations for the Ruijsenaars--Toda lattices},
Physics Lett. A, \textbf{129}  (1988), 21--25.

\bibitem[BR89a]{BR89a} 
M. Bruschi, O. Ragnisco. 
{\em Lax  representation and complete integrability for the periodic relativistic Toda lattice},
Physics Lett. A, \textbf{134} (1989), 365--370.

\bibitem[BR89b]{BR89b} 
M. Bruschi,O. Ragnisco. 
{\em The periodic relativistic Toda lattice: direct and inverse problem},
Inverse Problems, \textbf{5} (1989), 389--405.
  
  
\bibitem[CNP91]{CNP91} 
H.W. Capel, F.W. Nijhoff,  and V.G. Papageorgiou.
{\em Complete integrability of Lagrangian mappings and lattices of KdV type},
Phys. Lett. A, \textbf{155} (1991), 377--387.  
  
\bibitem[DSM00]{DSM00} 
A. Doliwa, P.M. Santini,  and M. Ma\~nas.
{\em Transformations of quadrilateral lattices},  J. Math. Phys., \textbf{41} (2000), 944--990. 

\bibitem[FG00]{FG00} 
L. Faybusovich, M. Gekhtman.
{\em Elementary Toda orbits and integrable lattices}, 
J. Math. Phys., \textbf{41} (2000), 2905--2921. 

\bibitem[F74]{Fl74} 
H. Flaschka. {\em On the Toda lattice II. Inverse scattering
solution}, Progr. Theor. Phys., \textbf{51}  (1974), 703--716.

\bibitem[FM97]{FM97} 
V. Fock, A. Marshakov.
{\em A note on quantum groups and relativistic Toda theory},
Nucl. Phys. B, Proc. Suppl. \textbf{56} (1997), 208--214.

\bibitem[HKKR90]{HKKR00} 
T. Hoffmann, J. Kellendonk, N. Kutz, and N. Reshetikhin.
{\em Factorization dynamics and Coxeter-Toda lattices},
Commun. Math. Phys., \textbf{212} (2000), 297--321.

\bibitem[Ko79]{Ko79} 
B. Kostant. {\em The solution to a generalized Toda lattice
and representation theory},  Adv. Math., 
\textbf{34}  (1979), 195--338.


\bibitem[Kr00]{Kr00} 
I.M. Krichever. {\em  Elliptic analog of the Toda lattice},
 Int. Math. Res. Notices, 2000, No. 8, 383--412.


\bibitem[Ku85]{Ku85} 
B.A. Kupershmidt. {\em Discrete Lax equations and
differential--difference calculus},  Asterisque, \textbf{123}  (1985).

\bibitem[KS98]{KS}
V.B. Kuznetsov, E.K. Sklyanin. \emph{{On B{\"a}cklund
  transformations for many-body systems}}, J. Phys. A: Math. Gen. \textbf{31}
  (1998), 2241--2251.
  
\bibitem[L81]{Le81} 
D. Levi. {\em Nonlinear differential difference equations as
B\"acklund transformations},  J. Phys. A, \textbf{14} (1981), 1083--1098.  

\bibitem[LN09]{LN09}
S. Lobb, F.W. Nijhoff. \emph{Lagrangian multiforms and
  multidimensional consistency}, J. Phys. A: Math. Theor. \textbf{42} (2009),
  no.~454013.

\bibitem[LN10]{LN2}
S. Lobb, F.W. Nijhoff. \emph{{Lagrangian multiform structure for the lattice Gel'fand-Dikii
  hierarchy}}, J. Phys. A: Math. Theor. \textbf{43} (2010), no.~072003.

\bibitem[LNQ09]{LNQ}
S. Lobb, F.W. Nijhoff, and G.~R.~W. Quispel. \emph{{Lagrangian multiform
  structure for the lattice KP system}}, J. Phys. A: Math. Theor. \textbf{42}
  (2009), no.~472002.
  
 \bibitem[M74]{Ma74} 
 S.V. Manakov. {\em On the complete integrability and
stochastization in discrete dynamical systems},  Sov. Phys. JETP,
\textbf{40}  (1974), 269--274. 

\bibitem[MS79]{MS79} 
V.B. Matveev,  M.A. Salle. {\em  Differential-difference
evolution equations. II: Darboux transformation for the Toda lattice}, 
Lett. Math. Phys., \textbf{3} (1979), 425--429.
  

\bibitem[N02]{Nij02}
F.W. Nijhoff, \emph{{Lax pair for the Adler (lattice Krichever-Novikov)
  system}}, Phys. Lett. A \textbf{297} (2002), 49--58.
  
\bibitem[NW01]{NW01}
F.~W. Nijhoff, A.~J. Walker. \emph{The discrete and continuous {P}ainlev\'e
  {VI} hierarchy and the {G}arnier systems}, Glasg. Math. J. \textbf{43A} (2001), 109--123.  
  
\bibitem[OFZR89]{OFZR89} 
W. Oevel, B. Fuchssteiner, H. Zhang, and O. Ragnisco.
{\em Mastersymmetries, angle variables and recursion operator of the relativistic Toda lattice}, 
J. Math. Phys., \textbf{30} (1989), 2664--2676.  
  
\bibitem[QCPN91]{QCPN91}
G.R.W.Quispel, H.W. Capel, V.G. Papageorgiou, and F.W. Nijhoff. 
{\em Integrable mappings derived from soliton equations}, Physica A, \textbf{173} (1991), 243--266.
  
\bibitem[PNC90]{PNC90} 
V.G. Papageorgiou, F.W. Nijhoff, and H.W. Capel.
{\em Integrable mappings and nonlinear integrable lattice equations}, Phys. Lett. A, \textbf{147} (1990), 106--114.
  
\bibitem[Rui90]{Rui90} 
S.N.M. Ruijsenaars. {\em Relativistic Toda systems},
Commun. Math. Phys., \textbf{133} (1990), 217--247.  
  
\bibitem[Rut57]{Ru57} 
H. Rutishauser.  {\em Der Quotienten--Differenzen
Algorithmus}, Birkh\"auser (1957).
  
  
\bibitem[S90a]{Su90a} 
Yu.B. Suris. {\em Generalized Toda chains in discrete time}, 
 Leningrad Math. J., \textbf{2} (1990), 339--352.

\bibitem[S90b]{Su90b}
Yu.B. Suris. {\em Discrete-time generalized Toda lattices:
complete integrability and relation with relativistic Toda lattices}, 
Phys. Lett. A, \textbf{145} (1990), 113--119.

\bibitem[S91]{Su91} 
Yu.B. Suris. {\em Algebraic structure of discrete--time and relativistic Toda lattices},
Phys. Lett. A, \textbf{156} (1991), 467--474.

\bibitem[S95]{Su95} 
Yu.B. Suris. {\em Bi-Hamiltonian structure of the $qd$
algorithm and new discretizations of the Toda lattice},  Phys. Lett. A,
\textbf{206} (1995), 153--161.

\bibitem[S96]{Su96} 
Yu.B. Suris. {\em A discrete-time relativistic Toda
lattice},  J. Phys. A: Math. and Gen., \textbf{29}  (1996), 451--465.

\bibitem[S97a]{Su97a} 
Yu.B. Suris. {\em New integrable systems related to the
relativistic Toda lattice},  J. Phys. A: Math. and Gen., \textbf{30} (1997),
1745--1761.

\bibitem[S97b]{Su97b} 
Yu.B. Suris. {\em On some integrable systems related to the Toda lattice},  J. Phys. A: Math. and Gen., \textbf{30} (1997), 2235--2249.
  
\bibitem[S03]{Su03}
Yu.B. Suris. \emph{The problem of integrable discretization: Hamiltonian approach},
  Progress in mathematics, vol. 219, Birkh{\"a}user Verlag, Basel, 2003.

\bibitem[S13]{Su13}
Yu.B. Suris. \emph{Variational formulation of commuting Hamiltonian flows:
  multi-Lagrangian 1-forms}, J. Geom. Mech. \textbf{5} (2013), pp. 365--379.
  
\bibitem[SR99]{SR99} 
Yu.B. Suris, O. Ragnisco. {\em What is the relativistic Volterra lattice?},  
Commun. Math. Phys., \textbf{200} (1999), 445--485.  
  
\bibitem[Sy80]{Sy80} 
W.W. Symes. {\em Systems of Toda type, inverse spectral
problem and representation theory},  Invent. Math., \textbf{59} (1980), 13--51.

\bibitem[Sy82]{Sy82} 
W.W. Symes. {\em The $QR$ algorithm and scattering for the
finite nonperiodic Toda lattice},  Physica D, \textbf{4}  (1982), 275--280.

\bibitem[T67]{To67} 
M. Toda. {\em Vibration of a chain with nonlinear interaction}, 
 J. Phys. Soc. Japan, \textbf{22} (1967), 431--436.

\bibitem[T89]{To89} 
M. Toda. {\em Theory of nonlinear lattices}, Berlin etc:
Springer--Verlag, 1989.

\bibitem[TW75]{TW75} 
M. Toda, M. Wadati. {\em  A canonical transformation for
the exponential lattice}, J. Phys. Soc. Japan, \textbf{39} (1975), 1204--1211.

\bibitem[WT75]{WT75} 
M. Wadati, M. Toda. {\em B\"acklund transformation for the
exponential lattice}, J. Phys. Soc. Japan, \textbf{39}  (1975), 1196--1203.


\bibitem[Y89]{Ya89} 
R.I. Yamilov. {\em Generalizations of the Toda chain, and
conservation laws}, Preprint Inst. of Math., Ufa  (1989) (in Russian). English
version: {\em Classification of Toda--type scalar lattices} -- In: {\em Nonlinear
evolution equations and dynamical systems, NEEDS'92}, Eds. V.Makhankov,
I.Puzynin, O.Pashaev (1993). Singapore: World Scientific, 423--431.


\bibitem[YLN11]{YLN11}
S. Yoo-Kong, S. Lobb, and F.W. Nijhoff. \emph{{Discrete-time
  Calogero-Moser system and Lagrangian 1-form structure}}, J. Phys. A: Math.
  Theor. \textbf{44} (2011), no.~365203.
  
\bibitem[ZTOF91]{ZTOF91} 
H. Zhang, G.-Zh. Tu, W. Oevel, and B. Fuchssteiner.
{\em Symmetries, conserved quantities, and hierarchies for some lattice systems with soliton structure},
J. Math. Phys., \textbf{32} (1991), 1908--1918.  

\end{thebibliography}
\providecommand{\bysame}{\leavevmode\hbox to3em{\hrulefill}\thinspace}
\providecommand{\MR}{\relax\ifhmode\unskip\space\fi MR }
\providecommand{\MRhref}[2]{%
  \href{http://www.ams.org/mathscinet-getitem?mr=#1}{#2}
}
\providecommand{\href}[2]{#2}

}

\end{document}